\theoremstyle{plain}
\newtheorem{theorem}{Theorem}[section]
\newtheorem{lemma}{Lemma}[section]
\newtheorem{proposition}{Proposition} [section]
\newtheorem{definition}{Definition} [section]
\newtheorem{corollary}{Corollary} [section]
\theoremstyle{definition}
\newtheorem{example}{Example}[section]
\newtheorem{remark}{Remark}[section]
\DeclareMathOperator{\Supp}{supp}
\DeclareMathOperator{\Span}{span}
\DeclareMathOperator{\Tr}{Tr}
\newcommand{\Ccal}{\mathcal{C}}
\newcommand{\Lcal}{\mathcal{L}}
\newcommand{\Pcal}{\mathcal{P}}
\newcommand{\Xcal}{\mathcal{X}}
\newcommand{\Cfrak}{\mathfrak{C}}
\newcommand{\Efrak}{\mathfrak{E}}
\newcommand{\Gal}{\mathrm{Gal}}
\newcommand{\Pbb}{\mathbb{P}}
\newcommand{\Zbb}{\mathbb{Z}}
\newcommand{\id}{\mathrm{id}}
\newcommand{\Review}{}
\newcommand{\F}{\mathbb{F}}
\newcommand{\Aut}{\mathrm{Aut}}
\newcommand{\Diff}{\mathrm{Diff}} 
\newcommand{\Con}{\mathrm{Con}}
\newcommand{\Div}{\mathrm{Div}}
\newcommand{\Princ}{\mathrm{Princ}}
\newcommand{\Cl}{\mathrm{Cl}}
\newcommand{\Char}{\mathrm{char}}
\newcommand{\GL}{\mathrm{GL}}
\newcommand{\oq}{{\overline{q}}}
\newcommand{\urep}{{I}}
\newcommand{\irep}{i}
\newcommand{\Frep}{{f}}
\newcommand{\Emph}{}
\DeclarePairedDelimiter\efloor{\lfloor}{\rfloor}
\DeclarePairedDelimiter\eceil{\lceil}{\rceil}
\DeclarePairedDelimiter\eparentheses{(}{)}
\DeclarePairedDelimiter\ebrackets{[}{]}
\DeclarePairedDelimiter\ebraces{\{}{\}}
\newcommand{\floor}{\efloor*}
\newcommand{\ceil}{\eceil*}
\newcommand{\parentheses}{\eparentheses*}
\newcommand{\brackets}{\ebrackets*}
\newcommand{\braces}{\ebraces*}
\newcommand{\Rmnum}[1]{\expandafter\@slowromancap\romannumeral #1@}
\newcommand{\Rnum}[1]{\uppercase\expandafter{\romannumeral #1\relax}}
\newcommand{\rnum}[1]{\expandafter{\romannumeral #1\relax}}
\begin{document}

\title{New Constructions of Optimal $(r,\delta)$-LRCs via Algebraic Function Fields} 
\author{Yuan~Gao,~\IEEEmembership{}
        Haoming Shi,~\IEEEmembership{}
        Weijun~Fang~\IEEEmembership{}
\IEEEcompsocitemizethanks{\IEEEcompsocthanksitem Yuan Gao, Haoming Shi, and Weijun Fang are with State Key Laboratory of Cryptography and Digital Economy Security, Shandong University, Qingdao, 266237, China, Key Laboratory of Cryptologic Technology and Information Security, Ministry of Education, Shandong University, Qingdao, 266237, China and School of Cyber Science and Technology, Shandong University, Qingdao, 266237, China (emails: gaoyuan862023@163.com, 202421328@mail.sdu.edu.cn, fwj@sdu.edu.cn).
}
\thanks{The work is supported in part
by the National Key Research and Development Program of China under Grant Nos. 2022YFA1004900 and 2021YFA1001000, the National Natural Science Foundation of China under Grant No. 62571301, and the Shandong Provincial Natural Science Foundation
under Grant No. ZR2025QA05.  {\it (Corresponding Author: Weijun Fang)}.}
\thanks{Manuscript submitted }}

\maketitle
\begin{abstract} 
Constructing optimal $(r,\delta)$-LRCs that attain the Singleton-type bound is an active and important research direction, particularly due to their practical applications in distributed storage systems. In this paper, we focus on the construction of optimal $(r,\delta)$-LRCs with flexible minimum distances, especially for the case $\delta \geq 3$.
We first extend a general framework---originally proposed by Li \textit{et al.} (IEEE Trans. Inf. Theory, vol. 65, no. 1, 2019) and Ma and Xing (J. Comb. Theory Ser. A., vol. 193, 2023)---for constructing optimal $r$-LRCs via automorphism groups of elliptic function fields to the case of $(r,\delta)$-LRCs. This newly extended general framework relies on certain conditions concerning the group law of elliptic curves. By carefully selecting elliptic function fields suitable for this framework, we arrive at several families of explicit $q$-ary optimal $(r,3)$-LRCs and $(2,\delta)$-LRCs with lengths slightly less than $q + 2\sqrt{q}$.
Next, by employing automorphism groups of hyperelliptic function fields of genus $2$, we develop a framework for constructing optimal $(r,3)$-LRCs and obtain a family of explicit $q$-ary optimal $(4,3)$-LRCs with code lengths slightly below $q+4\sqrt{q}$. We then consider the construction of optimal $(r,\delta)$-LRCs via hyperelliptic function fields of arbitrary genus $g \geq 2$, yielding a class of explicit $q$-ary optimal $(g+1-g',g+1+g')$-LRCs for $0 \leq g' \leq g-1$ with lengths up to $q + 2g\sqrt{q}$. 
Finally, applying certain superelliptic curves derived from modified Norm-Trace curves, we construct two families of explicit optimal $(r,\delta)$-LRCs with even longer code lengths and more flexible parameters.
Notably, many of the newly constructed optimal $(r,\delta)$-LRCs attain the largest known lengths among existing constructions with flexible minimum distances.
\end{abstract}

%%================================%%
%% Sample for structured abstract %%
%%================================%%

\begin{IEEEkeywords}
  $(r,\delta)$-locally repairable codes, algebraic geometry codes, automorphism groups, elliptic and hyperelliptic curves, superelliptic curves
\end{IEEEkeywords}

\IEEEpeerreviewmaketitle

\section{Introduction}
\label{sec:intro}

To reduce the repair overhead of failed nodes in large-scale distributed storage systems, the concept of \Emph{locally repairable codes} (LRCs), also known as locally recoverable codes, was formally introduced in \cite{gopalan2012on} by Gopalan \textit{et al.}
Let $[n]:= \{1,2,\dots,n\}$. For a linear code $\Ccal$ of length $n$ over the finite field $\F_q$, a code symbol $c_i$ of $\Ccal$ has \Emph{locality $r$} 
if there exists a subset $R_i\subseteq [n]$ such that $i\in R_i,|R_i|\leq r+1$
and $c_i$ is a linear combination of $\{c_j\}_{j\in R_i\backslash\{i\}}$ over $\F_q$. Here, $R_i$ is called a \Emph{local repair group} of the $i$-th symbol $c_i$.
A linear code $\Ccal$ is called an $r$-\Emph{locally repairable code} ($r$-LRC) if each code symbol of $\Ccal$ has locality $r$. However, when multiple node failures occur in a distributed storage
system, the $r$-LRCs can not recover failed nodes efficiently. To address this problem, Prakash \textit{et al.} \cite{prakash2012optimal} generalized the concept of $r$-LRCs to
$(r,\delta)$-LRCs which can tolerate any $\delta-1$ erasures ($\delta\geq 2$). A code symbol $c_i$ of $\Ccal$ has \Emph{$(r,\delta)$-locality} if there exists a subset $R_i\subseteq [n]$ containing $i$ such that $|R_i|\leq r+\delta-1$
and $d(\Ccal|_{R_i})\geq\delta$, where $\Ccal|_{R_i}$ denotes the punctured code on the set $[n]\backslash R_i$. Similarly, $R_i$ is called a local repair group of the $i$-th symbol $c_i$.
A linear code $\Ccal$ is called an \Emph{$(r,\delta)$-locally repairable code} ($(r,\delta)$-LRC) if each code symbol of $\Ccal$ has $(r,\delta)$-locality. When $\delta=2$, $(r,\delta)$-LRCs reduce to $r$-LRCs. Due to their interesting
algebraic structures and practical applications in distributed storage systems, $(r,\delta)$-LRCs have drawn significant interest in recent years. In the following, we review some known results on $(r,\delta)$-LRCs. 
% % In addition to the $(r,\delta)$-LRCs, many other extensions of $r$-LRCs have been developed to meet a variety of practical demands, including those with availability \cite{wang2014repair}, sequential recovery \cite{prakash2019codes}, and hierarchical locality \cite{sasidharan2015codes}, among others. These extensions, though interesting and important, are not discussed in detail here, as they fall outside the scope of this work.

\subsection{Some Known Results of $(r,\delta)$-LRCs}\label{sec:1.1}
In \cite{gopalan2012on}, the \Emph{Singleton-type bound} for $r$-LRCs with parameters $[n,k,d]_q$, analogous to the classical Singleton bound for general codes, was proposed. This bound was later generalized in \cite{prakash2012optimal} to the case of \((r,\delta)\)-LRCs with parameters $[n,k,d]_q$, yielding 
\begin{align}\label{eq:SLboundfor_r_delta}
	d\leq n-k+1-(\left\lceil k/r\right\rceil -1)(\delta -1), 
\end{align}
which is also called the Singleton-type bound. When $\delta=2$, it reduces to the Singleton-type bound for $r$-LRCs.
If an $(r,\delta)$-LRC achieves the Singleton-type bound \eqref{eq:SLboundfor_r_delta} with equality, then it is called a \Emph{Singleton-optimal} $(r,\delta)$-LRC, which we refer to simply as an \Emph{optimal} $(r,\delta)$-LRC in this paper. In particular, an optimal $r$-LRC attaining the Singleton-type bound \eqref{eq:SLboundfor_r_delta} (fixing $\delta=2$) is called an \Emph{optimal} $r$-LRC. 
Constructing optimal $(r,\delta)$-LRCs with large code lengths over fixed finite fields is of practical importance, as it enables a reduction in the required field size and thereby lowers the overall computational complexity.
In what follows, we review some existing constructions of optimal $(r,\delta)$-LRCs, along with relevant results on upper bounds for their code lengths. These constructions can be broadly classified into two categories based on their parameter characteristics, particularly their minimum distances.
\begin{itemize} 
    \item  
   The first category of constructions of optimal $(r,\delta)$-LRCs \Review{consists of those with flexible minimum distances}
   \footnote{\Review{We say that a family of optimal $(r,\delta)$-LRCs has flexible minimum distances if, for every $\tau\in [0,1]$, there exists a sequence of codes from the family with lengths tending to infinity such that $d/n\to \tau$.
   %Such a family is also referred to as a family of optimal $(r,\delta)$-LRCs with flexible minimum distances; for brevity, we sometimes simply use the phrase ``optimal \((r,\delta)\)-LRCs with flexible minimum distances''.
}}\Review{, where the minimum distances $d$} can be either small or proportional to the code length $n$, and can be adjusted as needed. These constructions are typically obtained via evaluation-based methods using tools such as polynomials \cite{tamo2014family}, \cite{gao2024new}, algebraic curves and surfaces \cite{barg2017locally_curve, barg2017locally_curveandsurface, jin2020construction, li2019optimal,  ma2023group, salgado2021locally}. Additionally, some constructions in this category are derived from cyclic and constacyclic codes, such as \cite{tamo2015cyclic,tamo2016cyclic,chen2018constructions, chen2019constructions, qiu2021new}. 
   As for the upper bound on the length of this category of constructions, Guruswami \textit{et al.} \cite[Theorem 13 and Corollary 14]{Guruswami2019how} established an upper bound on the code length $n$ of $q$-ary optimal $r$-LRCs with minimum distance $d=\Theta(n)$ and constant $r$, yielding $n\leq O(q)$. This upper bound can be generalized to the case of $(r,\delta)$-LRCs, which similarly yields $n\leq O(q)$ for optimal $(r,\delta)$-LRCs with minimum distance $d=\Theta(n)$, $\frac{d}{n}\leq \frac{2}{3}$, and constant $r,\delta$ (see Theorem \ref{thm:bound_for_length_of_optimal_(r,delta)} in Appendix \ref{appsec:1}). 
\Review{When $r$ and $\delta$ are constants,} this upper bound naturally applies to optimal \((r,\delta)\)-LRCs with flexible minimum distances.

   Below, we briefly review the above-referenced works. In 2014, Tamo and Barg \cite{tamo2014family} made a breakthrough in constructing optimal $(r,\delta)$-LRCs by using carefully chosen polynomials. They proposed the well-known RS-like optimal $(r,\delta)$-LRCs, whose lengths are at most $q$. This famous family of codes is referred to as the \Emph{Tamo–Barg codes}. As a side note, Gao and Yang \cite{gao2024new} later improved upon this result in 2024 by applying an extension technique, obtaining optimal $(r,\delta)$-LRCs with lengths up to $q+\delta$. 
Returning to the main line of development, in 2017, Barg \textit{et al.} \cite{barg2017locally_curve,barg2017locally_curveandsurface} extended the Tamo-Barg codes \cite{tamo2014family} by utilizing covering maps and quotient maps of algebraic curves, and selecting appropriate evaluation function spaces via the Riemann–Roch theorem. They presented several asymptotically good $r$-LRCs using high-genus curves and towers of function fields. Additionally, some optimal $r$-LRCs based on algebraic surfaces were also proposed. Although these optimal $r$-LRCs have small minimum distances $d\leq 3$, they offer valuable theoretical insights.  
In \cite{jin2020construction}, by employing the automorphism groups of the rational function
fields, Jin \textit{et al.} generalized the Tamo-Barg codes \cite{tamo2014family}, constructing optimal $r$-LRCs with code length up to $q+1$ and more flexible locality $r$. 
In \cite{li2019optimal}, by leveraging the rich algebraic structures of elliptic curves, Li \textit{et al.} constructed optimal $r$-LRCs with code length reaching or slightly below $q+2\sqrt{q}$ for $r=2,3,5,7,11,23$. 
They also provided many maximal elliptic curves with rich automorphism groups. 
Subsequently, Ma and Xing \cite{ma2023group} extended the results in \cite{li2019optimal} by incorporating the translation automorphism group of elliptic function fields into the construction, thereby obtaining optimal $r$-LRCs with a wider range of locality $r$. They also provided a clear exposition of the group structure of the automorphism groups of elliptic function fields. In 2021, Salgado \textit{et al.} \cite{salgado2021locally} proposed a family of optimal $3$-LRCs with length $4q$ based on algebraic surfaces, which has the largest known code length among optimal $r$-LRCs with flexible minimum distances. Very recently, using automorphism groups of hyperelliptic curves of genus $2$, Huang and Zhao \cite{huang2025optimal} proposed several $r$-LRCs with lengths approaching $q+4\sqrt{q}$ and $(r+1)\mid 240$, which are either optimal or almost optimal\footnote{An $(r,\delta)$-LRC with parameters $[n,k,d]_q$ satisfying $d=n-k-(\left\lceil {k}/{r}\right\rceil-1)(\delta-1)$ is referred to as an \Emph{almost optimal} $(r,\delta)$-LRC.}.

There are also several notable results based on cyclic codes and constacyclic codes. 
Motivated by the construction of Tamo-Barg codes, Tamo \textit{et al.} \cite{tamo2015cyclic,tamo2016cyclic} constructed a family of optimal cyclic
$r$-LRCs with length $q-1$.
Later, Chen \textit{et al.} \cite{chen2018constructions, chen2019constructions} generalized these results by employing both cyclic and constacyclic codes, and constructed several families of optimal $(r,\delta)$-LRCs with $(r+\delta-1)\mid n$ and $n$ dividing either $q-1$ or $q+1$. These constructions were further extended by Qiu \textit{et al.} \cite{qiu2021new}, who unified and generalized the constructions in \cite{chen2018constructions, chen2019constructions}, and efficiently produced optimal cyclic $(r,\delta)$-LRCs with flexible parameters, including cases where $(r + \delta - 1) \nmid n$.
    \item 
    The second category of constructions of optimal $(r,\delta)$-LRCs consists of \Review{those} with \Review{prescribed} minimum distance $d$. \Review{One first prescribes the desired minimum distance $d$ and locality parameters $r,\delta$, and then obtains a family of optimal $(r,\delta)$-LRCs with minimum distance $d$ and large code lengths.} In this case, the minimum distance $d$ is less flexible, and the relative minimum distance $d/n$ vanishes as the code length $n$ tends to infinity.
    %(and thus the field size $q$, except when $d\leq 2\delta$ and $d< r+\delta$, where optimal $(r,\delta)$- LRCs with unbounded lengths exist) 
 Notably, some constructions in this category achieve super-linear code lengths with respect to the field size $q$. Moreover, certain constructions even attain unbounded code length when $d\leq 2\delta$. 
 As shown in \cite[Theorem 10]{Guruswami2019how} and \cite[Theorem 2]{cai2020linear}, the length of a $q$-ary optimal $(r,\delta)$-LRC with minimum distance $d> 2\delta$
is upper bounded by $O(q^{C_{d,r,\delta}})$, where $C_{d,r,\delta}$ is a constant depending only on $d$, $r$ and $\delta$.
Constructions in this category are usually derived using parity-check matrix approaches combined with combinatorial tools, as in \cite{Guruswami2019how,xing2022construction,jin2019explicit, cai2020linear, chen2021improved, kong2021new, chen2022some, fang2024bound, tao2025some}. 
In addition, some constructions are obtained using cyclic codes and constacyclic codes, such as those in \cite{luo2019optimal, fang2023singleton, qiu2024newlower}.
\end{itemize} 
Apart from these two main categories, there also exist several interesting constructions that do not fall neatly into either one, e.g., \cite{hao2020bound, luo2023three, galindo2024optimal}. We do not go into details.  
 
To facilitate comparison, we summarize the aforementioned first category of constructions of
optimal $(r,\delta)$-LRCs, namely, those with flexible minimum distances, in Table~\ref{tab:1}. Row 15 is marked with a superscript $^{*}$ to indicate that the corresponding construction is either optimal or almost optimal. We remark that the constructions in \cite{jin2020construction} by Jin \textit{et al.} were originally presented as $r$-LRCs, but they can be naturally generalized to the case of $(r,\delta)$-LRCs, as indicated in Rows 1 and 2 of Table~\ref{tab:1}. Some of our new constructions are also listed in Table~\ref{tab:1} and will be discussed in detail in the next subsection. 
\renewcommand{\arraystretch}{1.5}
\setlength\tabcolsep{3pt}
\begin{table}[tbhp]  
\small
\centering
\caption{Known Constructions of $q$-ary Optimal $(r,\delta)$-LRCs with Flexible Minimum Distances and Lengths $\geq q$}\label{tab:1} 
\begin{tabular}{|c|c|c|c|}\hline
 NO. &\textbf{Length $n$}&\textbf{Locality $(r,\delta)$ and Conditions}&\textbf{References} \\ \hline
1    &$q$     & \makecell{$(r,\delta)$, with $(r+\delta-1)=p^l$, \\\text{where} $1\leq l\leq \log_p(q)$, $p=\Char(\F_q)$}&   \cite{tamo2014family}, \cite{jin2020construction}     \\ \hline 
     
2     &$q+1$     &   $(r,\delta)$, with $(r+\delta-1)\mid (q+1)$    &       \makecell{\cite{jin2020construction, chen2018constructions}, \\ \cite{chen2019constructions, qiu2021new}}     \\ \hline

3 & \makecell{$q+\delta$ \\ \Review{( $q-1+\delta$, respectively})}  &  \makecell{$(r,\delta)$, with $p=\Char(\F_q)\mid\delta,(r+\delta-1)=p^l$, $1<l\leq \log_p(q)$\\ \Review{(\text{with } $(r+\delta-1)\mid (q-1)$, \text{respectively}})} & \cite{gao2024new} \\ \hline 

4&  $3\big\lfloor\frac{q+2\sqrt{q}}{3}\big\rfloor$  & \makecell{$(r=2,\delta=2)$, with $q=p^{2s}$ \\ for $p=3\text{ or }p\equiv 2\pmod{3}$} &  \cite[Theorem 1]{li2019optimal}  \\ \hline

5&  $(r+1)\big\lfloor\frac{q+2\sqrt{q}-r-2}{r+1}\big\rfloor$  & $(r,\delta=2)$, with $r=3,5,7,11,23$ &  \cite[Theorem 2]{li2019optimal}  \\ \hline
 
6& $4q$ &$(r=3,\delta=2)$, with $4\mid (q-1)$ &\cite{salgado2021locally}  \\ \hline
 
7& $2h(\big\lceil\frac{N(E)}{2h}\big\rceil-2)$ & $(r=2h-1,\delta=2)$, with $h\mid N(E)=|\Pbb_E^1|\leq q+2\sqrt{q}+1$ & \cite[Proposition 4.6]{ma2023group} \\ \hline

8& $q+2\sqrt{q}+1-3h$ & \makecell{$(r=2h-2,\delta=3)$ and $(r=2,\delta=2h-1)$, \\with $q=2^{2s}$, $h\mid (q+2\sqrt{q}+1)$}& Corollary~\ref{cor:Cons_via_EFF_1.1} \\ \hline

9& $q+2\sqrt{q}-3h$ &\makecell{$(r=2h-2,\delta=3)$ and $(r=2,\delta=2h-1)$, with\\ $q=p^{2s}$, $p\geq 3$, $h\mid (q+2\sqrt{q})$}& Corollary~\ref{cor:Cons_via_EFF_1.2} \\ \hline

10&  $ah\big\lceil\frac{q+2\sqrt{q}+1-2h-ah}{ah}\big\rceil$ & \makecell{$(r=ah-1,\delta=2)$, with $q=p^{2s}$,  \\ $a\mid 24, h=h_0^2, h_0\mid (\sqrt{q}+1)$}&\cite[Theorem 4.8]{ma2023group}  \\ \hline

11& $3h\big\lceil\frac{q+2\sqrt{q}+1-3h}{3h}\big\rceil$ &\makecell{$(r=3h-2,\delta=3)$ and $(r=2,\delta=3h-1)$, with \\$q=2^{2s},h=h_0^2,h_0\mid (\sqrt{q}+1)$}& Corollary~\ref{cor:Cons_via_EFF_2.1} {\rm(i)} \\ \hline 

12& $6h\big\lceil\frac{q+2\sqrt{q}+1-8h}{6h}\big\rceil$ &\makecell{$(r=3h-2,\delta=3)$ and $(r=2,\delta=3h-1)$, with $q=p^{2s}$ for\\ an odd prime $p\equiv 2 \pmod{3}$ or $p=3$, $h=h_0^2,h_0\mid (\sqrt{q}+1)$} & \makecell{Corollary~\ref{cor:Cons_via_EFF_2.1} {\rm(ii)}\\\text{and }{\rm (iii)}} \\ \hline  
 
13& $q+2\sqrt{q}-8$&$(r=8,\delta=2)$, with $q=4^{2s+1}$& \cite[Theorem 4.9]{ma2023group} \\ \hline

14& $q+2\sqrt{q}-8$&\makecell{$(r=7,\delta=3)$ and $(r=2,\delta=8)$, with $q=4^{2s+1}$}& Corollary~\ref{cor:Cons_via_EFF_2.2_(7,3)} \\ \hline

15$^*$& $\leq q+4\sqrt{q}$ &$(r,\delta=2)$, with $(r+1)\mid 240$   &\cite{huang2025optimal}  \\ \hline  
 
16& $12\big\lceil\frac{q+4\sqrt{q}-1}{12}\big\rceil-30$ &$(r=4,\delta=3)$, with $q=5^{2s}$, $2\nmid s$ &  Corollary~\ref{cor:Cons_via_HEFF_(4,3)_p=5} {\rm (i)}\\ \hline
 
17& $12\big\lceil\frac{q+4\sqrt{q}-1}{12}\big\rceil-30$ &\makecell{$(r=4,\delta=3)$, with $q=\oq^{2s}$, $\oq\neq 5$,\\ $\oq\equiv 5,15,21, \text{ or } 23\pmod{24},2\nmid s$}&  Corollary~\ref{cor:Cons_via_HEFF_(4,3)_p=5} {\rm (ii)}\\ \hline
 
18& $q+2g\sqrt{q}$ & \makecell{$(r=g+1-g',\delta=g+1+g')$, with $2g+1\geq 5$ being\\  a prime power, $0\leq g'\leq g-1, q=(2g+1)^{2s}$} &  Theorem~\ref{thm:Cons_via_HEFF_(g+1-g',g+1+g')_p=2g+1_and_p_neq_2g+1} {\rm(i)}\\ \hline 

19& $(2g+1)\big\lfloor \frac{q+2g\sqrt{q}}{2g+1}\big\rfloor$ & \makecell{$(r=g+1-g',\delta=g+1+g')$, with $g\geq 2$,\\ $0\leq g'\leq g-1$, $q=\oq^{2s},\oq\equiv-1\pmod{2g+1},2\nmid \oq$} &  Theorem~\ref{thm:Cons_via_HEFF_(g+1-g',g+1+g')_p=2g+1_and_p_neq_2g+1} {\rm(ii)}\\ \hline 

20& $\gcd(b,\frac{\oq^c-1}{\oq-1}){\cdot} \frac{(q-1)q}{b\oq^c}{+}\frac{q}{\oq^c}$ & \makecell{$(r=\big\lfloor\frac{N-1}{M}\big\rfloor+1-b',\delta=N+1-r)$, with $q=\oq^{s},b\mid \frac{\oq^s-1}{\oq-1}$, \\  $c\mid s$,  $1<M=\frac{\oq^s-1}{b(\oq-1)}<N=\oq^{s-c},0\leq b'\leq \big\lfloor\frac{N-1}{M}\big\rfloor-1$} & Theorem \ref{thm:cons_via_SEFF_1} {\rm (i)}\\ \hline 

21& $\gcd(b,\frac{\oq^c-1}{\oq-1}){\cdot} \frac{(q-1)q}{b\oq^c}$ & \makecell{$(r=\big\lfloor\frac{M-1}{N}\big\rfloor+1-b',\delta=M+1-r)$, with $q=\oq^{s},b\mid \frac{\oq^s-1}{\oq-1}$, \\  $c\mid s,M=\frac{\oq^s-1}{b(\oq-1)}>N=\oq^{s-c}>1,0\leq b'\leq \big\lfloor\frac{M-1}{N}\big\rfloor-1$} & Theorem \ref{thm:cons_via_SEFF_1} {\rm (ii)} \\ \hline 

22& $\frac{1}{b\oq}q^2+\frac{b-1}{b\oq}q$ & \makecell{$(r=\big\lfloor{b\frac{(\oq-1)(\oq^{s-1}-1)}{\oq^s-1}}\big\rfloor+1-b',\delta=\oq^{s-1}+1-r)$, with $q=\oq^{s}$,\\ 
$b\mid \frac{\oq^s-1}{\oq-1},1< b< \frac{\oq^s-1}{\oq-1},0\leq b'\leq \big\lfloor b\frac{(\oq-1)(\oq^{s-1}-1)}{\oq^s-1}\big \rfloor-1$} & Example \ref{exm:cons_via_SEFF_1} {\rm (i)} \\ \hline 

23& $\frac{q(q-1)}{\oq^c}$ & \makecell{$(r=\big\lfloor\frac{(\oq^s-\oq)}{\oq^{s-c}(\oq-1)}\big\rfloor+1-b',\delta=\frac{\oq^s-1}{\oq-1}+1-r)$, with $q=\oq^{s}$,\\ $c\mid s, c<s, 0\leq b'\leq \big\lfloor\frac{(\oq^s-\oq)}{\oq^{s-c}(\oq-1)}\big\rfloor-1$} & Example \ref{exm:cons_via_SEFF_1} {\rm (ii)} \\ \hline 

24&  $q+(\frac{\oq+1}{b}-1)(\oq-1)\sqrt{q}$ &\makecell{$(r=b-b',\delta=\oq+1-r)$, with $q=\oq^{2s}$, \\ $2\nmid s, 1<b<\oq+1, b\mid (\oq+1)$, $0\leq b'\leq b-2$}
&Theorem~\ref{thm:cons_via_SEFF_2} {\rm (i)}\\ \hline

25& $q+\oq(\oq-1)\sqrt{q}-\oq$ &$(r=2,\delta=\oq)$, with $q=\oq^{2s}$, $2\nmid s$
&Theorem~\ref{thm:cons_via_SEFF_2} {\rm (ii)}\\ \hline  
   
\end{tabular} 
\end{table}	
\renewcommand{\arraystretch}{1.0}
\subsection{Our Motivations and Contributions}
As shown in Table~\ref{tab:1}, there have been several constructions of optimal $r$-LRCs (i.e., $(r,\delta=2)$-LRCs) with flexible minimum distances and lengths exceeding $q+1$ (see Rows 4-7, 10, 13). 
% \Review{It is worth noting that, when $\delta\geq 3$, there is only one known construction of optimal $(r,\delta)$-LRCs with flexible minimum distances and lengths exceeding $q+1$ (see Row 3 of Table~\ref{tab:1}).} 
\Review{It is worth noting that, when $\delta\geq 3$, there is only one previously known work \cite{gao2024new}, to the best of our knowledge, that provides optimal $(r,\delta)$-LRCs with flexible minimum distances and lengths exceeding $q+1$; see Row~3 of Table~\ref{tab:1}.}
However, this construction has a drawback: its code length depends on $\delta$. When $\delta$ is fixed, the resulting code length exceeds $q+1$ by only a constant. These facts motivate us to consider the construction of long optimal $(r,\delta)$-LRCs with flexible minimum distances, especially for $\delta\geq 3$. Naturally, we begin to consider whether the elliptic function field-based constructions of optimal $r$-LRCs proposed by Li \textit{et al.} \cite{li2019optimal} and by Ma and Xing \cite{ma2023group} can be extended to the general optimal $(r,\delta)$-LRCs. We find that, unlike the constructions based on polynomials and rational function fields in \cite{tamo2014family} and \cite{jin2020construction}, the generalization of elliptic function field-based constructions in \cite{li2019optimal, ma2023group} from $r$-LRCs to $(r,\delta)$-LRCs is not straightforward. For further details, see Section~\ref{sec:3.1}, especially Remark~\ref{rem:equivalentconditionforEFF_1}~{\rm(i)}. Therefore, we turn our attention to the problem of constructing long optimal $(r,\delta)$-LRCs with $\delta \geq 3$ and flexible minimum distances, using elliptic function fields or more generally, algebraic function fields of higher genus. Our main contributions are organized into the following three parts.
\begin{itemize}
\item  
By utilizing the abelian group structure of elliptic curves, we generalize the framework for constructing optimal $r$-LRCs via automorphism groups of elliptic function fields in \cite{li2019optimal,ma2023group} to the case of $(r,\delta)$-LRCs in Propositions~\ref{prop:equivalentconditionforEFF_1} and~\ref{prop:construction_of_(r,delta)from_EFF_1}. 
Later in Theorems~\ref{thm:estimation_of_length_EFF_and_cons_1} and~\ref{thm:estimation_of_length_EFF_and_cons_2}, we propose two distinct sufficient conditions for elliptic function fields and subgroups of their automorphism groups, under which we can obtain constructions of optimal $(r,3)$-LRCs and $(2,\delta)$-LRCs by the generalized framework described above. By selecting suitable explicit elliptic function fields and their automorphism subgroups, we obtain several classes of explicit optimal $(r,3)$-LRCs and $(2,\delta)$-LRCs with lengths slightly less than  $q+2\sqrt{q}$. Their parameters are outlined in Rows 8, 9, 11, 12, 14 of Table~\ref{tab:1}.  

\item
Inspired by the constructions of either optimal or almost optimal $r$-LRCs via automorphism groups of hyperelliptic function fields of genus $2$ proposed by Huang and Zhao \cite{huang2025optimal}, we develop a general framework for constructing optimal $(r,3)$-LRCs via automorphism groups of such hyperelliptic function fields, as presented in Propositions~\ref{prop:sufficientconditionfor_(r,3)_viaHEFF_1} and~\ref{prop:construction_of_(r,3)from_HEFF_1}. By applying this framework to specific hyperelliptic function fields, we arrive at a family of explicit optimal $(4,3)$-LRCs with length slightly below $q+4\sqrt{q}$. We then further consider the construction of optimal $(r,\delta)$-LRCs via hyperelliptic function fields of genus $g\geq 2$, obtaining optimal $(g+1-g',g+1+g')$-LRCs ($0\leq g'\leq g-1)$ with length $q+2g\sqrt{q}$ in Theorem~\ref{thm:Cons_via_HEFF_(g+1-g',g+1+g')_p=2g+1_and_p_neq_2g+1}. Their parameters are outlined in Rows 16--19 of Table~\ref{tab:1}. 

\item We propose a framework for constructing optimal $(r,\delta)$-LRCs via superelliptic curves in Proposition~\ref{prop:construction_of_(r,delta)from_SEFF}, and obtain several classes of explicit constructions based on it. Their parameters are partially listed in Rows 20--25 of Table~\ref{tab:1}. Specifically, as shown in Rows 22 and 23 of Table~\ref{tab:1}, over $\F_q=\F_{\oq^s}$ with $s\geq 2$, Theorem~\ref{thm:cons_via_SEFF_1} produces $q$-ary optimal $(r,\delta)$-LRCs with lengths up to $\frac{1}{b\oq}q^2+\frac{b-1}{b\oq}q$ and $\frac{q(q-1)}{\oq^c}$, respectively. 
In Theorem~\ref{thm:cons_via_SEFF_2} {\rm (i)}, we generalize Theorem~\ref{thm:Cons_via_HEFF_(g+1-g',g+1+g')_p=2g+1_and_p_neq_2g+1} {\rm (i)}, yielding optimal $(r,\delta)$-LRCs with longer code lengths for smaller value of $r$ (with $(r+\delta-1)$ fixed); moreover, it can be carried out over fields of even characteristic (see Remark~\ref{rem:cons_via_SEFF_2_2}).
\end{itemize}
It is evident from Table~\ref{tab:1} that many of our new optimal $(r,\delta)$-LRCs have the longest lengths among existing constructions of optimal $(r,\delta)$-LRCs with flexible minimum distances.
 
\subsection{Organization of This Paper}
The rest of the paper is organized as follows. 
In Section~\ref{sec:2}, we review some preliminaries for
this paper, including algebraic function fields, algebraic geometry codes, extension theory of algebraic function fields, elliptic function fields and hyperelliptic function fields, along with their automorphism groups.
In Section~\ref{sec:3}, we present a general framework for constructing optimal $(r,\delta)$-LRCs via automorphism groups of elliptic function fields. 
Based on this framework, we construct two distinct classes of optimal $(r,3)$-LRCs and $(2,\delta)$-LRCs with lengths slightly below $q+2\sqrt{q}$.
In Section~\ref{sec:4}, using the automorphism subgroups of hyperelliptic function fields of genus $2$, we obtain optimal $(4,3)$-LRCs with lengths slightly below $q+4\sqrt{q}$. 
Optimal $(g+1-g',g+1+g')$-LRCs ($0\leq g'\leq g-1$) via hyperelliptic curves of genus $g\geq 2$ are also presented.
In Section~\ref{sec:5}, we introduce a general framework for constructing optimal $(r,\delta)$-LRCs via superelliptic curves. Based on it, we present two classes of explicit constructions with large code lengths.
Section~\ref{sec:6} concludes the paper.  
\section{Preliminaries}
\label{sec:2}
In this section, we present some preliminaries on algebraic function fields, algebraic geometry codes, extension theory of function fields, elliptic function fields and hyperelliptic function fields, as well as their automorphism groups.
For omitted details, the reader is referred to \cite{niederreiter2001rational, stichtenoth2009algebraic, silverman2009arithmetic}, \cite{li2019optimal}, \cite{ma2023group}, \cite{oskar1887onbinary,cardona1999oncurves,cardona2003onthenumber}, \cite{huang2025optimal}. 
\subsection{Algebraic Function Fields and Algebraic Geometry Codes}\label{sec:2.1}
Let $E/\F_q$ be a \Emph{function field} of genus $g(E)$ with the full constant field $\F_q$.
Let $\Pbb_E$ denote the set of all \Emph{places} of $E$, and let $\Pbb_{E}^1$ denote the set of all \Emph{rational places} of $E$.
The free abelian group generated by $\Pbb_E$ is called the \Emph{divisor group} of $E/\F_q$ and is denoted by $\Div(E)$. 
For $w\in E^*=E\backslash\{0\}$, its \Emph{principal divisor} is defined by $$(w):=\sum_{P\in \Pbb_E} v_P(w)P \in \Div(E),$$ where $v_P$ is the \Emph{normalized discrete valuation} with respect to the place $P$. 
For $D\in \Div(E)$, the \Emph{Riemann-Roch space} $$\Lcal(D):=\{w\in E^*:\; (w)\ge -D\}\cup \{0\}$$ is a finite-dimensional vector space over $\F_q$. We denote its dimension by $\ell(D):=\dim_{\F_q}\Lcal(D)$, which is at least $\deg(D)+1-g(E)$ by Riemann's theorem (see \cite[Theorem 1.4.17]{stichtenoth2009algebraic}). If $\deg(D)\geq 2g(E)-1$, then it holds 
\begin{align}\label{eq:ell(D)_of_deg(D)_geq_2g-1}
    \ell(D)=\deg(D)+1-g(E)
\end{align} 
by the Riemann-Roch theorem (see \cite[Theorem 1.5.15]{stichtenoth2009algebraic}).
When dealing with multiple function fields, we use superscripts and subscripts to indicate the underlying function field of the principal divisors and the Riemann-Roch spaces, respectively. For example, we write $(w)^E$ instead of $(w)$, and write $\Lcal_{E}(D)$ instead of $\Lcal(D)$. 

Let $\Pcal=\{P_1,\dots,P_n\}$ be a set of $n$ distinct rational places of $E$, which will be used for evaluation. For a divisor $D$ of $E$ with $0\leq \deg(D)<n$ and $\Supp(D)\cap\Pcal=\varnothing$, the algebraic geometry code associated with $\Pcal$ and $D$ is defined to be
\begin{align}\label{eq:10}
\Ccal(\Pcal,D):=\{(\phi(P_1),\dots,\phi(P_n)): \; \phi\in\Lcal_E(D)\}. 
\end{align}
Then $\Ccal(\Pcal,D)$ is a linear code with dimension $\ell(D)$ and minimum distance at least $n-\deg(D)$.
For any subspace $V$ of $\Lcal_E(D)$, we define a (linear) subcode of $\Ccal(\Pcal,D)$ by  
\begin{align}\label{eq:11}
\Ccal(\Pcal,V):=\{(\phi(P_1),\dots,\phi(P_n)):\; \phi\in  V\}.
\end{align}
Consequently, $\Ccal(\Pcal,V)$ is an $[n,k,d]_q$-linear code with dimension $k=\dim_{\F_q}(V)$, and its minimum distance $d$ remains at least $n-\deg(D)$.
  
\subsection{Extension Theory of Function Fields}

Let $E/\F_q$ be a function field with the full constant field $\F_q$ and let $F$ be a subfield of $E$ with the same full constant field $\F_q$ such that $E/F$ is a finite separable extension. For any place $P$ of $E$ and place $Q$ of $F$ such that $P$ lies over $Q$, we use $e(P|Q),f(P|Q)$, and $d(P|Q)$ to denote the \Emph{ramification index}, \Emph{relative degree}, and \Emph{different exponent} of $P$ over $Q$, respectively. By Dedekind's different theorem (see \cite[Theorem 3.5.1]{stichtenoth2009algebraic}), it holds
\begin{align}\label{eq:Dedekinddifferentthm}
    d(P|Q)\geq e(P|Q)-1.
\end{align}  
For a place $Q$ of $F$, its \Emph{conorm} (with respect to $E/F$) is defined to be $$\Con_{E/F}(Q):=\sum_{P|Q}e(P|Q)P\in \Div(E),$$
where the sum runs over all places $P\in \Pbb_E$ lying over $Q$.
The \Emph{different divisor} of $E/F$ is defined to be 
$
\Diff(E/F):=\sum_{Q\in\Pbb_F}\sum_{P|Q}d(P|Q)P\in \Div(E).
$ 
Let $g(E)$ and $g(F)$ denote the genus of $E$ and $F$, respectively.  
Then the Hurwitz genus formula (see \cite[Theorem 3.4.13]{stichtenoth2009algebraic}) yields
\begin{align}\label{eq:kurwitzformulacoro_1}
2g(E)-2=(2g(F)-2)[E:F]+\deg \Diff(E/F).
\end{align}  
So far, we have assumed that $E/F$ is a finite separable extension and recalled some known results. We now consider a more specific setting of $F$ to facilitate the later constructions ($F=E^G$, see below). Let $G$ be a finite subgroup of $\Aut(E/\F_q):=\{\sigma: \;  \sigma \text{ is an } \F_q\text{-automorphism of } E\}$. The subfield of elements of $E$ fixed by $G$ is defined by
$$
E^{G}:=\{u\in E: \sigma(u)=u \text{ for all } \sigma\in G\}.
$$ 
From the Galois theory, $E/E^{G}$ is a Galois extension with $\Gal(E/E^{G})=G.$ 
By \cite[Lemma 3.5.2]{stichtenoth2009algebraic}, for any automorphism $\sigma\in{\Gal}(E/E^G)=G$ and any place $P\in\Pbb_E$, $\sigma(P):=\{\sigma(u):u\in P\}$ is still a place of $E$. 
Moreover, if $P$ lies over $Q\in \Pbb_{E^G}$, then $\sigma(P)$ also lies over $Q$. By \cite[Theorem 3.7.1 and Corollary 3.7.2]{stichtenoth2009algebraic}, which characterize Galois extensions of function fields, the following result holds. 
\begin{lemma}[{\cite[Theorem 3.7.1 and Corollary 3.7.2]{stichtenoth2009algebraic}}]\label{lem:thepropertyofGalExtension_1}
   Maintaining the above setting. Let $Q$ be a place of $E^G$, and let $P_1,P_2,\dots,P_n$ be all the distinct places of $E$ lying over $Q$. Then the following statements hold.
   \begin{itemize}
  \item[{\rm (i)}] The Galois group $\Gal(E/E^G)=G$ acts transitively on the set $\{P_1,\dots,P_{n}\}$.
   \item[{\rm (ii)}] $e(P_1|Q)=\dots=e(P_n|Q)$, $f(P_1|Q)=\dots=f(P_n|Q)$, and $d(P_1|Q)=\dots=d(P_n|Q)$.
   \item[{\rm (iii)}] $n\cdot e(P_i|Q)f(P_i|Q)=[E:E^G]=|G|$ for any $1\leq i\leq n$.
   \end{itemize}
\end{lemma}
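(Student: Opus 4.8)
The statement is classical, and the plan is to establish transitivity in (i) first, then obtain (ii) and (iii) as formal consequences. For (i), I would argue by contradiction. Suppose the $G$-orbit $\Gamma_1:=\{\sigma(P_1):\sigma\in G\}$ is a proper subset of $\{P_1,\dots,P_n\}$; then there is a place $P_j$ lying over $Q$ with $P_j\notin\Gamma_1$, and its orbit $\Gamma_j:=\{\sigma(P_j):\sigma\in G\}$ is disjoint from $\Gamma_1$. Applying the Weak Approximation Theorem (\cite[Theorem 1.3.1]{stichtenoth2009algebraic}) to the finitely many places in $\Gamma_1\cup\Gamma_j$, I choose $u\in E$ with $v_{P}(u)\geq 1$ for every $P\in\Gamma_1$ and $v_{P}(u)=0$ for every $P\in\Gamma_j$, and set $w:=\prod_{\sigma\in G}\sigma(u)$. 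Since every $\tau\in G$ merely permutes the factors, $w\in E^G$. Using the identity $v_{P}(\sigma(u))=v_{\sigma^{-1}(P)}(u)$, one computes $v_{P_1}(w)=\sum_{\sigma\in G}v_{\sigma^{-1}(P_1)}(u)\geq |G|>0$, because each $\sigma^{-1}(P_1)$ lies in $\Gamma_1$, while $v_{P_j}(w)=\sum_{\sigma\in G}v_{\sigma^{-1}(P_j)}(u)=0$, because each $\sigma^{-1}(P_j)$ lies in $\Gamma_j$. On the other hand, $P_1$ and $P_j$ both lie over $Q$ and $w\in E^G$, so $v_{P_1}(w)=e(P_1|Q)\,v_Q(w)$ and $v_{P_j}(w)=e(P_j|Q)\,v_Q(w)$; the first equality forces $v_Q(w)>0$ and the second forces $v_Q(w)=0$, a contradiction. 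Hence $\Gamma_1=\{P_1,\dots,P_n\}$, which is (i).

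For (ii), I would use that each $\sigma\in G=\Gal(E/E^G)$ fixes $E^G$ pointwise, so it carries the place $P_1$ to $\sigma(P_1)$ through a local isomorphism that is the identity on $Q$; such an isomorphism preserves the ramification index and the residue degree, and, after passing to completions, also the different exponent. Combining this with (i), for each $i$ I pick $\sigma_i\in G$ with $\sigma_i(P_1)=P_i$ to conclude $e(P_i|Q)=e(P_1|Q)$, $f(P_i|Q)=f(P_1|Q)$, and $d(P_i|Q)=d(P_1|Q)$. For (iii), I would invoke the fundamental equality $\sum_{i=1}^{n}e(P_i|Q)f(P_i|Q)=[E:E^G]$ (\cite[Theorem 3.1.11]{stichtenoth2009algebraic}); since all $n$ summands coincide by (ii), this reads $n\cdot e(P_1|Q)f(P_1|Q)=[E:E^G]$, and $[E:E^G]=|\Gal(E/E^G)|=|G|$ by the fundamental theorem of Galois theory.

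The crux is part (i): one must conjure---via weak approximation---a function $u$ that vanishes uniformly along one $G$-orbit while remaining a unit along the other, and then carefully track how the valuations of the norm-type product $w=\prod_{\sigma\in G}\sigma(u)$ at $P_1$ and at $P_j$ transform under the group action, so as to reach the impossibility $0<v_Q(w)=0$. Once transitivity is secured, parts (ii) and (iii) are purely formal, relying only on the stability of local invariants under automorphisms and the standard fundamental equality; since this is precisely \cite[Theorem 3.7.1 and Corollary 3.7.2]{stichtenoth2009algebraic}, in the paper itself it suffices to cite the reference.
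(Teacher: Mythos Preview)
Your proof is correct and follows the standard approach found in Stichtenoth's textbook, which is precisely what the paper cites; the paper itself does not supply an independent proof but simply invokes \cite[Theorem 3.7.1 and Corollary 3.7.2]{stichtenoth2009algebraic}, as you also note in your final sentence.
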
    
\begin{remark}\label{rem:thepropertyofGalExtension_2} Lemma~\ref{lem:thepropertyofGalExtension_1} implies the following two useful facts. 
\begin{itemize}
\item[{\rm (i)}] For any $P\in \Pbb_{E}$, $P\cap E^G$ splits completely in $E/E^G$ if and only if the places $\sigma(P)$, for all $\sigma\in G$, are pairwise distinct.
\item[{\rm (ii)}] For any rational place $P$ of $E$, the rational place $P\cap E^G$ splits completely in $E/E^G$ if and only if $e(P|P\cap E^G)=1$.
 \end{itemize}
\end{remark} 
 In this paper, we always hope that the subfield $E^G$ of $E$ can be determined to be a rational function field over $\F_q$. Thus, the following necessary and sufficient condition will be useful. 
\begin{lemma} \label{lem:is_rational_FF}
Maintaining the above setting. $E^G$ is a rational function field if and only if $\deg\Diff(E/E^G)>2g(E)-2$.
\end{lemma}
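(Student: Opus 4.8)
The plan is to use the Hurwitz genus formula \eqref{eq:kurwitzformulacoro_1} applied to the Galois extension $E/E^G$, treating $g(E^G)$ as the unknown. Since $E/E^G$ is Galois of degree $|G|$ by the standing setup, \eqref{eq:kurwitzformulacoro_1} reads
\[
2g(E)-2 = |G|\bigl(2g(E^G)-2\bigr) + \deg\Diff(E/E^G).
\]
Solving for the genus of the fixed field gives
\[
2g(E^G)-2 = \frac{2g(E)-2 - \deg\Diff(E/E^G)}{|G|}.
\]
The function field $E^G$ has full constant field $\F_q$ (it contains $\F_q$, and it cannot be larger since it sits inside $E$ whose full constant field is $\F_q$), so $g(E^G)\ge 0$ is a nonnegative integer, and $E^G$ is rational precisely when $g(E^G)=0$, i.e. when $2g(E^G)-2=-2$. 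Thus the equivalence I must prove becomes: $2g(E^G)-2=-2 \iff \deg\Diff(E/E^G) > 2g(E)-2$.

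For the forward direction, if $E^G$ is rational then $2g(E^G)-2=-2<0$, so the displayed identity forces $2g(E)-2-\deg\Diff(E/E^G) = -2|G| < 0$ (using $|G|\ge 1$), hence $\deg\Diff(E/E^G) = 2g(E)-2+2|G| > 2g(E)-2$, which is even stronger than required. For the reverse direction, suppose $\deg\Diff(E/E^G) > 2g(E)-2$, i.e. $2g(E)-2-\deg\Diff(E/E^G) < 0$. Then $2g(E^G)-2 = \bigl(2g(E)-2-\deg\Diff(E/E^G)\bigr)/|G|$ is a negative number; but $g(E^G)$ is a nonnegative integer, so the only possibility for $2g(E^G)-2$ to be negative is $2g(E^G)-2 = -2$, forcing $g(E^G)=0$. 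Hence $E^G$ is rational. (Implicitly this also re-derives that $\deg\Diff(E/E^G)$ must then equal exactly $2g(E)-2+2|G|$, recovering the forward computation.)

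The only genuinely delicate point is confirming that $g(E^G)$ is a legitimate nonnegative integer to which the Hurwitz formula applies: one needs $E/E^G$ to be a finite separable extension so that \eqref{eq:kurwitzformulacoro_1} is valid. This is already guaranteed here because $E/E^G$ is Galois (separability is automatic for Galois extensions, and $[E:E^G]=|G|<\infty$ since $G$ is finite), and because $E^G$ has $\F_q$ as its full constant field the genus $g(E^G)$ is a well-defined nonnegative integer. So there is really no obstacle beyond bookkeeping; the proof is a one-line rearrangement of the Hurwitz formula combined with the integrality and nonnegativity of the genus. I would write it up in exactly this order: invoke \eqref{eq:kurwitzformulacoro_1}, solve for $2g(E^G)-2$, then argue both implications from the sign of the numerator together with $g(E^G)\in\Zbb_{\ge 0}$.
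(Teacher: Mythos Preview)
Your proof is correct and essentially identical to the paper's: both apply the Hurwitz genus formula \eqref{eq:kurwitzformulacoro_1} and observe that $\deg\Diff(E/E^G)>2g(E)-2$ is equivalent to $g(E^G)=0$. The paper additionally cites \cite[Proposition~1.6.3, Eq.~(5.3) and Corollary~5.1.11]{stichtenoth2009algebraic} to justify that a genus-$0$ function field over $\F_q$ is necessarily rational (this uses that every function field over a finite field has a divisor of degree one), a point you assert without explicit justification.
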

\begin{proof}
We have $2g(E)-2=(2g(E^G)-2)[E:E^G]+\deg\Diff(E/E^G)$ by the Hurwitz genus formula (see \eqref{eq:kurwitzformulacoro_1}). Hence, $\deg\Diff(E/E^G)>2g(E)-2$ if and only if $g(E^G)=0$. This is equivalent to $E^G$ being a rational function field by \cite[Proposition 1.6.3, Eq. (5.3) and  Corollary 5.1.11]{stichtenoth2009algebraic}.  
\end{proof}   
\subsection{Elliptic Curves and Elliptic Function Fields}
\label{sec:2.3}
Throughout this paper, a curve is by default referred to as a projective, smooth, and absolutely irreducible algebraic curve.
In particular, an \Emph{elliptic curve} $\Efrak$ over $\F_q$ is defined by a nonsingular Weierstrass equation
\begin{align}\label{eq:Weierstrass_equation}
y^2+a_1xy+a_3y=x^3+a_2x^2+a_4x+a_6,
\end{align}
where $a_i$ are elements of $\F_q$. The genus of $\Efrak$ is $1$.
 An elliptic curve over $\F_q$ is also denoted by a pair $(\Efrak,O)$, where $\Efrak$ is the curve defined by the above Weierstrass equation \eqref{eq:Weierstrass_equation}, and $O$ is the \Emph{point at infinity} of $\Efrak$. Denote by $E/\F_q$ and $\Efrak(\F_q)$ the function field of $\Efrak/\F_q$ and the set of all rational points on $\Efrak/\F_q$, respectively. The function field $E$ is given by $E=\F_q(x,y)$, where transcendental elements $x$ and $y$ satisfy the above Weierstrass equation \eqref{eq:Weierstrass_equation}. Recall that $\Pbb^1_E$ denotes the set of rational places of $E$. There is a natural bijection between $\Efrak(\F_q)$ and $\Pbb_{E}^1$. Specifically, a rational point $(\alpha,\beta)$ corresponds to the unique common zero of $x-\alpha$ and $y-\beta$; and the point at infinity $O$ corresponds to the unique common pole of $x$ and $y$, which will still be denoted by $O$.
 
The set of all rational points $\Efrak(\F_q)$ has a natural
  structure of abelian group $(\Efrak(\F_q),\oplus)$ with zero element $O$ given by the
  chord-tangent group law (see \cite[Chapter III.2]{silverman2009arithmetic}).  
We now identify $\Pbb_E^1$ with the abelian group $\Efrak(\F_q)$ via the bijection described above. This means that $\Pbb^1_E$ is also an abelian group with zero element $O$, and we continue to use the symbol $\oplus$ for its addition. Moreover, for $P\in \Pbb_E^1$, we use $\ominus P$ to denote its inverse, and for $P,Q\in \Pbb_E^1$, we use $P\ominus Q$ to represent $P\oplus (\ominus Q)$, i.e, the subtraction. We also write $[m]P$ to stand for 
$$
[m]P:=\begin{cases}
    \underbrace{P\oplus\dots \oplus P}_{m \text{ times }},   &\text{ if } m \text{ is a positive integer;}\\
   O,    &\text{ if } m=0;\\
    \underbrace{\ominus P\dots \ominus P}_{-m \text{ times }},   &\text{ if } m \text{ is a negative integer.}\\
\end{cases}
$$ 

In what follows, we review the correspondence between the geometric group law of $\Pbb_E^1$ and the algebraic group law of $\Cl^0(E)$. 
The set of divisors of degree zero forms a subgroup of $\Div(E)$, denoted by $\Div^0(E)$.
Two divisors $A,B\in \Div(E)$ are called equivalent if there exists $w\in E^*$ such that $A=B+(w)$, and we denote this by $A\sim B$.
The set of divisors 
$\Princ(E):=\braces{(w)^E=\sum_{P\in\Pbb_E}v_P(w)P:\; w\in E^*}
$ is called the \Emph{group of principal divisors} of $E/\F_q$, which is a subgroup of the abelian group $\Div^0(E)$.
The \Emph{group of divisor classes of degree zero} of $E/\F_q$ is defined as the following quotient group  $$\Cl^0(E):=\Div^0(E)/\Princ(E).$$
By \cite[Chapter {\rm III}, Proposition 3.4 (e)]{silverman2009arithmetic}, there is a group isomorphism between $(\Pbb^1_E,\oplus)\cong (\Efrak(\F_q),\oplus)$ and $\Cl^0(E)$ given by
\begin{equation}\label{eq:202506032304}
\varphi:\begin{cases}\Pbb^1_E\xrightarrow{\sim} \Cl^0(E),\\ P\mapsto [P-O],
\end{cases}
\end{equation}
where $[P-O]$ denotes $P-O+\Princ(E)\in \Cl^0(E)=\Div^0(E)/\Princ(E)$.
 This implies the following lemma. 
\begin{lemma}\label{lem:deg0divisor_property}
Let $E/\F_q$ be an elliptic function field and let $P_{1},\dots,P_{n},P_{1}',\dots,P_{n}'$ be $2n$ (not necessarily distinct) rational places of $E$. Then
$$P_1+\dots+P_n\sim P_1'+\dots+P_n' \text{ if and only if } P_1\oplus \dots\oplus P_n=P_1'\oplus\dots\oplus P_n'.$$
\end{lemma}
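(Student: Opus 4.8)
The plan is to derive Lemma~\ref{lem:deg0divisor_property} directly from the group isomorphism $\varphi$ in \eqref{eq:202506032304}, using the fact that $\varphi$ is a bijective group homomorphism between $(\Pbb_E^1,\oplus)$ and $\Cl^0(E)=\Div^0(E)/\Princ(E)$. First I would rewrite the linear equivalence $P_1+\dots+P_n\sim P_1'+\dots+P_n'$ in terms of degree-zero divisors: since both sides are effective divisors of the same degree $n$, we have $P_1+\dots+P_n\sim P_1'+\dots+P_n'$ if and only if the degree-zero divisor $(P_1-O)+\dots+(P_n-O)$ is equivalent to $(P_1'-O)+\dots+(P_n'-O)$, i.e., if and only if they represent the same class in $\Cl^0(E)$. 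In the notation of \eqref{eq:202506032304}, this says $[P_1-O]+\dots+[P_n-O]=[P_1'-O]+\dots+[P_n'-O]$ in $\Cl^0(E)$, which is exactly $\varphi(P_1)+\dots+\varphi(P_n)=\varphi(P_1')+\dots+\varphi(P_n')$.

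Next, since $\varphi$ is a group homomorphism, $\varphi(P_1)+\dots+\varphi(P_n)=\varphi(P_1\oplus\dots\oplus P_n)$ and similarly for the primed side; so the displayed identity in $\Cl^0(E)$ is equivalent to $\varphi(P_1\oplus\dots\oplus P_n)=\varphi(P_1'\oplus\dots\oplus P_n')$. Finally, because $\varphi$ is injective (indeed bijective), this holds if and only if $P_1\oplus\dots\oplus P_n=P_1'\oplus\dots\oplus P_n'$ in $(\Pbb_E^1,\oplus)$. Chaining these equivalences gives the claim. I should be slightly careful about the edge case $n=0$ or degenerate choices, but since $n\geq 1$ and the places need not be distinct, no special handling is needed; the only facts invoked are that $\varphi$ is a well-defined group isomorphism and that two effective divisors of equal degree are linearly equivalent precisely when the associated degree-zero divisors lie in the same class.

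The only mild subtlety — and the step I would state most explicitly — is the passage from ``$P_1+\dots+P_n\sim P_1'+\dots+P_n'$ as divisors of $E$'' to ``the classes agree in $\Cl^0(E)$''. This uses that $\sim$ is translation-invariant: $A\sim B$ implies $A-C\sim B-C$ for any divisor $C$; taking $C=nO$ turns the equivalence of the two effective divisors into an equivalence of degree-zero divisors, which by definition of $\Cl^0(E)$ is the equality of classes. There is no genuine obstacle here; the lemma is essentially a restatement of the classical fact (\cite[Chapter III, Proposition 3.4]{silverman2009arithmetic}) that the divisor-class map on an elliptic curve is a group isomorphism, specialized to sums of rational points, and the proof is a short chain of ``if and only if'' steps.
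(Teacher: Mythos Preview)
Your proposal is correct and follows essentially the same approach as the paper: subtract $nO$ to pass to degree-zero divisors, interpret linear equivalence as equality of classes in $\Cl^0(E)$, and then invoke the group isomorphism $\varphi$ of \eqref{eq:202506032304}. The paper's proof is simply a terser version of the same chain of equivalences.
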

\begin{proof}
  We have 
    $P_1+\dots+P_n\sim P_1'+\dots+P_n'$ if and only if
$P_1-O+\dots+P_n-O\sim P_1'-O+\dots+P_n'-O$, which is equivalent to
    $[P_1-O]+\dots+[P_n-O]=[P_1'-O]+\dots+[P_n'-O]$.
This is equivalent to $P_1\oplus \dots\oplus P_n=P_1'\oplus\dots\oplus P_n'$ by the group isomorphism $\varphi$ in \eqref{eq:202506032304}. 
\end{proof}
There is an upper bound on the number $N(E)=|\Pbb_E^1|=|\Efrak(\F_q)|$, which is the special case of the well-known Hasse–Weil bound for curves of genus $1$ (see \cite[Chapter {\rm V.1}, Theorem 1.1]{silverman2009arithmetic}). It states that 
$| N(E)-q-1| \leq 2\sqrt{q}.
$
An elliptic function field \( E/\F_q \) is called maximal if $N(E)$ attains the Hasse-Weil upper bound, i.e.,
$
N(E) = q + 2\sqrt{q} + 1.
$ 

We now recall some results on $N(E)$ and the group structure of $(\Pbb_E^1,\oplus)\cong (\Efrak(\F_q),\oplus)$.
We say that two elliptic curves $\Efrak_1$ and $\Efrak_2$ over $\F_q$ are \Emph{isogenous} if there is a non-constant smooth $\F_q$-morphism from $\Efrak_1$ to $\Efrak_2$ that sends the zero of $\Efrak_1$ to the zero of $\Efrak_2$ (see \cite{silverman2009arithmetic}). It is well known that two elliptic curves $\Efrak_1$ and $\Efrak_2$ over $\F_q$ are isogenous if and only if they have the same number of rational points. The following precise result is due to \cite{waterhouse1969abelian}. 

\begin{lemma}[{\cite[Theorem 4.1]{waterhouse1969abelian}}] \label{lem:classification_of_isogeny_class_of_EFF}
The isogeny classes of elliptic curves over $\F_q$ for $q=p^s$ are in one-to-one correspondence with the rational integers $t$ having $|t|\leq 2\sqrt{q}$
and satisfying some one of the following conditions:
\begin{itemize}
\item[{\rm (i)}] $(t,p)=1$;
\item[{\rm (ii)}] If $s$ is even: $t=\pm 2\sqrt{q}$;
\item[{\rm (iii)}] If $s$ is even and $p\not \equiv 1\pmod{3}$: $t=\pm \sqrt{q}$;
\item[{\rm (iv)}] If $s$ is odd and $p=2$ or $3$: $t=\pm p^{\frac{s+1}{2}}$;
\item[{\rm (v)}] If either {\rm (1)} $s$ is odd or {\rm (2)} $s$ is even and $p\not \equiv 1\pmod{4}: t=0.$
\end{itemize}
Furthermore, an elliptic curve in the isogeny class corresponding to $t$ has $q+1+t$ rational points.
\end{lemma}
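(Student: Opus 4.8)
The plan is to recast the statement through the Frobenius endomorphism. To an elliptic curve $\Efrak/\F_q$ one attaches its $q$-power Frobenius $\pi$, an algebraic integer satisfying $\pi\bar\pi = q$ and $\pi + \bar\pi = q + 1 - N(\Efrak) = -t$ under every complex embedding, so that the characteristic polynomial of $\pi$ is $X^2 + tX + q$; the Hasse--Weil estimate $|t|\le 2\sqrt q$ recalled in the text is just $|\pi| = \sqrt q$ rephrased. The first substantive input I would quote is Tate's isogeny theorem over finite fields: $\Efrak_1$ and $\Efrak_2$ are $\F_q$-isogenous if and only if the characteristic polynomials of their Frobenii coincide, i.e.\ $t(\Efrak_1) = t(\Efrak_2)$, i.e.\ $N(\Efrak_1) = N(\Efrak_2)$. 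This already shows that ``isogeny class $\mapsto t$'' is a well-defined injection into $\{t\in\Zbb:|t|\le 2\sqrt q\}$ and that all curves in one class share the point count $q+1+t$; what remains is to identify the image.

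To shrink the image I would examine the $p$-adic behaviour of $\pi$. The $p$-divisible group of an elliptic curve in characteristic $p$ is either ordinary, with Frobenius Newton slopes $0$ and $1$, or supersingular, with both slopes $1/2$; accordingly $\Efrak$ is ordinary exactly when $p\nmid t$ and supersingular exactly when $p\mid t$. In the ordinary case $X^2 + tX + q$ is separable with negative discriminant and $\End^0(\Efrak) = \Qbb(\pi)$ is imaginary quadratic. In the supersingular case the slope condition forces $v_p(\pi) = v_p(\bar\pi) = s/2$, hence $p^{\lceil s/2\rceil}\mid t$, while $\End^0(\Efrak)$ is a quaternion algebra over $\Qbb$ (when $t^2 = 4q$, which requires $\sqrt q\in\Zbb$, i.e.\ $s$ even) or else an imaginary quadratic field $K = \Qbb(\pi)$ in which $p$ is non-split. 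Combining $p^{\lceil s/2\rceil}\mid t$ with $|t|\le 2\sqrt q$ leaves only the $t$ coprime to $p$ together with: for $s$ even, $t\in\{0,\pm\sqrt q,\pm 2\sqrt q\}$; and for $s$ odd, $t = 0$ and --- only when $\sqrt p\le 2$, i.e.\ $p\in\{2,3\}$ --- $t = \pm p^{(s+1)/2}$. Finally, computing $K$ for $t = \pm\sqrt q$ and for $t = 0$ with $s$ even, namely $\Qbb(\sqrt{-3})$ and $\Qbb(\sqrt{-1})$ respectively, converts ``$p$ non-split in $K$'' into the congruences $p\not\equiv 1\pmod 3$ and $p\not\equiv 1\pmod 4$ of conditions (iii) and (v). Thus the image is contained in the stated set.

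For the reverse inclusion --- realizability of every listed $t$ --- I would again split on $p\mid t$. When $p\nmid t$ the order $\Zbb[\pi]$ sits inside an imaginary quadratic field $K$ in which $p$ splits, so Deuring's reduction theorem applies: take an elliptic curve over a number field with complex multiplication by an order containing $\Zbb[\pi]$, choose a prime above $p$ at which it has good ordinary reduction with Frobenius $\pi$, and reduce to get $\Efrak/\F_q$ of trace $t$. For the finitely many supersingular $t$ surviving the constraints I would either write down explicit models (e.g.\ over $\F_{p^2}$ the curve with $\pi = [-p]$ realizes $t = \pm 2\sqrt q$, and suitable reductions of CM curves by $\Qbb(\sqrt{-1}),\Qbb(\sqrt{-2}),\Qbb(\sqrt{-3})$ give the others) or invoke the Honda--Tate existence criterion, checking that the Brauer-class obstruction $\sum_v\mathrm{inv}_v(\End^0) = 0$ holds because the only nonzero local invariants sit at $p$ and $\infty$ and each equals $1/2$. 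Together with the injectivity above this yields the bijection and the point-count formula $N(\Efrak) = q + 1 + t$.

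The step I expect to be the genuine obstacle is the supersingular bookkeeping: one must track simultaneously the Newton slopes, the splitting behaviour of $\Qbb(\sqrt{-1}),\Qbb(\sqrt{-2}),\Qbb(\sqrt{-3})$ at $p$, the realization of $\End^0$ as a quaternion algebra versus a CM field with prescribed local invariants, and the parity of $s$, and then verify case by case that ``no elliptic curve has intermediate Newton slopes'' together with the Honda--Tate invariant condition reproduces exactly the list (ii)--(v) with nothing missing and nothing spurious. By contrast, the ordinary case reduces cleanly to Deuring's classical CM-lifting theorem, and the Tate and Hasse--Weil ingredients are entirely standard.
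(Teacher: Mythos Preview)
The paper does not prove this lemma: it is stated as a preliminary result and attributed directly to Waterhouse \cite{waterhouse1969abelian}, with no argument given. Your sketch is a correct outline of the standard proof --- Tate's theorem for the injection, Newton-slope analysis plus the non-split condition on $p$ in $\Qbb(\pi)$ to cut down the supersingular possibilities, and Deuring lifting / Honda--Tate for existence --- and this is essentially Waterhouse's own route. One minor addendum: in your supersingular bookkeeping you handle $t=\pm\sqrt q$ and $t=0$ with $s$ even via $\Qbb(\sqrt{-3})$ and $\Qbb(i)$, which is right; for completeness you should also note that $t=0$ with $s$ odd gives $K=\Qbb(\sqrt{-p})$, where $p$ ramifies and hence is automatically non-split, explaining why no congruence condition appears in (v)(1).
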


As for the group structure of $(\Pbb_{E}^1,\oplus)\cong (\Efrak(\F_q),\oplus)$, the following result can be found in \cite[Theorem 3]{ruck1987note} and \cite[Theorem 9.97]{hirschfeld2008algebraic}, which is  summarized by Ma and Xing in \cite[Proposition 2.4]{ma2023group}.
\begin{lemma}\label{lem:group_structure_of_EFF} 
Let $\F_q$ be the finite field with $q=p^s$ elements. Let $h=\prod_\ell \ell^{h_\ell}$ be a possible number of rational places of an elliptic function field $E$ over $\F_q$. Then all the possible groups $\Pbb_E^1$ are
$\mathbb{Z}/p^{h_p}\Zbb\times \prod_{\ell \neq p}\left( \mathbb{Z}/\ell^{a_\ell}\Zbb\times \mathbb{Z}/\ell^{h_\ell-a_\ell}\Zbb\right)$
with \begin{itemize}
 \item[{\rm (a)}] In case {\rm (ii)} of Lemma~\ref{lem:classification_of_isogeny_class_of_EFF}: Each $a_\ell$ is equal to $h_\ell/2$, i.e, $\Pbb_E^1\cong \Zbb/(\sqrt{q}\pm1)\Zbb \times  \Zbb/(\sqrt{q}\pm1)\Zbb.$
 
\item[{\rm (b)}] In other cases of Lemma~\ref{lem:classification_of_isogeny_class_of_EFF}: $a_\ell$ is an arbitrary integer satisfying $0\le a_\ell \le \min\{\nu_\ell(q-1),[h_\ell/2]\}$.
In cases {\rm (iii)} and {\rm (iv)} of Lemma~\ref{lem:classification_of_isogeny_class_of_EFF}: $\Pbb_E^1\cong \Zbb/h\Zbb.$
 In case {\rm (v)} of Lemma~\ref{lem:classification_of_isogeny_class_of_EFF}: if $q\not\equiv -1(\text{mod } 4)$, then $\Pbb_E^1\cong \Zbb/(q+1)\Zbb$; otherwise, $\Pbb_E^1\cong \Zbb/(q+1)\Zbb$ or $\Pbb_E^1\cong \Zbb/2\Zbb \times \Zbb/\frac{q+1}{2}\Zbb$.
\end{itemize}
\end{lemma}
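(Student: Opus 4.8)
The plan is to pass, via the identification $\Pbb^1_E \cong \Efrak(\F_q)$ recalled above, to the finite abelian group $\Efrak(\F_q) = \ker(\pi - 1)$, where $\pi$ denotes the $q$-power Frobenius endomorphism of $\Efrak$. Its characteristic polynomial is $T^2 - sT + q$, where $s = q + 1 - h$ is the trace of Frobenius and $|s| \le 2\sqrt{q}$; since $h$ is a valid number of rational places, $\pi$ falls into one of the cases of Lemma~\ref{lem:classification_of_isogeny_class_of_EFF} (with the convention there, $t = -s$). I would decompose $\Efrak(\F_q) = \bigoplus_{\ell} \Efrak(\F_q)_{\ell}$ into $\ell$-primary components and then prove the statement in two parts: (a) that each $\Efrak(\F_q)_{\ell}$ necessarily has the asserted shape, and (b) that, as $\Efrak$ ranges over the isogeny class determined by $s$, every shape compatible with (a) is genuinely attained.

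For part (a) and a prime $\ell \neq p$, I would use that $\Efrak[\ell^n] \cong (\Zbb/\ell^n\Zbb)^2$ over $\overline{\F_q}$. Hence $\Efrak(\F_q)_{\ell}$, being a finite subgroup of $\Efrak[\ell^{\infty}] \cong (\Qbb_{\ell}/\Zbb_{\ell})^2$, is generated by at most two elements, so $\Efrak(\F_q)_{\ell} \cong \Zbb/\ell^{a_{\ell}}\Zbb \times \Zbb/\ell^{h_{\ell} - a_{\ell}}\Zbb$ with $0 \le a_{\ell} \le h_{\ell}/2$, i.e. $a_{\ell} \le \lfloor h_{\ell}/2 \rfloor$. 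The bound $a_{\ell} \le \nu_{\ell}(q-1)$ comes from the Weil pairing: if all of $\Efrak[\ell^{a_{\ell}}]$ is $\F_q$-rational, the perfect, alternating, Galois-equivariant pairing $\Efrak[\ell^{a_{\ell}}] \times \Efrak[\ell^{a_{\ell}}] \to \mu_{\ell^{a_{\ell}}}$ forces $\mu_{\ell^{a_{\ell}}} \subseteq \F_q$, whence $\ell^{a_{\ell}} \mid q - 1$. For $\ell = p$, the \'etale part of the $p$-divisible group of $\Efrak$ has height at most one ($\Zbb/p^{\infty}$ if $\Efrak$ is ordinary, trivial if $\Efrak$ is supersingular), so $\Efrak(\F_q)_{p}$ is cyclic and equals $\Zbb/p^{h_{p}}\Zbb$ (and $h_p = 0$ automatically in the supersingular cases, since then $p \mid s$ and $p \mid q$ force $p \nmid h$). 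Combining, we obtain the general form of case (b) together with the constraint $0 \le a_{\ell} \le \min\{\nu_{\ell}(q-1), \lfloor h_{\ell}/2 \rfloor\}$.

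For part (b), and for the sharpenings in cases (ii)--(v), I would use the structure of $\End(\Efrak)$ within the isogeny class. When $\pi \notin \Zbb$ (all cases except $s = \pm 2\sqrt{q}$), $K := \Qbb(\pi)$ is imaginary quadratic, $\End(\Efrak)$ is an order $\Ocal$ with $\Zbb[\pi] \subseteq \Ocal \subseteq \Ocal_K$, and the complex multiplication description of point groups gives $\Efrak(\F_q) \cong \Ocal/(\pi - 1)\Ocal$ as abelian groups; writing $\pi - 1 = a + b\omega$ in a $\Zbb$-basis $\{1, \omega\}$ of $\Ocal$, this quotient is $\Zbb/d\Zbb \times \Zbb/(h/d)\Zbb$ with $d = \gcd(a, b)$. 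As $\Ocal$ runs over the orders occurring in the isogeny class (by Deuring--Waterhouse, exactly those between $\Zbb[\pi]$ and $\Ocal_K$), $d$ runs over precisely the divisors allowed by part (a); verifying this is bookkeeping with the conductor of $\Zbb[\pi]$ in $\Ocal_K$ and the factorization of $s^2 - 4q$. Concretely, in cases (iii) and (iv) the value of $s^2 - 4q$ pins down $K$ ($\Qbb(\sqrt{-2})$ or $\Qbb(\sqrt{-3})$), and a short computation in the relevant orders shows $\pi - 1$ is not divisible by any rational prime, so $\Efrak(\F_q) \cong \Zbb/h\Zbb$ is forced cyclic; in case (v), $\pi = \sqrt{-q}$ and $\pi - 1$ generates $\Ocal/(\pi - 1)\Ocal$ unless $q \equiv -1 \pmod{4}$ and $\tfrac{1+\pi}{2} \in \End(\Efrak)$, in which case $\pi - 1 \in 2\End(\Efrak)$ and one obtains $\Zbb/2\Zbb \times \Zbb/\tfrac{q+1}{2}\Zbb$. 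Finally, in case (ii) one has $(\pi \mp \sqrt{q})^2 = 0$; such a curve is supersingular and $\End^0(\Efrak)$ is the quaternion division algebra over $\Qbb$ ramified exactly at $p$ and $\infty$, which has no zero divisors, so $\pi = \pm\sqrt{q} \in \Zbb$ acts as a scalar and $\Efrak(\F_q) = \Efrak[\sqrt{q} \pm 1] \cong (\Zbb/(\sqrt{q} \pm 1)\Zbb)^2$ (using $\gcd(\sqrt{q} \pm 1, p) = 1$); since $h = (\sqrt{q} \pm 1)^2$, this is the assertion $a_{\ell} = h_{\ell}/2$.

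The hard part is part (b): showing that every shape compatible with the necessary conditions is genuinely attained. This is not formal — it relies on the theory of elliptic curves with complex multiplication (so that $\Efrak(\F_q) \cong \Ocal/(\pi - 1)\Ocal$, and so that all orders between $\Zbb[\pi]$ and $\Ocal_K$ actually occur in the isogeny class) and, in the $s = \pm 2\sqrt{q}$ case, on the fact that the relevant endomorphism algebra is a quaternion division algebra. These are exactly the inputs furnished by the arguments of R\"uck and of Hirschfeld; granting them, the remaining case analysis is routine.
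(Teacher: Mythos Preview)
The paper does not supply its own proof of this lemma; it simply records the result as known, citing R\"uck \cite[Theorem~3]{ruck1987note} and Hirschfeld \cite[Theorem~9.97]{hirschfeld2008algebraic} (as summarized by Ma and Xing \cite[Proposition~2.4]{ma2023group}). Your sketch is essentially the strategy of R\"uck's original argument: obtain the necessary shape of each $\ell$-primary component from the structure of $\Efrak[\ell^\infty]$ and the Weil pairing, then realize every admissible shape by varying the endomorphism order within the isogeny class via Deuring--Waterhouse, with the special cases (ii)--(v) handled by direct inspection of $\pi-1$ in the relevant order (or, in case~(ii), by $\pi\in\Zbb$). One small caveat: your parenthetical ``exactly those between $\Zbb[\pi]$ and $\Ocal_K$'' overstates Waterhouse's theorem in the supersingular non-scalar cases, where the occurring orders must have conductor prime to $p$; this does not affect the conclusion, since the $p$-part of $\Efrak(\F_q)$ is already pinned down, but it is worth stating precisely if you write out the details.
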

\subsection{Automorphism Groups of Elliptic Curves and Elliptic Function Fields}\label{subsec:2.4}

First, we review the automorphism groups of elliptic curves.  
Let $\Efrak/\F_q$ be an elliptic curve defined by the Weierstrass equation \eqref{eq:Weierstrass_equation}. We denote by $\Aut(\Efrak)$ the set of automorphisms of the elliptic curve $\Efrak$ over the algebraic closure $\overline{\F_q}$. We emphasize that every automorphism $\sigma\in \Aut(\Efrak)$ is required to fix the point at infinity $O$, that is, it must be an isogeny. For a characterization of $\Aut(\Efrak)$, see \cite[Chapter {\rm III}, Theorem 10.1]{silverman2009arithmetic} and its proof.
 
Next, we review the automorphism groups of elliptic function fields.
Let $E/\F_q$ be the function field of $\Efrak/\F_q$.
Define
$
\Aut(E/\F_q):=\{\sigma:\: \sigma \text{ is an } \F_q\text{-automorphism of } E \}.
$
It is a subgroup of the automorphism group $\Aut(E\overline{\F}_q/\overline{\F}_q)$.

For $\sigma\in \Aut(E/\F_q)$ and $P\in \Pbb_E$, it follows from the proof of \cite[Lemma 3.5.2 (a)]{stichtenoth2009algebraic} that $\sigma(P)$ is also a place of $E$.
Let $\Aut(E,O):=\{\sigma\in \Aut(E/\F_q):\;\sigma(O)=O\}$ be the set of $\F_q$-automorphisms of $E$ fixing $O$.
Then $\Aut(E,O)$ is a subgroup of $\Aut(\Efrak)$ in which every automorphism is defined over $\F_q$, i.e., it holds that $\Aut(E,O)=\Aut(\Efrak) \cap \Aut(E/\F_q).$
For each $Q\in \Pbb_E^1$, the translation-by-$Q$ map
$\tau_Q$ defined by $\tau_Q(P)=P\oplus Q$ induces an $\F_q$-automorphism of $E$. Let $T_E$ be the translation group $\{\tau_Q:\; Q\in \Pbb_E^1\}$ of the elliptic function field $E$, which is naturally isomorphic to the abelian group $\Pbb_E^1$. The following two results characterize the automorphism group of an elliptic function field and its subgroups. 
\begin{lemma}[{\cite[Theorem 3.1]{ma2023group}}]\label{lem:groupstructure_of_Aut_of_EFF}
Let $E/\F_q$ be an elliptic function field. The automorphism group of $E$ over $\F_q$ is the semidirect product of the translation group $T_E$ and the stabilizer $\Aut(E,O)$ of the infinite place $O$, i.e., \[\Aut(E/\F_q)=T_E \rtimes \Aut(E,O). \] The group law of $\Aut(E/\F_q)$ is given by $(\tau_P\alpha)\cdot (\tau_Q\beta)=\tau_{P\oplus \alpha(Q)}\cdot \alpha\beta$ for any $\tau_P,\tau_Q\in T_E$ and $\alpha,\beta\in \Aut(E,O)$.
\end{lemma}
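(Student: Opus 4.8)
The plan is to reduce the statement to three facts about how $\Aut(E/\F_q)$ acts on the set $\Pbb_E^1$ of rational places: (a) $\Aut(E/\F_q)=T_E\cdot\Aut(E,O)$; (b) $T_E\cap\Aut(E,O)=\{\id\}$; (c) $T_E$ is a normal subgroup and conjugation by a general element is given explicitly by the induced action on points. From (a)--(c) the internal semidirect product decomposition and the group law follow purely formally. Throughout I use that every $\sigma\in\Aut(E/\F_q)$ induces a permutation of $\Pbb_E^1$ (automorphisms fix $\F_q$ and preserve the degree of places, hence send rational places to rational places), that $\sigma\mapsto(\text{this permutation})$ is a homomorphism into $\mathrm{Sym}(\Pbb_E^1)$ in the convention of the paper, and that under the identification $T_E\cong\Pbb_E^1$ one has $\tau_P\circ\tau_Q=\tau_{P\oplus Q}$ and $\tau_Q(R)=R\oplus Q$ for all $P,Q,R$.

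First I would establish (a) and (b). Given $\sigma\in\Aut(E/\F_q)$, set $P:=\sigma(O)\in\Pbb_E^1$ and $\alpha:=\tau_{\ominus P}\circ\sigma$. Then $\alpha(O)=\tau_{\ominus P}(P)=P\ominus P=O$, so $\alpha\in\Aut(E,O)$, and since $\tau_P\circ\tau_{\ominus P}=\tau_{P\oplus(\ominus P)}=\tau_O=\id$ we get $\sigma=\tau_P\circ\alpha$; this proves (a). For (b), if $\tau_Q\in\Aut(E,O)$ then $O=\tau_Q(O)=O\oplus Q$, forcing $Q=O$, hence $\tau_Q=\id$; in particular the factorization $\sigma=\tau_P\alpha$ in (a) is unique.

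The heart of the proof --- and the step I expect to need the most care --- is (c). The key claim is that every $\alpha\in\Aut(E,O)$ acts on $(\Pbb_E^1,\oplus)$ as a \emph{group} automorphism. I would deduce this from Lemma~\ref{lem:deg0divisor_property}: for $A,B\in\Pbb_E^1$ one has $A\oplus B=C$ if and only if $A+B\sim C+O$ as divisors. Since a field automorphism of $E$ maps principal divisors to principal divisors (for $w\in E^*$ the divisor of $\alpha(w)$ is the image of that of $w$ under the induced permutation of places), it preserves linear equivalence; applying $\alpha$ to $A+B\sim C+O$ and using $\alpha(O)=O$ gives $\alpha(A)+\alpha(B)\sim\alpha(C)+O$, i.e.\ $\alpha(A)\oplus\alpha(B)=\alpha(C)=\alpha(A\oplus B)$. (Equivalently, this is the classical fact that a curve morphism of $\Efrak$ fixing $O$ is a group homomorphism, cf.\ \cite[Ch.~III]{silverman2009arithmetic}; I would likely cite it rather than reprove it.) Granting the claim, for any $R\in\Pbb_E^1$,
\[
(\alpha\circ\tau_Q\circ\alpha^{-1})(R)=\alpha\bigl(\alpha^{-1}(R)\oplus Q\bigr)=R\oplus\alpha(Q)=\tau_{\alpha(Q)}(R),
\]
so $\alpha\tau_Q\alpha^{-1}=\tau_{\alpha(Q)}\in T_E$; since $T_E$ is abelian, conjugating further by any $\tau_P$ changes nothing, so for $\sigma=\tau_P\alpha$ we obtain $\sigma\tau_Q\sigma^{-1}=\tau_{\alpha(Q)}$, and in particular $T_E$ is normal in $\Aut(E/\F_q)$.

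Finally, (a), (b), (c) together say exactly that $\Aut(E/\F_q)=T_E\rtimes\Aut(E,O)$ as an internal semidirect product, and the group law is read off from the conjugation formula together with $\tau_P\tau_S=\tau_{P\oplus S}$:
\[
(\tau_P\alpha)\cdot(\tau_Q\beta)=\tau_P\,(\alpha\tau_Q\alpha^{-1})\,(\alpha\beta)=\tau_P\,\tau_{\alpha(Q)}\,(\alpha\beta)=\tau_{P\oplus\alpha(Q)}\,(\alpha\beta),
\]
which is the asserted identity. The only genuinely elliptic input is the group-homomorphism property of $O$-fixing automorphisms, resting on the dictionary between $\oplus$ and linear equivalence encoded in Lemma~\ref{lem:deg0divisor_property} and the isomorphism $\varphi$ of \eqref{eq:202506032304}; everything else (preservation of rationality and of linear equivalence under $\sigma$, multiplicativity of the action on places, and the final composition identities) is routine, and I would only flag in passing the sign/contravariance conventions relating function field automorphisms to curve morphisms.
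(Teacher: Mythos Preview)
The paper does not prove this lemma; it is quoted verbatim from \cite[Theorem~3.1]{ma2023group} and used as a black box, so there is no in-paper argument to compare against. Your proof is the standard one and is essentially correct.

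One small point deserves tightening. In step (c) you verify $(\alpha\tau_Q\alpha^{-1})(R)=\tau_{\alpha(Q)}(R)$ only for $R\in\Pbb_E^1$ and then conclude $\alpha\tau_Q\alpha^{-1}=\tau_{\alpha(Q)}$ as automorphisms of $E$. This inference needs the action map $\Aut(E/\F_q)\to\mathrm{Sym}(\Pbb_E^1)$ to be injective, which can fail when $N(E)$ is very small (e.g.\ if $\Pbb_E^1=\{O\}$, every element of $\Aut(E,O)$ acts trivially on $\Pbb_E^1$). The fix is immediate: your computation, and the group-homomorphism property of $\alpha$, hold verbatim for all $\overline{\F_q}$-points, and two $\F_q$-morphisms of $\Efrak$ that agree on $\Efrak(\overline{\F_q})$ are equal. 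Alternatively, appeal directly to Remark~\ref{rem:invo_of_EFF_and_isogeny_is_endo}\,{\rm(ii)} (every $\alpha\in\Aut(E,O)$ is a group automorphism of $\Efrak$) and argue with morphisms rather than with their effect on rational places. With that adjustment your argument is complete.
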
 
\begin{lemma}[{\cite[Proposition 3.2]{ma2023group}}]\label{lem:subgroup_of_auto_ECC}
Let $E/\F_q$ be an elliptic function field and let $G$ be a subgroup of $\Aut(E/\F_q)$. Then, we have $G \cong (T_E\cap G) \rtimes \pi(G)$, i.e., every subgroup of $\Aut(E/\F_q)$ is isomorphic to a semiproduct of a subgroup of $T_E$ and a subgroup of $\Aut(E, O)$.
\end{lemma}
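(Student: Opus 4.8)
The plan is to recast the claim as a splitting statement. Writing $\pi\colon\Aut(E/\F_q)\to\Aut(E,O)$ for the projection attached to the decomposition $\Aut(E/\F_q)=T_E\rtimes\Aut(E,O)$ of Lemma~\ref{lem:groupstructure_of_Aut_of_EFF} (so $\pi(\tau_P\alpha)=\alpha$ and $\ker\pi=T_E$), I would set $N:=T_E\cap G=\ker(\pi|_G)$, a normal subgroup of $G$ with $G/N\cong\pi(G)$, and prove that the extension $1\to N\to G\xrightarrow{\pi}\pi(G)\to1$ splits. A short computation with the group law of Lemma~\ref{lem:groupstructure_of_Aut_of_EFF} gives $g\tau_Q g^{-1}=\tau_{\alpha(Q)}$ for $g=\tau_P\alpha\in G$ and $\tau_Q\in N$, so the conjugation action of $G$ on $N$ factors through $\pi$ and is exactly the restriction to $\pi(G)$ of the natural action of $\Aut(E,O)$ on $T_E$ (every $\alpha\in\Aut(E,O)$ fixes $O$, hence is a group automorphism of $(\Pbb_E^1,\oplus)$). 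Thus any complement $C$ to $N$ in $G$ automatically satisfies $C\cong\pi(G)$ and $G=N\rtimes C\cong(T_E\cap G)\rtimes\pi(G)$ with the asserted action, and the whole problem reduces to producing such a complement.

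The key input is a rigidity property of torsion automorphisms of $\Efrak$: if $\alpha\in\Aut(E,O)$ has order $m\ge2$, then, viewed in $\End(\Efrak)$, it generates a commutative $\Qbb$-subalgebra $\Qbb[\alpha]$ of the division algebra $\End(\Efrak)\otimes\Qbb$ (an imaginary quadratic field or a definite quaternion algebra), so $\Qbb[\alpha]$ is a field and $\alpha$ is a primitive $m$-th root of unity with minimal polynomial $\Phi_m(x)$. Since $\Phi_m(x)\mid 1+x+\cdots+x^{m-1}$ for $m\ge2$, this yields $1+\alpha+\cdots+\alpha^{m-1}=0$ in $\End(\Efrak)$, hence $P\oplus\alpha(P)\oplus\cdots\oplus\alpha^{m-1}(P)=O$ for all $P\in\Pbb_E^1$. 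Feeding this into the group law, any lift $g=\tau_P\alpha\in G$ of an element $\alpha\in\pi(G)$ of order $m\ge2$ satisfies $g^m=\tau_{P\oplus\alpha(P)\oplus\cdots\oplus\alpha^{m-1}(P)}=\tau_O=\id$, so $\langle g\rangle\xrightarrow{\pi}\langle\alpha\rangle$ is an isomorphism; in particular the extension $1\to N\to G\to\pi(G)\to1$ splits over every cyclic subgroup of $\pi(G)$.

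To upgrade this to a complement over all of $\pi(G)$, I would invoke Gaschütz's theorem: the abelian normal subgroup $N$ has a complement in $G$ provided it has one in $N\cdot P$ for each Sylow subgroup $P$ of $G$, i.e.\ provided the extension splits after pullback along each Sylow subgroup $S_\ell$ of $\pi(G)$. For primes $\ell\nmid|N|$ this is immediate from the Schur--Zassenhaus theorem. For $\ell\mid|N|$ it suffices, by the previous paragraph, to know that $S_\ell$ is cyclic: since $\pi(G)\le\Aut(\Efrak)$ and $|\Aut(\Efrak)|\mid24$ (\cite[Chapter {\rm III}, Theorem 10.1]{silverman2009arithmetic}), we have $\ell\in\{2,3\}$, and the only non-cyclic Sylow subgroup that can occur in $\Aut(\Efrak)$ is a quaternion group $Q_8$, which appears only when $|\Aut(\Efrak)|=24$, i.e.\ in characteristic $2$ for the supersingular curve with $j$-invariant $0$; for that curve $\Efrak$ has no nonzero $2$-torsion point over $\overline{\F_q}$, so $\#\Pbb_E^1=\#\Efrak(\F_q)$ is odd and $2\nmid|N|$. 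Hence whenever $\ell\mid|N|$ the Sylow subgroup $S_\ell$ is cyclic, the pullback splits by the rigidity lemma, and Gaschütz's theorem produces the desired complement $C$; combined with the action computation this gives $G\cong(T_E\cap G)\rtimes\pi(G)$.

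The main obstacle is the splitting itself: the analogous assertion fails for a general semidirect product $A\rtimes H$ with $A$ abelian — a subgroup surjecting onto $H$ need not split over its intersection with $A$ (e.g.\ $\Zbb/4\Zbb$ inside $\Zbb/4\Zbb\times\Zbb/2\Zbb$) — so one must genuinely use the arithmetic of elliptic curves, namely that every torsion automorphism is a primitive root of unity inside the division algebra $\End(\Efrak)\otimes\Qbb$ (forcing $1+\alpha+\cdots+\alpha^{m-1}=0$), together with the classification of $\Aut(\Efrak)$ and of its Sylow subgroups. The one subtle point within this is excluding the non-cyclic case $S_2\cong Q_8$ when $2\mid|N|$, which is why the characteristic-$2$ supersingular $j=0$ curve and its $2$-torsion have to be examined separately.
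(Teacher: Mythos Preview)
The paper does not prove this lemma itself; it is simply quoted from \cite[Proposition 3.2]{ma2023group}, so there is no in-paper argument to compare against. Your proposal is a complete and correct self-contained proof. The central observation---that any $\alpha\in\Aut(E,O)$ of order $m\ge2$ is a primitive $m$-th root of unity inside the division algebra $\End(\Efrak)\otimes\Qbb$ and hence satisfies $1+\alpha+\cdots+\alpha^{m-1}=0$ in $\End(\Efrak)$---is exactly what makes the elliptic case special (as you correctly point out, the analogous subgroup statement fails for a general abelian-by-finite semidirect product), and it immediately forces every lift of $\alpha$ in $G$ to have order $m$, giving a splitting over every cyclic subgroup of $\pi(G)$. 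The upgrade to all of $\pi(G)$ via the injectivity $H^2(\pi(G),N)\hookrightarrow\prod_\ell H^2(S_\ell,N)$ is standard; your case analysis ruling out the unique non-cyclic Sylow configuration ($S_2\cong Q_8\le\mathrm{SL}_2(\F_3)$ in characteristic $2$, where supersingularity forces $\Efrak[2](\overline{\F_q})=0$, hence $2\nmid|T_E|$ and $2\nmid|N|$) is correct. The label ``Gasch\"utz's theorem'' is a slight stretch for this Sylow-restriction criterion, but the mathematics is sound.
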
   
Conversely, given
a subgroup $T$ of $T_E$ and a subgroup $A$ of $\Aut(E, O)$, one may wonder under what conditions the product $TA$ is a subgroup of $\Aut(E/\F_q)$. The following lemma provides a useful necessary and sufficient condition.
 \begin{lemma}[{\cite[Proposition 3.3]{ma2023group}}]
 \label{lem:202506212038}
  Let $T$ be a subgroup of the translation group $T_E$ and let $A$ be a subgroup of $\Aut(E,O)$.
Then $TA$ is a subgroup of $\Aut(E/\F_q)$ if and only if $\tau_{\sigma^{-1}(Q)}\in T$ for all $\sigma\in A$ and $\tau_Q\in T$.
 \end{lemma}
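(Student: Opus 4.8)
The plan is to work entirely inside the semidirect product $\Aut(E/\F_q)=T_E\rtimes\Aut(E,O)$ provided by Lemma~\ref{lem:groupstructure_of_Aut_of_EFF}, where every element is written uniquely as $\tau_P\alpha$ with $\tau_P\in T_E$ and $\alpha\in\Aut(E,O)$, and multiplication is given by $(\tau_P\alpha)(\tau_Q\beta)=\tau_{P\oplus\alpha(Q)}\,\alpha\beta$. Two elementary facts will be used repeatedly: the translation group $T_E\cong\Pbb_E^1$ is abelian, so $\tau_P\tau_Q=\tau_{P\oplus Q}$; and each $\alpha\in\Aut(E,O)$, being an isogeny, acts on $\Pbb_E^1$ as a group automorphism, so $\alpha(P\oplus Q)=\alpha(P)\oplus\alpha(Q)$ and $\alpha(O)=O$. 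Since $\Aut(E/\F_q)$ is finite (both $T_E\cong\Pbb_E^1$ and $\Aut(E,O)$ are finite), it suffices to show that $TA$ is nonempty and closed under multiplication in order to conclude that it is a subgroup; closure under inversion will fall out of the same computation.

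For the ``if'' direction, I would take $\tau_P\alpha,\tau_Q\beta\in TA$ with $\tau_P,\tau_Q\in T$ and $\alpha,\beta\in A$, and compute their product as $\tau_{P\oplus\alpha(Q)}\,\alpha\beta=\big(\tau_P\,\tau_{\alpha(Q)}\big)(\alpha\beta)$. Here $\alpha\beta\in A$ because $A$ is a subgroup, and applying the hypothesis $\tau_{\sigma^{-1}(Q')}\in T$ with $\sigma=\alpha^{-1}\in A$ and $Q'=Q$ gives $\tau_{\alpha(Q)}\in T$, hence $\tau_P\tau_{\alpha(Q)}\in T$. Thus the product lies in $TA$, and $TA$ is a subgroup. (The same hypothesis applied with $\sigma=\alpha$ also yields $\tau_{\alpha^{-1}(P)}\in T$, which exhibits $(\tau_P\alpha)^{-1}=\tau_{\ominus\alpha^{-1}(P)}\alpha^{-1}\in TA$ directly.)

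For the ``only if'' direction, assume $TA\le\Aut(E/\F_q)$ and fix $\sigma\in A$ and $\tau_Q\in T$. Both $\sigma=\tau_O\sigma$ and $\tau_Q=\tau_Q\,\id$ lie in $TA$, hence so does the conjugate $\sigma^{-1}\tau_Q\sigma$. A short computation with the multiplication law---first $\sigma^{-1}\tau_Q=\tau_{\sigma^{-1}(Q)}\sigma^{-1}$, then $\big(\tau_{\sigma^{-1}(Q)}\sigma^{-1}\big)\sigma=\tau_{\sigma^{-1}(Q)\oplus\sigma^{-1}(O)}\,\id=\tau_{\sigma^{-1}(Q)}$, using $\sigma^{-1}(O)=O$---shows $\sigma^{-1}\tau_Q\sigma=\tau_{\sigma^{-1}(Q)}$. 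Since this element is a pure translation lying in $TA$, the uniqueness of the decomposition in $T_E\rtimes\Aut(E,O)$ forces it into $TA\cap T_E=T$; that is, $\tau_{\sigma^{-1}(Q)}\in T$, as required.

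The computations are entirely mechanical, so there is no serious obstacle: the only points that need care are keeping the twisting in $(\tau_P\alpha)(\tau_Q\beta)=\tau_{P\oplus\alpha(Q)}\alpha\beta$ straight, remembering that elements of $\Aut(E,O)$ act on $\Pbb_E^1$ by group automorphisms fixing $O$, and observing that because $A$ is closed under inversion the stated condition with $\sigma^{-1}$ coincides with the condition ``$\tau_{\alpha(Q)}\in T$ for all $\alpha\in A$ and $\tau_Q\in T$'' that naturally emerges from the closure computation. In short, the lemma is a bookkeeping consequence of the explicit group law in Lemma~\ref{lem:groupstructure_of_Aut_of_EFF}.
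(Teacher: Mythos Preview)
Your proof is correct. The paper does not prove this lemma itself but simply cites it from \cite[Proposition 3.3]{ma2023group}, so there is no in-paper argument to compare against; your direct computation inside the semidirect product $T_E\rtimes\Aut(E,O)$ using the group law of Lemma~\ref{lem:groupstructure_of_Aut_of_EFF} is exactly the natural approach and matches what one would expect the original reference to do.
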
  
By the isomorphism between $T_E$ and $\Pbb_E^1$, any subgroup $T$ of $T_E$ can be written as $T=T_H:=\{\tau_Q:Q\in H\}$ for some subgroup $H$ of $\Pbb_E^1$. In the rest of the paper, we always adopt the symbol $T_H$, as several arguments will require explicit computations with $H$. With this notation, the above lemma is restated as follows.
\begin{lemma}[Restatement of {\cite[Proposition 3.3]{ma2023group}}]
\label{lem:isTAsubgroupofAutE}
  Let $H$ be a subgroup of $\Pbb_E^1$ and let $A$ be a subgroup of $\Aut(E,O)$.
Then $T_HA$ is a subgroup of $\Aut(E/\F_q)$ if and only if $\sigma(Q)\in H$ for all $\sigma\in A$ and $Q\in H$. 
\end{lemma}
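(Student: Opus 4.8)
The plan is to deduce this restatement directly from Lemma~\ref{lem:202506212038} by translating the condition on the translation subgroup $T$ into a condition on the corresponding subgroup $H$ of $\Pbb_E^1$. Recall that $Q\mapsto \tau_Q$ is a group isomorphism from $\Pbb_E^1$ onto $T_E$, so $T_H=\{\tau_Q:Q\in H\}$ is exactly the image of $H$ under this isomorphism, and for any $R\in\Pbb_E^1$ we have $\tau_R\in T_H$ if and only if $R\in H$. Lemma~\ref{lem:202506212038} states that $T_HA$ is a subgroup of $\Aut(E/\F_q)$ if and only if $\tau_{\sigma^{-1}(Q)}\in T_H$ for all $\sigma\in A$ and all $\tau_Q\in T_H$; by the previous sentence this is equivalent to requiring $\sigma^{-1}(Q)\in H$ for all $\sigma\in A$ and all $Q\in H$.

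It then remains to check that the condition ``$\sigma^{-1}(Q)\in H$ for all $\sigma\in A,\ Q\in H$'' is equivalent to the condition ``$\sigma(Q)\in H$ for all $\sigma\in A,\ Q\in H$'' stated in the lemma. First I would note that each $\sigma\in A\subseteq\Aut(E,O)$ fixes $O$, hence acts on $\Pbb_E^1\cong\Efrak(\F_q)$ as an isogeny, i.e.\ as a group automorphism; in particular $\sigma(H)$ is again a subgroup of $\Pbb_E^1$ of the same order as $H$. The condition $\sigma(Q)\in H$ for all $Q\in H$ just says $\sigma(H)\subseteq H$, and since $H$ is finite and $\sigma$ is injective this forces $\sigma(H)=H$, hence also $\sigma^{-1}(H)=H$; thus ``$\sigma(H)\subseteq H$ for all $\sigma\in A$'' already implies ``$\sigma^{-1}(H)\subseteq H$ for all $\sigma\in A$''. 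For the converse, since $A$ is a group the map $\sigma\mapsto\sigma^{-1}$ is a bijection of $A$ onto itself, so ``$\sigma^{-1}(H)\subseteq H$ for all $\sigma\in A$'' is literally the same family of inclusions as ``$\sigma(H)\subseteq H$ for all $\sigma\in A$''. Combining these equivalences with the criterion obtained from Lemma~\ref{lem:202506212038} completes the argument.

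I do not expect any real obstacle here, since the statement is a faithful reformulation of \cite[Proposition 3.3]{ma2023group}; the only point requiring a little care is to justify that every $\sigma\in A$ restricts to a group automorphism of $\Pbb_E^1$ — and hence of the finite set $H$ once $\sigma(H)\subseteq H$ is known — so that the passage between the ``$\sigma^{-1}$'' form inherited from Lemma~\ref{lem:202506212038} and the ``$\sigma$'' form in the statement is legitimate. Should one prefer a self-contained proof, an alternative route is to argue directly from the group law $(\tau_P\alpha)\cdot(\tau_Q\beta)=\tau_{P\oplus\alpha(Q)}\cdot\alpha\beta$ of Lemma~\ref{lem:groupstructure_of_Aut_of_EFF}: closure of $T_HA$ under the product of $\tau_P\alpha$ and $\tau_Q\beta$ with $P,Q\in H$ and $\alpha,\beta\in A$ holds iff $P\oplus\alpha(Q)\in H$, i.e.\ iff $\alpha(Q)\in H$, while closure under inverses, $(\tau_P\alpha)^{-1}=\tau_{\ominus\alpha^{-1}(P)}\alpha^{-1}$, holds iff $\alpha^{-1}(P)\in H$; the two requirements coincide because $A$ is a group, yielding the same condition $\sigma(Q)\in H$ for all $\sigma\in A$ and $Q\in H$.
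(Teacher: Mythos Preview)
Your proposal is correct. The paper itself does not give a proof of this lemma: it simply introduces the notation $T_H=\{\tau_Q:Q\in H\}$ via the isomorphism $T_E\cong\Pbb_E^1$ and then states that ``with this notation, the above lemma is restated as follows,'' treating the passage from Lemma~\ref{lem:202506212038} to Lemma~\ref{lem:isTAsubgroupofAutE} as immediate. Your argument makes explicit precisely what the paper leaves implicit---in particular the small step of replacing the condition on $\sigma^{-1}(Q)$ by the condition on $\sigma(Q)$, which you handle correctly using that $A$ is a group---so your approach is essentially the same as the paper's, just more detailed.
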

We end this subsection with two facts, which will be applied in Section~\ref{sec:3}. They can be found in \cite[Chapter {\rm III}, Theorem~3.6 and its proof, Theorem~4.8]{silverman2009arithmetic} and \cite[Proposition 4.5]{ma2023group}.
\begin{remark}\label{rem:invo_of_EFF_and_isogeny_is_endo}
Let $E/\F_q$ be an elliptic function field defined by the Weierstrass equation  \eqref{eq:Weierstrass_equation}. Then the following hold.
\begin{itemize} 
\item[{\rm (i)}] There always exists an 
element of order $2$ in $\Aut(E,O)$ induced by the inversion operation in the group law of $\Pbb_E^1$, which is known as the \Emph{elliptic involution}. It can be explicitly defined by its action on $x$ and $y$ as $(x \mapsto x,\;y\mapsto -y-a_1x-a_3)$. 
With slight abuse of notation, we denote this automorphism by $[-1]$. For any subgroup $H\leq \Pbb_{E}^1$, it holds $\sigma(Q)\in H$ for all $\sigma \in A:= \langle[-1]\rangle =\{[-1],\mathrm{id}\}$ and $Q\in H$, which implies that $T_H\langle[-1]\rangle$ is a subgroup of $\Aut(E/\F_q)$ by Lemma~\ref{lem:isTAsubgroupofAutE}. 

\item[{\rm (ii)}] Any $\sigma \in \Aut(E,O)$ is an endomorphism (actually, an isomorphism) of the group $(\Pbb_E^1, \oplus)$, and, in particular, commutes with the above-mentioned elliptic involution $[-1]\in \Aut(E,O)$.
\end{itemize}
\end{remark}
\subsection{Hyperelliptic Curves and Hyperelliptic Function Fields}\label{sec:2.5}

Let $\F_q$ be a finite field of odd characteristic. A \Emph{hyperelliptic curve} $\Cfrak/\F_q$ of genus $g\geq 2$ over $\F_q$ is a projective, smooth, absolutely irreducible curve defined by the following equation
\begin{align}\label{eq:hyperellipticcurveequation}
 y^2 = f(x),
\end{align}
where \( f(x)\in \F_q[x]\) is a square-free polynomial of degree $2g+1$ or $2g+2$. It has one or two rational points at infinity, depending on whether the degree of the polynomial \( f(x) \) is odd or even. In this paper, for hyperelliptic curves, we will only consider the case $\deg(f(x))=2g+1$, in which case there is exactly one rational point at infinity, denoted by $P_{\infty}$.

Let \(E/\F_q\) denote the function field of \( \Cfrak/\F_q \), and let \( \Cfrak(\F_q) \) denote the set of rational points on $\Cfrak/\F_q$. The function field $E$ is given by $E=\F_q(x,y)$, where the transcendental elements $x$ and $y$ satisfy the equation \eqref{eq:hyperellipticcurveequation}. There is a one-to-one correspondence between \( \Cfrak(\F_q) \) and \( \Pbb^1_E \). Specifically, the rational point \( (\alpha, \beta) \) on \( \Cfrak/\F_q \) corresponds to the unique common zero of \( x - \alpha \) and \( y - \beta \), which we denote by \( P_{(\alpha,\beta)} \). The point at infinity $P_{\infty}$ corresponds to the unique common pole of $x$ and $y$, which we also denote by $P_{\infty}$. Throughout this paper, rational points and rational places not at infinity are called \Emph{affine rational points} and \Emph{affine rational places}, respectively.

Given an affine rational point $(\alpha,\beta)\in \Cfrak(\F_q)$, its hyperelliptic conjugate $(\alpha, -\beta)$ also lies on $\Cfrak/\F_q$ and corresponds to the unique common zero of $x - \alpha$ and $y + \beta$, which we denote by $\overline{{P}_{(\alpha,\beta)}}:=P_{(\alpha,-\beta)}\in \Pbb_E^1$.
 For a divisor $\sum_{i=1}^{g}P_i \in \Div(E)$ with $P_{1},\dots,P_{g}\in \Pbb_{E}^1$, we say that its affine part is \Emph{reduced} if $P_{i}\neq \overline{P_j}$ for any $1\leq i\neq j\leq g$ such that $P_i$ and $P_j$ are affine rational places. 
 Based on this definition, we have the following corollary derived from \cite[Proposition 1]{galbraith2008efficient}.  
It will be applied in Section~\ref{sec:4.2} in the special case $g=2$. 
\begin{lemma}\label{lem:Cl^0ofHyperellipticFF}
Let $q$ be an odd prime power, and let $E/\F_q$ be a hyperelliptic function field of genus $g\geq 2$ defined by the equation~\eqref{eq:hyperellipticcurveequation} with $\deg(f(x))=2g+1$.
Let $D_0 = \sum_{i=1}^g P_i$ and $D_0'=\sum_{i=1}^g P_i'$ be two effective divisors whose affine parts are reduced, where $P_1,\dots,P_g,P_1',\dots,P_g'\in \Pbb_E^1$. If $D_0\sim D_0'$, then we have $D_0=D_0'$.
\end{lemma}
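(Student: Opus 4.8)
The plan is to reduce the statement to a standard fact about hyperelliptic function fields, namely that the group of divisor classes of degree zero is represented uniquely by reduced divisors of degree at most $g$. First I would observe that $D_0$ and $D_0'$ both have degree $g$, so $D_0 - D_0' \in \Div^0(E)$, and the hypothesis $D_0 \sim D_0'$ means $[D_0 - D_0'] = 0$ in $\Cl^0(E)$. The goal is then to show that the only reduced effective divisor of degree $\leq g$ equivalent to $D_0$ is $D_0$ itself, which forces $D_0 = D_0'$. The place $P_\infty$ plays the role of the distinguished rational point, and every degree-zero class has a unique ``Mumford representation'' $D - (\deg D)P_\infty$ with $D$ an effective reduced divisor of degree at most $g$ supported away from $P_\infty$; here both $D_0$ and $D_0'$ are already such representatives (after subtracting $g\,P_\infty$ from each), so uniqueness gives the claim.

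Concretely, I would invoke \cite[Proposition 1]{galbraith2008efficient}, which is exactly the statement that in a hyperelliptic function field with a single rational place at infinity, each class in $\Cl^0(E)$ has a unique reduced representative of degree $\leq g$; the present lemma is stated as a corollary of that result, so the argument is essentially an unpacking of the definitions. The one point requiring a little care is the case where some $P_i$ or $P_i'$ equals $P_\infty$, or where the $P_i$ are not pairwise distinct: I would note that ``reduced affine part'' only constrains the affine places, and that repeated places or occurrences of $P_\infty$ are still handled by the Mumford/semi-reduced framework, since the defining property of a reduced divisor there permits multiplicities at places fixed by the hyperelliptic involution and repeated non-conjugate affine places. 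After subtracting the appropriate multiple of $P_\infty$ to make both divisors degree zero and ensuring they are in reduced form in the sense of loc. cit., the uniqueness of the reduced representative in each class applies verbatim.

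The main obstacle, such as it is, is not conceptual but bookkeeping: matching the definition of ``reduced affine part'' used in the statement of the present lemma with whatever normalization \cite[Proposition 1]{galbraith2008efficient} uses (semi-reduced versus reduced, treatment of Weierstrass places $(\alpha,0)$ where $f(\alpha)=0$, and the role of $P_\infty$). I would therefore spell out that a divisor $\sum_{i=1}^g P_i$ with reduced affine part, together with the convention that we compare it against $g\,P_\infty$, is precisely a reduced divisor in the sense of the cited proposition, and then conclude $D_0 = D_0'$ directly from its uniqueness assertion. Since the lemma will only be used for $g = 2$ in Section~\ref{sec:4.2}, an alternative, fully self-contained route would be to argue by contradiction: if $D_0 \neq D_0'$ but $D_0 \sim D_0'$, then $\ell(D_0) \geq 2$, so $D_0$ is equivalent to some effective divisor containing $P_\infty$ or containing a conjugate pair $P + \overline P$, and such a pair satisfies $P + \overline P \sim 2P_\infty$ (being the pullback of a point under the degree-$2$ map $x$), allowing one to strictly reduce the degree of the affine support and contradict that $D_0$ was already reduced of minimal degree in its class. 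I expect to present the short citation-based proof and relegate this second argument to a remark if a referee wants more detail.
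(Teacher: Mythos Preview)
Your proposal is correct and matches the paper's approach essentially verbatim: the paper sets $D_\infty := gP_\infty$, observes $[D_0 - D_\infty] = [D_0' - D_\infty]$ in $\Cl^0(E)$, and invokes the uniqueness of the reduced representative from \cite[Proposition 1]{galbraith2008efficient} to conclude $D_0 = D_0'$. The only difference is that the paper's proof is three lines and omits the bookkeeping discussion and the alternative $g=2$ argument you sketch.
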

\begin{proof}
Let $D_\infty:=gP_{\infty}\in \Div(E)$. Since $D_0\sim D_0'$, we have $[D_0-D_{\infty}]=[D_0'-D_{\infty}]\in \Cl^0(E)$. By the uniqueness of the representative described in \cite[Proposition 1]{galbraith2008efficient}, we have 
$D_0=D_0'$.
\end{proof}

Let $\Cfrak$ be an arbitrary curve of genus $g\geq 0$ defined over an arbitrary finite field $\F_q$, and let $E/\F_q$ be its function field. The Hasse–Weil bound \cite[Theorem~5.2.3]{stichtenoth2009algebraic} provides a bound on 
$N(E):=|\Pbb_E^1|=|\Cfrak(\F_q)|$.
\begin{lemma}
    Let $N(E)$ be defined as above. Then we have the following Hasse–Weil bound
    \begin{align}\label{Hasse-Weil_Bound}
        |N(E)-(q+1)|\leq 2g\sqrt{q}.
    \end{align}
\end{lemma}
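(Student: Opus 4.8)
The statement is the classical Hasse--Weil bound for a curve of genus $g$ over $\Fq$, so the plan is to run the standard zeta-function argument (as in \cite[Chapter~5]{stichtenoth2009algebraic}) and to make clear that all the real work is hidden in the Riemann Hypothesis for function fields. First I would attach to $E/\Fq$ its zeta function
$$
Z_E(t):=\sum_{D\ge 0}t^{\deg D},
$$
the sum ranging over the effective divisors of $E$, and recall the two structural facts obtained by counting effective divisors in each degree class and applying the Riemann--Roch theorem (in particular \eqref{eq:ell(D)_of_deg(D)_geq_2g-1}): (a) rationality, $Z_E(t)=L_E(t)/\bigl((1-t)(1-qt)\bigr)$ with $L_E(t)\in\Zbb[t]$, $\deg L_E=2g$ and $L_E(0)=1$; and (b) the functional equation $L_E(t)=q^gt^{2g}L_E(1/(qt))$, which comes from Riemann--Roch duality. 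Factoring $L_E(t)=\prod_{i=1}^{2g}(1-\alpha_i t)$ over $\Cbb$, the functional equation forces $\prod_i\alpha_i=q^g$ and makes the multiset $\{\alpha_1,\dots,\alpha_{2g}\}$ stable under $\alpha\mapsto q/\alpha$.

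Next I would compare the two power-series expansions of $\log Z_E(t)$ coming from (a): writing $-\log(1-t)-\log(1-qt)=\sum_{m\ge1}(1+q^m)t^m/m$ and $\log L_E(t)=-\sum_{m\ge1}\bigl(\sum_i\alpha_i^m\bigr)t^m/m$, one reads off, for every $m\ge1$,
$$
N_m=q^m+1-\sum_{i=1}^{2g}\alpha_i^m,
$$
where $N_m=|\Pbb^1_{E\F_{q^m}}|$ is the number of rational places after the constant field extension to $\F_{q^m}$; in particular $N(E)=N_1=q+1-\sum_{i=1}^{2g}\alpha_i$. Granting the Riemann Hypothesis for function fields, namely $|\alpha_i|=\sqrt q$ for all $i$, the triangle inequality gives at once
$$
|N(E)-(q+1)|=\Bigl|\sum_{i=1}^{2g}\alpha_i\Bigr|\le\sum_{i=1}^{2g}|\alpha_i|=2g\sqrt q,
$$
which is the asserted bound \eqref{Hasse-Weil_Bound}.

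The only nontrivial ingredient---and hence the main obstacle---is the equality $|\alpha_i|=\sqrt q$. I would establish it either by Weil's original geometric argument (embed $\Cfrak$ in its Jacobian, interpret $N_m$ as the intersection number of the graph of the $m$-th Frobenius with the diagonal on $\Cfrak\times\Cfrak$, and bound this correspondence by the Castelnuovo--Severi / Hodge-index inequality on the surface $\Cfrak\times\Cfrak$), or, staying within the function-field language of the cited reference, by the elementary Stepanov--Bombieri method: one first proves a one-sided estimate of the shape $N(E)\le q+1+(2g+1)\sqrt q$ when $q$ is a square, via an auxiliary-polynomial construction, and then upgrades it to the sharp two-sided bound for all $q$ by passing to the constant field extensions $E\F_{q^m}$ (which replaces each $\alpha_i$ by $\alpha_i^m$) and combining the resulting estimates with the functional-equation symmetry $\alpha_i\leftrightarrow q/\alpha_i$ and the trivial positivity $N_m\ge 0$. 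Since this is entirely classical, in the paper I would simply invoke \cite[Theorem~5.2.3]{stichtenoth2009algebraic}.
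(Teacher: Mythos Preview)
Your proposal is correct. The paper does not prove this lemma at all; it simply states the Hasse--Weil bound and attributes it to \cite[Theorem~5.2.3]{stichtenoth2009algebraic}, exactly as you yourself suggest doing in your final sentence. Your additional sketch of the zeta-function argument and the Riemann Hypothesis for function fields is accurate and standard, but goes well beyond what the paper provides.
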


A curve $\Cfrak/\F_q$ and its function field $E/\F_q$ are called maximal (minimal, respectively) if $N(E)=q+1+2g\sqrt{q}$ (if $N(E)=q+1-2g\sqrt{q}$, respectively).
 The following result is well known. 
\begin{lemma}\label{lem:maximal_curve_lift}
Assume that $q$ is a square and $\Cfrak$ is a curve of genus $g\geq 1$ over $\F_q$.
 If  $\Cfrak/\F_q$ is maximal, then $\Cfrak$ is maximal over $\F_{q^s}$ if and only if $s$ is odd. Furthermore, $\Cfrak$ is minimal over $\F_{q^s}$ if and only if $s$ is even.
% \item[(ii)]If  $\Cfrak/\F_q$ is minimal, then  $\Cfrak$ is minimal over $\F_{q^s}$ for all $s\ge 1$.
\end{lemma}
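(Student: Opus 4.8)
The plan is to read the point counts over every extension $\F_{q^s}$ off the $L$-polynomial of $\Cfrak/\F_q$ and then use the maximality hypothesis to pin down the Frobenius eigenvalues. Write $L(t)=\prod_{i=1}^{2g}(1-\alpha_i t)$ for the numerator of the zeta function of $\Cfrak/\F_q$; by the Hasse--Weil theorem (the Riemann hypothesis for curves, see \cite{stichtenoth2009algebraic}) each $\alpha_i\in\Cbb$ satisfies $|\alpha_i|=\sqrt q$, and for every $s\geq 1$ one has $N_s:=|\Cfrak(\F_{q^s})|=q^s+1-\sum_{i=1}^{2g}\alpha_i^s$.

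First I would exploit that $\Cfrak/\F_q$ is maximal. Since $q$ is a square, let $\sqrt q\in\Zbb$ be its positive square root; maximality means $N_1=q+1+2g\sqrt q$, i.e.\ $\sum_{i=1}^{2g}\alpha_i=-2g\sqrt q$. Together with $|\alpha_i|=\sqrt q$ for all $i$, the triangle inequality $\bigl|\sum_i\alpha_i\bigr|\le\sum_i|\alpha_i|=2g\sqrt q$ holds here with equality, and the equality case forces every $\alpha_i$ to lie on one common ray through the origin; as their sum is the negative real number $-2g\sqrt q$, that ray must be the negative real axis, so $\alpha_i=-\sqrt q$ for all $i$. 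Substituting this into the point-count formula, and using that $q^s$ is again a square with $\sqrt{q^s}=(\sqrt q)^s$, I obtain
\[
N_s=q^s+1-2g(-\sqrt q)^s=q^s+1-2g(-1)^s\sqrt{q^s}.
\]
Hence $N_s=q^s+1+2g\sqrt{q^s}$ when $s$ is odd, so $\Cfrak$ is maximal over $\F_{q^s}$, and $N_s=q^s+1-2g\sqrt{q^s}$ when $s$ is even, so $\Cfrak$ is minimal over $\F_{q^s}$. Finally, since $g\ge 1$ and $q\ge 2$ give $2g\sqrt{q^s}>0$, no curve can be simultaneously maximal and minimal over $\F_{q^s}$; this upgrades both implications to the biconditionals asserted in the lemma.

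The only step needing genuine care is extracting $\alpha_i=-\sqrt q$ for every $i$: this is precisely the equality case of the triangle inequality for $2g$ complex numbers of common modulus $\sqrt q$ whose sum is a prescribed negative real number, which I would spell out explicitly (writing $\alpha_i=\sqrt q\,e^{\mathrm{i}\theta_i}$ and noting that $\sum_i\cos\theta_i=-2g$ with each $\cos\theta_i\in[-1,1]$ forces $\cos\theta_i=-1$, hence $\alpha_i=-\sqrt q$, for all $i$). Everything else is a mechanical substitution into $N_s=q^s+1-\sum_i\alpha_i^s$, so I do not anticipate any further obstacle.
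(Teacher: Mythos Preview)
Your proof is correct and follows essentially the same approach as the paper: both arguments pin down the Frobenius eigenvalues as $\alpha_i=-\sqrt{q}$ from maximality and then read off the behavior over $\F_{q^s}$ (you via the point-count formula $N_s=q^s+1-\sum_i\alpha_i^s$, the paper via the factorization of the $L$-polynomial over $\F_{q^s}$, which amounts to the same thing). If anything, you are more explicit than the paper about the equality case of the triangle inequality and about why the ``if and only if'' directions follow from $g\geq 1$.
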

\begin{proof} Since $\Cfrak/\F_q$ is maximal, its $L$-polynomial is $L(\Cfrak/\F_q,T)=1+2g\sqrt{q} T+(\text{higher order terms of } T)=\prod_{i=1}^{2g}(1-\alpha_{i}\sqrt{q}T)$ with $|\alpha_i|=1$ by \cite[Theorem 5.1.15]{stichtenoth2009algebraic}. Thus, $\alpha_1=\dots=\alpha_{2g}=-1$ and the $L$-polynomial over $\F_{q^s}$ is $L(\Cfrak/\F_{q^s},T)=(1-(-\sqrt{q})^sT)^{2g}$ by \cite[Theorem 5.1.15]{stichtenoth2009algebraic}. Since $\Cfrak$ is maximal (minimal, respectively) over $\F_{q^s}$ if and only if its $L$-polynomial is $(1+q^{s/2}T)^{2g}$ ($(1-q^{s/2}T)^{2g}$, respectively), this lemma is proved.  
% The proof of part (ii) is similar, and we skip the details.
\end{proof} 
The following two classes of hyperelliptic curves will be useful in our later constructions.
\begin{lemma}[{\cite[Theorem 1]{tafazolian2012note}}] 
\label{lem:ismaximalhypercurve_1}
 Let $q$ be an odd prime power. The smooth complete hyperelliptic curve $\Cfrak$ corresponding to
$
y^2 = x^{2g+1} + x
$ 
is maximal over \( \mathbb{F}_{q^2} \) if and only if \( q \equiv -1 \) or \( 2g + 1 \pmod{4g} \).   
\end{lemma}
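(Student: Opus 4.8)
The plan is to determine the $L$-polynomial of $\Cfrak$ over $\F_q$, or rather to count $N(E)=|\Cfrak(\F_q)|$ directly, and compare with the Hasse--Weil bound \eqref{Hasse-Weil_Bound}. Since $\deg f = 2g+1$ is odd, there is exactly one point at infinity $P_\infty$, so $N(E) = 1 + \sum_{\alpha\in\F_q}\bigl(1 + \chi(\alpha^{2g+1}+\alpha)\bigr) = q+1 + \sum_{\alpha\in\F_q}\chi(\alpha(\alpha^{2g}+1))$, where $\chi$ is the quadratic character of $\F_q$ (extended by $\chi(0)=0$). So the curve is maximal over $\F_q$ precisely when $S:=\sum_{\alpha\in\F_q}\chi(\alpha(\alpha^{2g}+1)) = 2g\sqrt q$ (which in particular forces $q$ to be a square, consistent with the hypothesis that we work over $\F_{q^2}$ in the statement). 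First I would reduce the character sum: substituting and exploiting that $\chi$ is multiplicative, one wants to understand $\sum_{\alpha}\chi(\alpha)\chi(\alpha^{2g}+1)$. The natural move is to push this through the map $\alpha \mapsto \alpha^{2g}$ or $\alpha\mapsto\alpha^2$ and relate $S$ to a sum of the form $\sum_{\beta}\chi(g(\beta))$ for a lower-degree $g$, or to Jacobi sums.

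The cleanest route, which I expect is the one used in \cite{tafazolian2012note}, is to recognize $\Cfrak$ as a quotient of (or closely tied to) the Hermitian curve or a Fermat curve. Concretely, the curve $y^2 = x^{2g+1}+x = x(x^{2g}+1)$ maps to the curve $v^2 = u^{2g}+1$ via $(x,y)\mapsto(x^2, y/x) = (u, v)$ wherever $x\neq 0$ — more precisely $y^2/x^2 = x^{2g} + 1$, so setting $u = x^2$, $v=y/x$ gives $v^2 = u^g\cdot\,$... one must be careful: $x^{2g} = u^g$, so $v^2 = u^g + 1$. Thus $\Cfrak$ dominates the curve $\Dfrak: v^2 = u^g+1$. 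Alternatively, and more usefully for maximality, $y^2 = x^{2g+1}+x$ is a subcover of the Fermat-type curve or can be analyzed via the fact that $x^{2g+1}+x$ relates to the norm/trace structure when $q\equiv -1\pmod{4g}$. I would: (i) set up the correspondence with an appropriate quotient curve whose maximality over $\F_{q^2}$ is classical (Hermitian or a well-known maximal curve); (ii) translate the congruence condition $q\equiv -1$ or $2g+1\pmod{4g}$ into the condition guaranteeing that all the Frobenius eigenvalues equal $-\sqrt q$ (equivalently, that the relevant Jacobi sums are all equal to $-q$ over $\F_{q^2}$, i.e. $q^2$ in absolute-value-squared terms); (iii) for the converse, show that if the congruence fails, then at least one Frobenius eigenvalue is not $-\sqrt q$, so $N(E) < q^2 + 1 + 2g\cdot q$ over $\F_{q^2}$.

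In more detail, over $\F_{q^2}$ write $Q=q^2$ and $\sqrt Q = q$; the number of points is $Q+1+S$ with $S = \sum_{\alpha\in\F_{Q}}\chi(\alpha)\chi(\alpha^{2g}+1)$, $\chi$ the quadratic character of $\F_Q$. Split according to whether $\alpha$ is a square: writing $\alpha = \beta^2$ when $\chi(\alpha)=1$ and using $\chi(\alpha)= \chi(\beta^2)=1$, one gets contributions indexed by $2g$-th powers and can expand $\chi$ of a binomial via Gauss sums, reducing $S$ to a sum of Jacobi sums $J(\chi,\psi^j)$ over characters $\psi$ of order dividing $2g$. Each such Jacobi sum has absolute value $\sqrt Q = q$, and $S = 2g\cdot q$ holds iff every one of these Jacobi sums equals $-q$ (or the appropriate sign). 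Stickelberger's theorem / Gauss-sum congruences then pin down exactly when this happens, and the outcome is the stated congruence class of $q$ modulo $4g$. The main obstacle I anticipate is precisely this last step — carrying out the Jacobi-sum evaluation and extracting the sharp "if and only if" on $q \bmod 4g$ — since it requires either a careful Stickelberger computation or an identification of $\Cfrak$ with a specific quotient of the Hermitian curve whose maximality locus is already recorded in the literature; the point-counting setup and the dominant-map reductions are routine, but the precise arithmetic condition is where all the work lies. Since the statement is quoted verbatim from \cite[Theorem 1]{tafazolian2012note}, for the purposes of this paper I would simply cite that reference and not reproduce the Jacobi-sum analysis.
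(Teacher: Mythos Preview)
The paper does not prove this lemma at all; it is stated with a direct citation to \cite[Theorem 1]{tafazolian2012note} and used as a black box. Your final conclusion---to simply cite that reference rather than reproduce the Jacobi-sum analysis---is therefore exactly what the paper does, and the exploratory sketch you wrote (character sums, Fermat-type covers, Stickelberger) is extra material that the paper neither provides nor needs.
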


\begin{lemma}[{\cite[Theorem 6]{tafazolian2012note}}] 
\label{lem:ismaximalhypercurve_2}
Let $q$ be an odd prime power. The smooth complete hyperelliptic curve $\Cfrak$ corresponding to
$
y^2 = x^{2g+1} + 1
$
is maximal over \( \mathbb{F}_{q^2} \) if and only if \( 2g + 1 \) divides \( q + 1 \).
\end{lemma}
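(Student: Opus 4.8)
The plan is to establish the two implications separately: the ``if'' direction by exhibiting an explicit $\F_{q^2}$-rational covering of $\Cfrak$ by a known maximal curve, and the ``only if'' direction by a point count over $\F_{q^2}$ that reduces maximality to an identity among Jacobi sums, from which the divisibility is then extracted.

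Write $m:=2g+1$, so that $\Cfrak$ has genus $g=(m-1)/2$ and a unique rational place at infinity. Assume first that $m\mid q+1$; since $q$ is odd we also have $2\mid q+1$, so set $q+1=m\ell=2k$ with $\ell,k\in\mathbb{Z}_{>0}$. Let $\mathcal{F}\colon s^{q+1}+t^{q+1}=1$ be the Fermat curve over $\F_{q^2}$, whose smooth model has genus $q(q-1)/2$ and $q^{3}+1$ rational points, hence is $\F_{q^2}$-maximal. In $\F_{q^2}(\mathcal{F})=\F_{q^2}(s,t)$ put $x:=-s^{\ell}$ and $y:=t^{k}$. As $m$ is odd, $x^{m}=(-1)^{m}s^{m\ell}=-s^{q+1}$, while $y^{2}=t^{2k}=t^{q+1}$, so the Fermat relation gives $y^{2}=t^{q+1}=1-s^{q+1}=x^{m}+1$. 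Thus $\F_{q^2}(x,y)\subseteq \F_{q^2}(s,t)$, and since $x$ is transcendental over $\F_{q^2}$ this yields a nonconstant $\F_{q^2}$-rational morphism from $\mathcal{F}$ onto the smooth model of $y^{2}=x^{m}+1$, i.e.\ onto $\Cfrak$. By the well-known fact that a curve $\F_{q^2}$-covered by an $\F_{q^2}$-maximal curve is again $\F_{q^2}$-maximal, $\Cfrak$ is maximal over $\F_{q^2}$.

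For the converse, let $\chi$ be the quadratic character of $\F_{q^2}^{*}$, extended by $\chi(0)=0$. Since the fibre of $y\mapsto y^{2}$ over $c$ has size $1+\chi(c)$, counting affine points and adding the point at infinity gives $N(\Cfrak)=q^{2}+1+\sum_{x\in\F_{q^2}}\chi(x^{m}+1)$, so $\Cfrak$ is $\F_{q^2}$-maximal if and only if $\sum_{x\in\F_{q^2}}\chi(x^{m}+1)=2gq=(m-1)q$. The maps $x\mapsto x^{m}$ and $x\mapsto x^{d}$ with $d:=\gcd(m,q^{2}-1)$ have identical fibres on $\F_{q^2}^{*}$, so this sum is unchanged on replacing $m$ by $d$; if $d<m$ the resulting expression is a sum of at most $d-1$ terms of modulus $q$, hence of modulus $<(m-1)q$, and maximality fails. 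So we may assume $m\mid q^{2}-1$, whence also $2m\mid q^{2}-1$ since $m$ is odd. Fixing a character $\psi$ of $\F_{q^2}^{*}$ of order exactly $m$ and using $\#\{x:x^{m}=u\}=\sum_{j=0}^{m-1}\psi^{j}(u)$ for $u\ne 0$ together with the substitution $u\mapsto -u$, a short computation yields
\begin{align*}
\sum_{x\in\F_{q^2}}\chi(x^{m}+1)=\sum_{j=1}^{m-1}\psi^{j}(-1)\,J(\psi^{j},\chi),\qquad J(\alpha,\beta):=\sum_{v\in\F_{q^2}}\alpha(v)\beta(1-v).
\end{align*}
Because $m$ is odd, $\psi^{j}$ and $\psi^{j}\chi$ are nontrivial for all $1\le j\le m-1$, so $|J(\psi^{j},\chi)|=q$; and $2m\mid q^{2}-1$ forces $\psi(-1)=1$. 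The triangle inequality then shows that $\sum_{x}\chi(x^{m}+1)=(m-1)q$ holds if and only if $J(\psi^{j},\chi)=q$ for every $j=1,\dots,m-1$.

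Thus the ``only if'' direction reduces to the arithmetic statement that $J(\psi,\chi)=q$ --- i.e.\ that this Jacobi sum over $\F_{q^2}$ equals the rational integer $+q$ --- forces $m=\mathrm{ord}(\psi)\mid q+1$, and this is the step I expect to be the main obstacle. The route I would take is to pass to Gauss sums via $J(\psi,\chi)=G(\psi)G(\chi)/G(\psi\chi)$ and invoke Stickelberger's theorem on the prime-ideal factorization of Gauss sums together with the classical analysis of the ``semiprimitive''/uniform-cyclotomy situation: requiring $J(\psi^{j},\chi)$ to be the rational integer $q$ for all $j$ coprime to $m$ rigidly constrains the $p$-adic digit sums of the exponents of $\psi$, and, with $q=p^{a}$, this is equivalent to $p^{a}\equiv -1\pmod m$, i.e.\ to $m\mid q+1$. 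Equivalently, one may descend by the Davenport--Hasse relation, comparing the $L$-polynomial of $\Cfrak/\F_{q^2}$ with the base change of the $L$-polynomial of $\Cfrak$ from the least extension over which $\psi$ and $\chi$ become primitive, and observe that its $2g$ reciprocal roots can all equal $-q$ only when $m\mid q+1$. Together with the ``if'' direction, this proves the lemma.
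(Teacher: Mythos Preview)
The paper does not prove this lemma at all; it is simply quoted from \cite{tafazolian2012note} as a known result, so there is no ``paper's own proof'' to compare against. Your proposal is therefore going well beyond what the paper does.

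On the substance: your ``if'' direction is clean and correct --- the Fermat curve $s^{q+1}+t^{q+1}=1$ is $\F_{q^2}$-isomorphic to the Hermitian curve and hence maximal, and your substitution $x=-s^{\ell}$, $y=t^{k}$ gives a genuine $\F_{q^2}$-cover of $\Cfrak$, so Lachaud/Kleiman--Serre finishes it. For the ``only if'' direction your reduction to the Jacobi-sum identity $J(\psi^{j},\chi)=q$ for all $j$ is sound (including the step ruling out $d=\gcd(m,q^{2}-1)<m$ via the Weil bound for $y^{2}=x^{d}+1$), but the final implication --- that all these Jacobi sums being the rational integer $+q$ forces $m\mid q+1$ --- is only sketched. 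Invoking Stickelberger or the semiprimitive/uniform-cyclotomy criterion is the right idea and does work, but as written this is an outline rather than a proof; you have correctly identified it as the crux and would need to actually carry out the $p$-adic valuation argument (or the Davenport--Hasse descent) to make the argument self-contained.
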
 
\subsection{Automorphism Groups of Hyperelliptic Curves and Hyperelliptic Function Fields of Genus $2$}
Every automorphism of a hyperelliptic curve $\Cfrak$ of genus $2$ is given by
\begin{align}\label{eq:matrix_resentation_of_HEFF_auto}
    \sigma:\left(x\mapsto\frac{ax+b}{cx+d},\; y\mapsto\frac{(ad-bc)y}{(cx+d)^3}\right),
\end{align}
associated with a uniquely determined matrix
$
     M_{\sigma}=\begin{pmatrix}
a & b \\
c & d
\end{pmatrix}\in \GL_2(\overline{\mathbb{F}}_q).
$
We use $\Aut(\Cfrak)$ to denote the automorphism group of $\Cfrak$ over $\overline{\F_q}$. It is isomorphic to a finite subgroup of $\GL_2(\overline{\mathbb{F}_q})$. For a hyperelliptic curve $\Cfrak$ of genus $2$ defined over $\F_q$, we say that an automorphism $\sigma \in \Aut(\Cfrak)$ is defined over $\F_q$ if
its associated matrix $M_{\sigma}$ is in $\GL_2(\F_q)$.
We denote ${\Aut}(\Cfrak/\F_q):=\{\sigma\in \Aut(\Cfrak): \sigma \text{ is defined over }\F_q\}$, which is a subgroup of ${\Aut}(\Cfrak)$. For each $\sigma\in \Aut(\Cfrak/\F_q)$, it naturally induces an $\F_q$-automorphism of the function field $E/\F_q=\F_q(\Cfrak)$, which will also be denoted by $\sigma$ for a little abuse of notations. Actually, this gives a one-to-one correspondence between $\Aut(\Cfrak/\F_q)$ and $\Aut(E/\F_q)$. Sometimes, we do not distinguish between them.    

Every hyperelliptic curve admits a special automorphism $\iota$ of order $2$, which is known as the hyperelliptic involution.
\begin{remark}\label{rem:hyperelliptic_involution}
Let $\Cfrak$ be a hyperelliptic curve of genus $2$ defined over $\F_q$ ($\Char(\F_q)\neq 2$) by the equation $y^2=f(x)$, and let $E/\F_q$ be its function field. There exists a special automorphism, the \Emph{hyperelliptic involution} $\iota\in \Aut(\Cfrak/\F_q)=\Aut(E/\F_q)$, defined by
$
    \iota: (x\mapsto x,\;y\mapsto -y),
$
with associated matrix $M_{\iota}=\parentheses{\begin{array}{cc} 
    -1 & 0 \\
    0 & -1
\end{array}}.$ It commutes with all elements in $\Aut(\Cfrak/\F_q)=\Aut(E/\F_q)$ since $M_{\iota}$ is a scalar matrix. Moreover, it holds $\iota(P_{(\alpha,\beta)})=\overline{P_{(\alpha,\beta)}}$ for any affine rational place $P_{(\alpha,\beta)}\in \Pbb_E^1$.
\end{remark}
There have been many studies on the automorphism group of hyperelliptic curves of genus $2$. For example, \cite{oskar1887onbinary, cardona1999oncurves, cardona2003onthenumber}.
Based on the results in \cite{cardona1999oncurves, cardona2003onthenumber}, Huang and Zhao \cite{huang2025optimal} determined the automorphism group of the hyperelliptic curve defined by $y^2=x^5+x$ over particular finite fields, which is helpful for our calculations later in Section~\ref{sec:4.2}. 
\begin{lemma}[{\cite[Lemma 8 and its proof]{huang2025optimal}}] \label{lem:autogroupbuhuangandzhao}
    Let $q$ be a power of an odd prime such that $8\mid (q-1)$ and $2^{1/2}\in\mathbb{F}_q$. Let $\Cfrak$ be a hyperelliptic curve defined over $\mathbb{F}_q$ by the equation $y^2=x^5+x$. Let $\tilde{S}_4$ and $\tilde{S}_5$ denote certain 2-coverings of the permutation groups $S_4$ and $S_5$, respectively. Then the following statements hold.
    \begin{itemize}
        \item[{\rm (i)}]  If $\Char(\F_q)\neq 3, 5$, then ${\Aut}(\Cfrak/\mathbb{F}_q)\simeq \tilde{S}_4$. 
   Specifically, $\Aut(\Cfrak/\F_q)\cong <U',V'>\leq \GL_2(\F_q)$ with \[
        U'=2^{-1/2}\begin{pmatrix}
        1 & -(-1)^{1/4} \\
        (-1)^{3/4} & -1
        \end{pmatrix}\quad \text{ and  } \quad V'=2^{-1/2}\begin{pmatrix}
        (-1)^{1/2}-1 & 0 \\
        0 & (-1)^{1/2}+1
        \end{pmatrix}.
    \] 
        \item[{\rm (ii)}] If $\Char(\mathbb{F}_q)=5$, then ${\Aut}(\Cfrak/\mathbb{F}_q)\simeq \tilde{S}_5$. 
        Specifically, $\Aut(\Cfrak/\F_q)\cong <U',V',W'>\leq \GL_2(\F_q)$ with
         \[
        U'=\begin{pmatrix}
        0 & -(-1)^{-1/4}\cdot2 \\
        -(-1)^{1/4}\cdot2 & 0
        \end{pmatrix}, \quad
        V'=\begin{pmatrix}
        0 & -(-1)^{-1/4}\cdot2 \\
        -(-1)^{1/4}\cdot2 & 1
        \end{pmatrix} \quad 
        \text{ and  } \quad W'=2^{1/2}\begin{pmatrix}
        1 & 0 \\
        0 & 2
        \end{pmatrix}.
        \] 
    \end{itemize}
\end{lemma}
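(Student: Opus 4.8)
The plan is to first pin down the full geometric automorphism group $\Aut(\Cfrak)$ over $\overline{\F_q}$ from the classification of automorphism groups of genus-$2$ hyperelliptic curves, and then to extract the automorphisms that are $\F_q$-rational under the hypotheses $8\mid(q-1)$ and $2^{1/2}\in\F_q$. Recall from \eqref{eq:matrix_resentation_of_HEFF_auto} that an automorphism $\sigma$ of $\Cfrak$ is determined by a matrix $M_\sigma\in\GL_2(\overline{\F_q})$, and that $M_\sigma$ induces an automorphism precisely when the associated Möbius map $\bar\sigma\colon x\mapsto\frac{ax+b}{cx+d}$ permutes the set $B$ of the six branch points of the degree-$2$ map $x\colon\Cfrak\to\mathbb{P}^1$ — namely the five roots of $f(x)=x(x^4+1)$ together with the point at infinity — the $y$-part of $\sigma$ then being determined up to the sign accounting for the hyperelliptic involution. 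Since $\iota$ acts trivially on the $x$-line (Remark~\ref{rem:hyperelliptic_involution}), $\Aut(\Cfrak)$ is a central extension by $\langle\iota\rangle$ (whose matrix is the scalar $-I$) of its reduced group $\overline{G}:=\Aut(\Cfrak)/\langle\iota\rangle$, which equals the subgroup of $\mathrm{PGL}_2(\overline{\F_q})$ stabilizing $B$; so the first task is to identify this stabilizer.

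Next I would analyze $B=\{0,\infty\}\cup\{\alpha:\alpha^4=-1\}$. When $\Char(\F_q)\neq2,3,5$, multiplying $x$ by a primitive $8$-th root of unity sends $B$ to $\{0,\infty,\pm1,\pm\sqrt{-1}\}$, the vertex set of an octahedron on $\mathbb{P}^1$; hence $\Cfrak$ is $\overline{\F_q}$-isomorphic to the Bolza curve $y^2=x^5-x$ (the genus-$2$ curve with the largest automorphism group in this characteristic), so $\overline{G}\cong S_4$ and, by the classification in \cite{cardona1999oncurves,cardona2003onthenumber} (see also \cite{oskar1887onbinary}), $\Aut(\Cfrak)\cong\tilde S_4$ of order $48$. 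When $\Char(\F_q)=5$, the same rescaling turns $B$ into $\{0,\infty,\pm1,\pm2\}=\mathbb{P}^1(\F_5)$ (note $\sqrt{-1}=2$ there), whose $\mathrm{PGL}_2$-stabilizer is all of $\mathrm{PGL}_2(\F_5)\cong S_5$; the small-characteristic classification then gives $\overline{G}\cong S_5$ and $\Aut(\Cfrak)\cong\tilde S_5$ of order $240$. With $\Aut(\Cfrak)$ in hand, I would verify directly that the listed matrices $U',V'$ (resp.\ $U',V',W'$) induce automorphisms of $\Cfrak$ — i.e.\ their Möbius maps permute $B$ — that their images in $\mathrm{PGL}_2$ generate $\overline{G}$, and that $-I$ lies in the group they generate; an order count then gives $\langle U',V'\rangle=\Aut(\Cfrak)$, respectively $\langle U',V',W'\rangle=\Aut(\Cfrak)$.

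It remains to establish $\Aut(\Cfrak/\F_q)=\Aut(\Cfrak)$. For this I would note that every entry of $U',V'$ (and $W'$) lies in $\F_q$ as soon as $(-1)^{1/4}\in\F_q$ and $2^{1/2}\in\F_q$; the former holds because $8\mid(q-1)$ makes $\F_q^{*}$ contain all $8$-th roots of unity, and the latter is assumed. Hence $U',V'\in\GL_2(\F_q)$ (resp.\ $U',V',W'\in\GL_2(\F_q)$), so these generators — and therefore all of $\Aut(\Cfrak)$ — are defined over $\F_q$, yielding $\Aut(\Cfrak/\F_q)\cong\tilde S_4$ in case {\rm(i)} and $\Aut(\Cfrak/\F_q)\cong\tilde S_5$ in case {\rm(ii)}.

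The step I expect to be the main obstacle is the determination of $\overline{G}$ in the second paragraph, and in particular the jump from $S_4$ to $S_5$ in characteristic $5$ (where $B$ degenerates to the projective line over $\F_5$): this rests on the fine classification of genus-$2$ automorphism groups in small characteristic rather than on a generic-position argument, and one must also be confident that there are no further exceptional automorphisms beyond those coming from $\mathrm{PGL}_2(\F_5)$. A subtler bookkeeping point, which must be handled with care but does not change the answer, is that over $\F_q$ the curve $\Cfrak$ is in general only a quadratic twist of $y^2=x^5-x$ (a genuine $\F_q$-isomorphism would require a square primitive $8$-th root of unity in $\F_q$); however, a quadratic twist has the same automorphism group, with automorphisms defined over the same field, so the rationality conclusion is unaffected — and in any case it is confirmed by the explicit check that the listed generators lie in $\GL_2(\F_q)$.
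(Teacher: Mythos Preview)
The paper does not prove this lemma at all: it is quoted verbatim as \cite[Lemma 8 and its proof]{huang2025optimal} and used as a black box, so there is no ``paper's own proof'' to compare against. Your outline is a reasonable way to establish the result independently---identify the reduced automorphism group as the $\mathrm{PGL}_2$-stabilizer of the branch locus, recognize it as octahedral (resp.\ all of $\mathrm{PGL}_2(\F_5)$) after the substitution $x\mapsto\zeta_8 x$, invoke the genus-$2$ classification for the central extension, and then check $\F_q$-rationality of the listed generators from $8\mid(q-1)$ and $2^{1/2}\in\F_q$.

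Two places where your sketch would need real work before it counts as a proof: first, the ``order count'' step---you assert that $\langle U',V'\rangle$ (resp.\ $\langle U',V',W'\rangle$) has order $48$ (resp.\ $240$), but this is a nontrivial matrix-group computation that has to be carried out (e.g.\ by exhibiting relations or by explicit enumeration). Second, your remark that quadratic twists have ``the same automorphism group, with automorphisms defined over the same field'' is not true in general (twisting can change which automorphisms are $\F_q$-rational); as you correctly note at the end, the safe route is to bypass this and verify directly that the explicit generators lie in $\GL_2(\F_q)$, which you do. With those two computations filled in, your argument goes through.
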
 
\section{Constructions of Optimal $(r,\delta)$-LRCs via Automorphism Groups of Elliptic Function Fields}\label{sec:3} 
\subsection{A General Framework for Constructing Optimal $(r,\delta)$-LRCs via Automorphism Groups of Elliptic Function Fields}
\label{sec:3.1}
In this subsection, we present a general framework for constructing optimal $(r,\delta)$-LRCs via automorphism groups of elliptic function fields, which is a generalization of that proposed in the works \cite{li2019optimal} and \cite{ma2023group}. 
We transform the construction of optimal $(r,\delta)$-LRCs based on automorphism groups of elliptic function fields into some conditions concerning the group law of $\Pbb_{E}^1$ (see Section~\ref{sec:2.3}). 
Before proceeding, recall that for any $H\leq \Pbb_E^1$, we denote by $T_H$ the subgroup $\{\tau_Q:\; Q\in H\}$ of the translation group $T_E$.
\begin{proposition}\label{prop:equivalentconditionforEFF_1}
	Let $E/\F_q$ be an elliptic function field. 
    Let $H$ be a subgroup of $\Pbb_E^1$ and let $A$ be a nontrivial subgroup of $\Aut(E, O)$ such that $G = T_HA$ is a
subgroup of $\Aut(E/\F_q)$ (see Lemma~\ref{lem:isTAsubgroupofAutE}). Let $|G|=r+\delta-1$ with $r\geq 1,\delta\geq 2$.
	Let $F:=E^{G}$, and let $[P_1,P_2,\dots,P_{r+\delta-1}]$ be a list\footnote{Here, a \Emph{list} refers to an ordered multiset of rational places of $E$, or equivalently, a finite sequence of (not necessarily distinct) rational places of $E$.} of rational places of $E$ such that $\sum_{j=1}^{r+\delta-1}P_j=\Con_{E/F}(Q_\infty)$ for a rational place $Q_{\infty}\in \Pbb_F^1$. Then the following statements hold.
\begin{itemize}
    \item 	
	[{\rm (i)}] There exists a function $z\in F$ such that $F=\F_q(z)$ and $(z)^E_{\infty}=\Con_{E/F}(Q_{\infty})=\sum_{j=1}^{r+\delta-1}P_j$.
	\item
	[{\rm (ii)}] 
    The set $\Lcal_{E}(\sum_{j=1}^{i+1}P_{j})\backslash \Lcal_{E}(\sum_{j=1}^{i}P_{j})$ is non-empty for each $1\leq i\leq r-1$.     
	  Moreover, let $w_i$ be an arbitrary element of $ \Lcal_{E}(\sum_{j=1}^{i+1}P_{j})\backslash \Lcal_{E}(\sum_{j=1}^{i}P_{j})$ for each $1\leq i\leq r-1$. Then $w_0:=1,w_1,\dots,w_{r-1}$ are linearly independent over $F$.
	\item
	[{\rm (iii)}] Let $\{P_{i,1},P_{i,2},\dots,P_{i,r+\delta-1}\}$ be pairwise distinct rational places of $E$ lying over a rational place $Q_i$ of $F$ for each $1\leq i\leq \ell$, such that $Q_{\infty},Q_1,\dots,Q_{\ell}$ are pairwise distinct.  
	Then all $r\times r$ submatrices of the following matrix
	\begin{align}\label{eq:Generator_Matrix_of_Local_Code}
		M_i:=\left(\begin{array}{cccc}
			w_0(P_{i,1})&w_0(P_{i,2})&\cdots&w_0(P_{i,r+\delta-1})\\
			w_1(P_{i,1})&w_1(P_{i,2})&\cdots&w_1(P_{i,r+\delta-1})\\
			\vdots &\vdots &\ddots &\vdots\\
			w_{r-1}(P_{i,1})&w_{r-1}(P_{i,2})&\dots&w_{r-1}(P_{i,r+\delta-1})\\
		\end{array}\right)  
	\end{align} 
	are invertible for each $1\leq i \leq \ell$ if and only if 
	\begin{align}\label{eq:equivconditionofEFFcons_3}
		\oplus_{j=1}^{r}P_{j}\notin \bigcup_{i=1}^{\ell}
		\braces{\oplus_{j=1}^{r}P_{i,u_j}:\; 1\leq u_1< \dots < u_{r}\leq r+\delta-1},
	\end{align}	
	or equivalently, 
	\begin{align}\label{eq:equivconditionofEFFcons_4}
		\oplus_{j=r+1}^{r+\delta-1}P_{j}\notin \bigcup_{i=1}^{\ell}
		\braces{\oplus_{j=1}^{\delta-1}P_{i,u_j}:\; 1\leq u_1< \dots < u_{\delta-1}\leq r+\delta-1}.
	\end{align}
    \end{itemize}
\end{proposition}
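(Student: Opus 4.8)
The plan is to treat the three parts in turn, reducing everything to two structural facts about the elliptic function field $E$: that the fixed field $F=E^{G}$ is forced to be rational, and that linear equivalence of divisors on $E$ is detected by the group law of $\Pbb_E^1$ (Lemma~\ref{lem:deg0divisor_property}).

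For part~(i), I would first note that $F=E^{G}$ has full constant field $\F_q$ and that $E/E^{G}$ is ramified at $O$. Since $G=T_HA$ with $A$ nontrivial, the $G$-orbit of $O$ equals $\{Q:Q\in H\}$ (because $\tau_Q\alpha(O)=Q$ for $\tau_Q\in T_H$ and $\alpha\in A$), so there are exactly $|H|$ places above $O\cap E^{G}$; combining Lemma~\ref{lem:thepropertyofGalExtension_1}(iii) with $f(O|O\cap E^{G})=1$ gives $e(O|O\cap E^{G})=|A|\ge 2$. Hence $\deg\Diff(E/E^{G})>0=2g(E)-2$, so $F$ is rational by Lemma~\ref{lem:is_rational_FF}. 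Choosing a generator $z$ of $F=\F_q(z)$ whose pole place in $F$ is exactly $Q_\infty$ (possible since $Q_\infty$ is rational), the standard identity $(z)^E=\Con_{E/F}\bigl((z)^F\bigr)$ yields $(z)^E_{\infty}=\Con_{E/F}(Q_\infty)=\sum_{j=1}^{r+\delta-1}P_j$.

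For part~(ii), the dimension jumps are immediate from Riemann--Roch: $g(E)=1$ and the divisors $\sum_{j=1}^{i+1}P_j$, $\sum_{j=1}^{i}P_j$ have degrees $i+1\ge 2$ and $i\ge 1$ for $1\le i\le r-1$, so \eqref{eq:ell(D)_of_deg(D)_geq_2g-1} gives $\ell(\sum_{j=1}^{i+1}P_j)=i+1>i=\ell(\sum_{j=1}^{i}P_j)$; thus each $w_i$ exists and has a pole at $P_{i+1}$ that $w_0,\dots,w_{i-1}$ lack. The delicate point is $F$-linear independence, which I would prove by a valuation estimate (I describe the case where $Q_\infty$ splits completely in $E/F$, so the $P_j$ are distinct, $v_{P_j}(z)=-1$, and $v_{P_{i+1}}(w_i)=-1$; the general case needs only minor bookkeeping with valuations). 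Suppose $\sum_{i=0}^{r-1}h_i(z)w_i=0$ with $h_i\in\F_q[z]$ not all zero (clear denominators), and put $d^{*}=\max\{\deg h_i:h_i\ne 0\}$. If $d^{*}$ is attained only at $i=0$, then at the place $P_{r+\delta-1}$ the term $h_0$ has pole order $d^{*}$ whereas every other term has pole order $\deg h_i<d^{*}$ (each $w_i$ with $i\le r-1<r+\delta-2$ is regular at $P_{r+\delta-1}$), so the sum has a genuine pole there --- a contradiction. Otherwise let $t^{*}\ge 1$ be the largest index attaining $d^{*}$; then at $P_{t^{*}+1}$ the term $h_{t^{*}}w_{t^{*}}$ has pole order $d^{*}+1$, while every other term has pole order $\le d^{*}$ (terms of index $<t^{*}$ are regular at $P_{t^{*}+1}$, and terms of index $>t^{*}$ have $\deg h_i\le d^{*}-1$) --- again a contradiction. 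Hence all $h_i=0$.

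For part~(iii), I would recognize $M_i$ as a generator matrix of an algebraic geometry code. By~(ii) the $r$ functions $w_0,\dots,w_{r-1}$ are $\F_q$-linearly independent and all lie in $\Lcal_E(\sum_{j=1}^{r}P_j)$, which has dimension $r$; hence they form an $\F_q$-basis of it, and $M_i$ generates the code $\Ccal(\Pcal_i,\sum_{j=1}^{r}P_j)$ with $\Pcal_i=(P_{i,1},\dots,P_{i,r+\delta-1})$, of length $r+\delta-1$ and dimension $r$. ``All $r\times r$ submatrices of $M_i$ invertible'' is then equivalent to this code having minimum distance $\ge\delta$, i.e.\ to its being maximum distance separable. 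Such a code admits a nonzero word of weight $\le\delta-1$ iff some $\phi\in\Lcal_E(\sum_{j=1}^{r}P_j)\setminus\{0\}$ vanishes at $r$ of the points $P_{i,u}$, equivalently iff $\sum_{j=1}^{r}P_j\sim\sum_{u\in S}P_{i,u}$ for some $r$-subset $S\subseteq[r+\delta-1]$ (using that a degree-zero divisor has nonzero Riemann--Roch space exactly when it is principal), and by Lemma~\ref{lem:deg0divisor_property} this means $\oplus_{j=1}^{r}P_j=\oplus_{u\in S}P_{i,u}$; ruling this out for all $S$ and all $1\le i\le\ell$ is precisely~\eqref{eq:equivconditionofEFFcons_3}. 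For the reformulation~\eqref{eq:equivconditionofEFFcons_4}, note that $Q_\infty$ and $Q_i$ are linearly equivalent rational places of the rational function field $F$, so $\Con_{E/F}(Q_\infty)\sim\Con_{E/F}(Q_i)$ and hence $\oplus_{j=1}^{r+\delta-1}P_j=\oplus_{u=1}^{r+\delta-1}P_{i,u}=:T$ for every $i$; subtracting $\oplus_{j=1}^{r}P_j$ (resp.\ $\oplus_{u\in S}P_{i,u}$) from $T$ converts $\oplus_{j=1}^{r}P_j=\oplus_{u\in S}P_{i,u}$ into $\oplus_{j=r+1}^{r+\delta-1}P_j=\oplus_{u\in S^{c}}P_{i,u}$, and $S\mapsto S^{c}$ matches $r$-subsets with $(\delta-1)$-subsets, yielding the equivalence. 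The main obstacle is the $F$-linear independence in part~(ii): it is the only step that is not a pure dimension count or a direct application of the group law, and it is exactly where one must exploit that the pole divisor $(z)^E_{\infty}$ is supported on the entire orbit $\{P_1,\dots,P_{r+\delta-1}\}$.
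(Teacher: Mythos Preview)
Your proposal is correct and follows essentially the same approach as the paper's proof: part~(i) via ramification at $O$ and Lemma~\ref{lem:is_rational_FF}, part~(ii) via Riemann--Roch and a valuation/strict-triangle-inequality argument (the paper carries out the general ramified case you defer as ``bookkeeping'', tracking the common ramification index $e=e(P_j|Q_\infty)$ throughout), and part~(iii) via Lemma~\ref{lem:deg0divisor_property} together with the conorm equivalence $\Con_{E/F}(Q_\infty)\sim\Con_{E/F}(Q_i)$. Your MDS-code framing in~(iii) is a clean repackaging of the paper's direct submatrix-by-submatrix argument.
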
 
 \begin{proof}
{\rm (i)} By Lemma~\ref{lem:thepropertyofGalExtension_1} {\rm (i)}, the set of all distinct places of $E$ lying over $O\cap F$ is $\{\sigma(O):\; \sigma\in T_HA\}=H$. Since $O$ is a rational place, we have $f(O|O\cap F)=1$. Applying Lemma~\ref{lem:thepropertyofGalExtension_1} {\rm (iii)} yields $|H|\cdot e(O|O\cap F)=|G|=|A||H|$, and hence, $e(O|O\cap F)=|A|\geq 2$. 
By Dedekind's different theorem (see \eqref{eq:Dedekinddifferentthm}), we have $\deg\Diff(E/F)\geq \deg((e(O|O\cap F)-1)O)\geq 1$. By Lemma~\ref{lem:is_rational_FF}, $F$ is a rational function
 	field. Therefore, we have $\dim_{\F_q}(\Lcal_{F}(Q_{\infty}))=2$. Let $z\in \Lcal_F(Q_{\infty})\backslash \F_q$. We have $[F:\F_q(z)]=\deg((z)_{\infty}^F)=\deg(Q_{\infty})=1$ and $(z)^E_{\infty}=\Con_{E/F}((z)^F_{\infty})=\Con_{E/F}(Q_{\infty})=\sum_{j=1}^{r+\delta-1}P_j$.   

  {\rm (ii)} Since the case $r=1$ is trivial, we henceforth assume $r\geq 2$.
   By the Riemann-Roch theorem (see \eqref{eq:ell(D)_of_deg(D)_geq_2g-1}), for any $1\leq i\leq r-1$, we have $\Lcal_{E}(\sum_{j=1}^{i+1}P_{j})\backslash \Lcal_{E}(\sum_{j=1}^{i}P_j)\neq \varnothing$. 
   Let $w_0:=1$ and $w_i\in \Lcal_{E}(\sum_{j=1}^{i+1}P_{j})\backslash \Lcal_{E}(\sum_{j=1}^{i}P_{j})$ for each $1\leq i\leq r-1$. 
   We need to show that $w_0,w_1,\dots,w_{r-1}$ are $F$-linearly independent. Assume towards a contradiction that there exist rational functions $\Frep_{0}(z),\dots,\Frep_{r-1}(z)\in F=\F_q(z)$, not all zero, such that $\sum_{i=0}^{r-1}\Frep_i(z)w_i=0$. By clearing denominators, we may assume that $\Frep_{0}(z),\dots,\Frep_{r-1}(z)$ are polynomials of $z$. 
   Let $t:=\max\{0\leq i\leq r-1:\;\deg(\Frep_i(z))=\max\{\deg(\Frep_j(z)):\;0\leq j\leq r-1\}\}$, where we adopt the convention $\deg(0):=-\infty$. It holds that $\Frep_t(z)\neq 0$. Then we consider the following two possible cases of $t$ towards deriving a contradiction. Before proceeding, we denote $e:=e(P_1|Q_{\infty})=\dots=e(P_{r+\delta-1}|Q_{\infty})$, and observe that $v_{P_1}(z)=\dots=v_{P_{r+\delta-1}}(z)=-e$.
\begin{itemize}
    \item  
  If $t=0$, then we have $\deg(\Frep_{0}(z))>\deg(\Frep_i(z))$ for any $1\leq i\leq r-1$. Hence, for any $1\leq i\leq r-1$ such that $\Frep_i(z)\neq 0$, we have
  $$v_{P_{r+\delta-1}}(\Frep_{0}(z)w_0)=-e\deg(\Frep_{0}(z))<-e\deg(\Frep_i(z))-(e-1)\leq -e\deg(\Frep_i(z))+v_{P_{r+\delta-1}}(w_i)= v_{P_{r+\delta-1}}(\Frep_i(z)w_i),$$
 where the symbol ``$\leq$'' is due to $w_i\in \Lcal_{E}(P_1+\dots+P_{r+\delta-1}-P_{r+\delta-1})=\Lcal_{E}(\sum_{P|Q_{\infty}}eP-P_{r+\delta-1})$. This implies $v_{P_{r+\delta-1}}(\sum_{i=0}^{r-1}\Frep_i(z)w_i)=v_{P_{r+\delta-1}}(\Frep_{0}(z)w_0)\neq \infty$ by the strict triangle inequality (see \cite[Lemma 1.1.11]{stichtenoth2009algebraic}).
  \item 
  If $1\leq t\leq r-1$, then we have $\deg(\Frep_t(z))\geq \deg(\Frep_i(z))$ for any $0\leq i\leq t-1$; and $\deg(\Frep_t(z))>\deg(\Frep_i(z))$ for any $t+1\leq i\leq r-1$. 
  Hence, for any $0\leq i\leq t-1$ such that $\Frep_i(z)\neq 0$, we have 
  \begin{align}\label{eq:202507131727}
  v_{P_{t+1}}(\Frep_t(z)w_t)=-e\deg(\Frep_t(z))+v_{P_{t+1}}(w_t)<-e\deg(\Frep_i(z))+v_{P_{t+1}}(w_i)=v_{P_{t+1}}(\Frep_i(z)w_i),
  \end{align}  
  where ``$<$'' is due to $\deg(\Frep_t(z))\geq \deg(\Frep_i(z))$ and $v_{P_{t+1}}(w_t)<v_{P_{t+1}}(w_i)$ since $v_{P_{t+1}}(w_t)\leq -v_{P_{t+1}}(\sum_{j=1}^{t+1}P_j)$,\footnote{This inequality holds because otherwise we would have $w_{t}\in \Lcal_E(\sum_{j=1}^{t}P_j)$, contradicting the fact that $w_t\in \Lcal_E(\sum_{j=1}^{t+1}P_j)\backslash \Lcal_E(\sum_{j=1}^{t}P_j)$.} $w_i\in \Lcal_{E}(\sum_{j=1}^{t+1}P_j-P_{t+1})$; and for any $t+1\leq i\leq r-1$ such that $\Frep_i(z)\neq 0$, we have
  \begin{align}\label{eq:202507131728}
  v_{P_{t+1}}(\Frep_t(z)w_t)=-e\deg(\Frep_t(z))+v_{P_{t+1}}(w_t)<-e\deg(\Frep_i(z))+v_{P_{t+1}}(w_i)=v_{P_{t+1}}(\Frep_i(z)w_i),
  \end{align}
  where the symbol ``$<$'' is due to $\deg(\Frep_t(z))> \deg(\Frep_i(z))$, $v_{P_{t+1}}(w_t)\leq -1$ and $v_{P_{t+1}}(w_i)\geq -e$ since $v_{P_{t+1}}(w_t)\leq -v_{P_{t+1}}(\sum_{j=1}^{t+1}P_j)$ and $w_i\in \Lcal_{E}(\sum_{j=1}^{r+\delta-1}P_j)=\Lcal_{E}(\sum_{P|Q_{\infty}}eP)$. 
  Inequalities~\eqref{eq:202507131727} and~\eqref{eq:202507131728} imply $v_{P_{t+1}}(\sum_{i=0}^{r-1}\Frep_i(z)w_i) \allowbreak=v_{P_{t+1}}(\Frep_t(z)w_t)\neq \infty$.
\end{itemize}   
  The above two cases both lead to a contradiction with $\sum_{i=0}^{r-1}\Frep_i(z)w_i=0$. 
  Therefore, $w_0,w_1,\dots,w_{r-1}$ are $F$-linearly independent.
  
   {\rm (iii)} For any $1\leq i\leq \ell$ and $1\leq u_1<\cdots<u_{r} \leq r+\delta-1$, we will show that the submatrix consisting of the $u_1,\dots,u_{r}$-th columns of $M_i$ in \eqref{eq:Generator_Matrix_of_Local_Code} 
    is singular if and only if $\oplus_{j=1}^{r}P_j= \oplus_{j=1}^{r}P_{i,u_j}$. 
	
	If the submatrix consisting of the $u_1,\dots,u_{r}$-th columns of $M_i$ is singular, then there exist $c_0,c_1,\dots,c_{r-1}\in \F_q$, not all zero, such that the vector $(c_0,c_1,\dots,c_{r-1})M_i$ vanishes at the positions $u_1,\dots,u_{r}$. Thus, the function $w:=\sum_{j=0}^{r-1}c_jw_j$ has $r$ zeros $P_{i,u_1},\dots,P_{i,u_r}$. Since $w\in \Lcal_{E}(P_1+\dots+P_{r})$, we have $w \in \Lcal_{E}(P_1+\dots+P_{r}-P_{i,u_1}-\dots-P_{i,u_r})$. This implies $(w)^E=-P_1-\dots-P_{r}+P_{i,u_1}+\dots+P_{i,u_r}$. And then it holds $\oplus_{j=1}^{r}P_j= \oplus_{j=1}^{r}P_{i,u_j}$ by Lemma~\ref{lem:deg0divisor_property}. 
	
	Conversely, $\oplus_{j=1}^{r}P_j = \oplus_{j=1}^{r}P_{i,u_j}$ implies $P_1+\dots+P_{r}\sim P_{i,u_1}+\dots+P_{i,u_r}$ by Lemma~\ref{lem:deg0divisor_property}, i.e., there exists a nonzero function $w\in E$ such that $(w)^E=-P_1-\dots-P_{r}+P_{i,u_1}+\dots+P_{i,u_r}$. Then we have $w\in \Lcal_{E}(\sum_{j=1}^{r}P_j)=\Span_{\F_q}\{w_0,w_1,\dots,w_{r-1}\}$, and thus $w=\sum_{j=0}^{r-1}c_jw_j$ for some $c_0,c_1,\dots,c_{r-1}\in \F_q$ that are not all zero, which means that $(c_0,c_1,\dots,c_{r-1})M_i$ vanishes at the positions $u_1,\dots,u_{r}$. And then the submatrix consisting of the $u_1,\dots,u_{r}$-th columns of $M_i$ is singular. 
	
	The above statements imply the equivalences stated in Proposition~\ref{prop:equivalentconditionforEFF_1} {\rm (iii)}, except for \eqref{eq:equivconditionofEFFcons_4}. It remains to establish the equivalence between \eqref{eq:equivconditionofEFFcons_4} and \eqref{eq:equivconditionofEFFcons_3}.
	
	Note that for any $1\leq i\leq \ell$, $\Lcal_{F}(Q_{\infty}-Q_i)\neq \{0\}$ since $F$ is a rational function field. Thus, there exists a nonzero function $z'\in \Lcal_{F}(Q_{\infty}-Q_i)\subseteq F$ such that $(z')^F=-Q_{\infty}+Q_i$, and then $(z')^E=\Con_{E/F}((z')^F)=\Con_{E/F}(-Q_{\infty}+Q_i)=(-P_1-\dots-P_{r+\delta-1})+(P_{i,1}+\dots+P_{i,r+\delta-1})$. This implies that $\oplus_{j=1}^{r+\delta-1}P_j=\oplus_{j=1}^{r+\delta-1}P_{i,j}$ by Lemma~\ref{lem:deg0divisor_property}. 
    Therefore,
    $$
    \oplus_{j=1}^{r}P_{j}\notin\{\oplus_{j=1}^{r}P_{i,u_j}:\; 1\leq u_1< \dots < u_{r}\leq r+\delta-1\}
    $$ 
    is equivalent to 
    $$
    \oplus_{j=r+1}^{r+\delta-1}P_{j}\notin\{\oplus_{j=1}^{\delta-1}P_{i,u_j}:\; 1\leq u_{1}< \dots < u_{\delta-1}\leq r+\delta-1\}.$$
    Consequently, \eqref{eq:equivconditionofEFFcons_4} and \eqref{eq:equivconditionofEFFcons_3} are equivalent. The proof is completed.
 \end{proof}   
\begin{remark}
\phantomsection
\label{rem:equivalentconditionforEFF_1}
\begin{itemize}
    \item[\rm (i)]  
In Proposition~\ref{prop:equivalentconditionforEFF_1}, when $\delta=2$, the equivalent condition \eqref{eq:equivconditionofEFFcons_4} becomes 
\begin{align*}
	P_{r+1}\notin \bigcup_{i=1}^{\ell}\{P_{i,u_1}: 1\leq u_1\leq r+1\}.
\end{align*}
This condition holds naturally due to the fact that $P_{r+1}\neq P_{i,j}$ for any $1\leq i\leq \ell$ and $1\leq j\leq r+1$. And thus all $r\times r$ submatrices of $M_i$ in \eqref{eq:Generator_Matrix_of_Local_Code} are invertible for each $1\leq i\leq \ell$.
This is exactly what is demonstrated in \cite[Proposition 4.2 {\rm (iii)}]{ma2023group}, where all $r\times r$ submatrices of the following matrix 
\begin{align*}
	\left(\begin{array}{cccc}
	 w_0(P_{i,1})&w_0(P_{i,2})&\cdots&w_0(P_{i,r+1})\\
	 w_1(P_{i,1})&w_1(P_{i,2})&\cdots&w_1(P_{i,r+1})\\
	 \vdots  &\vdots&\ddots &\vdots\\
	 w_{r-1}(P_{i,1})&w_{r-1}(P_{i,2})&\dots&w_{r-1}(P_{i,r+1})\\
	\end{array}\right) 
\end{align*} 
are proved to be invertible for each $1\leq i\leq \ell$. However, when $r>1$ and $\delta>2 $, the conditions \eqref{eq:equivconditionofEFFcons_3} and \eqref{eq:equivconditionofEFFcons_4} are not guaranteed to hold without additional assumptions. Later, in Sections \ref{sec:3.2} and \ref{sec:3.3}, for $r=2$ or $\delta=3$, we will provide two distinct sufficient conditions for an elliptic function field and the subgroups of its automorphism group, under which we can select rational places $[P_1,\dots,P_{r+\delta-1}],\{P_{1,1},\dots,P_{1,r+\delta-1}\}, \dots,\{P_{\ell,1},\dots,P_{\ell,r+\delta-1}\}$ satisfying conditions \eqref{eq:equivconditionofEFFcons_3} and \eqref{eq:equivconditionofEFFcons_4}. This, together with Proposition~\ref{prop:construction_of_(r,delta)from_EFF_1} below, leads to several classes of optimal $(r,3)$-LRCs and optimal $(2,\delta)$-LRCs.

\item[{\rm (ii)}]
In addition to generalizing the framework from the case of $r$-LRCs in \cite[Proposition 4.2]{ma2023group} to the case of $(r,\delta)$-LRCs, Proposition~\ref{prop:equivalentconditionforEFF_1} also provides the following improvement. In \cite[Proposition 4.2]{ma2023group}, $P_{1},\dots,P_{r+1}$ are required to be all distinct rational places lying over a rational place $Q_{\infty}\in \Pbb_{F}^1$ that splits completely in $E/F$, and then the functions $w_0,w_1,\dots,w_{r-1}$ are defined by the set $\{P_{1},\dots,P_{r+1}\}$. In Proposition~\ref{prop:equivalentconditionforEFF_1}, we allow the functions $w_0,w_1,\dots,w_{r-1}$ to be defined using a list $[P_1,\dots,P_{r+\delta-1}]$ of rational places of $E$ satisfying $\sum_{j=1}^{r+\delta-1}P_j=\Con_{E/F}(Q_{\infty})$ for a rational place $Q_{\infty}$ of $F$, regardless of whether $Q_{\infty}$ splits completely in $E/F$. This improvement is crucial for our later construction of optimal $(r,3)$-LRCs (and $(2,\delta)$-LRCs) in Theorem~\ref{thm:estimation_of_length_EFF_and_cons_2}. There, the elliptic involution $[-1]\notin G\leq \Aut(E/\F_q)$, and we need to select $[P_1,P_2,\dots,P_{r+2}]$ such that 
   \begin{align}\label{eq:3331}
       P_{r+1}\oplus P_{r+2}=O,
   \end{align}
  which is fulfilled by setting $Q_{\infty}=O\cap F$ and $P_{r+1}=P_{r+2}=O$. However, it is hard to fulfill \eqref{eq:3331} if we require that $Q_{\infty}$ splits completely in $E/F$. 
    % (1) If the places $[P_1,P_2,\dots,P_{r+\delta-1}]$ are indeed ramified in $E/F$, then it is guaranteed not to waste any evaluation points, since we never use ramified places for evaluation. 
\item[{\rm (iii)}]
Assume that for $r=a$, $\delta=b$ (with $r+\delta-1=a+b-1=|G|$), there exist rational places $[P_1,P_2,\dots,P_{r+\delta-1}], \{P_{1,1}, \allowbreak\dots,P_{1,r+\delta-1}\},\dots,\{P_{\ell,1}, \dots,P_{\ell,r+\delta-1}\}$ satisfying the conditions in Proposition~\ref{prop:equivalentconditionforEFF_1}, including conditions \eqref{eq:equivconditionofEFFcons_3} and \eqref{eq:equivconditionofEFFcons_4}. Then for $r=b-1,\delta=a+1$ (note that $r+\delta-1=a+b-1=|G|$ still holds), there also exist rational places $[P'_1,P'_2,\dots,P'_{r+\delta-1}], \{P'_{1,1}, \allowbreak\dots,P'_{1,r+\delta-1}\},\dots,\{P'_{\ell,1}, \dots,P'_{\ell,r+\delta-1}\}$ satisfying the conditions in Proposition~\ref{prop:equivalentconditionforEFF_1}, including conditions \eqref{eq:equivconditionofEFFcons_3} and \eqref{eq:equivconditionofEFFcons_4}. Indeed, it suffices to set $P'_1=P_{r+\delta-1},\dots,P'_{r+\delta-1}=P_1$, and $P'_{i,j}=P_{i,j}$ for $1\leq i\leq \ell$ and $1\leq j\leq r+\delta-1$.
\end{itemize}
\end{remark}
Building directly on Proposition \ref{prop:equivalentconditionforEFF_1}, we have the following construction of optimal $(r,\delta)$-LRCs.
\begin{proposition}\label{prop:construction_of_(r,delta)from_EFF_1}
	We adopt the settings of Proposition~\ref{prop:equivalentconditionforEFF_1} and assume that at least one of conditions~\eqref{eq:equivconditionofEFFcons_3} and~\eqref{eq:equivconditionofEFFcons_4} holds.
	 Suppose two integers $t$ and $m$ satisfy $1\leq t< m\leq \ell$. Let $V:=\{a_{0,t}w_0z^t+\sum_{i=0}^{r-1}\sum_{j=0}^{t-1}a_{i,j}w_iz^j:a_{0,t}\in \F_q \text{ and }a_{i,j}\in \F_q \text{ for } 0\leq i\leq r-1, 0\leq j\leq t-1\}$, and let $\Pcal:=\{P_{1,1},\dots,P_{1,r+\delta-1},\dots,P_{m,1},\dots,P_{m,r+\delta-1}\}$. Define $\Ccal(\Pcal,V)$ by  
	 \begin{align}\label{eq:202506060032}
     \Ccal(\Pcal,V):=\left\{(\phi(P_{1,1}),\dots,\phi(P_{1,r+\delta-1}),\dots,\phi(P_{m,1}),\dots,\phi(P_{m,r+\delta-1})):\; \phi\in V\right\}.
	 \end{align}
Then the linear code $\Ccal(\Pcal,V)$ is an optimal $(r,\delta)$-LRC with parameters $[m(r+\delta-1),tr+1,(m-t)(r+\delta-1)]_q.$  
\end{proposition}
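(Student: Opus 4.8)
The plan is to verify the four defining properties of an optimal $(r,\delta)$-LRC for $\Ccal(\Pcal,V)$: the code length, the dimension, the $(r,\delta)$-locality of every symbol, and the minimum distance, after which optimality follows by checking that the Singleton-type bound \eqref{eq:SLboundfor_r_delta} is met with equality. First I would record that $\dim_{\F_q}(V)=tr+1$: the monomials $w_iz^j$ for $0\le i\le r-1$, $0\le j\le t-1$ together with $w_0z^t$ form a spanning set of size $tr+1$, and their $\F_q$-linear independence (hence $F$-linear independence after grouping by powers of $z$) follows from Proposition~\ref{prop:equivalentconditionforEFF_1}~{\rm (ii)}, which gives that $w_0,\dots,w_{r-1}$ are linearly independent over $F=\F_q(z)$. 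The length is $n=m(r+\delta-1)$ by construction. To see that the evaluation map $V\to\Ccal(\Pcal,V)$ is injective (so $k=tr+1$), note every nonzero $\phi\in V$ lies in $\Lcal_E\bigl(\sum_{j=1}^{r}P_j+tQ_\infty'\bigr)$ where $Q_\infty'=\Con_{E/F}(Q_\infty)=\sum_{j=1}^{r+\delta-1}P_j$, so $\phi$ has at most $r+t(r+\delta-1)<m(r+\delta-1)=n$ zeros among $\Pcal$ (using $t<m$), hence does not vanish identically on $\Pcal$; this argument simultaneously yields the lower bound $d\ge n-\deg=(m-t)(r+\delta-1)$ on the minimum distance.

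Next I would establish the $(r,\delta)$-locality. For each $1\le i\le m$, take the local group $R_i:=\{$positions corresponding to $P_{i,1},\dots,P_{i,r+\delta-1}\}$, so $|R_i|=r+\delta-1$. Restricting the evaluation to $R_i$: every $\phi\in V$ has $\phi|_{R_i}$ equal to the evaluation of a function in $\Lcal_E\bigl(\sum_{j=1}^{r}P_j\bigr)=\Span_{\F_q}\{w_0,\dots,w_{r-1}\}$ up to a scalar coming from $z^j$ evaluated at $Q_i$ — more precisely, since $z$ is constant on the fibre over $Q_i$, the restricted codeword $\Ccal|_{R_i}$ is (a coset scaling of) the code generated by the rows of the matrix $M_i$ in \eqref{eq:Generator_Matrix_of_Local_Code}. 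By the hypothesis that \eqref{eq:equivconditionofEFFcons_3} or \eqref{eq:equivconditionofEFFcons_4} holds, Proposition~\ref{prop:equivalentconditionforEFF_1}~{\rm (iii)} tells us every $r\times r$ submatrix of $M_i$ is invertible, i.e. $\Ccal|_{R_i}$ is an MDS code of length $r+\delta-1$ and dimension $r$, hence $d(\Ccal|_{R_i})\ge (r+\delta-1)-r+1=\delta$. This gives $(r,\delta)$-locality for every symbol (each position belongs to some $R_i$), establishing that $\Ccal(\Pcal,V)$ is an $(r,\delta)$-LRC.

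Finally I would confirm optimality. The parameters $[n,k]=[m(r+\delta-1),\,tr+1]$ give $\lceil k/r\rceil=\lceil(tr+1)/r\rceil=t+1$, so the Singleton-type bound \eqref{eq:SLboundfor_r_delta} reads
\begin{align*}
d\le n-k+1-(\lceil k/r\rceil-1)(\delta-1)=m(r+\delta-1)-(tr+1)+1-t(\delta-1)=(m-t)(r+\delta-1).
\end{align*}
Combined with the lower bound $d\ge(m-t)(r+\delta-1)$ from the Riemann–Roch / degree count, we get equality, so $d=(m-t)(r+\delta-1)$ and the code is optimal. The main obstacle is the locality step: one must be careful that $z$ is genuinely constant on each fibre $\{P_{i,1},\dots,P_{i,r+\delta-1}\}$ over $Q_i$ (true since $z\in F$ generates $F$ over $\F_q$ and these places all lie over $Q_i\in\Pbb_F^1$, with $v_{P_{i,u}}(z-z(Q_i))>0$), so that the contribution of the $z^j$ factors to $\phi|_{R_i}$ is a fixed scalar $z(Q_i)^j$ independent of the column $u$; this reduces $\Ccal|_{R_i}$ exactly to the row space of $M_i$ and lets Proposition~\ref{prop:equivalentconditionforEFF_1}~{\rm (iii)} apply. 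The remaining subtlety is the injectivity/minimum-distance zero count, where one needs $t<m$ strictly so that $r+t(r+\delta-1)<n$; this is exactly the hypothesis $1\le t<m\le\ell$.
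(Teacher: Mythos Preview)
Your overall strategy matches the paper's proof exactly: compute $\dim_{\F_q}V=tr+1$ via the $F$-independence of $w_0,\dots,w_{r-1}$, bound $d$ from below via a Riemann--Roch space containing $V$, establish $(r,\delta)$-locality by observing that $z$ is constant on each fibre so that $\Ccal|_{R_i}$ has generator matrix $M_i$, and then squeeze against the Singleton-type bound.

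There is, however, one genuine slip in your minimum-distance step. You place $V$ inside $\Lcal_E\bigl(\sum_{j=1}^{r}P_j+t\sum_{j=1}^{r+\delta-1}P_j\bigr)$, a divisor of degree $r+t(r+\delta-1)$, and then write ``$d\ge n-\deg=(m-t)(r+\delta-1)$''. But $n-\bigl(r+t(r+\delta-1)\bigr)=(m-t)(r+\delta-1)-r$, not $(m-t)(r+\delta-1)$; with only this weaker lower bound you cannot pin down $d$ against the Singleton-type upper bound. The fix is to use the tighter containment the paper uses: since $w_i\in\Lcal_E\bigl(\sum_{j=1}^{r}P_j\bigr)\subseteq\Lcal_E\bigl(\sum_{j=1}^{r+\delta-1}P_j\bigr)$ and the terms $w_iz^j$ with $j\le t-1$ already lie in $\Lcal_E\bigl(t\sum_{j=1}^{r+\delta-1}P_j\bigr)$, while $w_0z^t$ has pole divisor exactly $t\sum_{j=1}^{r+\delta-1}P_j$, one has $V\subseteq\Lcal_E\bigl(t\sum_{j=1}^{r+\delta-1}P_j\bigr)$. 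This divisor has degree $t(r+\delta-1)$, giving $d\ge(m-t)(r+\delta-1)$ on the nose, and the rest of your argument goes through.
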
 
% (1) It seems that Proposition \eqref{prop:equivalentconditionforEFF_1} {\rm (iii)} is just a sufficient condition for the code $\Ccal$ to have $(r,\delta)$-locality.  However, it is a sufficient and necessary condition. Indeed, it is direct to check that the matrix in \eqref{eq:matrixrepre} is a generator matrix for the local code $\Ccal(\Pcal,V)|_{[(i-1)(r+\delta-1)+1,i(r+\delta-1)]}$. If any $r\times r$ submatrix of it is singular, then we will have $\D(\Ccal(\Pcal,V)|_{[(i-1)(r+\delta-1)+1,i(r+\delta-1)]})\leq \delta-1$.  
\begin{proof}
By Proposition~\ref{prop:equivalentconditionforEFF_1} {\rm (ii)}, $w_0,w_1,\dots,w_{r-1}$ are $F$-linearly independent, which, along with the fact that $1,z,\dots,z^t\in F$ are $\F_q$-linearly independent, implies that $\dim_{\F_q}(V)=tr+1$. Note that $V\subseteq \Lcal_{E}(t(P_1+P_2+\dots+P_{r+\delta-1}))$. By Section~\ref{sec:2.1}, $\Ccal(\Pcal,V)$ is a linear code with parameters 
$[n=m(r+\delta-1),k=tr+1,d\geq (m-t)(r+\delta-1)]_q.$
In the following, we prove that $\Ccal(\Pcal,V)$ is an optimal $(r,\delta)$-LRC. 

Since $z\in F$ and $P_{i,1},\dots,P_{i,r+\delta-1}$ all lie over $Q_{i}\in \Pbb_F^1$ for each $1\leq i\leq m$, we have $z(P_{i,1})=\dots=z(P_{i,r+\delta-1})=z(Q_i)$, and thus $\Ccal(\Pcal,V)|_{\{(i-1)(r+\delta-1)+1,\dots,i(r+\delta-1)\}}=\{(\phi(P_{i,1}),\dots,\phi(P_{i,r+\delta-1})):\; \phi\in \Span_{\F_q}\{w_0,w_1,\dots,w_{r-1}\}\}$.  
The minimum distance of $\Ccal(\Pcal,V)|_{\{(i-1)(r+\delta-1)+1,\dots,i(r+\delta-1)\}}$ is equal to $\delta$ since the matrix $M_i$ in \eqref{eq:Generator_Matrix_of_Local_Code} is a generator matrix of $\Ccal(\Pcal,V)|_{\{(i-1)(r+\delta-1)+1,\dots,i(r+\delta-1)\}}$ and all $r\times r$ submatrices of $M_i$ are invertible by Proposition~\ref{prop:equivalentconditionforEFF_1} {\rm (iii)}.   
    Thus, $\Ccal(\Pcal,V)$ is an $(r,\delta)$-LRC. By the Singleton-type bound~\eqref{eq:SLboundfor_r_delta}, we have 
    $d\leq n-k+1-(\lceil k/r\rceil-1)(\delta-1)=(m-t)(r+\delta-1).$
Therefore, the minimum distance $d$ of $\Ccal(\Pcal,V)$ is determined to be $(m-t)(r+\delta-1)$ and $\Ccal(\Pcal,V)$ is an optimal $(r,\delta)$-LRC.  
\end{proof} 
\begin{remark}
\phantomsection
\label{rem:construction_of_(r,delta)from_EFF_1} 
\begin{itemize}
 \item[{\rm (i)}] Ma and Xing (see the proof of \cite[Proposition 4.4]{ma2023group}) employed the modified algebraic geometry codes to lengthen the optimal $r$-LRC by $(r+1)$, by allowing $P_1,\dots,P_{r+1}$ to be evaluation points (where $P_1,\ldots,P_{r+1}$ are required to be pairwise distinct). Our framework (Proposition~\ref{prop:equivalentconditionforEFF_1} and Proposition~\ref{prop:construction_of_(r,delta)from_EFF_1}) can also achieve this (when $\delta=2$, and $P_1,\dots,P_{r+1}$ are pairwise distinct, or equivalently, $Q_{\infty}$ splits completely in $E/F$).
     However, in general, or more precisely, in the case where $\delta\geq 3$ and $P_1,\dots,P_{r+\delta-1}$ are pairwise distinct, the technique of modified algebraic geometry code may not be feasible, since the $(r,\delta)$-locality can no longer be guaranteed on the extended evaluation points $P_1,\dots,P_{r+\delta-1}$.

 \item[{\rm (ii)}] 
          When $\delta=2$, by \cite[Proposition 4.1]{ma2023group}, Proposition~\ref{prop:equivalentconditionforEFF_1}, Remark~\ref{rem:equivalentconditionforEFF_1} {\rm (i)}, Proposition~\ref{prop:construction_of_(r,delta)from_EFF_1}, and Remark~\ref{rem:construction_of_(r,delta)from_EFF_1} {\rm (i)}, we can recover \cite[Proposition 4.4]{ma2023group}. Alternatively, one can set $\delta=2$ and $Q_{\infty}=O\cap F$ in Proposition~\ref{prop:equivalentconditionforEFF_1} and ~\ref{prop:construction_of_(r,delta)from_EFF_1} to recover \cite[Proposition 4.4]{ma2023group} by \cite[Proposition 4.1]{ma2023group} and Remark~\ref{rem:equivalentconditionforEFF_1} {\rm (i)}.
     \end{itemize}     
          % Indeed, by the proof of Theorem~\ref{prop:construction_of_(r,delta)from_EFF_1}, we can find that the equivalent conditions \eqref{eq:equivconditionofEFFcons_3} and \eqref{eq:equivconditionofEFFcons_4} are both sufficient and necessary conditions for the linear code $\Ccal(\Pcal,V)$ in \eqref{eq:202506060032} to be an $(r,\delta)$-LRC.  
\end{remark} 
 As mentioned in Remark~\ref{rem:equivalentconditionforEFF_1} {\rm (i)}, when $r\geq 2$ and $\delta\geq 3$, the conditions \eqref{eq:equivconditionofEFFcons_3} and \eqref{eq:equivconditionofEFFcons_4} will not hold by default like the case of $r$-LRCs (i.e., $\delta=2$). 
In the next two subsections, we consider specific settings of the elliptic function fields $E/\F_q$ along with suitable automorphism subgroups $G=T_HA\leq \Aut(E/\F_q)$, under which there exist rational places $[P_1,\dots,P_{r+\delta-1}], \{P_{1,1},\dots,P_{1,r+\delta-1}\},\dots, \allowbreak \{P_{\ell,1},\dots,\allowbreak P_{\ell,r+\delta-1}\}$ satisfying conditions~\eqref{eq:equivconditionofEFFcons_3} and~\eqref{eq:equivconditionofEFFcons_4}. Consequently, we arrive at several families of explicit optimal $(r,3)$-LRCs and $(2,\delta)$-LRCs.

\subsection{Construction {\rm I} of Optimal $(r,3)$-LRCs and $(2,\delta)$-LRCs by the General Framework}\label{sec:3.2}
In this subsection, we construct our first explicit family of optimal $(r,3)$-LRCs and $(2,\delta)$-LRCs based on the framework proposed in the previous subsection. This class of constructions relies on elliptic function fields with odd rational places, utilizing a subgroup $T_H\langle[-1]\rangle\leq \Aut(E/\F_q)$.  
\begin{theorem} \label{thm:estimation_of_length_EFF_and_cons_1}
Let $E/\F_q$ be an elliptic function field with $N(E)$ rational places satisfying $2\nmid N(E)$. Let $H$ be a subgroup of $\Pbb_E^1$ of order $h\geq 3$, and let $A:= \langle[-1]\rangle \leq  \Aut(E,O)$ (see Remark~\ref{rem:invo_of_EFF_and_isogeny_is_endo} {\rm (i)}). Let $G:=T_{H} A\leq\Aut(E/\F_q)$, $F:=E^G$, and $r=2h-2$, $\delta=3$ (or $r=2$, $\delta=2h-1$). Then there exist rational places $[P_1,\dots,P_{r+\delta-1}]$, $\{P_{1,1},\dots,P_{1,r+\delta-1}\}$,\dots ,$\{P_{\ell,1},\dots,\allowbreak P_{\ell,r+\delta-1}\}$ of $E$ satisfying the conditions in Proposition~\ref{prop:equivalentconditionforEFF_1} (including conditions \eqref{eq:equivconditionofEFFcons_3} and \eqref{eq:equivconditionofEFFcons_4}), where $\ell=\frac{N(E)-h}{2h}-1$. 

Consequently, by Propositions~\ref{prop:equivalentconditionforEFF_1} and~\ref{prop:construction_of_(r,delta)from_EFF_1}, there exist an optimal $(r=2h-2,\delta=3)$-LRC and an optimal $(r=2,\delta=2h-1)$-LRC with parameters 
$
 [m\cdot 2h,tr+1,(m-t)\cdot 2h]_q     
$ 
for any $1\leq t<m\leq \ell=\frac{N(E)-h}{2h}-1$.
\end{theorem}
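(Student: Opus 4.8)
The plan is to verify the hypotheses of Proposition~\ref{prop:equivalentconditionforEFF_1} by exhibiting an explicit choice of the list $[P_1,\dots,P_{r+\delta-1}]$ and the families $\{P_{i,1},\dots,P_{i,r+\delta-1}\}$, $1\le i\le \ell$, and then checking the group-law conditions \eqref{eq:equivconditionofEFFcons_3}--\eqref{eq:equivconditionofEFFcons_4}. First I would record the basic structure of $G=T_H\langle[-1]\rangle$: since $[-1]$ has order $2$ and $[-1]\notin T_H$ (the only translations that are also automorphisms fixing $O$ is $\tau_O=\mathrm{id}$, and $[-1]$ is not a translation when $h\ge 3$), we get $|G|=2h=r+\delta-1$, consistent with both parameter choices. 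The orbits of $G$ acting on $\Pbb_E^1$ are the sets $\{Q\oplus h_0,\ \ominus Q\oplus h_0 : h_0\in H\}$ for $Q\in\Pbb_E^1$; such an orbit has size $2h$ precisely when $Q$ is not a fixed point of $\tau_{h_0}\circ[-1]$ for any $h_0\in H$, i.e. when $[2]Q\notin H$ fails to force a collision — more carefully, the orbit is short exactly when $Q\oplus h_0=\ominus Q$, i.e. $[2]Q\in H$. Since $2\nmid N(E)$, the doubling map $[2]$ is a bijection on $\Pbb_E^1$, so the number of $Q$ with $[2]Q\in H$ is exactly $h$; these $h$ points form a single coset-like set that is $G$-stable, giving short orbits, while the remaining $N(E)-h$ points split into $(N(E)-h)/(2h)$ full orbits of size $2h$. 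This is where $\ell=\frac{N(E)-h}{2h}-1$ comes from: one such full orbit is reserved for the "infinity" place $Q_\infty$, and the other $\ell$ become the local repair groups.

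Next I would make the concrete selection. Choose the distinguished orbit to contain $O$ — wait, $[2]O=O\in H$, so $O$ lies in a short orbit; instead pick any $Q_0$ with $[2]Q_0\notin H$ and let the orbit of $Q_0$ under $G$ project to the rational place $Q_\infty\in\Pbb_F^1$ (which splits completely in $E/F$ by Remark~\ref{rem:thepropertyofGalExtension_2}, since $e=1$ on a full orbit). Write this orbit as $\{P_1,\dots,P_{2h}\}$ and order it so that the hyperelliptic-involution pairing is $P_j\leftrightarrow$ its $[-1]$-image; concretely one can arrange $P_{2j-1}=Q_0\oplus h_j$ and $P_{2j}=\ominus Q_0\oplus h_j$ where $H=\{h_1,\dots,h_h\}$, so that $P_{2j-1}\oplus P_{2j}=[2]$ of... no — $P_{2j-1}\oplus P_{2j}=(Q_0\oplus h_j)\oplus(\ominus Q_0\oplus h_j)=[2]h_j\in H$. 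Summing over all $j$, $\oplus_{j=1}^{2h}P_j=\oplus_{j=1}^h[2]h_j=[2]\bigl(\oplus_{h_0\in H}h_0\bigr)\in H$, which one checks equals $O$ (the sum of all elements of a finite abelian group of odd order is $O$), so $\sum P_j=\mathrm{Con}_{E/F}(Q_\infty)$ is consistent with part~(i). Then for $r=2h-2$, $\delta=3$ I would need $\oplus_{j=r+1}^{r+\delta-1}P_j=P_{2h-1}\oplus P_{2h}=[2]h_h$, and I get to choose the ordering of $H$, so I may force $[2]h_h$ to be any prescribed element of $H$ — in particular an element \emph{not} hit by the right-hand side of \eqref{eq:equivconditionofEFFcons_4}. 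The families $\{P_{i,1},\dots,P_{i,2h}\}$ are taken to be the $\ell$ remaining full $G$-orbits, projecting to $\ell$ distinct rational places $Q_1,\dots,Q_\ell$ of $F$, all $\ne Q_\infty$.

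The heart of the argument — and the main obstacle — is verifying the non-membership condition \eqref{eq:equivconditionofEFFcons_4}: that $P_{2h-1}\oplus P_{2h}$ does not lie in $\bigcup_{i=1}^{\ell}\{P_{i,u_1}\oplus P_{i,u_2}: 1\le u_1<u_2\le 2h\}$ (and symmetrically \eqref{eq:equivconditionofEFFcons_3}). The key observation is that within a single full $G$-orbit $\{P_{i,1},\dots,P_{i,2h}\}$ parametrized as $P_{i,2j-1}=Q_i'\oplus h_j$, $P_{i,2j}=\ominus Q_i'\oplus h_j$, a pairwise sum $P_{i,u_1}\oplus P_{i,u_2}$ takes one of three shapes: $[2]Q_i'\oplus(h_a\oplus h_b)$, or $\ominus[2]Q_i'\oplus(h_a\oplus h_b)$, or $(h_a\oplus h_b)$ (when $u_1,u_2$ are the involution-paired indices, forcing the $Q_i'$ terms to cancel, giving $[2]h_a$). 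So the set of all such sums over all $\ell$ orbits is contained in $\bigl(\{\pm[2]Q_i'\}_{i}\oplus H'\bigr)\cup H$, where $H':=\{h_a\oplus h_b\}$ and I am using that $[2]h_a\in H$. Now $P_{2h-1}\oplus P_{2h}=[2]h_h\in H$ lies in the "bad" part $H$ coming from involution-paired sums — so the naive choice fails, and this is exactly the subtlety flagged in Remark~\ref{rem:equivalentconditionforEFF_1}~(i). The fix, which I expect the authors use and which I would pursue, is to exploit the freedom in choosing $Q_\infty$'s orbit differently, or to choose $\delta-1=2$ indices in $P_{r+1},P_{r+\delta-1}$ that are \emph{not} involution-paired, so that $P_{2h-1}\oplus P_{2h}=\pm[2]Q_0\oplus(h_a\oplus h_b)$ with the $[2]Q_0$-term present; since $[2]Q_0$ ranges over a set disjoint (by the odd-order / $[2]$-bijectivity structure and a counting bound $\ell\cdot\text{(orbit data)}< N(E)$) from the union of the analogous quantities over the $\ell$ repair orbits, the non-membership follows. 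Making this disjointness precise — bounding the size of the forbidden union by something strictly less than $N(E)$, using $2\nmid N(E)$ so that $[2]$ is injective and the $\pm[2]Q$ values are controlled — is the one genuinely delicate computation; everything else (the parameters $[m\cdot 2h, tr+1, (m-t)2h]_q$ and optimality) then follows immediately from Propositions~\ref{prop:equivalentconditionforEFF_1} and~\ref{prop:construction_of_(r,delta)from_EFF_1}, and the $(2,2h-1)$ case is obtained from the $(2h-2,3)$ case by the index-reversal trick of Remark~\ref{rem:equivalentconditionforEFF_1}~(iii).
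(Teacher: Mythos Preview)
Your overall strategy matches the paper's: count the full $G$-orbits as $(N(E)-h)/(2h)$, reserve one for $Q_\infty$, take the remaining $\ell$ as the repair groups, and then choose $P_{r+1},P_{r+2}$ from the \emph{same} $H$-coset of the $Q_\infty$-orbit (your ``non-involution-paired'' fix) so that $P_{r+1}\oplus P_{r+2}\in[2]Q_0\oplus H$. You also correctly isolate the three possible shapes of a pairwise sum in a repair orbit, namely elements of $[2]Q_i'\oplus H$, $[-2]Q_i'\oplus H$, or $H$.

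Where your write-up falls short is the last step. You describe the required disjointness as ``the one genuinely delicate computation'' and gesture at a counting bound over choices of $Q_0$; the paper shows that no counting and no special choice of $Q_0$ is needed. The argument is pure coset arithmetic using your own observation that $[2]$ is a bijection on $\Pbb_E^1$ (hence $[2]X\in H\Leftrightarrow X\in H$, via $X=[\tfrac{N(E)+1}{2}]\,[2]X$). Then, for every $i$:
\begin{itemize}
\item $([2]Q_0\oplus H)\cap H=\varnothing$, else $[2]Q_0\in H$, so $Q_0\in H$, contradicting that $Q_0$ lies in a full orbit;
\item $([2]Q_0\oplus H)\cap([2]Q_i'\oplus H)=\varnothing$, else $[2](Q_0\ominus Q_i')\in H$, so $Q_0\ominus Q_i'\in H$, forcing $Q_0$ and $Q_i'$ into the same $G$-orbit;
\item $([2]Q_0\oplus H)\cap([-2]Q_i'\oplus H)=\varnothing$, else $[2](Q_0\oplus Q_i')\in H$, so $Q_0\oplus Q_i'\in H$, i.e.\ $Q_0$ and $\ominus Q_i'$ lie in the same $H$-coset, again forcing equal $G$-orbits.
\end{itemize}
So the disjointness holds automatically for \emph{any} full orbit chosen as the $Q_\infty$-fiber, and the ``delicate computation'' evaporates. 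Two minor corrections: the $h$ short-orbit points form a \emph{single} $G$-orbit (namely $H$ itself), not several; and your check that $\oplus_j P_j=O$ is not what establishes $\sum_j P_j=\Con_{E/F}(Q_\infty)$ --- the latter is immediate because a full orbit is exactly the fiber over a completely split place.
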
 
\begin{proof}
It suffices to prove the case $r=2h-2$ and $\delta=3$, from which the case $r=2$ and $\delta=2h-1$ can be deduced by Remark~\ref{rem:equivalentconditionforEFF_1} {\rm (iii)}.

Let $r=2h-2$ and $\delta=3$. In the following, we select rational places $[P_1,\dots,P_{r+2}]$, $\{P_{1,1},\dots,P_{1,r+2}\}$, \dots, $\{P_{\ell,1},\dots,P_{\ell,r+2}\}$ of $E$ that satisfy the conditions of Proposition~\ref{prop:equivalentconditionforEFF_1} (including the equivalent condition \eqref{eq:equivconditionofEFFcons_4}).  
To this end, we first consider the number of rational places of $F$ that split completely in $E/F$. We claim that this number is equal to 
\begin{align}\label{eq:estima_1}
     \ell+1=\frac{N(E)-h}{2h}.  
\end{align} 
Indeed, the following two statements hold.

(1) for any $P\in H$, $P\cap F=O\cap F$ does not split completely in $E/F$;

(2) for any $P\in \Pbb_E^1\backslash H$, $P\cap F$ splits completely in $E/F$. 

To prove the statement (1) where $P\in H$, it suffices to observe that $\Con_{E/F}(O\cap F)=\sum_{R\in H}2R$ by Lemma~\ref{lem:thepropertyofGalExtension_1}. Consequently, $e(P|P\cap F)=e(P|O\cap F)=2$, and thus the statement (1) holds. 

To prove the statement (2) where $P\in \Pbb_E^1\backslash H$, we assume towards a contradiction that $P\cap F$ does not split completely in $E/F$. By Remark~\ref{rem:thepropertyofGalExtension_2} {\rm (i)}, this implies that $\sigma_1(P)=\sigma_2(P)$ for some $\sigma_1\neq \sigma_2\in G=T_H\langle [-1] \rangle$, which further implies 
\begin{align}\label{eq:20250520_1932}
    P=\sigma_1^{-1}\sigma_2(P).
\end{align} 
Let $\sigma_1^{-1}\sigma_2 = \tau_Q [-1]^i$ for some $Q \in H$ and $i \in \{0,1\}$. If $i = 0$, then \eqref{eq:20250520_1932} implies that $Q=O$, which would mean $\sigma_1^{-1}\sigma_2= \mathrm{id}$, contradicting with $\sigma_1\neq \sigma_2$. Thus, $i=1$, and it follows that $$P=\sigma_1^{-1}\sigma_2(P)=\tau_Q [-1](P)=[-1]P\oplus Q.$$ 
This implies $[2]P=Q\in H$, and then $P=[N(E)+1]P=\brackets{\frac{N(E)+1}{2}}([2]P)=\brackets{\frac{N(E)+1}{2}}Q\in H$, which contradicts $P\in \Pbb_E^1\backslash H$. Therefore, the statement (2) holds, hence the number of rational places of $F$ that split completely in $E/F$ is exactly $\ell+1=\frac{N(E)-h}{2h}.$ 
We denote the sets of all rational places of $E$ that lie over these $\ell+1$ rational places of $F$ by $\{P_1,\dots,P_{r+2}\}, \{P_{1,1},\dots,P_{1,r+2}\},\dots, \{P_{\ell,1},\dots,P_{\ell,r+2}\}$, respectively. 

We claim that there exists a reorder of $P_1,\dots,P_{r+2}$ such that $[P_1,\dots,P_{r+2}], \{P_{1,1},\dots,P_{1,r+2}\},\dots,\{P_{\ell,1},\dots,P_{\ell,r+2}\}$ satisfy the condition \eqref{eq:equivconditionofEFFcons_4}, i.e., 
    \begin{align}\label{eq:tempcondition}
    	P_{r+1}\oplus P_{r+2}\notin \bigcup_{i=1}^{\ell}
    	\braces{P_{i,u_1}\oplus P_{i,u_2}:\; 1\leq u_1< u_{2}\leq r+2}.
    \end{align} 
The proof of this claim is as follows. Since $E/F=E/E^{G}$ is a Galois extension with $\Gal(E/F)=G$, the group $G=T_{H} \langle[-1]\rangle$ acts transitively on each of $\{P_1,\dots,P_{r+2}\}, \{P_{1,1},\dots,P_{1,r+2}\}, \dots,\{P_{\ell,1},\dots,P_{\ell,r+2}\}$ by Lemma~\ref{lem:thepropertyofGalExtension_1} {\rm(i)}. Since the order of $G$ is equal to $r+2$, which is equal to the cardinality of these sets, these pairwise disjoint sets can be represented as 
    \begin{equation}\label{eq:202506120017}
    \begin{cases}
    	 \{P_1,\dots,P_{r+2}\}&=\{\sigma(P_1):\; \sigma\in G\}=\parentheses{P_1\oplus H}\sqcup \parentheses{[-1]P_1\oplus H},\\
    	  \{P_{1,1},\dots,P_{1,r+2}\}&=\{\sigma(P_{1,1}):\; \sigma\in G\}=\parentheses{P_{1,1}\oplus H}\sqcup \parentheses{[-1]P_{1,1}\oplus H},\\
    	  &\dots,\\
    	 \{P_{\ell,1},\dots,P_{\ell,r+2}\}&=\{\sigma(P_{\ell,1}):\;\sigma\in G\}=\parentheses{P_{\ell,1}\oplus H}\sqcup \parentheses{[-1]P_{\ell,1}\oplus H},\\ 
    \end{cases}
    \end{equation}
   where $P_1\oplus H$ denotes the coset $\{P_1\oplus Q:Q\in H\}$, and $\sqcup$ denotes the union without intersection. We now prove that 
   $$\parentheses{(P_1\oplus H) \oplus (P_1\oplus H)}\cap \parentheses{\bigcup_{i=1}^{\ell}
   \braces{\oplus_{j=1}^{2}P_{i,u_j}: 1\leq u_1< u_{2}\leq r+2}}=\varnothing,$$
   i.e., by \eqref{eq:202506120017},
   $$\parentheses{[2]P_1\oplus H}\cap \parentheses{\bigcup_{i=1}^{\ell}([2]P_{i,1}\oplus H)\cup H\cup  ([-2]P_{i,1}\oplus H)}=\varnothing,$$ 
   and then we can choose two arbitrary distinct elements in $P_1\oplus H\subseteq \{P_{1},P_2,\dots,P_{r+2}\}$ to serve as new $P_{r+1}$, $P_{r+2}$ such that \eqref{eq:tempcondition} holds.
   Assume towards a contradiction that 
   $$([2]P_1\oplus H)\cap \big( ([2]P_{i,1}\oplus H)\cup H\cup  ([-2]P_{i,1}\oplus H)\big)\neq \varnothing$$
  for some $1\leq i\leq \ell$. Then at least one of $[2]P_1\ominus[2]P_{i,1}, [2]P_1, [2]P_1\ominus[-2]P_{i,1}$ is in $H$ by the property of cosets. We consider them separately and derive a contradiction in each case.
  \begin{itemize}
  	\item $[2]P_1\ominus[2]P_{i,1}\in H$: In this case we have $[2](P_1\ominus P_{i,1})\in H$, which implies that $$P_1\ominus P_{i,1}=[N(E)+1](P_1\ominus P_{i,1})=\brackets{\frac{N(E)+1}{2}}\big([2](P_1\ominus P_{i,1})\big)\in H.$$
  	This leads to a contradiction with the fact that $(P_1\oplus H) \cap (P_{i,1} \oplus H)=\varnothing$ by \eqref{eq:202506120017}.
  	\item $[2]P_1\in H$: In this case we have $$P_1=[N(E)+1](P_1)=\brackets{\frac{N(E)+1}{2}}\big([2]P_1)\big)\in H.$$
  	This leads to a contradiction with the fact that $\{P_1,\dots,P_{r+2}\}=(P_1\oplus H)\sqcup ([-1]P_1\oplus H)$ is a set consisting of $r+2=2h=2|H|$ distinct rational places by \eqref{eq:202506120017}.
  	\item $[2]P_1\ominus[-2]P_{i,1}\in H$: In this case we have $[2](P_1\oplus P_{i,1})\in H$, which implies that $$P_1\oplus P_{i,1}=[N(E)+1](P_1\oplus P_{i,1})=\brackets{\frac{N(E)+1}{2}}\big([2](P_1\oplus P_{i,1})\big)\in H.$$
  	This leads to a contradiction with the fact that $(P_1\oplus H )\cap ([-1]P_{i,1} \oplus H)=\varnothing$ by \eqref{eq:202506120017}.
  \end{itemize} 
The claim is established, and therefore this theorem is proved.
\end{proof}
\begin{remark} 
In the above proof, we determined the precise number of rational places of $F$ that split completely in $E/F$: $\ell+1=\frac{N(E)-h}{2h}$ (see \eqref{eq:estima_1} and its corresponding argument) under the conditions $2\nmid N(E)$ and $G=T_H\langle[-1]\rangle$. This precise number can also be used to improve the (estimated) upper bound of the length of optimal $r$-LRCs presented in \cite[Proposition 4.6]{ma2023group} (in the case $2\nmid N(E)$). If $2\nmid N(E)$, then the range of the number of local repair groups $m$ in \cite[Proposition 4.6]{ma2023group} can be improved from ``$1\leq t<m\leq \lceil\frac{N(E)}{r+1}\rceil-2$'' to ``$1\leq t< m\leq \lceil\frac{N(E)}{r+1}\rceil-1$''.
% (note that in the case of $(r,2)$-LRCs, the maximal number of local repair groups is equal to the number of points of $F$ that split completely in $E/F$, since one can use the technique of modified algebraic geometry codes as shown in Proposition \cite[Proposition 4.4]{ma2023group}). 
\end{remark}
In the following, we present two representative explicit constructions of optimal $(r,3)$-LRCs and $(2,\delta)$-LRCs with lengths slightly less than $q+2\sqrt{q}$ from the above Theorem \ref{thm:estimation_of_length_EFF_and_cons_1}. There exist other constructions with code lengths exceeding $q+1$. We do not list them all here.
\begin{corollary}\label{cor:Cons_via_EFF_1.1}
  Let $q=2^{2s}$ for a positive integer $s$. For any positive divisor $h\geq 3$ of $q+2\sqrt{q}+1$, there exist an optimal $(r=2h-2,\delta=3)$-LRC and an optimal $(r=2,\delta=2h-1)$-LRC with parameters $[m \cdot 2h, tr+1,(m-t)\cdot 2h]_q$ for any integers $t$ and $m$ satisfying $1\leq t<m\leq \frac{q+2\sqrt{q}+1-h}{2h}-1$. 
\end{corollary}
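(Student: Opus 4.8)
The plan is to feed a suitable maximal elliptic function field into Theorem~\ref{thm:estimation_of_length_EFF_and_cons_1}. Since $q = 2^{2s} = (2^s)^2$ is a square, case~{\rm (ii)} of Lemma~\ref{lem:classification_of_isogeny_class_of_EFF} (with trace $t = 2\sqrt{q}$) guarantees an elliptic function field $E/\F_q$ with $N(E) = q + 1 + 2\sqrt{q} = (2^s+1)^2$, i.e., a maximal one. Because $s \geq 1$, the integer $2^s$ is even, so $N(E) = (2^s+1)^2$ is odd; thus $2 \nmid N(E)$, which is exactly the parity hypothesis of Theorem~\ref{thm:estimation_of_length_EFF_and_cons_1}.

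Next I would pin down the group structure of $\Pbb_E^1$. Lemma~\ref{lem:group_structure_of_EFF}~{\rm (a)} (the case matching {\rm (ii)} of Lemma~\ref{lem:classification_of_isogeny_class_of_EFF} with the $+$ sign, forced since $N(E) = (\sqrt{q}+1)^2$) gives $\Pbb_E^1 \cong \Zbb/(\sqrt{q}+1)\Zbb \times \Zbb/(\sqrt{q}+1)\Zbb$, an abelian group of order $(\sqrt{q}+1)^2 = q + 2\sqrt{q} + 1$. By the structure theorem for finite abelian groups, such a group has a subgroup of order $d$ for every divisor $d$ of its order (argue prime power by prime power, distributing the exponent across the two cyclic factors of equal order). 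Hence for any positive divisor $h \geq 3$ of $q + 2\sqrt{q} + 1$ there is a subgroup $H \leq \Pbb_E^1$ with $|H| = h$. Moreover, $T_H\langle[-1]\rangle$ is automatically a subgroup of $\Aut(E/\F_q)$ by Remark~\ref{rem:invo_of_EFF_and_isogeny_is_endo}~{\rm (i)}, so with $A := \langle[-1]\rangle$ all hypotheses of Theorem~\ref{thm:estimation_of_length_EFF_and_cons_1} are met.

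Finally, applying Theorem~\ref{thm:estimation_of_length_EFF_and_cons_1} to $E$, $H$, and $A = \langle[-1]\rangle$ yields, for $r = 2h-2$, $\delta = 3$ (resp.\ $r = 2$, $\delta = 2h-1$), rational places satisfying the hypotheses of Proposition~\ref{prop:equivalentconditionforEFF_1} with $\ell = \frac{N(E) - h}{2h} - 1 = \frac{q + 2\sqrt{q} + 1 - h}{2h} - 1$; Proposition~\ref{prop:construction_of_(r,delta)from_EFF_1} then delivers optimal $(r,\delta)$-LRCs with parameters $[m\cdot 2h,\, tr+1,\, (m-t)\cdot 2h]_q$ for all $1 \leq t < m \leq \ell$, which is precisely the assertion. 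There is no substantive obstacle here: the argument is a verification that the hypotheses of Theorem~\ref{thm:estimation_of_length_EFF_and_cons_1} hold, and the only point warranting a line of care is establishing $\Pbb_E^1 \cong (\Zbb/(2^s+1)\Zbb)^2$, so that the full place count $q + 2\sqrt{q} + 1$ is realized as a group order admitting subgroups of every divisor order.
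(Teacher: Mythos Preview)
Your proposal is correct and follows essentially the same approach as the paper: verify that a maximal elliptic function field over $\F_{2^{2s}}$ with $N(E)=(2^s+1)^2$ satisfies the hypotheses of Theorem~\ref{thm:estimation_of_length_EFF_and_cons_1} (odd $N(E)$, existence of a subgroup $H$ of every divisor order), then invoke that theorem. The only cosmetic difference is that the paper cites an explicit maximal curve from \cite[Lemma 15]{li2019optimal} rather than appealing to Lemma~\ref{lem:classification_of_isogeny_class_of_EFF}~{\rm(ii)} for abstract existence, and it leaves the subgroup existence unargued, whereas you spell it out via Lemma~\ref{lem:group_structure_of_EFF}~{\rm(a)} and the structure theorem for finite abelian groups.
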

\begin{proof}  
By \cite[Lemma 15]{li2019optimal}, there exists an explicit maximal elliptic function field $E/\F_q$ with $N(E)=q+2\sqrt{q}+1$. Since $2\nmid N(E)$ and there exists a subgroup $H\leq \Pbb_E^1$ of order $h$ for any divisor $h\geq 3$ of $N(E)$, this corollary holds by Theorem~\ref{thm:estimation_of_length_EFF_and_cons_1}.
\end{proof}
 
\begin{corollary}\label{cor:Cons_via_EFF_1.2}
      Let $q=p^{2s}$ for an odd prime $p$ and a positive integer $s$. For any positive divisor $h\geq 3$ of $q+2\sqrt{q}$, there exist an optimal $(r=2h-2,\delta=3)$-LRC and an optimal $(r=2,\delta=2h-1)$-LRC with parameters $[m\cdot 2h, tr+1, (m-t) \cdot 2h]_q$ for any integers $t$ and $m$ satisfying $1\leq t<m\leq \frac{q+2\sqrt{q}-h}{2h}-1$. 
\end{corollary}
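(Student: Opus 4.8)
The plan is to instantiate Theorem~\ref{thm:estimation_of_length_EFF_and_cons_1}, exactly in the spirit of the proof of Corollary~\ref{cor:Cons_via_EFF_1.1}, the only new ingredient being the choice of an elliptic function field with the right number of rational places. In the even-characteristic case one used a maximal elliptic function field with $N(E)=q+2\sqrt q+1$; here that integer is even (since $q$ is odd), whereas Theorem~\ref{thm:estimation_of_length_EFF_and_cons_1} requires $2\nmid N(E)$, so instead I would aim for an elliptic function field $E/\F_q$ with $N(E)=q+2\sqrt q$.

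First I would record the elementary parity fact: writing $\sqrt q=p^{s}$ (an odd integer, since $p$ is odd), the integer $N:=q+2\sqrt q=p^{s}(p^{s}+2)$ is odd, so $2\nmid N$; moreover $N\ge h\ge 3$. Next I would exhibit an elliptic function field $E/\F_q$ with $N(E)=N$. Take $t:=2\sqrt q-1=2p^{s}-1$. Then $|t|<2\sqrt q$ and $t\equiv -1\pmod p$, so $\gcd(t,p)=1$; hence $t$ satisfies condition~{\rm(i)} of Lemma~\ref{lem:classification_of_isogeny_class_of_EFF}, and that lemma furnishes an elliptic curve over $\F_q$ whose number of rational points equals $q+1+t=q+2\sqrt q$. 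Letting $E/\F_q$ be its function field, we get $N(E)=q+2\sqrt q$ with $2\nmid N(E)$.

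Then I would use that $\Pbb_E^1$ is a finite abelian group of order $N(E)$, hence contains a subgroup $H$ of order $h$ for every divisor $h$ of $N(E)$; in particular this holds for the given $h\ge 3$. (If one wants an even cleaner statement, Lemma~\ref{lem:group_structure_of_EFF} shows one may take $\Pbb_E^1$ cyclic of order $N(E)$, but this is not needed.) Now set $A:=\langle[-1]\rangle\le\Aut(E,O)$, $G:=T_HA\le\Aut(E/\F_q)$, $F:=E^G$, and either $(r,\delta)=(2h-2,3)$ or $(r,\delta)=(2,2h-1)$. All hypotheses of Theorem~\ref{thm:estimation_of_length_EFF_and_cons_1} are met, so it yields optimal $(r,\delta)$-LRCs with parameters $[\,m\cdot 2h,\;tr+1,\;(m-t)\cdot 2h\,]_q$ for every $1\le t<m\le\ell$, where $\ell=\tfrac{N(E)-h}{2h}-1=\tfrac{q+2\sqrt q-h}{2h}-1$; this is precisely the assertion.

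The substantive work has already been discharged in Theorem~\ref{thm:estimation_of_length_EFF_and_cons_1} and, behind it, in Propositions~\ref{prop:equivalentconditionforEFF_1} and~\ref{prop:construction_of_(r,delta)from_EFF_1}, so there is no serious obstacle here. The one point deserving care is the admissibility of the target point count $q+2\sqrt q$: one must verify that the corresponding trace $t=2\sqrt q-1$ meets one of the congruence/range conditions in Lemma~\ref{lem:classification_of_isogeny_class_of_EFF}, and it does — via condition~{\rm(i)}, precisely because $p$ odd forces $2\sqrt q-1\not\equiv 0\pmod p$. This is exactly the reason the odd-characteristic case behaves differently from the even-characteristic one, where Corollary~\ref{cor:Cons_via_EFF_1.1} instead leans on the maximal value $N(E)=q+2\sqrt q+1$.
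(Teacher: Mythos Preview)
Your proposal is correct and follows essentially the same approach as the paper: both invoke Theorem~\ref{thm:estimation_of_length_EFF_and_cons_1} after producing, via Lemma~\ref{lem:classification_of_isogeny_class_of_EFF}~{\rm(i)}, an elliptic function field $E/\F_q$ with $N(E)=q+2\sqrt q$ (odd, since $p$ is odd) and then taking a subgroup $H\le\Pbb_E^1$ of order $h\mid N(E)$. You simply spell out the verification that $t=2\sqrt q-1$ satisfies $|t|\le 2\sqrt q$ and $\gcd(t,p)=1$, which the paper leaves implicit.
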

\begin{proof} 
By Lemma~\ref{lem:classification_of_isogeny_class_of_EFF} (i), there exists an elliptic function field $E/\F_q$ with $N(E)=q+2\sqrt{q}$. Since $2\nmid N(E)$ and there exists a subgroup $H\leq \Pbb_E^1$ of order $h$ for any divisor $h\geq 3$ of $N(E)$, this corollary follows from Theorem~\ref{thm:estimation_of_length_EFF_and_cons_1}.
\end{proof}
%这边暂时决定就不写具体例子了, 因为想要凸显出码长的话, 需要码长比较大才行(需要49元域起步, 码长要到60多, 才能满足n大于q+1, 没必要, 空间留给后面的构造吧), 真要弄例子的话，可以考虑弄一个长为27的，over 25元域，但是这个被Section 4 cover了, 或者弄一个素域上的(4,3)-LRC，无法被Section 4 cover的,都感觉意义不大，先搁置吧
% \begin{example}
% 	We give an explicit example to illustrate our construction. We consider the elliptic curve 
%     \NTD 
% \end{example}
 
\subsection{Construction {\rm II} of Optimal $(r,3)$-LRCs and $(2,\delta)$-LRCs by the General Framework}\label{sec:3.3}
In this subsection, we present the second family of optimal $(r,3)$-LRCs and $(2,\delta)$-LRCs based on the general framework in Section~\ref{sec:3.1}, making use of a subgroup $T_HA\leq \Aut(E/\F_q)$ with $[-1]\notin A$.
\begin{theorem}\label{thm:estimation_of_length_EFF_and_cons_2} 
Let $E/\F_q$ be an elliptic function field with $N(E)$ rational places. 
Let $H$ be a subgroup of order $h$ of $\Pbb_E^1$ and let $A$ be a nontrivial subgroup of order $a$ of $\Aut(E, O)$ such that $G:= T_HA$ is a subgroup of $\Aut(E/\F_q)$ (see Lemma~\ref{lem:isTAsubgroupofAutE}).
Assume $[-1]\notin A$ and $ah\geq 3$. Let $F:=E^G$, and $r=ah-2, \delta=3$ (or $r=2, \delta=ah-1$). Then there exist rational places $[P_1,\dots,P_{r+\delta-1}],\{P_{1,1},\dots,P_{1,r+\delta-1}\},\dots,\{P_{\ell,1},\dots,P_{\ell,r+\delta-1}\}$ of $E$ satisfying the conditions in Proposition~\ref{prop:equivalentconditionforEFF_1} (including conditions \eqref{eq:equivconditionofEFFcons_3} and \eqref{eq:equivconditionofEFFcons_4}), where 
	\begin{align}\label{eq:202505271725}
		\ell=\begin{cases}
			  \ceil{\frac{N(E)-(ah+3h)/2}{ah}},& \text{ if }2\nmid aN(E);\\ 
		2\ceil{\frac{N(E)-2ah-2h}{2ah}},& \text{ if }2\mid aN(E).\\ 
		\end{cases}
	\end{align} 
 Consequently, by Propositions~\ref{prop:equivalentconditionforEFF_1} and~\ref{prop:construction_of_(r,delta)from_EFF_1}, there exist an optimal $(r=ah-2,\delta=3)$-LRC and an optimal $(r=2,\delta=ah-1)$-LRC with parameters 
$
 [m\cdot ah,tr+1,(m-t)\cdot ah]_q     
$ 
for any $1\leq t<m\leq \ell$.
\end{theorem}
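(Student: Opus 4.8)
The plan is to mirror the argument of Theorem~\ref{thm:estimation_of_length_EFF_and_cons_1}, with three changes forced by the absence of $[-1]$ from $A$: I take $Q_{\infty}=O\cap F$ rather than a completely split place, I choose the ``tail'' of the list $[P_1,\dots,P_{r+\delta-1}]$ so that $P_{r+1}\oplus P_{r+2}=O$, and I characterise the usable evaluation fibres by an ``all-or-nothing'' dichotomy on the group $\Pbb_E^1$. As in Theorem~\ref{thm:estimation_of_length_EFF_and_cons_1} it suffices to treat $r=ah-2,\delta=3$; the case $r=2,\delta=ah-1$ follows from Remark~\ref{rem:equivalentconditionforEFF_1}~{\rm(iii)}. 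Setting $Q_{\infty}:=O\cap F$, Lemma~\ref{lem:thepropertyofGalExtension_1} together with the semidirect description $G=T_HA$ (Lemmas~\ref{lem:groupstructure_of_Aut_of_EFF} and~\ref{lem:subgroup_of_auto_ECC}) shows that the places of $E$ above $Q_\infty$ are exactly the elements of $H$ and that $e(O\mid Q_\infty)=|A|=a\ge 2$; hence Dedekind's different theorem~\eqref{eq:Dedekinddifferentthm} and Lemma~\ref{lem:is_rational_FF} give that $F$ is rational and $\Con_{E/F}(Q_\infty)=a\sum_{R\in H}R$. I take $[P_1,\dots,P_{r+2}]$ to be the multiset in which each $R\in H$ occurs $a$ times, with $P_{r+1}=P_{r+2}=O$ (possible since $a\ge 2$), so that $P_{r+1}\oplus P_{r+2}=O$.

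The heart of the argument is the following. For a rational place $P\notin H$, using Remark~\ref{rem:thepropertyofGalExtension_2}~{\rm(ii)} and the fact that each $\alpha\in A$ is a group automorphism of $(\Pbb_E^1,\oplus)$ commuting with $[-1]$ (Remark~\ref{rem:invo_of_EFF_and_isogeny_is_endo}~{\rm(ii)}), I would show that $P\cap F$ splits completely in $E/F$ iff $(\mathrm{id}-\alpha)(P)\notin H$ for all $\alpha\in A\setminus\{\mathrm{id}\}$, and that for such a $P$ the fibre $\mathcal O_P=\bigsqcup_{\alpha\in A}(\alpha(P)\oplus H)$ is a disjoint union of cosets of $H$, so any sum of two distinct elements of $\mathcal O_P$ lies in a coset $(\alpha(P)\oplus\beta(P))\oplus H$; hence $H\cap\{P'\oplus P'':P'\neq P''\in\mathcal O_P\}$ is either empty or all of $H$, and it is empty precisely when $(\mathrm{id}+\alpha)(P)\notin H$ for every $\alpha\in A$. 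Calling $P$ (and its whole $G$-fibre) \emph{admissible} when both families of ``$\notin H$'' conditions hold --- a $G$-invariant property --- the fibres of admissible places have size $ah$ and, paired with our list, satisfy~\eqref{eq:equivconditionofEFFcons_3} and~\eqref{eq:equivconditionofEFFcons_4}, since $P_{r+1}\oplus P_{r+2}=O\in H$ then avoids the forbidden set. Now the only involution of $\mathrm{Aut}(E,O)\le\mathrm{Aut}(\Efrak)$ is $[-1]$, so $[-1]\notin A$ forces $a$ odd (hence $a=3$ and $A=\langle\alpha\rangle$ with $\alpha^2+\alpha+1=0$ in $\mathrm{End}(E)$); computing isogeny degrees from this relation gives $\deg(\mathrm{id}+\alpha)=\deg(\mathrm{id}+\alpha^{2})=1$, so $\mathrm{id}+\alpha$ and $\mathrm{id}+\alpha^2$ are automorphisms preserving $H$ and the conditions $(\mathrm{id}+\alpha)(P)\notin H$, $(\mathrm{id}+\alpha^2)(P)\notin H$ reduce to $P\notin H$, while $(\mathrm{id}-\alpha^2)(P)\notin H$ is equivalent to $(\mathrm{id}-\alpha)(P)\notin H$; thus admissibility becomes $P\notin H$, $(\mathrm{id}-\alpha)(P)\notin H$ and $[2]P\notin H$.

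Finally I would count: writing $B=(\mathrm{id}-\alpha)^{-1}(H)\cup[2]^{-1}(H)$ for the set of non-admissible rational places, the number of admissible $G$-fibres is $(N(E)-|B|)/(ah)$, and I would bound $|B|$ from above using $\deg(\mathrm{id}-\alpha)=3$, the fact that $[2]^{-1}(H)=H$ when $N(E)$ is odd and has size $|E[2]|\cdot|H\cap 2\Pbb_E^1|$ otherwise, the position of $H$ and $2\Pbb_E^1$ inside $\Pbb_E^1$ (whose structure is given by Lemma~\ref{lem:group_structure_of_EFF}, via Lemma~\ref{lem:classification_of_isogeny_class_of_EFF}), and --- in the even case, at least when $h\ge 2$ --- the observation that $[-1]$ permutes the admissible fibres without fixed points, so their number is even. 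This produces at least $\ell$ admissible fibres with $\ell$ as in~\eqref{eq:202505271725}; labelling their representatives' fibres $\{P_{i,1},\dots,P_{i,r+2}\}$ for $1\le i\le\ell$, all hypotheses of Proposition~\ref{prop:equivalentconditionforEFF_1} are met, and Proposition~\ref{prop:construction_of_(r,delta)from_EFF_1} yields the claimed optimal $(ah-2,3)$- and $(2,ah-1)$-LRCs for $1\le t<m\le\ell$. I expect the exact count in this last step to be the main obstacle --- in particular pinning down $|(\mathrm{id}-\alpha)^{-1}(H)|$ and $|[2]^{-1}(H)|$ (both sensitive to how $H$ sits inside $\Pbb_E^1$, and to whether $\mathrm{id}-\alpha$ becomes inseparable in characteristic $3$) and separating the two parity regimes so the bound is exactly~\eqref{eq:202505271725}; the dichotomy of the previous paragraph is conceptually the crux, but follows the template of Theorem~\ref{thm:estimation_of_length_EFF_and_cons_1} once one notices that pairwise sums respect the coset decomposition of $\mathcal O_P$.
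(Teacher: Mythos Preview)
Your setup matches the paper's: both take $Q_\infty = O\cap F$, so that $\Con_{E/F}(Q_\infty)=\sum_{R\in H}aR$, and arrange $P_{r+1}=P_{r+2}=O$, reducing condition~\eqref{eq:equivconditionofEFFcons_4} to ``$O\notin\{P_{i,u_1}\oplus P_{i,u_2}\}$''. From there the arguments diverge. Your route is to observe that necessarily $a=3$ (since $[-1]$ is the unique involution of $\Aut(E,O)$), compute in $\End(E)$ that $\mathrm{id}+\alpha=-\alpha^{2}$ and $\deg(\mathrm{id}-\alpha)=3$, characterise the admissible fibres by the conditions $P\notin H$, $(\mathrm{id}-\alpha)(P)\notin H$, $[2]P\notin H$, and then count by bounding preimages of $H$ under these isogenies. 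The paper never invokes $a=3$: in the odd case it bounds the ramified rational places directly via the Hurwitz genus formula $\deg\Diff(E/F)=2ah$ together with Dedekind's different theorem (the hypothesis $2\nmid ah$ forces $e\geq 3$ for any ramified place outside $H$, giving at most $(ah+3h)/2$ ramified rational places), and then shows that \emph{every} completely-split fibre already satisfies~\eqref{eq:equivconditionofEFFcons_4} by a short parity trick---if some fibre were closed under $[-1]$ (which follows from the commutation $[-1]G=G[-1]$), its odd cardinality $ah$ would force a $2$-torsion point, impossible when $N(E)$ is odd. In the even case the paper's key device is to enlarge to $G':=G\langle[-1]\rangle$ and apply Hurwitz to $E/E^{G'}$: each completely-split $G'$-orbit of size $2ah$ then decomposes into two $G$-orbits related by $[-1]$, which are automatically disjoint and hence admissible, yielding the factor $2$ in~\eqref{eq:202505271725} and the bound $2ah+2h$ on ramified places in one stroke. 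Your isogeny-degree count does reproduce the odd-case $\ell$ exactly (since $|(\mathrm{id}-\alpha)^{-1}(H)|\leq 3h=(ah+3h)/2$ when $a=3$), but in the even case pinning down $|[2]^{-1}(H)|$, its overlap with $(\mathrm{id}-\alpha)^{-1}(H)$, and the fixed points of the $[-1]$-action on fibres for small $h$ is exactly the ``main obstacle'' you flag; the paper's $G'$-trick sidesteps all of that case analysis.
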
 
\begin{proof}
It suffices to prove the case $r=ah-2$ and $\delta=3$, from which the case $r=2$ and $\delta=ah-1$ can be deduced by Remark~\ref{rem:equivalentconditionforEFF_1} {\rm (iii)}.

Let $r=ah-2$ and $\delta=3$. By Lemma~\ref{lem:thepropertyofGalExtension_1}, we obtain $\Con_{E/F}(O\cap F)=\sum_{P\in H}aP$. 
We directly define the list of rational places $[P_1,P_2,\dots,P_{r+2}]$ of $E$ to be an arbitrary list such that $\sum_{j=1}^{r+2}P_j=\Con_{E/F}(O\cap F)=\sum_{P\in H}aP$ (i.e., letting $Q_{\infty}=O\cap F$ in Proposition~\ref{prop:equivalentconditionforEFF_1}) and $P_{r+1}=P_{r+2}=O$, which is valid since $a=|A|\geq 2$. 
In the following, we divide our discussion into two cases: $2\nmid aN(E)$ and $2\mid aN(E)$, and we separately select sets of rational places $\{P_{1,1},\dots,P_{1,r+2}\}$, \dots, $\{P_{\ell,1},\dots,P_{\ell,r+2}\}$ of $E$ satisfying the conditions in Proposition~\ref{prop:equivalentconditionforEFF_1}, including the condition~\eqref{eq:equivconditionofEFFcons_4}, which now takes the form:
	\begin{align}\label{eq:tempcondition2}
 	P_{r+1}\oplus P_{r+2}=O\oplus O=O\notin \bigcup_{i=1}^{\ell}
		 \braces{P_{i,u_1}\oplus P_{i,u_2}:\; 1\leq u_1 < u_{2}\leq r+2}.
	\end{align}   
    
(1) When $2\nmid aN(E)$, we start to select sets of rational places $\{P_{1,1},\dots,P_{1,r+2}\}$, \dots, $\{P_{\ell,1},\dots,P_{\ell,r+2}\}$ of $E$ that satisfy the condition \eqref{eq:tempcondition2}. To this end, we first estimate the number of rational places of $F$ that split completely in $E/F$. We claim that this number is at least 
\begin{align}\label{eq:estima_2}
   \ell=\ceil{\frac{N(E)-(ah+3h)/2}{ah}}.
\end{align}
It can be proved as follows. Note that $F$ is a rational function field by Proposition~\ref{prop:equivalentconditionforEFF_1} {\rm (i)}. 
By the Hurwitz genus formula (see \eqref{eq:kurwitzformulacoro_1}), we have 
     \begin{align}\label{eq:20250522_0002}
     	 0=2g(E)-2&=(2g(F)-2)[E:F]+\deg\Diff(E/F)=-2ah+\deg\Diff(E/F).
     \end{align} 
By Lemma~\ref{lem:thepropertyofGalExtension_1}, we have $\Con_{E/F}(O\cap F)=\sum_{P \in H}aP$. 
Let $R_1,\dots,R_v$ denote all the distinct rational places of $E$ outside $H$ that are ramified in $E/F$. 
Then we have $e(R_i|R_i\cap F)\geq 3$ for each $1\leq i\leq v$. 
This is because $e(R_i|R_i\cap F)\mid [E:F]=|G|=ah$ by Lemma~\ref{lem:thepropertyofGalExtension_1} (iii), and $2\nmid ah$ by $2\nmid aN(E)$.  
By Dedekind's different theorem (see \eqref{eq:Dedekinddifferentthm}), we have 
$
\deg\Diff(E/F)\geq \deg\left(\sum_{P\in H}(a-1)P+\sum_{i=1}^{v}(e(R_i|R_i\cap F)-1)R_i \right)\geq (a-1)h+2v.
$ 
Hence, it holds that $2ah=\deg\Diff(E/F)\geq (a-1)h+2v$ by \eqref{eq:20250522_0002}, which implies $v\leq \frac{ah+h}{2}$. Including those rational places in $H$, there are at most $(ah+3h)/2$ rational places of $E$ that are ramified in $E/F$. Therefore, by Remark~\ref{rem:thepropertyofGalExtension_2} {\rm (ii)}, there exist at least $\ell=\ceil{\frac{N(E)-(ah+3h)/2}{ah}}$ distinct rational places of $F$ that split completely in $E/F$ into $\{P_{1,1},\dots,P_{1,r+2}\}, \dots, \{P_{\ell,1},\dots,P_{\ell,r+2}\}\subseteq \Pbb_E^1 $, respectively.  

We claim that these $\ell$ sets of rational places $\{P_{1,1},\dots,P_{1,r+2}\}$, \dots, $\{P_{\ell,1},\dots,P_{\ell,r+2}\}$ satisfy \eqref{eq:tempcondition2}.
To prove this claim, we only need to prove that for all $1\leq i\leq \ell$ and $1\leq u_1<u_2\leq r+2$,
\begin{align*}
	P_{i,u_1}\oplus P_{i,u_2}\neq O.
\end{align*}
Assume towards a contradiction that $P_{i,u_1}\oplus P_{i,u_2}=O$ for some $1\leq i\leq \ell$ and  $1\leq u_1<u_2\leq r+2$. Then we have $P_{i,u_2}=[-1]P_{i,u_1}$. Since $|G|=ah=r+2$ and $G$ acts transitively on $\{P_{i,1},\dots,P_{i,r+2}\}$ by Lemma~\ref{lem:thepropertyofGalExtension_1} (i), we have
\begin{align}\label{eq:negativeeq}
\{P_{i,1},\dots,P_{i,r+2}\}=\{\sigma([-1]P_{i,u_1}):\; \sigma\in G\}=\{[-1](\sigma(P_{i,u_1})):\; \sigma\in G\}=\{[-1]P_{i,1},\dots,[-1]P_{i,r+2}\},	 
\end{align} 
where the second ``$=$'' is due to 
\begin{align}
	\{\sigma[-1]:\; \sigma\in G\} \nonumber
	&=\{(\tau_P\alpha)[-1]:\; \tau_P \in T_{H}, \alpha \in A\}\\\nonumber 
	&\xlongequal{\text{by Remark~\ref{rem:invo_of_EFF_and_isogeny_is_endo} {\rm (ii)}}}\{\tau_P([-1]\alpha):\; P\in H, \alpha \in A\}\\\nonumber 
    &\xlongequal{\text{by Lemma~\ref{lem:groupstructure_of_Aut_of_EFF}}}\{[-1]\tau_{[-1]P}\alpha:\; P\in H, \alpha \in A\}\\\nonumber 
    &=\{[-1]\tau_{P}\alpha:\; P\in H, \alpha \in A\}\\\nonumber 
    &=\{[-1]\tau_{P}\alpha:\; \tau_P\in T_{H}, \alpha \in A\}\\  \label{eq:<[-1]>commutewithgroup}
    &=\{[-1]\sigma:\; \sigma\in G\}.  
\end{align}
% Here, the second ``$=$'' is by Remark~\ref{rem:invo_of_EFF_and_isogeny_is_endo} {\rm (ii)} and the third ``$=$'' is due to Lemma~\ref{lem:groupstructure_of_Aut_of_EFF}.
Note that $2\nmid aN(E)$ implies $2\nmid ah=r+2$. By \eqref{eq:negativeeq}, there exists a rational place $P_{i,j_0}$ for some $1\leq j_0\leq r+2$ such that $P_{i,j_0}=[-1]P_{i,j_0}$, i.e. $[2]P_{i,j_0}=O$, which implies $P_{i,j_0}=[N(E)+1]P_{i,j_0}=[\frac{N(E)+1}{2}][2]P_{i,j_0}=[\frac{N(E)+1}{2}]O=O$. This leads to a contradiction since $e(P_{i,j_0}|P_{i,j_0}\cap F)=1$ while $e(O|O\cap F)>1$. Thus, these $\ell$ sets of rational places $\{P_{1,1},\dots,P_{1,r+2}\}$, \dots, $\{P_{\ell,1},\dots,P_{\ell,r+2}\}$ satisfy the condition \eqref{eq:tempcondition2}. 
 
(2) When $2\mid aN(E)$, we start to select sets of rational places $\{P_{1,1},\dots,P_{1,r+2}\}$, \dots, $\{P_{\ell,1},\dots,P_{\ell,r+2}\}$ of $E$ that satisfy the condition \eqref{eq:tempcondition2}. We briefly outline our method before proceeding. We first define $G':=G\langle[-1]\rangle$, a subgroup of order $2ah=2(r+2)$ of $\Aut(E/\F_q)$. We then consider the orbits of rational places of $E$ under the action of $G'$ that have length $2(r+2)$. Every such orbit is divided into two orbits under the action of $G$ that have length $r+2$. These $G$-orbits of length $(r+2)$ are finally selected as $\{P_{1,1},\dots,P_{1,r+2}\}$,\dots,$\{P_{\ell,1},\dots,P_{\ell,r+2}\}$ that satisfy the condition \eqref{eq:tempcondition2}. 

Proceeding to the proof. Let $G':=G\langle[-1]\rangle$. It is a subgroup of $\Aut(E/\F_q)$ of order $2(r+2)$ since $\langle[-1]\rangle\cap G=\{\id\}$ and $\langle[-1]\rangle G=G\langle[-1]\rangle$ by \eqref{eq:<[-1]>commutewithgroup}. Let $E^{G'}:=\{u
\in E:\; \sigma (u)=u \text{ for all }\sigma \in G'\}$. The field $E^{G'}$ is a subfield of the rational function field $F=E^{G}$ (see Proposition~\ref{prop:equivalentconditionforEFF_1} {\rm (i)}), and thus is also a rational function field by L\"uroth's Theorem (see \cite[Proposition 3.5.9]{stichtenoth2009algebraic}).
By the Hurwitz genus formula (see \eqref{eq:kurwitzformulacoro_1}), we have 
	\begin{align}\label{eq:20250522_1016}
		0=2g(E)-2=(2g(E^{G'})-2)[E:E^{G'}]+\deg\Diff(E/E^{G'})=-2(2ah)+\deg\Diff(E/E^{G'}).
	\end{align}  
By Lemma~\ref{lem:thepropertyofGalExtension_1}, we have $\Con_{E/E^{G'}}(O\cap E^{G'})=\sum_{P \in H}2aP$. Let $R_1,\dots,R_v$ denote all the distinct rational places of $E$ outside $H$ that are ramified in $E/E^{G'}$. Then we have $\deg\Diff(E/E^{G'})\geq \deg(\sum_{P\in H}(2a-1)P+\sum_{i=1}^{v}(2-1)R_i)\geq (2a-1)h+v$ by Dedekind's different theorem (see \eqref{eq:Dedekinddifferentthm}). 
Thus, it holds $4ah=\deg\Diff(E/E^{G'})\geq (2a-1)h+v$ by \eqref{eq:20250522_1016}, which implies $v\leq 2ah+h$. Including those rational places in $H$, there are at most $2ah+2h$ rational places of $E$ that are ramified in $E/E^{G'}$. 
Hence, by Remark~\ref{rem:thepropertyofGalExtension_2} {\rm (ii)}, there exist at least $l=\lceil\frac{N(E)-2ah-2h}{2ah}\rceil$ distinct rational places of $E^{G'}$ that split completely in $E/E^{G'}$ into $\{T_{1,1},\dots,T_{1,2(r+2)}\}, \dots, \{T_{l,1},\dots,T_{l,2(r+2)}\}\subseteq \Pbb_E^1$, respectively. 
Note that $|G'|=2(r+2)$ and $G'=G\langle[-1]\rangle=\langle[-1]\rangle G.$
% = \{\sigma:\sigma \in G \}\cup \{\sigma[-1]:\sigma \in G \}=\{\sigma:\sigma \in G \}\cup \{[-1]\sigma:\sigma \in G \}$
Since $G'$ acts transitively on $\{T_{i,1},\dots,T_{i,2(r+2)}\}$ for each $1\leq i\leq l$ by Lemma~\ref{lem:thepropertyofGalExtension_1} {\rm (i)}, we have 
\begin{align}
	 \{T_{i,1},\dots,T_{i,2(r+2)}\}=\{\sigma(T_{i,1}):\; \sigma\in G'\}     
  &=\{\sigma(T_{i,1}):\; \sigma\in G\}\sqcup \{[-1](\sigma(T_{i,1})):\; \sigma\in G\}\label{eq:202506112325}\\
	 &=\{\sigma(T_{i,1}):\; \sigma\in G\}\sqcup \{\sigma([-1]T_{i,1}):\; \sigma\in G\}\label{eq:divisionoforbit1}\\
	 &\xlongequal{\text{denote by}} \{P_{2i-1,1},\dots,P_{2i-1,r+2}\}\sqcup \{P_{2i,1},\dots,P_{2i,r+2}\} \label{eq:202506112311}\\  
	 &= \{P_{2i-1,1},\dots,P_{2i-1,r+2}\}\sqcup \{[-1]P_{2i-1,1},\dots,[-1]P_{2i-1,r+2}\},\label{eq:divisionoforbit2}
\end{align} 
where ``$\sqcup$'' refers to the union without intersection, and \eqref{eq:divisionoforbit2} follows from \eqref{eq:202506112325}.
By \eqref{eq:divisionoforbit1} and \eqref{eq:202506112311}, we get $\ell=2l=2\lceil\frac{N(E)-2ah-2h}{2ah}\rceil$ pairwise disjoint $G$-orbits: $\{P_{1,1},\dots,P_{1,r+2}\},\dots,\{P_{\ell,1},\dots,P_{\ell,r+2}\}$, that lie over $\ell$ distinct rational places of $F$ other than $Q_{\infty}=O\cap F$, respectively. Moreover, $\{P_{1,1},\dots,P_{1,r+2}\},\dots,\{P_{\ell,1},\dots,P_{\ell,r+2}\}$ do satisfy the condition \eqref{eq:tempcondition2}; otherwise, for some $1\leq i\leq l$, the union in \eqref{eq:202506112311} would not be a disjoint union by \eqref{eq:divisionoforbit2}.
The proof is complete.
\end{proof}  
% \begin{remark}
% In the proof of Theorem~\ref{thm:estimation_of_length_EFF_and_cons_2} as above, we give a sharper \NTD(compare with what?) lower bound on the number of rational places of $F$ that split completely in the extension $E/F$: $\ell+1=\frac{N(E)-h}{2h}$ (see \eqref{eq:estima_2} and its corresponding argument) under the condition $2\nmid aN(E)$.  This value can also be applied to improve the upper bound of the length of LRCs given in \cite[Proposition 4.8]{ma2023group} (in the case $|A|=3$ and $2\nmid q+2\sqrt{q}+1$, i.e., $p=2$). The range of the number of local repair group $m$ in \cite[Proposition 4.8]{ma2023group} can be extended from ''$1\leq t<m\leq \lceil\frac{q+2\sqrt{q}-2h^2-r}{r+1}\rceil$'' to ''$1\leq t<m\leq \lceil\frac{q+2\sqrt{q}-(r+1+3h^2)/2}{r+1}\rceil$'' (note that in the case of $(r,2)$-LRCs, the maximal number of local repair groups is equal to the number of points of $F$ that split completely in $E/F$, by using the modified algebraic codes, see Proposition \cite[Proposition 4.4]{ma2023group}). 
% \end{remark}
Based on Theorem~\ref{thm:estimation_of_length_EFF_and_cons_2}, we have the following corollary, which can be viewed as an extension of \cite[Theorem 4.8]{ma2023group}. 
\begin{corollary}\label{cor:Cons_via_EFF_2.1}
There exist an optimal $(r=3h-2,\delta=3)$-LRC and an optimal $(r=2,\delta=3h-1)$-LRC with parameters $[m\cdot 3h, tr+1,(m-t)\cdot 3h]_q$ for any integers $t$ and $m$ satisfying $1\leq t<m\leq \ell$, where the field size $q$ and integers $h$, $\ell$ take one of the following cases.
\begin{itemize}
\item[{\rm (i)}] $q=2^{2s}$ for a positive integer $s$, $h=h_0^2$ for a positive divisor $h_0$ of $\sqrt{q}+1$, and $\ell=\ceil{\frac{q+2\sqrt{q}+1-3h}{3h}}$;

\item[{\rm (ii)}] $q=3^{2s}$ for a positive integer $s$,  $h=h_0^2$ for a positive divisor $h_0$ of $\sqrt{q}+1$, and $\ell=2\ceil{\frac{q+2\sqrt{q}+1-8h}{6h}}$;

\item[{\rm (iii)}] $q=p^{2s}$ for an odd prime $p \equiv 2\pmod{3}$ and a positive integer $s$, $h=h_0^2$ for a positive divisor $h_0$ of $\sqrt{q}+1$, and $\ell=2\ceil{\frac{q+2\sqrt{q}+1-8h}{6h}}$.
\end{itemize}
\end{corollary}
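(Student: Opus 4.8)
The plan is to apply Theorem~\ref{thm:estimation_of_length_EFF_and_cons_2} with $a=3$, so that $r+\delta-1=ah=3h$ and $(r,\delta)=(3h-2,3)$ (equivalently $(2,3h-1)$), and $ah=3h\ge 3$ holds trivially. What I need to supply, in each of the three cases, is a maximal elliptic function field $E/\F_q$ with $N(E)=q+2\sqrt{q}+1$ whose stabilizer $\Aut(E,O)$ contains an element $\sigma$ of order $3$, together with a subgroup $H\le\Pbb_E^1$ of order $h=h_0^2$ satisfying $\sigma(H)=H$; then $G:=T_HA$ with $A:=\langle\sigma\rangle$ is a subgroup of $\Aut(E/\F_q)$ by Lemma~\ref{lem:isTAsubgroupofAutE}, and $[-1]\notin A$ holds automatically since $|A|=3$ is coprime to $\mathrm{ord}([-1])=2$, so all hypotheses of Theorem~\ref{thm:estimation_of_length_EFF_and_cons_2} are met.

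For the existence of $E$: as $q$ is a square, a maximal elliptic curve over $\F_q$ has trace $2\sqrt{q}$, which is divisible by $p=\Char(\F_q)$, hence is supersingular; an order-$3$ automorphism forces the $j$-invariant to be $0$, and a $j=0$ curve is supersingular exactly when $p\in\{2,3\}$ or $p\equiv 2\pmod 3$ --- precisely cases (i), (ii), (iii), where the geometric automorphism group of such a curve has order $24$, $12$, $6$ respectively. In each case the explicit maximal elliptic function fields produced in \cite{li2019optimal} and in the proof of \cite[Theorem~4.8]{ma2023group} provide such an $E$ over $\F_q$ for every $s\ge 1$, with the order-$3$ automorphism defined over $\F_q$ (a primitive cube root of unity lies in $\F_{p^2}\subseteq\F_q$). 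For the subgroup $H$: by Lemma~\ref{lem:classification_of_isogeny_class_of_EFF}\,(ii) and Lemma~\ref{lem:group_structure_of_EFF}\,(a), maximality of $E$ gives $\Pbb_E^1\cong\Zbb/(\sqrt{q}+1)\Zbb\times\Zbb/(\sqrt{q}+1)\Zbb$; I take $H$ to be the $h_0$-torsion subgroup of $\Pbb_E^1$, which has order $h_0^2=h$ because $h_0\mid\sqrt{q}+1$. A torsion subgroup is stable under every group automorphism of $\Pbb_E^1$, and by Remark~\ref{rem:invo_of_EFF_and_isogeny_is_endo}\,(ii) every element of $A\le\Aut(E,O)$ acts on $(\Pbb_E^1,\oplus)$ as a group automorphism, so $\sigma(H)=H$; Lemma~\ref{lem:isTAsubgroupofAutE} then gives that $G=T_HA$ is a subgroup of $\Aut(E/\F_q)$ with $|G|=|T_H|\,|A|=3h$.

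Finally I would check the parity hypothesis of Theorem~\ref{thm:estimation_of_length_EFF_and_cons_2} and read off $\ell$. In case (i) $q=2^{2s}$ is even, so $N(E)=q+2\sqrt{q}+1$ is odd and $aN(E)=3N(E)$ is odd, giving $\ell=\lceil(N(E)-(ah+3h)/2)/(ah)\rceil=\lceil(q+2\sqrt{q}+1-3h)/(3h)\rceil$; in cases (ii) and (iii) $q$ is odd, so $N(E)$ is even and $aN(E)$ is even, giving $\ell=2\lceil(N(E)-2ah-2h)/(2ah)\rceil=2\lceil(q+2\sqrt{q}+1-8h)/(6h)\rceil$. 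Then Theorem~\ref{thm:estimation_of_length_EFF_and_cons_2}, which via Propositions~\ref{prop:equivalentconditionforEFF_1} and~\ref{prop:construction_of_(r,delta)from_EFF_1} supplies the rational places and the code, yields optimal $(3h-2,3)$- and $(2,3h-1)$-LRCs with parameters $[m\cdot 3h,\,tr+1,\,(m-t)\cdot 3h]_q$ for all $1\le t<m\le\ell$. The genuinely delicate point is ensuring that a \emph{maximal} elliptic curve carrying an order-$3$ automorphism over $\F_q$ exists for \emph{every} $s$ --- e.g.\ over $\F_{2^{2s}}$ with $s$ odd the naive model $y^2+y=x^3$ is minimal, not maximal, and one must pass to a suitable twist --- but this is exactly what the explicit families of \cite{li2019optimal,ma2023group} provide, so I would cite them rather than reprove this.
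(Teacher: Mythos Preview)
Your proposal is correct and follows essentially the same approach as the paper's own proof: both take $A$ of order $3$ inside $\Aut(E,O)$ (citing the explicit maximal curves of \cite{li2019optimal} with $|\Aut(E,O)|=24,12,6$ in the three cases), take $H$ to be the $h_0$-torsion subgroup of $\Pbb_E^1\cong(\Zbb/(\sqrt{q}+1)\Zbb)^2$, use Remark~\ref{rem:invo_of_EFF_and_isogeny_is_endo}\,(ii) to verify $\sigma(H)\subseteq H$ so that Lemma~\ref{lem:isTAsubgroupofAutE} applies, and then read off $\ell$ from the two parity branches of Theorem~\ref{thm:estimation_of_length_EFF_and_cons_2}. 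Your extra commentary on supersingularity and $j$-invariant $0$ is sound background but not needed for the argument; the paper simply cites the relevant lemmas of \cite{li2019optimal} directly.
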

\begin{proof}
   If $E/\F_q$ is a maximal elliptic function field, then the group structure of $\Pbb_E^1$ is 
\begin{equation}\label{eq:202507011500}
    \Pbb_E^1\cong \Zbb/(\sqrt{q}+1)\Zbb \times \Zbb/(\sqrt{q}+1)\Zbb
\end{equation}
by Lemma~\ref{lem:group_structure_of_EFF}. Therefore, for any positive divisor $h_0$ of $\sqrt{q}+1$, there exists a subgroup of $\Pbb_E^1$ of order $h=h_0^2$ defined by $H:=\braces{P\in \Pbb_E^1:\; \brackets{h_0}P=O}$, which corresponds via the isomorphism in \eqref{eq:202507011500} to the subgroup $<\overline{{(\sqrt{q}+1)}/{h_0}}>\times <\overline{{(\sqrt{q}+1)}/{h_0}}>$ of $\Zbb/(\sqrt{q}+1)\Zbb \times \Zbb/(\sqrt{q}+1)\Zbb$. Let $A$ be a subgroup of $\Aut(E, O)$ of order $a$. Then for any $\sigma\in A$ and $P\in H=\braces{P\in \Pbb_E^1:\; \brackets{h_0}P=O}$, we have $[h_0](\sigma(P))=\sigma([h_0]P)=\sigma(O)=O$, where the first ``$=$'' follows from Remark~\ref{rem:invo_of_EFF_and_isogeny_is_endo} {\rm (ii)}. Hence $\sigma(P)\in H$. By Lemma~\ref{lem:isTAsubgroupofAutE}, $G:=T_HA$ must be a subgroup of order $ah$ of $\Aut(E/\F_q)$. We now can prove {\rm (i)}, {\rm (ii)}, and {\rm (iii)}. 

{\rm (i)} Let $q = 2^{2s}$ for a positive integer $s$. There exists an explicit maximal elliptic function field $E/\F_q$ with $|\Aut(E, O)| = 24$, as shown in \cite[Lemmas 9 and 15]{li2019optimal}. Let $A$ be a subgroup of $\Aut(E, O)$ of order $a = 3$, which must not contain the elliptic involution $[-1]$ since $[-1]$ is of order $2$. Then there exists a subgroup $T_HA$ of $\Aut(E/\F_q)$ of order $3h$ by the above preceding discussion. Note that $2 \nmid aN(E) = 3(2^{2s} + 2 \cdot 2^s + 1)$. By Theorem~\ref{thm:estimation_of_length_EFF_and_cons_2}, the proof of {\rm (i)} is completed.

{\rm (ii)} Let $q = 3^{2s}$ for a positive integer $s$. There exists an explicit maximal elliptic function field $E/\F_q$ with $|\Aut(E, O)| = 12$, as shown in \cite[Lemmas 10 and 16]{li2019optimal}. An argument analogous to the proof of item {\rm (i)} yields item {\rm (ii)}. The only difference is that, in this case, $2 \mid aN(E)=3(3^{2s}+2 \cdot 3^s+1)$, so we must employ the second estimate for $\ell$ in \eqref{eq:202505271725}.

{\rm (iii)} Let $q = p^{2s}$ for an odd prime $p \equiv 2 \pmod{3}$ and a positive integer $s$. There exists an explicit maximal elliptic function field $E/\F_q$ with $|\Aut(E, O)| = 6$, as shown in \cite[Lemmas 11 and 17]{li2019optimal}. The rest of the proof is the same as above. 
\end{proof}

Based on Theorem~\ref{thm:estimation_of_length_EFF_and_cons_2} and the computation presented in \cite[Section 4.5]{ma2023group}, we can also obtain $q$-ary optimal $(7,3)$-LRCs and $(2,8)$-LRCs with lengths at most $q+2\sqrt{q}-8$, where $q=4^{2s+1}$ for a positive integer $s$. 
  
Let $E=\F_q(x,y)$ be an elliptic function field defined by the equation $y^2+y=x^3$, where $q=4^{2s+1}$ for an arbitrary non-negative integer $s$. From the proof of \cite[Lemma 15]{li2019optimal}, $E/\F_q$ is a maximal elliptic function field.

Let $Q\in \Pbb_E^1$ be the unique common zero of $x$ and $y-1$. Consider the translation-by-$Q$ on the elliptic function field $E$ explicitly given by 
\(\tau_Q: (x\mapsto \frac{y+1}{x^2},\; y \mapsto \frac{y+1}{y})\)
from Group Law Algorithm 2.3 in \cite{silverman2009arithmetic}. The order of $\tau_Q$ is $3$, since 
\[x\mapsto \frac{y+1}{x^2} \mapsto \frac{x}{y+1} \mapsto x \text{ and } y \mapsto \frac{y+1}{y} \mapsto \frac{1}{y+1}  \mapsto y.\]
Define $\sigma\in \Aut(E,O)$ of order $3$ by $\sigma: (x\mapsto u^2x,\; y\mapsto y)$, where $u$ is a primitive third root of unity in $\F_q$. Let $A:=<\sigma>, H:=<Q>$; then $a=|A|=3, h=|H|=3$. 
Note that $\sigma(Q)=Q$, thus $G:=T_HA$ is a subgroup of $\Aut(E/\F_q)$ by Lemma~\ref{lem:isTAsubgroupofAutE}. Note that $2\nmid aN(E)=3\cdot (2^{2s+1}+1)^2$.
By Theorem~\ref{thm:estimation_of_length_EFF_and_cons_2}, we have the following corollary.
\begin{corollary}\label{cor:Cons_via_EFF_2.2_(7,3)}
 Let $q=4^{2s+1}$ for a positive integer $s$.
 Then there exist an optimal $(r=7,\delta=3)$-LRC and an optimal $(r=2,\delta=8)$-LRC with parameters 
 $[9m,tr+1,9(m-t)]_q$
 for any $1\leq t< m\leq \ell=\lceil\frac{q+2\sqrt{q}-8}{9}\rceil=\frac{q+2\sqrt{q}-8}{9}$.
\end{corollary}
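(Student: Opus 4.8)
The plan is to apply Theorem~\ref{thm:estimation_of_length_EFF_and_cons_2} directly to the explicit data $(E,\,H=\langle Q\rangle,\,A=\langle\sigma\rangle)$ set up in the discussion preceding the corollary, landing in the branch $2\nmid aN(E)$ of~\eqref{eq:202505271725}, and then to make the resulting ceiling exact by a short divisibility computation. Almost all of the verification work has already been carried out above; the proof consists of assembling those facts into the checklist of hypotheses of that theorem.

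First I would record that those hypotheses hold. The function field $E=\F_q(x,y)$ with $y^2+y=x^3$ is a maximal elliptic function field, so $N(E)=q+2\sqrt q+1=(\sqrt q+1)^2$. The translation $\tau_Q$ has order $3$, so $H=\langle Q\rangle$ is a subgroup of $\Pbb_E^1$ of order $h=3$; the automorphism $\sigma$ has order $3$, so $A=\langle\sigma\rangle$ is a nontrivial subgroup of $\Aut(E,O)$ of order $a=3$. Since $\sigma(Q)=Q$ and $\sigma$ is a group endomorphism of $(\Pbb_E^1,\oplus)$ by Remark~\ref{rem:invo_of_EFF_and_isogeny_is_endo}~{\rm(ii)}, we get $\sigma([k]Q)=[k]\sigma(Q)=[k]Q\in H$ for every $k$, so $G:=T_HA$ is a subgroup of $\Aut(E/\F_q)$ of order $ah=9$ by Lemma~\ref{lem:isTAsubgroupofAutE}. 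The two remaining side conditions are immediate: $[-1]\notin A$, because $[-1]$ has order $2$ while every element of $A$ has order dividing $3$; and $ah=9\geq 3$. Finally, $\sqrt q=2^{2s+1}$ is odd, hence $N(E)=(\sqrt q+1)^2$ is odd and $aN(E)=3N(E)$ is odd, which puts us in the case $2\nmid aN(E)$ of Theorem~\ref{thm:estimation_of_length_EFF_and_cons_2}.

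Invoking Theorem~\ref{thm:estimation_of_length_EFF_and_cons_2} with $r=ah-2=7,\ \delta=3$ (respectively $r=2,\ \delta=ah-1=8$) then supplies lists of rational places satisfying the conditions of Propositions~\ref{prop:equivalentconditionforEFF_1} and~\ref{prop:construction_of_(r,delta)from_EFF_1} (in particular~\eqref{eq:equivconditionofEFFcons_4}), and hence optimal $(7,3)$-LRCs and $(2,8)$-LRCs with parameters $[9m,\,tr+1,\,9(m-t)]_q$ for all $1\leq t<m\leq\ell$, where by~\eqref{eq:202505271725} one has $\ell=\ceil{\frac{N(E)-(ah+3h)/2}{ah}}=\ceil{\frac{q+2\sqrt q-8}{9}}$. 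To finish I would check that this ceiling is attained, i.e. that $9\mid q+2\sqrt q-8$: writing $\sqrt q=2^{2s+1}$ gives $q+2\sqrt q-8=(\sqrt q+1)^2-9=(\sqrt q-2)(\sqrt q+4)=4(4^s-1)(4^s+2)$, and since $4\equiv 1\pmod 3$ both $4^s-1$ and $4^s+2$ are divisible by $3$, so $9\mid q+2\sqrt q-8$ and $\ell=\frac{q+2\sqrt q-8}{9}$, as claimed. I do not anticipate any genuine obstacle here: all the substantive content — exhibiting enough completely split rational places of $F=E^G$ and controlling the group-law non-membership condition~\eqref{eq:equivconditionofEFFcons_4} — is absorbed into Theorem~\ref{thm:estimation_of_length_EFF_and_cons_2}, so the corollary reduces to verifying its hypotheses for this explicit curve together with the elementary divisibility fact above.
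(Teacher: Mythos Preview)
Your proof is correct and follows exactly the paper's approach: all the setup (maximality of $E$, orders of $\tau_Q$ and $\sigma$, $\sigma(Q)=Q$, $2\nmid aN(E)$) is carried out in the paragraph preceding the corollary, which then simply invokes Theorem~\ref{thm:estimation_of_length_EFF_and_cons_2}. Note one trivial slip: $\sqrt q=2^{2s+1}$ is even, not odd; it is $\sqrt q+1$ that is odd, giving $N(E)=(\sqrt q+1)^2$ odd as you need.
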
 
%\input{Example_EFF_2_72over64} 这个矩阵太大，不要了, 没法找到小的
% \subsection{(\Red{Determine whether we need a new example, distinct from Li, Ma and Xing})}
% \begin{example}
% 	 \NTD
% \end{example}
%我发现一些f(x)六次的椭圆曲线也是具有一个不包含对合的六阶自同构群的，也许也可以拿来试一试，但是意义可能不是很大，主要问题在于，有关这种方程是否是极大曲线/或者在什么时候是极大曲线的结论似乎有点难找
\section{Constructions of Optimal $(r,\delta)$-LRCs via Automorphism Groups of Hyperelliptic Function Fields}\label{sec:4}
In this section, we investigate the construction of optimal $(r,\delta)$-LRCs using hyperelliptic function fields. We first introduce a general framework for constructing optimal $(r,3)$-LRCs via automorphism groups of hyperelliptic function fields of genus 2, and then apply it to obtain explicit optimal $(4,3)$-LRCs with lengths slightly below $q+4\sqrt{q}$. In the final subsection, we present the construction of optimal $(g+1-g',g+1+g')$-LRCs ($0\leq g' \leq g-1$) with lengths up to $q+2g\sqrt{q}$ by employing specific hyperelliptic function fields of genus $g \geq 2$.  
\subsection{A Framework for Constructing Optimal $(r,3)$-LRCs via Automorphism Groups of Hyperelliptic Function Fields of Genus $2$ }\label{sec:4.1}
In the following, we introduce the general framework.
\begin{proposition}\label{prop:sufficientconditionfor_(r,3)_viaHEFF_1}
	Let $E/\F_q$ be a hyperelliptic function field of genus $2$ defined by $y^2=f(x)$ with $\deg(f)=5$  and $2\nmid \Char(\F_q)$. Suppose that $G$ is a subgroup of $\Aut(E/\F_q)$. Let $|G|=r+2$ with $r\geq 2$. Let $F:=E^G$, and let $[P_1,P_2,\dots,P_{r+2}]$ be a list\footnote{Here, a \Emph{list} refers to an ordered multiset of rational places of $E$, or equivalently, a finite sequence of (not necessarily distinct) rational places of $E$.} of rational places of $E$ such that $\sum_{j=1}^{r+2}P_j=\Con_{E/F}(Q_{\infty})$ for a rational place $Q_{\infty}$ of $F$. Assume $\deg\Diff(E/F)>2$ (or equivalently, $F$ is a rational function field, by Lemma \ref{lem:is_rational_FF}).  
	Then the following statements hold.
	\begin{itemize} 
\item[{\rm (i)}] There exists a function $z\in F$ such $F=\F_q(z)$ and $(z)^E_{\infty}=\Con_{E/F}(Q_{\infty})=P_{1}+P_2+\dots+P_{r+2}$.
	
\item[{\rm (ii)}] There exist $r$ functions $w_0=1,w_1,\dots,w_{r-1}\in \Lcal_{E}(P_1+P_2+\dots+P_{r+1})$ that are $F$-linearly independent.
   
\item[{\rm (iii)}] Let $\{P_{i,1},P_{i,2},\dots,P_{i,r+2}\}$ be pairwise distinct rational places of $E$ lying over a rational place $Q_i$ of $F$ for each $1\leq i\leq \ell$, such that $Q_{\infty},Q_1,\dots,Q_{\ell}$ are pairwise distinct. 
Then all $r\times r$ submatrices of the following matrix 
\begin{align}\label{eq:202507161329}
	M_i=\left(\begin{array}{cccc}
		w_0(P_{i,1})&w_0(P_{i,2})&\cdots&w_0(P_{i,r+2})\\
		w_1(P_{i,1})&w_1(P_{i,2})&\cdots&w_1(P_{i,r+2})\\
		\vdots &\vdots &\ddots &\vdots\\
		w_{r-1}(P_{i,1})&w_{r-1}(P_{i,2})&\dots&w_{r-1}(P_{i,r+2})\\
	\end{array}\right)  
\end{align} 
are invertible for each $1\leq i\leq \ell$ if the following conditions $(C1)$ and $(C2)$ are satisfied.

$(C1)$ $P_{r+2}=P_{\infty}$, where $P_{\infty}$ is the unique place at infinity of $E$.

$(C2)$ $\overline{P_{i,j}}\neq P_{i,j'}$ for any $1\leq i\leq \ell$ and $1\leq j<j'\leq r+2$.   
\end{itemize}
\end{proposition}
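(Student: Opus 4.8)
The plan is to adapt the three--part argument of Proposition~\ref{prop:equivalentconditionforEFF_1} to the hyperelliptic genus-$2$ situation, replacing elliptic--curve arithmetic by the Riemann--Roch theorem in genus $2$ together with the rigidity of reduced representatives of divisor classes recorded in Lemma~\ref{lem:Cl^0ofHyperellipticFF}. For (i): since $\deg\Diff(E/F)>2$, Lemma~\ref{lem:is_rational_FF} gives $g(F)=0$, hence $\ell_F(Q_\infty)=\deg Q_\infty+1-g(F)=2$; any $z\in\Lcal_F(Q_\infty)\setminus\F_q$ is non-constant with a single pole at $Q_\infty$ of order $\le 1$, so $(z)^F_\infty=Q_\infty$, whence $[F:\F_q(z)]=\deg(z)^F_\infty=1$, $F=\F_q(z)$, and $(z)^E_\infty=\Con_{E/F}((z)^F_\infty)=\Con_{E/F}(Q_\infty)=\sum_{j=1}^{r+2}P_j$. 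This is verbatim the proof of Proposition~\ref{prop:equivalentconditionforEFF_1}~{\rm(i)}.

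For (ii): since $r\ge 2$ we have $\deg(P_1+\dots+P_{r+1})=r+1\ge 2g(E)-1$, so $\ell_E(P_1+\dots+P_{r+1})=r$ by Riemann--Roch. I would construct a ``staircase'' $\F_q$-basis adapted to the filtration $\Lcal_E(\varnothing)\subseteq\Lcal_E(P_1)\subseteq\dots\subseteq\Lcal_E(P_1+\dots+P_{r+1})$: because $g(E)\ge 1$ one has $\ell(P_1)=1$, so the first inclusion is an equality, and because the remaining $r$ inclusions together raise the dimension by $r-1$, exactly one of them is an equality; taking $w_0=1$ together with one function drawn from each of the $r-1$ strict inclusions yields an $\F_q$-basis $w_0,\dots,w_{r-1}$ of $\Lcal_E(P_1+\dots+P_{r+1})$ respecting the filtration. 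Their $F$-linear independence then follows exactly as in Proposition~\ref{prop:equivalentconditionforEFF_1}~{\rm(ii)}: from a supposed relation $\sum_i f_i(z)w_i=0$ with $f_i\in\F_q[z]$ not all zero, a comparison of valuations at the place among $P_1,\dots,P_{r+1}$ attached to the maximal index in the staircase order shows that exactly one summand has the strictly smallest valuation there, contradicting the strict triangle inequality. The only differences from the genus-$1$ case are the single extra gap in the filtration and the multiplicities with which a place over $Q_\infty$ may occur among the $P_j$ when $Q_\infty$ ramifies, neither of which affects the valuation comparison.

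For (iii): fix $i$ and columns $1\le u_1<\dots<u_r\le r+2$, and put $\{v_1,v_2\}=\{1,\dots,r+2\}\setminus\{u_1,\dots,u_r\}$. By (ii) (together with $\dim_{\F_q}\Lcal_E(P_1+\dots+P_{r+1})=r$), the $w_j$ form an $\F_q$-basis of $\Lcal_E(P_1+\dots+P_{r+1})$, so the submatrix of $M_i$ in \eqref{eq:202507161329} on columns $u_1,\dots,u_r$ is singular iff some nonzero $w=\sum_j c_jw_j$ vanishes at $P_{i,u_1},\dots,P_{i,u_r}$; as these places lie over $Q_i\neq Q_\infty$ they avoid $\{P_1,\dots,P_{r+1}\}$, so this is equivalent to $\Lcal_E(P_1+\dots+P_{r+1}-P_{i,u_1}-\dots-P_{i,u_r})\neq\{0\}$, and since that divisor has degree $1$, to the existence of a rational place $P$ of $E$ with $P_1+\dots+P_{r+1}\sim P+P_{i,u_1}+\dots+P_{i,u_r}$. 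Now (C1) gives $P_1+\dots+P_{r+1}+P_\infty=\Con_{E/F}(Q_\infty)$; since $F$ is rational, $Q_i\sim Q_\infty$, hence $\Con_{E/F}(Q_i)\sim\Con_{E/F}(Q_\infty)$, i.e.\ $\sum_{j=1}^{r+2}P_{i,j}\sim P_1+\dots+P_{r+1}+P_\infty$ (the $P_{i,j}$ being exactly the places over $Q_i$). Combining the last two equivalences,
\[
P_{i,v_1}+P_{i,v_2}=\sum_{j=1}^{r+2}P_{i,j}-\sum_{l=1}^{r}P_{i,u_l}\ \sim\ \Big(\sum_{j=1}^{r+1}P_j+P_\infty\Big)-\sum_{l=1}^{r}P_{i,u_l}\ \sim\ P+P_\infty .
\]
Both sides are effective of degree $2=g(E)$; by (C2), $\overline{P_{i,v_1}}\neq P_{i,v_2}$, so the left-hand side has reduced affine part, and the right-hand side does as well because $P_\infty$ is not an affine place. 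Hence Lemma~\ref{lem:Cl^0ofHyperellipticFF} with $g=2$ forces $P_{i,v_1}+P_{i,v_2}=P+P_\infty$ as divisors, so $P_\infty\in\{P_{i,v_1},P_{i,v_2}\}$ --- impossible, since every $P_{i,j}$ lies over $Q_i\neq Q_\infty=P_\infty\cap F$. Therefore no $r\times r$ submatrix of $M_i$ is singular, which is the assertion of (iii).

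The main obstacle is the logic of Step (iii). The genuinely new ingredient over the elliptic framework is the observation that, in genus $2$, singularity of an $r\times r$ minor is governed not by a group-law relation among $r$ points but by a degree-$2$ linear equivalence involving only the two \emph{complementary} evaluation points $P_{i,v_1},P_{i,v_2}$; making that equivalence impossible requires both hypotheses at once --- (C2) so that $P_{i,v_1}+P_{i,v_2}$ is a reduced divisor and Lemma~\ref{lem:Cl^0ofHyperellipticFF} applies, and (C1) so that the ``missing'' conorm point is $P_\infty$, which is simultaneously non-affine (making $P+P_\infty$ reduced) and absent from every fibre over $Q_i$. Parts (i) and (ii) are then routine transcriptions of the elliptic arguments, the only wrinkle being the single filtration gap forced by $g(E)=2$.
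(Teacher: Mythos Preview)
Your proof is correct and follows essentially the same approach as the paper. The one noteworthy difference is in part~(iii): the paper splits into two cases according to whether the auxiliary place $P$ coincides with one of the complementary evaluation points $P_{i,v_1},P_{i,v_2}$, handling the coincidence case by observing that $(w/z')^E$ would then have degree~$1$, making $E$ rational; you instead apply Lemma~\ref{lem:Cl^0ofHyperellipticFF} uniformly to the equivalence $P_{i,v_1}+P_{i,v_2}\sim P+P_\infty$ and read off the contradiction $P_\infty\in\{P_{i,v_1},P_{i,v_2}\}$ directly. Your route is slightly cleaner, since the ``reduced affine part'' condition for $P+P_\infty$ is vacuous regardless of what $P$ is, so no case distinction is needed.
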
 
\begin{proof}
	{\rm (i)} Note that $F$ is a rational function field. Let $z\in \Lcal_F(Q_{\infty})\backslash\F_q$. We have 
    $
    (z)_{\infty}^F=Q_{\infty}$, $[F:\F_q(z)]=\deg((z)_{\infty}^F)=\deg(Q_{\infty})=1
    $ 
    and
    $
	 (z)^E_{\infty}=\Con_{E/F}((z)_\infty^{F})=\Con_{E/F}(Q_{\infty})=P_{1}+P_2+\dots+P_{r+2}.
    $
	 
	 {\rm (ii)} Note that $\Lcal_{E}(P_1)\subseteq \Lcal_{E}(P_1+P_2)\subseteq \dots \subseteq \Lcal_{E}(P_1+\dots+P_{r+1})$, $\dim_{\F_q}(\Lcal_{E}(P_1))=1$, and $\dim_{\F_q}(\Lcal_{E}(P_1+\dots+P_{r+1}))=r+1+1-2=r$ by the Riemann-Roch theorem (see \eqref{eq:ell(D)_of_deg(D)_geq_2g-1}). Furthermore, $\dim_{\F_q}(\Lcal_E(\sum_{j=1}^{I+1}P_j))-\dim_{\F_q}(\Lcal_E(\sum_{j=1}^{I}P_j))\leq 1$ for each $1\leq I\leq r$ by \cite[Lemma 1.4.8]{stichtenoth2009algebraic}. Therefore, among the $r$ sets $V_I:=\Lcal_{E}(\sum_{j=1}^{I+1}P_j)\backslash \Lcal_{E}(\sum_{j=1}^{I}P_j)$ for $1\leq I\leq r$, exactly $r-1$ of them are non-empty. Let $1\leq \urep_1<\dots<\urep_{r-1}\leq r$ be the $r-1$ indexes such that $V_{\urep_1},\dots,V_{\urep_{r-1}}$ are non-empty. Let $w_0=1$, and let $w_1,\dots,w_{r-1}$ be arbitrary elements in $V_{\urep_1},\dots,V_{\urep_{r-1}}$, respectively.  
	 
 Now we show that $w_0,w_1,\dots,w_{r-1}$ are $F$-linearly independent. Assume towards a contradiction that there exist rational functions $\Frep_{0}(z),\dots,\Frep_{r-1}(z)\in F=\F_q(z)$, not all zero, such that $\sum_{i=0}^{r-1}\Frep_i(z)w_i=0$. By clearing denominators, we may assume that $\Frep_{0}(z),\dots,\Frep_{r-1}(z)$ are polynomials of $z$. We define $t:=\max\{0\leq i\leq r-1:\;\deg(\Frep_i(z))=\max\{\deg(\Frep_j(z)):\; 0\leq j\leq r-1\}\}$ ($\deg(0):=-\infty$). It holds that $\Frep_t(z)\neq 0$. Then we consider the following two possible cases to derive a contradiction. Before proceeding, we denote $e:=e(P_1|Q_{\infty})=\dots=e(P_{r+2}|Q_{\infty})$, and observe that $v_{P_1}(z)=\dots=v_{P_{r+2}}(z)=-e$.
\begin{itemize}
\item  If $t=0$, then we have $\deg(\Frep_{0}(z))>\deg(\Frep_i(z))$ for any $1\leq i\leq r-1$. 
Therefore, we have $$v_{P_{r+2}}(\Frep_{0}(z)w_0)=-e\deg(\Frep_{0}(z))<-e\deg(\Frep_i(z))-(e-1)\leq -e\deg(\Frep_i(z))+v_{P_{r+2}}(w_i)=v_{P_{r+2}}(\Frep_i(z)w_i)$$ for any $1\leq i\leq r-1$ such that $\Frep_i(z)\neq 0$, where the symbol ``$\leq$'' in the above inequality is due to $w_i\in \Lcal_{E}(P_1+\dots+P_{r+2}-P_{r+2})=\Lcal_{E}(\sum_{P|Q_{\infty}}eP-P_{r+2})$. This implies $v_{P_{r+2}}(\sum_{i=0}^{r-1}\Frep_i(z)w_i)=v_{P_{r+2}}(\Frep_{0}(z)w_0)\neq \infty$.

\item  If $1\leq t\leq r-1$, then we have $\deg(\Frep_t(z))\geq \deg(\Frep_i(z))$ for any $0\leq i\leq t-1$; and $\deg(\Frep_t(z))>\deg(\Frep_i(z))$ for any $t+1\leq i\leq r-1$. Therefore, for any $0\leq i\leq t-1$ such that $\Frep_i(z)\neq 0$, we have
\begin{align}\label{eq:202507012157}
v_{P_{\urep_t+1}}(\Frep_t(z)w_t)=-e\deg(\Frep_t(z))+v_{P_{\urep_t+1}}(w_t)<-e\deg(\Frep_i(z))+v_{P_{\urep_t+1}}(w_i)= v_{P_{\urep_t+1}}(\Frep_i(z)w_i)
\end{align}
	 where ``$<$'' is due to $\deg(\Frep_t(z))\geq \deg(\Frep_i(z))$ and $v_{P_{\urep_t+1}}(w_t)<v_{P_{\urep_t+1}}(w_i)$ since $v_{P_{\urep_t+1}}(w_t)\leq  -v_{P_{\urep_t+1}}(\sum_{j=1}^{\urep_t+1}P_j)$,\footnote{This inequality holds because otherwise we would have $w_{t}\in \Lcal_E(\sum_{j=1}^{\urep_t}P_j)$, contradicting the fact that $w_t\in \Lcal_E(\sum_{j=1}^{\urep_t+1}P_j)\backslash \Lcal_E(\sum_{j=1}^{\urep_t}P_j)$.} $w_i\in \Lcal_{E}(\sum_{j=1}^{\urep_t+1}P_j-P_{\urep_t+1})$; and for any $t+1\leq i\leq r-1$ such that $\Frep_i(z)\neq 0$, we have
     \begin{align}\label{eq:202507012158}
     v_{P_{\urep_t+1}}(\Frep_t(z)w_t)=-e\deg(\Frep_t(z))+v_{P_{\urep_t+1}}(w_t)<-e\deg(\Frep_i(z))+v_{P_{\urep_t+1}}(w_i)=v_{P_{\urep_t+1}}(\Frep_i(z)w_i)
     \end{align} 
	 where the symbol ``$<$'' is due to $\deg(\Frep_t(z))> \deg(\Frep_i(z))$, $v_{P_{\urep_t+1}}(w_t)\leq -1$ and $v_{P_{\urep_t+1}}(w_i)\geq -e$ since $v_{P_{\urep_t+1}}(w_t)\leq -v_{P_{\urep_t+1}}(\sum_{j=1}^{\urep_t+1}P_j)$, $w_i\in \Lcal_{E}(\sum_{j=1}^{r+2}P_j)=\Lcal_{E}(\sum_{P|Q_{\infty}}eP)$. Inequalities~\eqref{eq:202507012157}~and~\eqref{eq:202507012158} imply $v_{P_{\urep_t+1}}(\sum_{i=0}^{r-1}\Frep_i(z)w_i)=v_{P_{\urep_t+1}}(\Frep_t(z)w_t)\neq \infty$.  
\end{itemize} 
	 In both cases, we arrive at a contradiction to the assumption $\sum_{i=0}^{r-1}\Frep_i(z)w_i=0$. 
	 Therefore, $w_0,w_1,\dots,w_{r-1}$ are $F$-linearly independent.
	 
	 {\rm (iii)}
    Assume towards a contradiction that the submatrix consisting of the $u_1,\dots,u_r$-th columns of
	the matrix $M_i$ in \eqref{eq:202507161329} is singular for some $1\leq \irep\leq \ell$ and $1\leq u_1<\dots<u_r\leq r+2$. 
Then there exist $c_0,\dots,c_{r-1}\in\F_q$, not all zero, such that $(c_0,\dots,c_{r-1})M_{\irep}$ vanishes at the positions $u_1,\dots,u_r$.
That is, the function $w:=c_0w_0+\dots+c_{r-1}w_{r-1}$ has zeros $P_{\irep,u_1},\dots,P_{\irep,u_r}$. Note that $w \in\Lcal_E(\sum_{j=1}^{r+1}P_j)$. The principal divisor of $w$ must be of the form
\begin{align}\label{eq:temp2}
    \parentheses{w}^E=P-\parentheses{\sum_{j=1}^{r+1}P_{j}}+\sum_{j=1}^{r}P_{\irep,u_j}, 
\end{align}
where $P\in \Pbb_E^1$ is an unknown rational place that will be discussed later.
    Note that $\Lcal_{F}(Q_{\infty}-Q_{\irep})\neq \{0\}$ since $F$ is a rational function field. Let $z'$ be a nonzero element in $\Lcal_F({Q_\infty}-Q_{\irep})$, then we have 
    \begin{align}\label{eq:temp5}
  (z')^E=\Con_{E/F}((z')^F)=\Con_{E/F}(-Q_{\infty}+Q_{\irep})=(-P_1-\dots-P_{r+2})+(P_{\irep,1}+\dots+P_{\irep,r+2}).
    \end{align} 
By \eqref{eq:temp2} and \eqref{eq:temp5}, we have 
\begin{align}\label{eq:thetransformationofdivisor3}
    \parentheses{\frac{w}{z'}}^E=P+P_{r+2}-\sum_{u\in [r+2]\backslash\{u_1,\dots,u_r\}}P_{\irep,u}.
\end{align}
We now consider all possible $P\in \Pbb_E^1$, divided into two cases.
\begin{itemize}
\item $P=P_{\irep,u'}$ for some $u'\in [r+2]\backslash\{u_1,\dots,u_r\}$. By  \eqref{eq:thetransformationofdivisor3}, we have 
$$\parentheses{\frac{w}{z'}}^E=P_{r+2}-\sum_{u\in [r+2]\backslash\{u_1,\dots,u_r,u'\}}P_{\irep,u}.$$ 
Then we have $\brackets{E:\F_q\parentheses{\frac{w}{z'}}}=\deg\parentheses{\parentheses{\frac{w}{z'}}^E_{0}}=\deg(P_{r+2})=1$. This contradicts the fact that $E$ is a hyperelliptic function field rather than a rational function field. 

  \item $P$ is a rational place with $P\neq P_{\irep,u'}$ for any $u'\in [r+2]\backslash\{u_1,\dots,u_r\}$.
 By  \eqref{eq:thetransformationofdivisor3}, we have 
 $$\parentheses{\frac{w}{z'}}^E=P+P_{r+2}-\sum_{u\in [r+2]\backslash\{u_1,\dots,u_r\}}P_{\irep,u}.$$
The above equation leads to $P+P_{r+2}\sim\sum_{u\in [r+2]\backslash\{u_1,\dots,u_r\}}P_{\irep,u}$, which is ridiculous by Lemma~\ref{lem:Cl^0ofHyperellipticFF}, along with conditions $(C1)$ and $(C2)$. 
\end{itemize}
Both cases lead to a contradiction, thereby completing the proof of {\rm (iii)}.
\end{proof}

% \begin{remark}
% The sufficient conditions $(C1)$ and $(C2)$ can also be replaced with the following three sufficient conditions $(C1')$, $(C2')$, $(C3')$:

% $(C1')$: $P_{r+2}\neq P_{\infty}$ and $\overline{P_{r+2}}\neq P_{1},\dots,P_{r+1}$.

% $(C2')$: $\overline{P_{i,j}}\neq P_{i,j'}$ for any $1\leq i\leq \ell$ and $1\leq j<j'\leq r+2$ such that $P_{i,j}\neq P_{\infty}$,

% $(C3')$: $P'=P''$, or  $\{P',P''\}\subsetneq\{P_{i,1},\dots,P_{i,r+2}\}$  simultaneously for any $1\leq i\leq \ell$,
%        where $[2P_{r+2}-D_{\infty}]=[P'+P''-D_{\infty}]$ is the unique representation in the sense of Lemma~\ref{lem:Cl^0ofHyperellipticFF}. \NTD(determine whether we need to delete this remark, which seems to be meaning less) 
% \end{remark} 
\begin{proposition}\label{prop:construction_of_(r,3)from_HEFF_1}
	We adopt the settings in Proposition~\ref{prop:sufficientconditionfor_(r,3)_viaHEFF_1} and assume that conditions $(C1)$ and $(C2)$ are satisfied. 
	Let $t,m$ be integers such that $1\leq t< m\leq \ell$. Let $V:=\{a_{0,t}w_0z^t+\sum_{i=0}^{r-1}\sum_{j=0}^{t-1}a_{i,j}w_iz^j:a_{0,t}\in \F_q \text{ and }a_{i,j}\in \F_q \text{ for } 0\leq i\leq r-1, 0\leq j\leq t-1\}$, and $\Pcal:=\{P_{1,1},\dots,P_{1,r+2},\dots,P_{m,1},\dots,P_{m,r+2}\}$. Define a linear code $\Ccal(\Pcal,V)$ by  
	 \begin{align*}
	   \Ccal(\Pcal,V):=\left\{(\phi(P_{1,1}),\dots,\phi(P_{1,r+2}),\dots,\phi(P_{m,1}),\dots,\phi(P_{m,r+2})):\; \phi\in V\right\}.
	 \end{align*}
Then $\Ccal(\Pcal,V)$ is an optimal $(r=|G|-2,\delta=3)$-LRC with parameters $[m(r+2),tr+1,(m-t)(r+2)]_q.$  
\end{proposition}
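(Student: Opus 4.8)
The plan is to mirror the proof of Proposition~\ref{prop:construction_of_(r,delta)from_EFF_1} almost verbatim, feeding in parts~{\rm(i)}--{\rm(iii)} of Proposition~\ref{prop:sufficientconditionfor_(r,3)_viaHEFF_1} (specialized to $\delta=3$) wherever the elliptic argument invoked its own structural inputs. First I would pin down the parameters of $\Ccal(\Pcal,V)$. Since $w_0=1,w_1,\dots,w_{r-1}$ are $F$-linearly independent by Proposition~\ref{prop:sufficientconditionfor_(r,3)_viaHEFF_1}~{\rm(ii)} and $1,z,\dots,z^t\in F$ are $\F_q$-linearly independent, the $rt+1$ products $\{w_iz^j:0\le i\le r-1,\ 0\le j\le t-1\}\cup\{w_0z^t\}$ spanning $V$ are $\F_q$-linearly independent, so $\dim_{\F_q}V=rt+1$. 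Using $(z)^E_\infty=P_1+\dots+P_{r+2}$ from Proposition~\ref{prop:sufficientconditionfor_(r,3)_viaHEFF_1}~{\rm(i)} together with $w_i\in\Lcal_E(P_1+\dots+P_{r+1})$, I would note $V\subseteq\Lcal_E(D)$ with $D:=t(P_1+\dots+P_{r+2})$, $\deg D=t(r+2)$. As the places of $\Pcal$ lie over $Q_1,\dots,Q_m$, all distinct from $Q_\infty$, they are pairwise distinct, $\Supp(D)\cap\Pcal=\varnothing$, and $\deg D<m(r+2)=:n$; hence by the algebraic geometry code facts in Section~\ref{sec:2.1} the evaluation map is injective on $V$ and $\Ccal(\Pcal,V)$ is an $[\,n=m(r+2),\ k=rt+1,\ d\ge n-\deg D=(m-t)(r+2)\,]_q$ linear code.

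Next I would verify the $(r,3)$-locality. For $1\le i\le m$ set $R_i:=\{(i-1)(r+2)+1,\dots,i(r+2)\}$; these partition $[n]$. Since $z\in F$ is constant on the fibre over $Q_i$, the values $z(P_{i,1}),\dots,z(P_{i,r+2})$ all coincide, so the restriction of any $\phi\in V$ to the coordinates of $R_i$ is an $\F_q$-linear combination of $w_0,\dots,w_{r-1}$ evaluated at $P_{i,1},\dots,P_{i,r+2}$; conversely, choosing $a_{i,0}=c_i$ and all remaining coefficients zero (legitimate since $t\ge1$) realizes any such combination. Thus $\Ccal(\Pcal,V)|_{R_i}$ is exactly the code generated by the matrix $M_i$ of~\eqref{eq:202507161329}. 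With conditions $(C1),(C2)$ assumed, Proposition~\ref{prop:sufficientconditionfor_(r,3)_viaHEFF_1}~{\rm(iii)} makes every $r\times r$ submatrix of $M_i$ invertible, so $M_i$ has rank $r$ and $\Ccal(\Pcal,V)|_{R_i}$ is an $[r+2,r,3]$ MDS code; in particular $d(\Ccal(\Pcal,V)|_{R_i})=3=\delta$ while $|R_i|=r+2=r+\delta-1$, giving every code symbol $(r,\delta)$-locality. Finally, since $k=rt+1$ and $r\ge2$ force $\lceil k/r\rceil=t+1$, the Singleton-type bound~\eqref{eq:SLboundfor_r_delta} reads $d\le n-k+1-(\lceil k/r\rceil-1)(\delta-1)=m(r+2)-(rt+1)+1-2t=(m-t)(r+2)$, which, combined with the lower bound from the first step, yields $d=(m-t)(r+2)$ and hence that $\Ccal(\Pcal,V)$ attains~\eqref{eq:SLboundfor_r_delta}, i.e.\ it is an optimal $(r,3)$-LRC with the stated parameters.

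I do not expect a genuine obstacle here, since all the real work is carried by Proposition~\ref{prop:sufficientconditionfor_(r,3)_viaHEFF_1}. The two points that require care are: (a) arguing the \emph{equality} $\Ccal(\Pcal,V)|_{R_i}=\langle M_i\rangle$, not merely containment, so that the MDS conclusion of Proposition~\ref{prop:sufficientconditionfor_(r,3)_viaHEFF_1}~{\rm(iii)} transfers to the local code and pins $d(\Ccal(\Pcal,V)|_{R_i})$ to be exactly $\delta$; and (b) computing $\lceil k/r\rceil=t+1$ precisely, so that the Singleton-type bound closes exactly against the Riemann--Roch lower bound on $d$.
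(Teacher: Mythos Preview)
Your proposal is correct and takes essentially the same approach as the paper, which simply states that the proof is similar to that of Proposition~\ref{prop:construction_of_(r,delta)from_EFF_1} and omits it. You carry out exactly that mirroring, and the extra care you take in arguing the equality $\Ccal(\Pcal,V)|_{R_i}=\langle M_i\rangle$ and in checking $\lceil k/r\rceil=t+1$ is appropriate and matches the level of detail in the paper's proof of Proposition~\ref{prop:construction_of_(r,delta)from_EFF_1}.
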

\begin{proof}
 The proof is similar to that of Proposition~\ref{prop:construction_of_(r,delta)from_EFF_1}. So we omit it.
\end{proof}
  
 Using Propositions~\ref{prop:sufficientconditionfor_(r,3)_viaHEFF_1} and~\ref{prop:construction_of_(r,3)from_HEFF_1} with some explicit hyperelliptic curves of genus $2$, one can construct optimal $(3,3)$-LRCs with lengths approaching $q+4\sqrt{q}$, which is omitted here since it is subsumed by Theorem~\ref{thm:Cons_via_HEFF_(g+1-g',g+1+g')_p=2g+1_and_p_neq_2g+1} later in Section~\ref{sec:4.3}. 
 % where explicit constructions of optimal $(r=g+1-g',\delta=g+1+g')$-LRCs ($0\leq g'\leq g-1$) via hyperelliptic function fields of genus $g\geq 2$ are presented.
% \begin{corollary}\label{cor:Cons_via_HEFF_(3,3)}
%   Let prime power $q$ be of one of the following forms:
%   \begin{itemize}
%   \item [{\rm (i)}] $q=5^{2s}$ for an arbitrary positive integer $s$;
  
%   \item[{\rm (ii)}] $q=\oq^{2s}$ for an arbitrary prime power $p$ with $\oq\equiv -1 \pmod{5}$ and a positive integer $s$.  
%   \end{itemize}
% Then for any integers $t,m$ satisfying $1\leq t< m\leq \ell=\floor{\frac{q+4\sqrt{q}}{5}}$, there exist optimal $(3,3)$-LRCs with parameters 
% $$
%     [n=5m,k=3t+1, d=5(m-t)]_q.
% $$
% \end{corollary}    

\subsection{Construction of Optimal $(4,3)$-LRCs via Automorphism Groups of Hyperelliptic Function Fields of Genus $2$}\label{sec:4.2}
In this subsection, we present constructions of optimal $(4,3)$-LRCs with lengths slightly below $q+4\sqrt{q}$.
The following theorem provides a sufficient condition, under which we can select rational places $[P_1,\dots,P_{r+2}],\{P_{1,1},\allowbreak \dots,\allowbreak P_{1,r+2}\},\dots,\{P_{\ell,1},\dots,\allowbreak P_{\ell,r+2}\}$ that satisfy the conditions in Proposition~\ref{prop:sufficientconditionfor_(r,3)_viaHEFF_1}, including conditions $(C1)$ and $(C2)$. Before proceeding, we recall that $\iota$ denotes the hyperelliptic involution.
\begin{theorem}\label{thm:estimation_of_length_HEFF_and_cons_1}  
Let $E/\F_q$ be a hyperelliptic function field defined by $y^2=f(x)$ with $N(E)$ rational places, where $\deg(f)=5$ and $2\nmid \Char(\F_q)$. Let $G\leq \Aut(E/\F_q)$ with $|G|=r+2$, and let $F:=E^{G}$.
Assume that $\iota \notin G$, $|G|\geq 5$, and $|G| \nmid N(E)$. 
Then $\deg\Diff(E/F)>2$, and there exist rational places $[P_{1},\dots,P_{r+2}], \{P_{1,1},\dots,P_{1,r+2}\}, \dots, \{P_{\ell,1},\allowbreak \dots,P_{\ell,r+2}\}$ of $E$ satisfying the conditions in Proposition~\ref{prop:sufficientconditionfor_(r,3)_viaHEFF_1} (including conditions $(C1)$ and $(C2)$), where 
$
 \ell=2\ceil{\frac{N(E)-4|G|-2}{2|G|}}-1.
$

Consequently, there exists an optimal $(r=|G|-2,\delta=3)$-LRC with parameters 
$
[m|G|,tr+1,(m-t)|G|]_q
$ 
 for any integers $t,m$ satisfying $1\leq t< m\leq \ell$, by Propositions~\ref{prop:sufficientconditionfor_(r,3)_viaHEFF_1} and~\ref{prop:construction_of_(r,3)from_HEFF_1}.
\end{theorem}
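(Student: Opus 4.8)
The plan is to verify in turn the two parts of the statement: that $\deg\Diff(E/F)>2$ (so that Proposition~\ref{prop:sufficientconditionfor_(r,3)_viaHEFF_1} is applicable), and that one can select the list $[P_1,\dots,P_{r+2}]$ together with $\ell$ fibres $\{P_{i,1},\dots,P_{i,r+2}\}$ meeting all the hypotheses of that proposition, including $(C1)$ and $(C2)$; the final existence of the optimal $(r,3)$-LRC is then immediate from Propositions~\ref{prop:sufficientconditionfor_(r,3)_viaHEFF_1} and~\ref{prop:construction_of_(r,3)from_HEFF_1}.

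For $\deg\Diff(E/F)>2$, apply the Hurwitz genus formula~\eqref{eq:kurwitzformulacoro_1} to $E/F$ with $g(E)=2$: this gives $\deg\Diff(E/F)=2+2|G|\bigl(1-g(F)\bigr)$, and since the left side is nonnegative one has $g(F)\in\{0,1\}$, so the claim is equivalent to $g(F)=0$. Suppose for contradiction $g(F)=1$, so $\deg\Diff(E/F)=2$. If some $\F_q$-rational place $Q$ of $F$ had an $\F_q$-rational place $P$ of $E$ above it with $e(P|Q)\ge 2$, then $f(P|Q)=1$ and, by Lemma~\ref{lem:thepropertyofGalExtension_1} and Dedekind's theorem~\eqref{eq:Dedekinddifferentthm}, the fibre of $Q$ would contribute at least $\tfrac{|G|}{e(P|Q)}\bigl(e(P|Q)-1\bigr)\ge \tfrac{|G|}{2}\ge \tfrac{5}{2}$, hence at least $3$, to $\deg\Diff(E/F)$ — a contradiction. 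Thus every $\F_q$-rational place of $F$ that lies below an $\F_q$-rational place of $E$ splits completely in $E/F$; consequently every $\F_q$-rational place of $E$ lies over such a place and $N(E)=|G|\cdot N'$, where $N'$ is the number of $\F_q$-rational places of $F$ that split completely in $E/F$. Then $|G|\mid N(E)$, contradicting the hypothesis. Hence $g(F)=0$, $\deg\Diff(E/F)=2|G|+2>2$, and $F$ is rational by Lemma~\ref{lem:is_rational_FF}.

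For the second part I will use the hyperelliptic involution $\iota$, which is central in $\Aut(E/\F_q)$ by Remark~\ref{rem:hyperelliptic_involution} and, by hypothesis, not in $G$. Put $G':=G\langle\iota\rangle$, a subgroup of $\Aut(E/\F_q)$ of order $2|G|$, and $F_0:=E^{G'}$. Since $\iota$ fixes $x$, we have $F_0\subseteq E^{\langle\iota\rangle}=\F_q(x)$, so $F_0$ is rational by L\"uroth's theorem, with $[E:F_0]=2|G|$ and $[F:F_0]=2$. Take $Q_\infty:=P_\infty\cap F$, where $P_\infty$ is the unique place of $E$ at infinity (a Weierstrass point, as $\deg f=5$ is odd); all places of $E$ above $Q_\infty$ have relative degree $1$ over $F$ since $P_\infty$ is rational, so $\Con_{E/F}(Q_\infty)$ is a sum of $r+2=|G|$ rational places of $E$, and ordering the list so that $P_{r+2}=P_\infty$ secures condition $(C1)$. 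Next consider any $\F_q$-rational place $Q_0$ of $F_0$ that splits completely in $E/F_0$: its fibre is a single regular $G'$-orbit of $2|G|$ distinct rational places of $E$, and because $\iota$ is central with trivial stabiliser on this fibre the orbit decomposes as a disjoint union $O\sqcup\iota(O)$ of two $G$-orbits, each of size $|G|=r+2$. For such a $G$-orbit $O=\{P_{i,1},\dots,P_{i,r+2}\}$ one has $\overline{P_{i,j}}=\iota(P_{i,j})\in\iota(O)$, which is disjoint from $O$, so $\overline{P_{i,j}}\ne P_{i,j'}$ for all $j,j'$ and condition $(C2)$ holds; moreover $O$ lies over an $\F_q$-rational place $Q_i$ of $F$ that splits completely in $E/F$, and $Q_i\ne Q_\infty$ because $\iota$ fixes $P_\infty$, so $Q_\infty\cap F_0=P_\infty\cap F_0$ is ramified in $E/F_0$ and cannot equal the completely split place $Q_0$. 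Finally, the count: by the Hurwitz genus formula $\deg\Diff(E/F_0)=4|G|+2$, so at most $4|G|+2$ rational places of $E$ are ramified in $E/F_0$ (each contributing at least $1$ to the different by~\eqref{eq:Dedekinddifferentthm}); by Remark~\ref{rem:thepropertyofGalExtension_2}~{\rm(ii)} the remaining at least $N(E)-4|G|-2$ rational places of $E$ lie over completely split rational places of $F_0$ in blocks of $2|G|$, yielding at least $\big\lceil\tfrac{N(E)-4|G|-2}{2|G|}\big\rceil$ such places of $F_0$ and hence at least $2\big\lceil\tfrac{N(E)-4|G|-2}{2|G|}\big\rceil-1=\ell$ admissible $G$-orbits. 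These, together with $[P_1,\dots,P_{r+2}]$, satisfy all hypotheses of Proposition~\ref{prop:sufficientconditionfor_(r,3)_viaHEFF_1}, so the functions $w_0=1,w_1,\dots,w_{r-1}$ exist and the matrices $M_i$ have all $r\times r$ submatrices invertible; Proposition~\ref{prop:construction_of_(r,3)from_HEFF_1} then produces the optimal $(r=|G|-2,\delta=3)$-LRC with parameters $[m|G|,tr+1,(m-t)|G|]_q$ for all $1\le t<m\le\ell$.

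The crux is the exclusion of $g(F)=1$: this is precisely where the two numerical hypotheses are used jointly — $|G|\ge 5$ makes any ramified rational place of $F$ too costly in the different once $\deg\Diff(E/F)=2$, forcing every rational place of $F$ to split completely, which then forces $|G|\mid N(E)$ against the hypothesis $|G|\nmid N(E)$. The remaining points requiring care are of a bookkeeping nature: checking that the two $G$-halves of a completely split $G'$-orbit are genuinely disjoint (which uses $\iota\notin G$ together with the centrality of $\iota$), that the place of $F$ below such a $G$-orbit is distinct from $Q_\infty$ (which uses that $P_\infty$, being a Weierstrass point, is fixed by $\iota$ and hence ramifies in $E/F_0$), and that condition $(C1)$ is compatible with $\sum_{j}P_j=\Con_{E/F}(Q_\infty)$.
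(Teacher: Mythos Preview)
Your proof is correct and follows essentially the same route as the paper's: use the joint hypotheses $|G|\ge 5$ and $|G|\nmid N(E)$ to force $F$ to be rational, then harvest $G$-orbits satisfying $(C2)$ by halving completely-split $G'$-orbits for $G'=G\langle\iota\rangle$, counting via Hurwitz applied to $E/E^{G'}$. One small refinement over the paper: your observation that $\iota$ fixes $P_\infty$ shows $P_\infty\cap F_0$ is ramified in $E/F_0$, so in fact all $2\big\lceil\tfrac{N(E)-4|G|-2}{2|G|}\big\rceil$ $G$-orbits avoid $Q_\infty$ --- the paper instead simply discards one orbit as a worst-case precaution, which is why the ``$-1$'' appears in the stated $\ell$.
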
 
\begin{proof}
Since $|G| \nmid N(E)$, there exists a rational place $P'$ of $E$ that is ramified in $E/F$; otherwise, all rational places of $E$ would be unramified in $E/F$, and it would follow that $|G|=[E:F] \mid N(E)$ by Lemma~\ref{lem:thepropertyofGalExtension_1}~{\rm(ii)} and {\rm (iii)}.  
By Lemma~\ref{lem:thepropertyofGalExtension_1} {\rm (ii)}, we have $e(P|P'\cap F)-1>0$ for each $P\in \Pbb_E$ lying over $P'\cap F$, and thus
$$
\deg\parentheses{\sum_{P|P'\cap F}\parentheses{e(P|P'\cap F)-1}P}
\geq \frac{1}{2}\deg\parentheses{\sum_{P|P'\cap F}e(P|P'\cap F)P}
=\frac{1}{2}\deg\parentheses{\Con_{E/F}(P'\cap F)}
=\frac{1}{2}|G|
\geq \frac{5}{2}
>2,
$$ 
which implies $\deg\Diff(E/F)>2$ by Dedekind's different theorem (see \eqref{eq:Dedekinddifferentthm}). Thus, $F$ is a rational function field by Lemma~\ref{lem:is_rational_FF}.

Let $[P_1,\dots,P_{r+2}]$ be a list of rational places of $E$ such that $P_{r+2}{=}P_{\infty}$ and $\sum_{j=1}^{r+2}P_j{=}\Con_{E/F}(P_{\infty}\cap F)$ (i.e., letting $Q_{\infty}=P_{\infty}\cap F$ in Proposition~\ref{prop:sufficientconditionfor_(r,3)_viaHEFF_1}), then the condition $(C1)$ is satisfied. 
As for the selection of $\{P_{1,1},\dots,P_{1,r+2}\},\dots,\{P_{\ell,1},\dots,\allowbreak P_{\ell,r+2}\}$ satifying $(C2)$,
  we use a similar method as that used in the proof of Theorem~\ref{thm:estimation_of_length_EFF_and_cons_2} for the case $2\mid aN(E)$. Recall that the hyperelliptic involution $\iota$ is of order $2$ and commutes with all elements of $\Aut(E/\F_q)$ (see Remark~\ref{rem:hyperelliptic_involution}). Let $G':=<\iota>G=\{\sigma:\sigma \in G\}\cup\{\iota\sigma:\sigma \in G\}$ be a new larger subgroup of order $2|G|=2(r+2)$ of $\Aut(E/\F_q)$. We now consider the function field extension $E/E^{G'}$. Since $F=E^{G}$ is a rational function field, its subfield $E^{G'}$ is also a rational function field by L\"uroth's Theorem (see \cite[Proposition 3.5.9]{stichtenoth2009algebraic}). By the Hurwitz genus formula (see \eqref{eq:kurwitzformulacoro_1}), we have 
\begin{align*}
 2=2g(E)-2=(2g(E^{G'})-2)[E:E^{G'}]+\deg\Diff(E/E^{G'})=-4|G|+\deg\Diff(E/E^{G'}).
\end{align*}
Thus, by Dedekind's different theorem (see \eqref{eq:Dedekinddifferentthm}), there are at most $4|G|+2$ rational places of $E$ that are ramified in $E/E^{G'}$. Therefore, by Remark~\ref{rem:thepropertyofGalExtension_2} {\rm (ii)}, there are at least $l=\ceil{\frac{N(E)-4|G|-2}{2|G|}}$ rational places of $E^{G'}$ that split completely in $E/E^{G'}$ into
$\{T_{1,1},\dots,T_{1,2(r+2)}\}\allowbreak,\dots,\{T_{l,1},\dots, T_{l,2(r+2)}\} \subseteq \Pbb_E^1$, respectively. Since $|G'|=2(r+2)$ and $G'$ acts transitively on each of these sets by Lemma~\ref{lem:thepropertyofGalExtension_1} {\rm (i)}, for each $1\leq i\leq l$ we have 
\begin{align} 
	 \{T_{i,1},\dots,T_{i,2(r+2)}\}=\{\sigma(T_{i,1}):\; \sigma\in G'\}
  &=\{\sigma(T_{i,1}):\; \sigma\in G\}\sqcup \{\iota(\sigma( T_{i,1})):\;\sigma\in G\} 
     \label{eq:202506211855}
     \\ 
  \label{eq:divisionoforbit3}  
	 &=\{\sigma(T_{i,1}):\;\sigma\in G\}\sqcup \{\sigma(\iota (T_{i,1})):\; \sigma\in G\} 
     \\ 
  \label{eq:202506211843} 
	 &\xlongequal{\text{denote by}} \{P_{2i-1,1},\dots,P_{2i-1,r+2}\} \sqcup \{P_{2i,1},\dots,P_{2i,r+2}\}
     \\\label{eq:divisionoforbit4}
	 &= \{P_{2i-1,1},\dots,P_{2i-1,r+2}\}\sqcup \{\iota(P_{2i-1,1}),\dots,\iota(P_{2i-1,r+2})\}, 
\end{align} 
where ``$\sqcup$'' denotes the union without intersection, and \eqref{eq:divisionoforbit4} follows from \eqref{eq:202506211855}.
By \eqref{eq:divisionoforbit3} and \eqref{eq:202506211843}, we get at least $\ell{=}2l{-}1=2\ceil{\frac{N(E)-4|G|-2}{2|G|}}-1$ pairwise disjoint $G$-orbits\footnote{To account for the worst-case scenario, we may, without loss of generality, assume that the last $G$-orbit $\{P_{2l,1},\dots,P_{2l,r+2}\}$ lies over $Q_{\infty}=P_{\infty}\cap F$. We then simply discard this orbit and work with the remaining $\ell=2l-1$ orbits that do not lie over $Q_{\infty}$.}: $\{P_{1,1},\dots,P_{1,r+2}\},\dots,\{P_{\ell,1},\dots,P_{\ell,r+2}\}$, that lie over $\ell$ distinct rational places of $F$ other than $Q_{\infty}=P_{\infty}\cap F$, respectively. 
Moreover, $\{P_{1,1},\dots,P_{1,r+2}\},$$\dots,$$\{P_{\ell,1},\allowbreak \dots,\allowbreak P_{\ell,r+2}\}$ do satisfy the condition $(C2)$ in Proposition~\ref{prop:sufficientconditionfor_(r,3)_viaHEFF_1}; otherwise, for some $1\leq i\leq l$, \eqref{eq:202506211843} would not be a disjoint union by \eqref{eq:divisionoforbit4}, leading to a contradiction. 
Based on the above selection of $\{P_{1,1},\dots,P_{1,r+2}\}$, $\{P_{2,1},\dots,P_{2,r+2}\}, \dots, \{P_{\ell,1},\dots,P_{\ell,r+2}\}$, this theorem is proved.
\end{proof} 
Using Theorem~\ref{thm:estimation_of_length_HEFF_and_cons_1}, we obtain optimal $(4,3)$-LRCs over two classes of finite fields, together with an explicit example. 
%%这个地方实际上用twist技巧也可以做s被2整除的情况，奈何item (i)此时6整除了N(E), item (ii)里面应该有一些情况是可以的, 但没有本质区别，就不写出来了
\begin{corollary}\label{cor:Cons_via_HEFF_(4,3)_p=5}
Let $q$ be a prime power of one of the following forms:
\begin{itemize}
    \item [{\rm(i)}] $q=5^{2s}$ for an odd positive integer $s$;
    \item [{\rm(ii)}] $q=\oq^{2s}$ for an odd positive integer $s$ and a prime power $\oq$ with $\oq\neq 5$ and $\oq\equiv 5,15,21, \text{ or } 23\pmod{24}$.
\end{itemize}
Then for any integers $t,m$ with $1\leq t< m\leq \ell=2\ceil{\frac{q+4\sqrt{q}-25}{12}}-1$, there exists an optimal $(4,3)$-LRC with parameters 
$
    [6m,4t+1, 6(m-t)]_q.
$ 
\end{corollary}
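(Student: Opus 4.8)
The plan is to apply Theorem~\ref{thm:estimation_of_length_HEFF_and_cons_1} to a carefully chosen hyperelliptic function field of genus $2$ over $\F_q$, so the bulk of the work is verifying the three hypotheses of that theorem for a specific $E/\F_q$ and subgroup $G$ of order $6$ (so that $r=|G|-2=4$ and $\delta=3$), and then checking that the resulting bound on $\ell$ matches the stated $\ell=2\lceil\frac{q+4\sqrt{q}-25}{12}\rceil-1$. In both cases~(i) and~(ii), the natural candidate is the curve $\Cfrak:y^2=x^5+x$ of genus $g=2$. First I would observe that this curve is maximal over $\F_{\oq^2}$: in case~(i) we have $\oq=5$, so $2g+1=5$ and $q+1=6$, hence $5\equiv -1\pmod 4$ and also $5=2g+1$; by Lemma~\ref{lem:ismaximalhypercurve_1} (with $q$ there equal to $\oq$) the curve is maximal over $\F_{\oq^2}$. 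In case~(ii), the congruence conditions $\oq\equiv 5,15,21,23\pmod{24}$ are exactly engineered so that $\oq\equiv -1$ or $2g+1=5\pmod{4g}=\pmod 8$ (checking: $5\equiv 5$, $15\equiv 7\equiv -1$, $21\equiv 5$, $23\equiv 7\equiv -1$ mod $8$), so again Lemma~\ref{lem:ismaximalhypercurve_1} gives maximality over $\F_{\oq^2}$. Then since $s$ is odd, Lemma~\ref{lem:maximal_curve_lift} tells us that $\Cfrak$ is maximal over $\F_q=\F_{(\oq^2)^s}$, i.e. $N(E)=q+1+2g\sqrt{q}=q+1+4\sqrt{q}$.

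Next I would produce the subgroup $G\leq\Aut(E/\F_q)$ of order $6$ with $\iota\notin G$. Here I invoke Lemma~\ref{lem:autogroupbuhuangandzhao}, which describes $\Aut(\Cfrak/\F_q)$ for $y^2=x^5+x$: in case~(i), $\Char(\F_q)=5$ so $\Aut(\Cfrak/\F_q)\cong \tilde S_5$, while in case~(ii), provided $\Char(\F_q)\neq 3,5$ (which holds since $\oq\equiv 5,15,21,23\pmod{24}$ forces $\oq$ coprime to $6$, hence $\Char\neq 2,3$, and $\oq\neq 5$ rules out $5$) we get $\Aut(\Cfrak/\F_q)\cong \tilde S_4$. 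One must also check the hypotheses $8\mid(q-1)$ and $2^{1/2}\in\F_q$ of Lemma~\ref{lem:autogroupbuhuangandzhao}: since $q=\oq^{2s}$ is an even power, $\F_q$ contains $\F_{\oq^2}$, and one verifies $8\mid(\oq^2-1)$ and $2^{1/2}\in\F_{\oq^2}$ for the relevant $\oq$ (in case~(i), $\oq^2=25$, $8\mid 24$, and $2$ is a square mod $5$ since $2\equiv 7^2$... actually $2^{1/2}\in\F_{25}$ since $\F_{25}$ contains all of $\F_5$'s quadratic extension; more carefully $2$ is a nonsquare mod $5$ but lies in $\F_{25}$). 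Both $\tilde S_4$ (order $48$) and $\tilde S_5$ (order $240$) contain subgroups of order $6$ — e.g. a lift of an $S_3\leq S_4$ or $S_5$, or a cyclic group of order $6$ — and since the hyperelliptic involution $\iota=[-1]$ is the unique central element of order $2$, one can choose such an order-$6$ subgroup $G$ that does not contain $\iota$ (for instance any $G$ whose image in $S_4$ or $S_5$ has order $6$, since then $|G\cap\langle\iota\rangle|$ would make $|G|$ a multiple of... one needs $G\cap\langle\iota\rangle=\{1\}$, which holds iff the projection $G\to S_n$ is injective on $G$; picking $G$ to be the preimage-splitting of a cyclic $C_6$ that lifts works). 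This is the step I expect to require the most care: explicitly exhibiting, inside the matrix groups $\langle U',V'\rangle$ or $\langle U',V',W'\rangle$ of Lemma~\ref{lem:autogroupbuhuangandzhao}, a subgroup of order exactly $6$ avoiding the scalar matrix $-I=M_\iota$, and confirming it is defined over $\F_q$ (not just over an extension).

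Finally I would verify the remaining two hypotheses of Theorem~\ref{thm:estimation_of_length_HEFF_and_cons_1} and extract the parameters. The condition $|G|\geq 5$ is immediate since $|G|=6$. For $|G|\nmid N(E)$: we have $N(E)=q+1+4\sqrt{q}=(\sqrt q+2)^2-3$, and I would check $6\nmid N(E)$ using the arithmetic of $q=\oq^{2s}$ — e.g. in case~(i), $\sqrt q=5^s$ so $N(E)=5^{2s}+4\cdot 5^s+1\equiv 1+4+1=6\equiv 0$... that is a problem, so one would instead need $6\nmid N(E)$ to genuinely fail, meaning I should double-check: perhaps the correct reading is that we only need $|G|\nmid N(E)$ and if it happens that $6\mid N(E)$ one adjusts the argument, or the theorem's bound $\ell=2\lceil\frac{N(E)-4|G|-2}{2|G|}\rceil-1$ still applies as long as the ramified-place count argument goes through (re-examining the proof of Theorem~\ref{thm:estimation_of_length_HEFF_and_cons_1}, the hypothesis $|G|\nmid N(E)$ is used only to guarantee at least one ramified rational place, which forces $\deg\Diff(E/F)>2$; but maximality already gives plenty of rational places and one can argue $\deg\Diff>2$ directly from $|G|\geq 5$ via the conorm bound as in that proof's first display). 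Granting the hypotheses, Theorem~\ref{thm:estimation_of_length_HEFF_and_cons_1} yields an optimal $(r=4,\delta=3)$-LRC with parameters $[m|G|,tr+1,(m-t)|G|]_q=[6m,4t+1,6(m-t)]_q$ for all $1\leq t<m\leq \ell$ with $\ell=2\lceil\frac{N(E)-4\cdot 6-2}{2\cdot 6}\rceil-1=2\lceil\frac{(q+1+4\sqrt q)-26}{12}\rceil-1=2\lceil\frac{q+4\sqrt q-25}{12}\rceil-1$, which is exactly the claimed bound. This completes the proof in both cases, the only genuinely delicate points being the explicit construction of the order-$6$ automorphism subgroup avoiding $\iota$ and the verification that the needed congruences (maximality via Lemma~\ref{lem:ismaximalhypercurve_1}, and the $8\mid(q-1)$, $2^{1/2}\in\F_q$ conditions) hold for every $\oq$ in the stated residue classes mod $24$.
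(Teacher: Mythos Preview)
Your overall approach matches the paper's: use the curve $y^2=x^5+x$, show it is maximal via Lemmas~\ref{lem:ismaximalhypercurve_1} and~\ref{lem:maximal_curve_lift}, exhibit an order-$6$ subgroup $G\leq\Aut(E/\F_q)$ with $\iota\notin G$, and feed everything into Theorem~\ref{thm:estimation_of_length_HEFF_and_cons_1}. However, there is a genuine gap in how you handle the hypothesis $|G|\nmid N(E)$.

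In case~(i) your arithmetic is wrong. With $s$ odd, $5^s\equiv -1\pmod 6$, so $N(E)=5^{2s}+4\cdot 5^s+1\equiv 1+4(-1)+1\equiv -2\equiv 4\pmod 6$, and hence $6\nmid N(E)$; there is no problem. More importantly, in case~(ii) you have misread the role of the mod~$24$ conditions. Maximality (via Lemma~\ref{lem:ismaximalhypercurve_1}) only requires $\oq\equiv 5$ or $7\pmod 8$, which would allow $\oq\equiv 5,7,13,15,21,23\pmod{24}$. The paper shows by direct computation that for $\oq\equiv 7$ or $13\pmod{24}$ one gets $N(E)\equiv 0\pmod 6$, so these must be excluded; the four residues $5,15,21,23$ are precisely those for which both maximality holds \emph{and} $6\nmid N(E)$. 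You never carry out this check. Finally, your proposed workaround---deducing $\deg\Diff(E/F)>2$ ``directly from $|G|\geq 5$''---does not work: the displayed inequality in the proof of Theorem~\ref{thm:estimation_of_length_HEFF_and_cons_1} bounds the different contribution \emph{at a ramified place}, and without knowing such a place exists (which is exactly what $|G|\nmid N(E)$ guarantees) there is nothing to bound.

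For the construction of $G$ itself, the paper does give explicit matrices: a single cyclic generator of order~$6$ in case~(i), and a dihedral pair $(\sigma_1,\sigma_2)$ of orders $2$ and $3$ in case~(ii), in each case verifying directly that $\iota\notin G$. Your appeal to Lemma~\ref{lem:autogroupbuhuangandzhao} and abstract group theory of $\tilde S_4,\tilde S_5$ is in the right spirit but would need to be made concrete to ensure the subgroup is actually defined over $\F_q$.
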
  
\begin{proof} 
{\rm (i)} We consider the hyperelliptic curve defined by $y^2=x^5+x$ over $\F_q$, where $q=5^{2s}$ for an odd positive integer $s$. It is a maximal hyperelliptic curve with $q+4\sqrt{q}+1$ rational points by Lemma~\ref{lem:ismaximalhypercurve_1} and Lemma~\ref{lem:maximal_curve_lift}. Let $E/\F_q$ be its function field. 
It has an automorphism $\sigma\in \Aut(E/\F_q)$ defined by the associated matrix 
% $\parentheses{\begin{array}{cc}
%     0  & \alpha^{-1} \\
%    \alpha & \alpha
% \end{array}}$ (see \eqref{eq:matrix_resentation_of_HEFF_auto}),
$\parentheses{\begin{array}{cc}
    \alpha  & -1 \\
    -1 & 0
\end{array}}$ (see \eqref{eq:matrix_resentation_of_HEFF_auto}),
where $\alpha\in \F_{25}\subseteq \F_q$ satisfies $\alpha^2=2$.
Let $G:=<\sigma>\leq \Aut(E/\F_q)$.  
It is direct to verify that $|G|=6$ and $\iota \notin G$. Note that $6\nmid N(E)=q+4\sqrt{q}+1=5^{2s}+4\cdot 5^s+1=(6-1)^{2s}+4\cdot(6-1)^s+1$ since $2\nmid s$.
Then by Theorem~\ref{thm:estimation_of_length_HEFF_and_cons_1}, item {\rm (i)} holds. 

{\rm (ii)} We consider the hyperelliptic curve defined over $\F_q$ by the equation $y^2=x^5+x$, where $q=\oq^{2s}$ for an arbitrary odd prime power $\oq\neq 5$ with $\oq\equiv 5 \text{ or } 7\pmod{8}$ and an odd positive integer $s$.  It is a maximal hyperelliptic curve by   Lemma~\ref{lem:ismaximalhypercurve_1} and Lemma~\ref{lem:maximal_curve_lift}. Let $E/\F_q$ be its function field. 
By Lemma~\ref{lem:autogroupbuhuangandzhao} and some concrete computations, we obtain an automorphism $\sigma_1\in \Aut(E/\F_q)$ of order $2$ defined by the associated matrix 
$\parentheses{\begin{array}{cc}
      0& -1\\
    -1& 0
\end{array}}$, and an automorphism $\sigma_2 \in \Aut(E/\F_q)$ of order $3$ defined by the associated matrix 
$\parentheses{\begin{array}{cc}
    2^{-1}(\alpha^2-1)  & 2^{-1}(\alpha-\alpha^3) \\
    2^{-1}(\alpha^3-\alpha)& 2^{-1}(-\alpha^2-1)
\end{array}}$ (see \eqref{eq:matrix_resentation_of_HEFF_auto}),
where $\alpha=u^{\frac{q-1}{8}}$, with $u$ a primitive element of $\F_{q}$. 
It is worth verifying that $\sigma_2:(x\mapsto \frac{(\alpha^2-1)x+(\alpha-\alpha^3)}{(\alpha^3-\alpha)x+(-\alpha^2-1)},y\mapsto \frac{y}{2^{-3}((\alpha^3-\alpha)x+(-\alpha^2-1))^3})$ is indeed an automorphism in $\Aut(E/\F_q)$. 
To this end, we first check that $(\sigma_2(y))^2=(\sigma_2(x))^5+\sigma_2(x)$, which is equivalent to verifying that 
$$2^6y^2=\big((\alpha^2-1)x+(\alpha-\alpha^3)\big)^5\big((\alpha^3-\alpha)x+(-\alpha^2-1)\big)+\big((\alpha^2-1)x+(\alpha-\alpha^3)\big)\big((\alpha^3-\alpha)x+(-\alpha^2-1)\big)^5.$$
Note that $(1+\alpha^2)^2=2\alpha^2$ since $\alpha^4=-1$. 
Multiplying both sides by $(1+\alpha^2)^6$, the above identity to be verified becomes 
$$
(2\alpha^2)^3\cdot 2^6 y^2=(-2x+2\alpha)^5(-2\alpha x-2\alpha^2)+(-2x+2\alpha)(-2\alpha x-2\alpha^2)^5.
$$
The right hand side equals $(-2x+2\alpha)(-2\alpha x-2\alpha^2)((-2x+2\alpha)^4+\alpha^4(-2 x-2\alpha)^4)=-8\alpha^2\cdot 2^6(x^5+x)$, which is equal to the left hand side. 
Define $\sigma_3$ as $\sigma_3:(x\mapsto \frac{(-\alpha^2-1)x+(\alpha^3-\alpha)}{(\alpha-\alpha^3)x+(\alpha^2-1)},y\mapsto \frac{y}{2^{-3}((\alpha-\alpha^3)x+(\alpha^2-1))^3})$. We can similarly verify that $(\sigma_3(y))^2=(\sigma_3(x))^5+\sigma_3(x)$, and that $\sigma_3$ is the inverse of $\sigma_2$. Hence, $\sigma_2$ is indeed an element of $\Aut(E/\F_q)$.
% It is direct to verify that $\sigma_1$, $\sigma_2$ are indeed isomorphisms in $\Aut(E/\F_q)$, satisfying $\sigma_1\sigma_2\sigma_1=\sigma_2^{-1}$. Let $G:=<\sigma_1,\sigma_2>$, an dihedral subgroup of $\Aut(E/\F_q)$ of order $6$.

Let $G:=<\sigma_1,\sigma_2>$. It is a dihedral subgroup of $\Aut(E/\F_q)$ of order $6$, satisfying the relation $\sigma_1\sigma_2\sigma_1=\sigma_2^{-1}$, and it does not contain the hyperelliptic involution $\iota$. To apply Theorem~\ref{thm:estimation_of_length_HEFF_and_cons_1}, we examine under what conditions $N(E)=q+4\sqrt{q}+1$ is not divisible by 6.
\begin{itemize}
    \item $\oq\equiv 5 \pmod 8$. In this case, we consider three subcases $\oq\equiv 5,13,21 \pmod{ 24}$. In these three subcases, we have $N(E)=q+4\sqrt{q}+1=\oq^{2s}+4\oq^s+1\equiv -2,0,4 \pmod{6}$, respectively, where $s$ is an odd positive integer.  
    Thus, when the condition ``$\oq\equiv 5 \pmod 8$'' is strengthened to ``$\oq\equiv 5\text{ or }21 \pmod {24}$'', we have $6\nmid N(E)$.
    \item $\oq\equiv 7 \pmod{8}$. In this case, we consider three subcases $\oq\equiv 7,15,23 \pmod{24}$. In these three subcases, we have $N(E)=q+4\sqrt{q}+1=\oq^{2s}+4\oq^s+1\equiv 0,4,-2\pmod{6}$, respectively, where $s$ is an odd positive integer. Thus, when the condition ``$\oq\equiv 7 \pmod 8$'' is strengthened to ``$\oq\equiv 15\text{ or } 23 \pmod{24}$'', we have $6\nmid N(E)$.
\end{itemize}
Based on the above discussion, the proof is complete by Theorem~\ref{thm:estimation_of_length_HEFF_and_cons_1}.
\end{proof}
Indeed, the above $\ell=2\ceil{\frac{q+4\sqrt{q}-25}{12}}-1$ in Corollary~\ref{cor:Cons_via_HEFF_(4,3)_p=5} is just a worst-case estimation on the number of local repair groups,
when it comes to the explicit constructions over concrete finite fields, sometimes the number of local repair groups can be greater than $\ell$, we give an explicit example to illustrate this. In the following example, we present an optimal $(4,3)$-LRCs over $\F_{25}$ with length $36$, which is greater than the worst-case estimation $\ell\cdot 6=(2\ceil{\frac{25+4\sqrt{25}-25}{12}}-1)\cdot 6=18$.

\begin{example}
We consider the hyperelliptic function field $E/\F_{25}$ defined by the equation $y^2=x^5+x$ over $\F_{25}=\F_5(u)$, where $u$ is a primitive element of $\F_{25}$ satisfying the equation $u^2+4u+2=0$. Let $\alpha=u^3$, which satisfies $\alpha^2=2$.
Let $G$ be a subgroup of $\Aut(E/\F_q)$ of order $6$ generated by $\sigma: (x\mapsto -u^3+\frac{1}{x},\; y\mapsto \frac{y}{x^3})$, whose associated matrix is $\begin{pmatrix}
   u^3&-1\\
   -1&0\\
\end{pmatrix}$. Let $r=4,\delta=3$. With the help of the MAGMA calculator \cite{bosma1997magma}, we select $[P_1,\dots,P_{6}]$, $\{P_{1,1},\dots,P_{1,6}\}$, \dots, $\{P_{6,1},\dots,P_{6,6}\}$ satisfying the conditions in Proposition \ref{prop:sufficientconditionfor_(r,3)_viaHEFF_1} (including the conditions $(C1)$ and $(C2)$) as follows:
\begin{align*}
    [P_{1},\dots P_{6}]=[P_{(u^3 , 0)}, P_{(u^{21} , 0)}, P_{(u^9,0)}, P_{(u^{15},0)}, P_{(0,0)}, P_{\infty}], 
\end{align*} 
\renewcommand{\arraystretch}{1}
\begin{align*}
\left(\begin{array}{ccc}
   P_{1,1}& \cdots & P_{1,6} \\
    \vdots&\ddots& \vdots\\ 
   P_{6,1} &\cdots&P_{6,6}
\end{array}\right)=
 \left(\begin{array}{cccccc}  
 P_{(u^{13},2)}&P_{(u^{22},u^{21})}&P_{(u^7,u^3)}&P_{(u^8,2)}&P_{(u^{10},u^{15})}&P_{(4,u^9)}\\
 P_{(u^4,4)}&P_{(u^{14},u^9)}&P_{(1,u^{15})}&P_{(u^2,u^{15})}&P_{(u^5,1)}&P_{(u^{23},u^9)}\\
 P_{(u^4,1)}&P_{(1,u^3)}&P_{(u^5,4)}&P_{(u^2,u^3)}&P_{(u^{23},u^{21})}&P_{(u^{14},u^{21})}\\
 P_{(u^{17},3)}&P_{(u^{11},u^{15})}&P_{(2,2)}&P_{(u,4)}&P_{(u^{19},u^9)}&P_{(3,1)}\\
 P_{(u^{17},2)}&P_{(u^{19},u^{21})}&P_{(u^{11},u^3)}&P_{(u,1)}&P_{(3,4)}&P_{(2,3)}\\
 P_{(u^7,u^{15})}&P_{(u^8,3)}&P_{(u^{22},u^9)}&P_{(4,u^{21})}&P_{(u^{13},3)}&P_{(u^{10},u^3)}\\
\end{array}\right).
\end{align*}
Based on the above selections and Proposition \ref{prop:sufficientconditionfor_(r,3)_viaHEFF_1}, we define the functions $z,w_0,w_1,w_2,w_3$ as 
% $$
% \begin{cases}
% &z=y(u^5*x^3+u^7*x^2+u^{15}*x+u^5)/(x^5+x)+u^5,\\  
% &w_0=1,\\
% &w_1=u^7/(x^3+u^{15}*x^2+2*x+u^{21})*y+1,\\
% &w_2=u^9/(x^3+u^{21}*x^2+3*x+u^{15})*y+u^{10},\\
% &w_3=u^{21}/(x^3+u^{21}*x^2+4*x)*y+2.\\
% \end{cases}
% $$
$$ 
z=(u^5x^3+u^7x^2+u^{15}x+u^5)/y+u^5,   
w_0=1,\\
w_1=u^7y/(x^3+u^{15}x^2+2x+u^{21})+1, $$
$$
w_2=u^9y/(x^3+u^{21}x^2+3x+u^{15})+u^{10}, 
w_3=u^{21}y/(x^3+u^{21}x^2+4x)+2. 
$$
At last, by Proposition \ref{prop:construction_of_(r,3)from_HEFF_1} (taking $t=1,m=\ell=6$), we get the following $5\times 36$ generator matrix, which generates an optimal $(4,3)$-LRC with parameters $[36,5,30]_{25}$. Here, the vertical lines are used to separate the local repair groups. The parameters of this linear code, including its $(4,3)$-locality, are all verified by the MAGMA calculator \cite{bosma1997magma}.
\setlength{\arraycolsep}{1pt}
\begin{align*}\left[
\begin{array}{cccccc|cccccc|cccccc|}
1&1&1&1&1&1&1&1&1&1&1&1&1&1&1&1&1&1\\
3&u^{19}&u^{5}&u^{17}&u^{5}&2&2&u^{8}&u^{17}&4&u^{4}&u^{9}&0&u^{3}&u^{19}&3&u^{8}&u^{9}\\
u^{13}&u^{13}&u^{20}&3&u^{8}&u^{17}&u^{21}&u^{20}&u^{19}&u^{5}&u^{14}&u^{8}&u^{17}&3&u^{5}&u^{14}&u^{15}&4\\
u^{10}&0&u^{17}&u^{9}&u^{23}&u^{16}&u^{5}&u^{2}&u^{17}&3&u^{19}&u^{8}&u^{2}&u^{13}&u^{21}&1&u^{16}&u^{5}\\
0&0&0&0&0&0&4&4&4&4&4&4&u^{16}&u^{16}&u^{16}&u^{16}&u^{16}&u^{16}\\
% 0&0&0&0&0&0&3&u^{20}&u^{5}&1&u^{16}&u^{21}&0&u^{19}&u^{11}&u^{10}&1&u\\
% 0&0&0&0&0&0&u^{9}&u^{8}&u^{7}&u^{17}&u^{2}&u^{20}&u^{9}&u^{10}&u^{21}&2&u^{7}&u^{4}\\
% 0&0&0&0&0&0&u^{17}&u^{14}&u^{5}&2&u^{7}&u^{20}&3&u^{5}&u^{13}&u^{16}&u^{8}&u^{21}\\
% 0&0&0&0&0&0&1&1&1&1&1&1&u^{8}&u^{8}&u^{8}&u^{8}&u^{8}&u^{8}\\
\end{array}
\right.\sim \qquad\\
\qquad\sim
\left.
\begin{array}{|cccccc|cccccc|cccccc}
1&1&1&1&1&1&1&1&1&1&1&1&1&1&1&1&1&1\\
0&u^{13}&u^{19}&u^{3}&u^{20}&3&2&u^{23}&u^{15}&u^{17}&4&u^{4}&u^{22}&u^{3}&u^{4}&0&4&u^{22}\\
u^{2}&u^{17}&4&u^{11}&3&u^{15}&u^{7}&u^{19}&u^{21}&1&u^{8}&u^{20}&4&u^{19}&u^{3}&u^{21}&u^{3}&u^{15}\\
u^{13}&u^{19}&u^{19}&u^{16}&u^{10}&u&u^{17}&u&u^{21}&u^{8}&u^{10}&u^{21}&u^{13}&u^{23}&4&u^{8}&u&u^{9}\\
u^{7}&u^{7}&u^{7}&u^{7}&u^{7}&u^{7}&u^{15}&u^{15}&u^{15}&u^{15}&u^{15}&u^{15}&u^{11}&u^{11}&u^{11}&u^{11}&u^{11}&u^{11}\\
% 0&u^{20}&u^{2}&u^{10}&u^{3}&u&u^{21}&u^{14}&2&u^{8}&u^{3}&u^{19}&u^{9}&u^{14}&u^{15}&0&u^{23}&u^{9}\\
% u^{9}&1&u^{19}&3&u&u^{22}&u^{22}&u^{10}&4&u^{15}&u^{23}&u^{11}&u^{23}&2&u^{14}&u^{8}&u^{14}&u^{2}\\
% u^{20}&u^{2}&u^{2}&u^{23}&u^{17}&u^{8}&u^{8}&u^{16}&4&u^{23}&u&4&1&u^{10}&u^{23}&u^{19}&4&u^{20}\\
% u^{14}&u^{14}&u^{14}&u^{14}&u^{14}&u^{14}&2&2&2&2&2&2&u^{22}&u^{22}&u^{22}&u^{22}&u^{22}&u^{22}.\\
\end{array}
\right].\end{align*} 
\setlength{\arraycolsep}{5pt}
\end{example}

%\input{Example_HEFF_(4,3)_60over49}
%这个矩阵太大了，暂时扔掉了，太占地方 
\subsection{Construction of Optimal $(g+1-g',g+1+g')$-LRCs via Hyperelliptic Function Fields of Genus $g\geq 2$} \label{sec:4.3}
Inspired by \cite[Theorem 21]{huang2025optimal} where either optimal or almost optimal $(4,2)$-LRCs are presented, we consider the construction of optimal $(r,\delta)$-LRCs via hyperelliptic function fields of genus $g\geq 2$, and have the following result. 
\begin{theorem}\label{thm:Cons_via_HEFF_(g+1-g',g+1+g')_p=2g+1_and_p_neq_2g+1}
     Let $g\geq 2$ be an integer. 
    Let $q$ be a prime power of one of the following forms:
\begin{itemize} 
    \item[{\rm (i)}] $q=(2g+1)^{2s}$ for a positive integer $s$ (in this case, $2g+1$ must be a prime power); 
    
    \item[{\rm (ii)}] $q=\oq^{2s}$ for an odd prime power $\oq$ satisfying $\oq\equiv -1\pmod{2g+1}$ and a positive integer $s$.
\end{itemize}
Then there exists an $(r=g+1-g',\delta=g+1+g')$-LRC with parameters 
$$[n=m(2g+1),k=tr+1,d\geq (m-t)(2g+1)+\min\{0,2g'+1\}]_q$$ 
for any integers $g',t,m$ satisfying $-(g-1)\leq g'\leq g-1$ and $1\leq t<m\leq \ell=\floor{\frac{q+2g\sqrt{q}}{2g+1}}$.
In particular, when $0\leq g'\leq  g-1$, it is an optimal $(r=g+1-g',\delta=g+1+g')$-LRC with parameters 
 $[m(2g+1),tr+1,(m-t)(2g+1)]_q.$
\end{theorem}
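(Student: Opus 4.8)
The plan is to realize the code on an explicit maximal hyperelliptic curve of genus $g$ whose quotient by a well-chosen automorphism group $G$ of order $2g+1$ is the rational function field, and on which the local codes are Reed--Solomon codes. In case (i), where $2g+1=p^{a}$ is necessarily a prime power, I would take $\Cfrak\colon y^{2}=x^{2g+1}+x$; since $x^{2g+1}+x=x^{p^{a}}+x$ is $\F_q$-additive in $x$, the maps $\sigma_{c}\colon(x\mapsto x+c,\;y\mapsto y)$ with $c^{2g+1}+c=0$ form a subgroup $G\leq\Aut(E/\F_q)$ of order $2g+1$ (the roots of $X^{2g+1}+X$ lie in $\F_{(2g+1)^{2}}\subseteq\F_q$, and $x^{2g+1}+x$ is square-free because $p\nmid 2g$). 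In case (ii), where $2g+1\mid q-1$, I would take $\Cfrak\colon y^{2}=x^{2g+1}+1$ with $G=\langle x\mapsto\zeta x,\;y\mapsto y\rangle$ for a primitive $(2g+1)$-th root of unity $\zeta\in\F_q$; again $|G|=2g+1$. By Lemma~\ref{lem:ismaximalhypercurve_1} (resp.\ Lemma~\ref{lem:ismaximalhypercurve_2}) together with Lemma~\ref{lem:maximal_curve_lift}, $\Cfrak$ is maximal over $\F_q$ when $s$ is odd; when $s$ is even I would replace $\Cfrak$ by its quadratic twist $\beta y^{2}=f(x)$ with $\beta$ a nonsquare, which is then maximal over $\F_q$ and still carries $G$, because the equations defining the elements of $G$ move only $x$. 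In every case $\iota\colon(y\mapsto -y)\notin G$ (as $G$ fixes $y$) and $|G|=2g+1\geq 5$.

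Next I would identify the quotient. Because every element of $G$ fixes $y$ and $[E:\F_q(y)]=2g+1=|G|$, the fixed field is $F:=E^{G}=\F_q(y)$, a rational function field; the place $(y=\infty)$ of $F$ is totally ramified in $E/F$ with the single place $P_{\infty}$ above it, so $\Con_{E/F}((y=\infty))=(2g+1)P_{\infty}$ and $v_{P_{\infty}}(y)=-(2g+1)$. Now $E/F$ is a Kummer extension (case (ii)) or a generalized Artin--Schreier extension (case (i)) of $\F_q(y)$, hence ramified only above the zeros and poles of $y^{2}-1$, of $\beta y^{2}-1$, or of the additive right-hand side, as the case may be; a short inspection of those finitely many places shows that the number of \emph{rational} places of $E$ ramifying in $E/F$ equals $3$ exactly in case (ii) with $s$ odd (namely $P_{(0,1)},P_{(0,-1)},P_{\infty}$) and equals $1$ in all remaining cases ($P_{\infty}$ alone; the place over $x=0$ is then either a non-rational degree-$2$ place or sits in an unramified fibre). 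By Lemma~\ref{lem:thepropertyofGalExtension_1} and Remark~\ref{rem:thepropertyofGalExtension_2}~{\rm(ii)}, every unramified rational place of $E$ lies over a place of $F$ that splits completely in $E/F$, so the unramified rational places partition into $G$-orbits of size $2g+1$; substituting $N(E)=q+1+2g\sqrt{q}$ and the residues of $q$ and $\sqrt{q}$ modulo $2g+1$ gives exactly $\ell=\left\lfloor\frac{q+2g\sqrt{q}}{2g+1}\right\rfloor$ such orbits $\Pcal_{1},\dots,\Pcal_{\ell}$, none of which lies over $(y=\infty)$. On each $\Pcal_{i}$ the function $y$ is constant and $x$ takes $2g+1$ distinct values.

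Now I would assemble and analyze the code. Fix $1\leq t<m\leq\ell$, set $r=g+1-g'$, $\delta=g+1+g'$, $\Pcal=\Pcal_{1}\cup\dots\cup\Pcal_{m}$, and $V=\Span_{\F_q}\big(\{x^{i}y^{j}\colon 0\leq i\leq r-1,\ 0\leq j\leq t-1\}\cup\{y^{t}\}\big)$. Since $1,x,\dots,x^{r-1}$ are $F$-linearly independent ($r-1\leq 2g-1<2g+1=[E:F]$), one gets $\dim_{\F_q}V=rt+1$, and $V\subseteq\Lcal_{E}(DP_{\infty})$ with $D=(2g+1)t+\max\{0,-1-2g'\}=(2g+1)t-\min\{0,2g'+1\}$. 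As $0\le D<m(2g+1)=|\Pcal|$ and $P_{\infty}\notin\Pcal$, the evaluation code $\Ccal(\Pcal,V)$ has length $m(2g+1)$, dimension $rt+1$, and minimum distance at least $|\Pcal|-D=(m-t)(2g+1)+\min\{0,2g'+1\}$. Because $y$ is constant on each $\Pcal_{i}$, the restriction $\Ccal(\Pcal,V)|_{\Pcal_{i}}$ is the evaluation of all polynomials in $\F_q[X]$ of degree $<r$ at the $2g+1$ distinct $x$-values of $\Pcal_{i}$, i.e.\ a Reed--Solomon code with parameters $[2g+1,r,2g+2-r]_q=[2g+1,r,\delta]_q$, which is MDS; hence every coordinate lies in a repair group of size $2g+1=r+\delta-1$ whose local code has minimum distance $\delta$, so $\Ccal(\Pcal,V)$ is an $(r,\delta)$-LRC.

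Finally, when $0\leq g'\leq g-1$ the bound above reads $d\geq(m-t)(2g+1)$, while the Singleton-type bound~\eqref{eq:SLboundfor_r_delta}, applied with $k=rt+1$ (so $\lceil k/r\rceil=t+1$) and $r+\delta-1=2g+1$, gives $d\leq n-k+1-t(\delta-1)=(m-t)(2g+1)$; thus $d=(m-t)(2g+1)$ and the code is optimal. I expect the main obstacle to be the middle step: determining uniformly across the sub-cases (case (i) or (ii), $s$ even or odd, twisted or not) exactly which rational places ramify in $E/F$, so that the count of completely split fibres is \emph{precisely} $\left\lfloor\frac{q+2g\sqrt{q}}{2g+1}\right\rfloor$ and not one or two fewer --- this is where the congruences $q\equiv 0$ or $1$ and $\sqrt{q}\equiv 0,\pm1\pmod{2g+1}$ must be balanced against the number of excluded ramified rational places. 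The remaining ingredients are routine Riemann--Roch and Reed--Solomon facts.
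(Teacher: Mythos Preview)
Your proposal is correct and follows essentially the same approach as the paper: the same choice of curves $y^{2}=x^{2g+1}+x$ (case (i)) and $y^{2}=x^{2g+1}+1$ (case (ii)), the same quadratic twists for even $s$, the same automorphism group $G$ of order $2g+1$ acting only on $x$, and the same function space $V$ spanned by $x^{i}y^{j}$. Your identification $F=E^{G}=\F_q(y)$ (via $y\in E^{G}$ and $[E:\F_q(y)]=2g+1=|G|$) is in fact slightly cleaner than the paper's argument, which first proves $F$ is rational via the Hurwitz formula, then picks an abstract $z\in\Lcal_{F}(Q_{\infty})\setminus\F_q$, and only remarks afterwards that one may take $z=y$; similarly, your linear-independence argument for $1,x,\dots,x^{r-1}$ over $F$ (degree $<[E:F]$) is more direct than the paper's valuation computation. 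The orbit counts and the final Singleton-type bound verification agree verbatim with the paper.
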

\begin{proof}
For both {\rm (i)} and {\rm (ii)}, we first consider the case where $s$ is odd. The case where $s$ is even will be handled using a slightly different twisted curve.

{\rm (i)} 
Let $q=(2g+1)^{2s}$ for an odd positive integer $s$.
 We consider the hyperelliptic curve $\Cfrak$ defined over $\F_q$ by the equation $y^2=x^{2g+1}+x$. It is a maximal hyperelliptic curve of genus $g$ with $q+2g\sqrt{q}+1$ rational points by Lemma~\ref{lem:ismaximalhypercurve_1} and Lemma~\ref{lem:maximal_curve_lift}. Let $E/\F_q$ be its function field. 
Since $\F_{(2g+1)^2}\subseteq \F_q$, the equation $\Tr_{(2g+1)^2/(2g+1)}(u)=u^{2g+1}+u=0$ has $2g+1$ distinct roots $\alpha_1,\dots,\alpha_{2g+1}\in \F_{q}$. Let $G:=\{\sigma_i:\; 1\leq i\leq 2g+1\}\subseteq \Aut(E/\F_q)$, where 
$\sigma_i$ is defined by $\sigma_i: (x\mapsto x+\alpha_i,\; y\mapsto y)$.  
Then $G$ is a subgroup of $\Aut(E/\F_q)$ of order $2g+1$. Let $F:=E^{G}$.
For the place at infinity $P_{\infty}$ of $E$, we have $\sigma(P_{\infty})=P_{\infty}$ for any $\sigma \in G$, and thus $e(P_{\infty}|Q_{\infty})=2g+1$ by Lemma~\ref{lem:thepropertyofGalExtension_1}, where $Q_{\infty}:=P_{\infty}\cap F$. By Dedekind's different theorem (see \eqref{eq:Dedekinddifferentthm}), $\deg\Diff(E/F)\geq 2g$, which, along with Lemma \ref{lem:is_rational_FF}, implies that $F$ is a rational function field.   
Take $z\in \Lcal_F(Q_{\infty})\backslash \F_q$. 
We define the vector space $V$ of functions of $E$ that will be used for evaluation by 
\begin{align*} 
V:=\bigg\{a_{0,t}x^0z^t+\sum_{i=0}^{r-1}\sum_{j=0}^{t-1}a_{i,j}x^iz^j:a_{0,t}\in \F_q \text{ and }a_{i,j}\in \F_q \text{ for } 0\leq i\leq r-1, 0\leq j\leq t-1\bigg\}.
\end{align*} 

We now prove that $V$ is of dimension $tr+1$. For this, it suffices to show that $x^0,x^1,\dots,x^{r-1}$ are $F$-linearly independent. Assume towards a contradiction that there exist rational functions $\Frep_{0}(z),\dots,\Frep_{r-1}(z)\in F=\F_q(z)$, not all zero, such that $\sum_{i=0}^{r-1}\Frep_i(z)x^i=0$. By clearing denominators, we may assume that $\Frep_{0}(z),\dots,\Frep_{r-1}(z)$ are polynomials of $z$. Note that $v_{P_{\infty}}(x)=-2$ and $v_{P_{\infty}}(z)=-(2g+1)$ since $(z)^{E}_{\infty}=\Con_{E/F}((z)^{F}_{\infty})=\Con_{E/F}(Q_{\infty})=(2g+1)P_{\infty}$. 
Let $0\leq i_1,i_2\leq r-1$ be two (not necessarily distinct) integers such that $\Frep_{i_1}(z)\neq 0,\Frep_{i_2}(z)\neq 0$, and $v_{P_{\infty}}(\Frep_{i_1}(z)x^{i_1})=v_{P_{\infty}}(\Frep_{i_2}(z)x^{i_2})$. 
Then we have $-(2g+1)\deg(\Frep_{i_1}(z))-2i_1=v_{P_{\infty}}(\Frep_{i_1}(z)x^{i_1})=v_{P_{\infty}}(\Frep_{i_2}(z)x^{i_2})=-(2g+1)\deg(\Frep_{i_2}(z))-2i_2,$ which implies $2i_1\equiv 2i_2 \pmod{2g+1}$, and therefore, $i_1\equiv i_2 \pmod{2g+1}$. Since $0\leq i_1,i_2\leq r-1=g-g'\leq 2g-1$, we have $i_1=i_2$.  
Thus, the valuations $v_{P_{\infty}}(\Frep_i(z)x^i)$ ($0\leq i\leq r-1$) with $\Frep_i(z)\neq 0$ are pairwise distinct, which, together with the strict triangle inequality, leads to a contradiction with the assumption $\sum_{i=0}^{r-1}\Frep_i(z)x^i=0$. Therefore, $x^0,x^1,\dots,x^{r-1}$ are $F$-linearly independent, and then $\dim_{\F_q}(V)=tr+1$.

We then select the rational places that will be used for evaluation.
For any affine rational place $P_{(\alpha,\beta)}$ of $E$, the orbit $G\parentheses{P_{(\alpha,\beta)}}:=\{P_{(\alpha+\alpha_{i},\beta)}:\; 1\leq i\leq 2g+1\}$ is a $G$-orbit of length $2g+1$. Since $E$ has $q+2g\sqrt{q}$ affine rational places, there are totally $\ell=\frac{q+2g\sqrt{q}}{2g+1}$ $G$-orbits of the form $\{P_{(\alpha+\alpha_i,\beta)}: 1\leq i\leq 2g+1\}$. We denote all these $G$-orbits by $\{P_{1,1},\dots,P_{1,2g+1}\}$, \dots, $\{P_{\ell,1},\dots,P_{\ell,2g+1}\}$. The rational places in these $\ell$ orbits lie over $\ell$ distinct rational places of $F$ other than $Q_{\infty}=P_{\infty}\cap F$, respectively. Let $\Pcal:=\{P_{1,1},\dots,P_{1,2g+1},\dots,P_{m,1},\dots,P_{m,2g+1}\}$ be the places for evaluation. 

We define a linear code $\Ccal(\Pcal,V)$ by 
\begin{align*}
\Ccal(\Pcal,V):=\{(\phi(P_{1,1}),\dots,\phi(P_{1,2g+1}),\dots,\phi(P_{m,1}),\dots,\phi(P_{m,2g+1})):\; \phi\in V\}.
\end{align*}
Note that $V\subseteq \Lcal_E(t(2g+1)P_{\infty}+\max\{0,2(r-1)-(2g+1)\}P_{\infty})=\Lcal_E(t(2g+1)P_{\infty}+\max\{0,-2g'-1\}P_{\infty})$ since $(x)_{\infty}^E=2P_{\infty}$ and $(z)_{\infty}^E=(2g+1)P_{\infty}$. By Section \ref{sec:2.1}, the dimension of $\Ccal(\Pcal,V)$ is $k=\dim_{\F_q}(V)=tr+1$ and the minimum distance $d$ of $\Ccal(\Pcal,V)$ satisfies $d\geq m(2g+1)-(t(2g+1)+\max\{0,-2g'-1\})=(m-t)(2g+1)+\min\{0,2g'+1\}$.

To prove that $\Ccal(\Pcal,V)$ is an $(r=g+1-g',\delta=g+1+g')$-LRC, it suffices to note that for any $1\leq i\leq m$, $$\Ccal(\Pcal,V)|_{\{(i-1)(r+\delta-1)+1,\dots,i(r+\delta-1)\}}=\{(\phi(P_{i,1}),\dots,\phi(P_{i,r+\delta-1})):\phi\in \Span_{\F_q}\{1,x,\dots,x^{r-1}\}\},$$ and that $x(P_{i,1}),\dots,x(P_{i,r+\delta-1})$ are pairwise distinct, which imply that $\Ccal(\Pcal,V)|_{\{(i-1)(r+\delta-1)+1,\dots,i(r+\delta-1)\}}$ is a Reed-Solomon code with minimum distance $((r+\delta-1)-(r-1))=\delta$. Invoking the Singleton-type bound~\eqref{eq:SLboundfor_r_delta}, we have $d\leq (m-t)(2g+1)$. Thus, when $0\leq g'\leq g-1$, the minimum distance $d$ is determined to be $(m-t)(2g+1)$ and $\Ccal(\Pcal,V)$ is an optimal $(r=g+1-g',\delta=g+1+g')$-LRC.

As for the case $q=(2g+1)^{2s}$ for an even positive integer $s$, we consider the (twisted) hyperelliptic curve $\Cfrak'/\F_q$ defined by $\gamma y^2=x^{2g+1}+x$, where $\gamma\in \F_q\backslash \{\beta^2:\beta\in \F_q\}$, i.e., a quadratic non-residue in $\F_q$. 
Note that the number of distinct roots in $\F_q$ of $\gamma y^2=\alpha^{2g+1}+\alpha$ and the number of distinct roots in $\F_q$ of $y^2=\alpha^{2g+1}+\alpha$ sum to $2$ for any $\alpha\in \F_q$. Thus, the numbers of affine rational points of $\Cfrak'/\F_q$ and $\Cfrak/\F_q$ (defined by $y^2=x^{2g+1}+x$) sum to $2q$. By Lemmas~\ref{lem:ismaximalhypercurve_1} and~\ref{lem:maximal_curve_lift}, the curve $\Cfrak/\F_q$ is a minimal curve of genus $g$, and then $\Cfrak'/\F_q$ is a maximal hyperelliptic curve of genus $g$. The rest of the proof is similar to the above argument where $s$ is odd, using the curve $\Cfrak'/\F_q$.

{\rm(ii)} Let $q=\oq^{2s}$ for an odd prime power $\oq$ satisfying $\oq\equiv -1\pmod{2g+1}$ and an odd positive integer $s$. We consider the hyperelliptic curve defined over $\F_q$ by $y^2=x^{2g+1}+1$. It is a maximal hyperelliptic curve of genus $g$ by Lemma~\ref{lem:ismaximalhypercurve_2} and Lemma~\ref{lem:maximal_curve_lift}. Let $E/\F_q$ be its function field. 
It has an automorphism $\sigma\in \Aut(E/\F_q)$ defined as $\sigma: (x\mapsto u^{\frac{q-1}{2g+1}}x,\; y\mapsto y)$,
where $u$ is a primitive element of $\F_q$. Let $G:=<\sigma>$, a cyclic group of order $2g+1$, and let $F:=E^{G}$. The rest of the proof is similar to the proof of {\rm (i)}. The only difference worth emphasizing is that the number $\ell$ of $G$-orbits of length $2g+1$ becomes $\ell=\frac{q+2g\sqrt{q}-2}{2g+1}$, since there are two affine rational places $P_{(\alpha,\beta)}$ of $E$ satisfying $\alpha=0$, $P_{(0,1)}$ and $P_{(0,-1)}$.

As for the case $q=\oq^{2s}$ for an odd prime power $\oq$ satisfying $\oq\equiv -1\pmod{2g+1}$ and an even positive integer $s$, we consider the hyperelliptic curve defined over $\F_q$ by $\gamma y^2=x^{2g+1}+1$, where $\gamma$ is a quadratic non-residue in $\F_q$. Arguing similarly to the end of the proof of {\rm(i)}, this is a maximal hyperelliptic curve of genus $g$. The remainder of the proof is similar to the above (the definitions of $E$, $G$, and $F$ are all the same, so we omit them). The only difference worth emphasizing is that the number $\ell$ of $G$-orbits of length $2g+1$ becomes $\ell=\frac{q+2g\sqrt{q}}{2g+1}$, since any affine rational places $P_{(\alpha,\beta)}$ of $E$ satisfies $\alpha\neq 0$.

Note that in all the above four subcases, it holds $\ell=\floor{\frac{q+2g\sqrt{q}}{2g+1}}$. This theorem is proved.
\end{proof} 

\begin{remark}
\phantomsection
\label{rem:Cons_via_HEFF_(g+1-g',g+1+g')}\begin{itemize}

\item[{\rm (i)}] Letting $g=2, g'=-1$ and $2\nmid s$, Theorem~\ref{thm:Cons_via_HEFF_(g+1-g',g+1+g')_p=2g+1_and_p_neq_2g+1} implies \cite[Theorem 21]{huang2025optimal}. 

% \item[{\rm (ii)}] Letting $g=2$ and $g'=0$, Theorem~\ref{thm:Cons_via_HEFF_(g+1-g',g+1+g')_p=2g+1_and_p_neq_2g+1} implies Corollary~\ref{cor:Cons_via_HEFF_(3,3)} as a direct consequence.  

\item[\rm (ii)] When $g=1$ and $g'=0$, the statements in Theorem~\ref{thm:Cons_via_HEFF_(g+1-g',g+1+g')_p=2g+1_and_p_neq_2g+1} still hold by \cite[Theorem 1]{li2019optimal}. Therefore, Theorem~\ref{thm:Cons_via_HEFF_(g+1-g',g+1+g')_p=2g+1_and_p_neq_2g+1} can be viewed as an extension of \cite[Theorem 1]{li2019optimal}. 
\end{itemize}
\end{remark} 
After completing the above proof of Theorem~\ref{thm:Cons_via_HEFF_(g+1-g',g+1+g')_p=2g+1_and_p_neq_2g+1}, we observe that the function $z\in \Lcal_{F}(Q_{\infty})\backslash \F_q=\Lcal_{F}(P_{\infty}\cap F)\backslash \F_q$ can, in fact, be explicitly chosen as $z=y$. This motivates us to explore further constructions. In the next section, we present optimal $(r,\delta)$-LRCs with even longer code lengths, using some superelliptic curves adapted from the Norm-Trace curves. 

% In particular, the item {\rm (i)} of  Theorem~\ref{thm:Cons_via_HEFF_(g+1-g',g+1+g')_p=2g+1_and_p_neq_2g+1} is further generalized (see Theorem~\ref{thm:cons_via_SEFF_2} and Remark~\ref{rem:cons_via_SEFF_2_2}), by using some superelliptic curves from Hermitian curves.
% We end this section with an explicit example (length 92 $(4,4)$-LRC over $\F_{49}$), \NTD
%\input{Example_HEFF_(5,5)_60over49.tex}
%%暂时决定不写了，矩阵太大了，没必要

\section{Constructions of Optimal $(r,\delta)$-LRCs via Superelliptic Curves from Norm-Trace Curves} \label{sec:5} 
In this section, we present constructions of optimal $(r,\delta)$-LRCs via some superelliptic curves from Norm-Trace curves, and particularly the Hermitian curves.    
Before presenting them, we recall the definition and some related properties of superelliptic curves and Norm-Trace curves. We refer to \cite{galbraith2002arithmetic} by Galbraith \textit{et al.} and \cite{geil2003codes} by Geil, respectively.
\begin{definition}[{\cite[Definition 1]{galbraith2002arithmetic}}] \label{def:superelliptic}
Let $\F_q$ be a finite field with $q$ elements. Let \( f(x) \in \F_q[x] \) be a monic\footnote{The polynomial $f(x)$ is allowed to be not monic in this paper; see Remark~\ref{rem:202506232112} for details.} polynomial of degree $N$ such that \( \gcd(f(x), f'(x)) = 1 \), where \( f'(x) \) is the formal derivative of \( f(x) \). Let $M$ be a positive integer such that $\gcd(M,N)=1$ and $\gcd(M,\Char(\F_q))=1$. Then the curve 
$
 \Cfrak:\; y^{M} =f(x)
$
is called a \Emph{superelliptic curve}. 
\end{definition}
% The following lemma gives a characterization of superelliptic curves.
\begin{lemma}[Part of {\cite[Proposition 2]{galbraith2002arithmetic}}] \label{lem:property_of_superelliptic}
    Let $\Cfrak$ be a superelliptic curve over \( \F_q \) as in Definition~\ref{def:superelliptic}. Then
\begin{itemize}
    \item[\rm (i)] $\Cfrak$ is nonsingular as an affine curve.
    \item[\rm (ii)] There is only one point, \( P_\infty \), at infinity on the normalisation of $\Cfrak$ and this point is defined over \( \F_q \).
    \item[\rm (iii)] The genus of $\Cfrak$ is \( \frac{1}{2}(M-1)(N-1) \).
\end{itemize}
\end{lemma}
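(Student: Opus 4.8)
The statement to be proved is Lemma~\ref{lem:property_of_superelliptic}, which collects three standard facts about the superelliptic curve $\Cfrak:\;y^M=f(x)$. Since the paper cites \cite[Proposition 2]{galbraith2002arithmetic} directly, the natural plan is to reproduce a self-contained verification rather than invoke the external reference blindly; the arguments are all local computations at the points of $\Cfrak$, so I would organize the proof point-by-point.

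\emph{Part (i): affine nonsingularity.} I would write $\Cfrak$ as the zero locus of $F(x,y)=y^M-f(x)$ and compute the partial derivatives $\partial F/\partial x=-f'(x)$ and $\partial F/\partial y=My^{M-1}$. A singular affine point would have to satisfy $F=\partial F/\partial x=\partial F/\partial y=0$ simultaneously. From $My^{M-1}=0$ and $\gcd(M,\Char(\F_q))=1$ we get $y=0$; then $F=0$ forces $f(x)=0$, while $\partial F/\partial x=0$ forces $f'(x)=0$, so $x$ would be a common root of $f$ and $f'$, contradicting $\gcd(f(x),f'(x))=1$. Hence there are no affine singular points.

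\emph{Part (ii): the point at infinity.} Here I would pass to the function field $E=\F_q(x,y)$ with $y^M=f(x)$ and analyze the poles of $x$ and $y$. Since $[\F_q(x,y):\F_q(x)]=M$ (the polynomial $T^M-f(x)$ being irreducible over $\F_q(x)$ because $\gcd(M,N)=1$ forces $f$ not to be a perfect $M$-th power times a unit — one should spell this Eisenstein-type / valuation argument out), the pole $P_\infty$ of $x$ in $\F_q(x)$ has ramification governed by $v_\infty(f(x))=-N$. Using $\gcd(M,N)=1$, the place at infinity is totally ramified: there is a single place $P_\infty$ of $E$ above it with $e(P_\infty|P_{\infty,\F_q(x)})=M$, and since $v_{P_\infty}(x)=-M$, $v_{P_\infty}(y)=-N$, and $\gcd(M,N)=1$, the residue field is $\F_q$, so $P_\infty$ is rational. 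That $P_\infty$ is the unique point at infinity on the normalization follows because every other place of $E$ lies over a finite place of $\F_q(x)$, i.e. is affine. I would phrase this in terms of the normalized discrete valuation and Abhyankar-type totally-ramified criteria, or alternatively via the explicit substitution $u=1/x$ and studying the chart at infinity.

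\emph{Part (iii): the genus.} This is the Riemann–Hurwitz computation for the degree-$M$ extension $E/\F_q(x)$. The ramified places are exactly: the finite places $x=\alpha$ with $f(\alpha)=0$ (there are $N$ such $\alpha$ over $\overline{\F_q}$, counted without multiplicity since $f$ is separable by $\gcd(f,f')=1$), each totally ramified with different exponent $M-1$ because $\gcd(M,\Char(\F_q))=1$ (tame ramification, so $d(P|Q)=e(P|Q)-1=M-1$); plus the place at infinity, also totally ramified with different exponent $M-1$. Plugging into $2g(E)-2 = M\cdot(2\cdot 0 - 2) + \deg\Diff(E/\F_q(x))$ gives $2g-2 = -2M + (N+1)(M-1)$, which simplifies to $g = \tfrac{1}{2}(M-1)(N-1)$. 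I would double-check the tameness hypothesis is exactly $\gcd(M,\Char(\F_q))=1$ from Definition~\ref{def:superelliptic}, so Dedekind's different theorem \eqref{eq:Dedekinddifferentthm} holds with equality at every ramified place.

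\textbf{Anticipated main obstacle.} The only genuinely delicate point is the careful justification in Part (ii) that $T^M - f(x)$ is irreducible over $\F_q(x)$ and that the place at infinity is totally and tamely ramified with the stated valuations of $x$ and $y$ — this is where the coprimality hypotheses $\gcd(M,N)=1$ and $\gcd(M,\Char(\F_q))=1$ genuinely do the work, and a sloppy treatment can miss, e.g., the possibility of the infinite place splitting. Everything else (Parts (i) and (iii)) is a routine derivative computation and a Riemann–Hurwitz bookkeeping exercise once the ramification data is pinned down. Given that the result is quoted verbatim from \cite{galbraith2002arithmetic}, an acceptable alternative is simply to cite that reference and remark that the proof is a direct application of the Riemann–Hurwitz formula to the tame cyclic cover $y^M=f(x)$.
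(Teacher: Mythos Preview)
The paper does not give its own proof of this lemma: it is stated as a direct citation of \cite[Proposition 2]{galbraith2002arithmetic}, with no argument supplied. Your self-contained verification is correct and follows the standard route (Jacobian criterion for (i), Kummer-extension ramification analysis at infinity for (ii), and Riemann--Hurwitz with tame ramification for (iii)); in fact your closing remark already anticipates exactly what the paper does, namely cite the reference and move on.
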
 
\begin{remark}\label{rem:202506232112}
    The polynomial $f(x)$ in Definition~\ref{def:superelliptic} is required to be monic. However, Lemma~\ref{lem:property_of_superelliptic} remains valid even without this condition. Indeed, assume that $f(x)$ is not monic and satisfies all other conditions in Definition~\ref{def:superelliptic}.
     Since $\gcd(M,N)=1$, we can transform the equation $y^M=f(x)$ into a new equation $y^M=F(x)$ with $F(x)$ monic and satisfy all conditions in Definition~\ref{def:superelliptic} 
    by an invertible polynomial map $\sigma: (x\mapsto \alpha x,\; y\mapsto \beta y)$, where $\alpha,\beta\in \F_q^*$. This is an $\F_q$-isomorphism, which does not affect the properties listed in Lemma~\ref{lem:property_of_superelliptic}. In what follows, we do not require $f(x)$ to be monic, for simplicity.
\end{remark}
Let $E/\F_q$ be the function field of a superelliptic curve $\Cfrak/\F_q$. For an affine rational point $(\alpha,\beta)\in \Cfrak(\F_q)$, we denote its corresponding rational place of $E$ by $P_{(\alpha,\beta)}$, which is the unique common zero of $x-\alpha$ and $y-\beta$. For the point at infinity $P_{\infty}$, we still denote its corresponding rational place of $E$ by $P_{\infty}$, which is the unique common pole of $x$ and $y$. 

Next, we briefly review the Norm-Trace curves. 
\begin{definition}
Let $\oq$ be a prime power and $s$ be a positive integer. The \Emph{Norm-Trace} curve \( \Xcal_{\oq,s} \) over \( \F_{\oq^{s}} \) is defined  by the equation
$
y^{\frac{\oq^{s} - 1}{\oq - 1}} = x^{\oq^{s-1}} + x^{\oq^{s-2}} + \cdots + x^{\oq^0}.
$
 When $s=2$, it is the well-known \Emph{Hermitian curve}.
\end{definition} 
It has exactly one point at infinity, along with $\oq^{2s-1}$ affine rational points. Its genus is
$
\frac{\oq^{s-1} - 1}{2} \left( \frac{\oq^{s} - 1}{\oq - 1} - 1 \right).
$  
\subsection{A General Framework for Constructing Optimal $(r,\delta)$-LRCs via Superelliptic Curves} 
\label{sec:5.1}
We are now ready to introduce our general framework for constructing optimal $(r,\delta)$-LRCs via superelliptic curves. The use of superelliptic curves here is for convenience only; the framework can naturally be extended to a broader class of curves.
\begin{proposition}\label{prop:construction_of_(r,delta)from_SEFF}
Let $\Cfrak$ be a superelliptic curve defined over $\F_q$ by an equation of the form $\gamma y^M=x^N+(\text{lower order terms of } x),$ 
where $\gamma\in \F_q^*$. Let $E/\F_q$ be its function field. Then the following two classes of optimal $(r,\delta)$-LRCs exist under certain assumptions. 

\begin{itemize} 
  \item[{\rm (i)}] Assume $M<N$. Let $r=\floor{\frac{N-1}{M}}+1-b'$ and $\delta=N+1-r\geq 2$ for an integer $0\leq b'\leq \floor{\frac{N-1}{M}}-1$. Assume also that there exist pairwise disjoint sets of affine rational places of $E$: 
$\{P_{1,1},\dots,P_{1,r+\delta-1}\},\dots,\{P_{\ell,1},\dots,P_{\ell,r+\delta-1}\}$, such that for each $1\leq i\leq \ell$, $x(P_{i,1}),\dots,x(P_{i,r+\delta-1})$ are pairwise distinct and $y(P_{i,1})=\dots=y(P_{i,r+\delta-1})$. Then for any $1\leq t<m\leq \ell$, there exists 
an optimal $(r=\floor{\frac{N-1}{M}}+1-b',\delta=N+1-r)$-LRC with parameters
$
   [mN,tr+1,(m-t)N]_{q}.
$ 

\item[{\rm (ii)}] Assume $M>N$. Let $r=\floor{\frac{M-1}{N}}+1-b'$ and $\delta=M+1-r\geq 2$ for an integer $0\leq b'\leq \floor{\frac{M-1}{N}}-1$. Assume also that there exist pairwise disjoint sets of affine rational places of $E$: 
$\{P_{1,1},\dots,P_{1,r+\delta-1}\},\dots,\{P_{\ell,1},\dots,P_{\ell,r+\delta-1}\}$ such that for each $1\leq i\leq \ell$, $y(P_{i,1}),\dots,y(P_{i,r+\delta-1})$ are pairwise distinct and $x(P_{i,1})=\dots=x(P_{i,r+\delta-1})$. Then for any $1\leq t<m\leq \ell$, there exists an optimal $(r=\floor{\frac{M-1}{N}}+1-b',\delta=M+1-r)$-LRC with parameters
$
   [mM,tr+1,(m-t)M]_{q}.
$
\end{itemize}
\end{proposition}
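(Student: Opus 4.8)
The plan is to give a self-contained evaluation-code argument that closely parallels the proof of Theorem~\ref{thm:Cons_via_HEFF_(g+1-g',g+1+g')_p=2g+1_and_p_neq_2g+1} {\rm (i)}, since the two parts {\rm (i)} and {\rm (ii)} are symmetric under swapping the roles of $x$ and $y$ (and of $M$ and $N$). I will write out part {\rm (i)} in full and then remark that {\rm (ii)} follows by the same reasoning with $x \leftrightarrow y$. Throughout I may assume, via Remark~\ref{rem:202506232112}, that the defining equation is of superelliptic type, so Lemma~\ref{lem:property_of_superelliptic} applies: $E/\F_q$ is the function field of a nonsingular affine curve with a unique rational place $P_{\infty}$ at infinity, and $\gcd(M,N)=1$, $\gcd(M,\Char(\F_q))=1$.

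First I would pin down the pole orders at $P_\infty$. From the equation $\gamma y^M = x^N + (\text{lower order terms})$ together with $\gcd(M,N)=1$, one gets $v_{P_\infty}(x) = -M$ and $v_{P_\infty}(y) = -N$ (this is the standard computation: if $v_{P_\infty}(x)=-a$ and $v_{P_\infty}(y)=-b$, then $Mb = Na$, and by coprimality $M\mid a$, $N\mid b$; minimality and the single-place-at-infinity fact force $a=M$, $b=N$). Then I would introduce the evaluation space: set $r=\lfloor\frac{N-1}{M}\rfloor+1-b'$, $\delta = N+1-r$, and define
\[
V:=\Big\{a_{0,t}\,y^{t}+\sum_{i=0}^{r-1}\sum_{j=0}^{t-1}a_{i,j}\,x^{i}y^{j}:\ a_{0,t}\in\F_q,\ a_{i,j}\in\F_q\text{ for }0\le i\le r-1,\ 0\le j\le t-1\Big\}.
\]
The key linear-algebra claim is $\dim_{\F_q}V = tr+1$, which reduces to showing $1,x,\dots,x^{r-1}$ are linearly independent over $\F_q(y)$. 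I would prove this exactly as in the genus-$g$ theorem: assume $\sum_{i=0}^{r-1}f_i(y)x^i=0$ with $f_i\in\F_q[y]$ not all zero, clear denominators, and compare valuations at $P_\infty$. Since $v_{P_\infty}(f_i(y)x^i) = -N\deg f_i - Mi$, any two equal valuations would force $Mi_1 \equiv Mi_2 \pmod N$, hence $i_1\equiv i_2\pmod N$ (using $\gcd(M,N)=1$); and because $0\le i\le r-1 = \lfloor\frac{N-1}{M}\rfloor - b' \le \lfloor\frac{N-1}{M}\rfloor \le N-1$, the indices are distinct mod $N$, so the valuations $v_{P_\infty}(f_i(y)x^i)$ for the nonzero $f_i$ are pairwise distinct, and the strict triangle inequality contradicts the vanishing of the sum.

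Next I would bound the pole divisor: every function in $V$ lies in $\Lcal_E(D)$ where $D = \big(tN + \max\{0, M(r-1) - N\}\big)P_\infty$, because $v_{P_\infty}(x^i y^j) = -Mi - Nj \ge -M(r-1) - N(t-1)$ and $v_{P_\infty}(y^t) = -Nt$. A short check shows $M(r-1) \le M\lfloor\frac{N-1}{M}\rfloor \le N-1 < N$, so in fact $\max\{0, M(r-1)-N\}=0$ and $V\subseteq \Lcal_E(tN\cdot P_\infty)$. Now form the code $\Ccal(\Pcal,V)$ on the ordered evaluation set $\Pcal=\{P_{1,1},\dots,P_{1,r+\delta-1},\dots,P_{m,1},\dots,P_{m,r+\delta-1}\}$; it has length $n=mN=m(r+\delta-1)$, dimension $k = \dim_{\F_q}V = tr+1$ (the evaluation map is injective because $n - \deg D = mN - tN = (m-t)N > 0$), and minimum distance $d \ge n - \deg D = (m-t)N$. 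For the locality: since $y$ is constant on each block $\{P_{i,1},\dots,P_{i,r+\delta-1}\}$, the restriction of $\Ccal(\Pcal,V)$ to the $i$-th block equals the evaluation of $\Span_{\F_q}\{1,x,\dots,x^{r-1}\}$ at $r+\delta-1$ points with pairwise distinct $x$-values, i.e.\ a Reed--Solomon code of length $r+\delta-1$ and dimension $r$, hence minimum distance $\delta$; this gives the $(r,\delta)$-locality. Finally, the Singleton-type bound~\eqref{eq:SLboundfor_r_delta} gives $d \le n - k + 1 - (\lceil k/r\rceil - 1)(\delta-1) = (m-t)N$, forcing equality, so $\Ccal(\Pcal,V)$ is an optimal $(r,\delta)$-LRC with parameters $[mN, tr+1, (m-t)N]_q$. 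For part {\rm (ii)}, one repeats verbatim with $x$ and $y$ interchanged (so $v_{P_\infty}(y) = -N$, $v_{P_\infty}(x) = -M$, evaluation space spanned by powers of $y$ times powers of $x$, blocks having constant $x$-value and distinct $y$-values), using $M>N$ in place of $N>M$; the condition $\gcd(M,N)=1$ is symmetric, so nothing new is needed.

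I expect the only mildly delicate point to be the valuation bookkeeping at $P_\infty$ — specifically justifying $v_{P_\infty}(x)=-M$, $v_{P_\infty}(y)=-N$ from the coprimality and uniqueness of the place at infinity, and then verifying $M(r-1)<N$ so that the pole bound is exactly $tN\cdot P_\infty$; everything else is a routine adaptation of the already-proven hyperelliptic construction. The existence of the required pairwise-disjoint blocks is assumed in the hypotheses, so no separate geometric argument is needed here — that burden is deferred to the later theorems (Theorem~\ref{thm:cons_via_SEFF_1}, Theorem~\ref{thm:cons_via_SEFF_2}) that instantiate this proposition.
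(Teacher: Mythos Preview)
Your proposal is correct and follows essentially the same approach as the paper's own proof: define the evaluation space $V$ spanned by $y^t$ and $x^iy^j$, verify $\dim_{\F_q}V=tr+1$ via a valuation argument at $P_\infty$ using $\gcd(M,N)=1$ and $r-1<N/M$, check $V\subseteq\Lcal_E(tN\cdot P_\infty)$, establish the $(r,\delta)$-locality from the Reed--Solomon structure on each block, and conclude optimality via the Singleton-type bound. The only cosmetic differences are that the paper justifies $v_{P_\infty}(x)=-M$ and $v_{P_\infty}(y)=-N$ using the degree formula $v_{P_\infty}(x)=-[E:\F_q(x)]$ rather than your coprimality-and-minimality sketch, and it proves $\F_q$-linear independence of all $tr+1$ monomials directly (showing their $P_\infty$-valuations are pairwise distinct) rather than first reducing to $\F_q(y)$-independence of $1,x,\dots,x^{r-1}$; both variants are valid and lead to the same conclusion.
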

\begin{proof}
{\rm (i)} 
Let $\Pcal:=\{P_{1,1},\dots,P_{1,r+\delta-1},\dots,P_{m,1},\dots,P_{m,r+\delta-1}\}$, and $V:=\{a_{0,t}x^0y^t+\sum_{i=0}^{r-1}\sum_{j=0}^{t-1}a_{i,j}x^iy^j:\; a_{0,t}\in \F_{q}\text{ and }a_{i,j}\in \F_{q} \text{ for } 0\leq i\leq r-1,0\leq j\leq t-1\}$. Define a linear code $\Ccal(\Pcal,V)$ by
\begin{align*} \Ccal(\Pcal,V):=\left\{(\phi(P_{1,1}),\dots,\phi(P_{1,r+\delta-1}),\dots,\phi(P_{m,1}),\dots,\phi(P_{m,r+\delta-1})):\; \phi\in V\right\}.
\end{align*}

First, we prove that $\Ccal(\Pcal,V)$ is an $(r=\floor{\frac{N-1}{M}}+1-b',\delta=N+1-r)$-LRC. For this, it suffices to note that for any $1\leq i\leq m$, $\Ccal(\Pcal,V)|_{\{(i-1)(r+\delta-1)+1,\dots,i(r+\delta-1)\}}=\{(\phi(P_{i,1}),\dots,\phi(P_{i,r+\delta-1})):\; \phi\in \Span_{\F_q}\{x^0,x^1,\dots,x^{r-1}\}\},$ and that $x(P_{i,1}),\dots,x(P_{i,r+\delta-1})$ are pairwise distinct, which imply that $\Ccal(\Pcal,V)|_{\{(i-1)(r+\delta-1)+1,\dots,i(r+\delta-1)\}}$ is a Reed-Solomon code with minimum distance $((r+\delta-1)-(r-1))=\delta$.  

Next, we prove that $V$ has dimension $tr+1$. To this end, it suffices to show that the $tr+1$ functions $x^0y^t,x^iy^j(0\leq i\leq r-1,0\leq j\leq t-1)$, which span $V$, have pairwise distinct valuations at the place at infinity $P_{\infty}$. Note that the functions $x$ and $y$ each have a unique pole $P_{\infty}$. We have $v_{P_{\infty}}(x)=-[E:\F_{q}(x)]=-[\F_q(x,y):\F_{q}(x)]=-M$ and $v_{P_{\infty}}(y)=-[E:\F_{q}(y)]=-[\F_q(x,y):\F_{q}(y)]=-N$ by \cite[Theorem 1.4.11]{stichtenoth2009algebraic}. Thus, $v_{P_{\infty}}(x^iy^j)=-Mi-Nj$.
Since $0< r-1=\floor{\frac{N-1}{M}}-b'<\frac{N}{M}$, the functions $x^0y^t,x^iy^j(0\leq i\leq r-1,0\leq j\leq t-1)$ have pairwise distinct valuations at $P_{\infty}$. Therefore, these $tr+1$ functions are $\F_{q}$-linearly independent by the strict triangle inequality. Then we have $\dim_{\F_{q}}(V)=tr+1$. 

As a by-product of the above argument, we also obtain $V\subseteq \Lcal_{E}(tNP_{\infty})$. 
Hence, $\Ccal(\Pcal,V)$ has dimension $tr+1$ and minimum distance at least $(m-t)N$ by Section~\ref{sec:2.1}. By the Singleton-type bound~\eqref{eq:SLboundfor_r_delta}, we conclude that the minimum distance of $\Ccal(\Pcal,V)$ is exactly $(m-t)N$, and $\Ccal(\Pcal,V)$ is an optimal $(r=\floor{\frac{N-1}{M}}+1-b',\delta=N+1-r)$-LRC. 

{\rm (ii)} By interchanging $x$ and $y$, and swapping $M$ with $N$ in the above proof of {\rm (i)}, we can prove {\rm (ii)}.
\end{proof}

% \subsection{Construction of Optimal $(r,\delta)$-LRCs via Hyperelliptic Curves of genus $g\geq 2$}
\subsection{Construction of Optimal $(r,\delta)$-LRCs via Superelliptic Curves from Norm-Trace Curves}
In this subsection, we present constructions of optimal $(r,\delta)$-LRCs based on Proposition~\ref{prop:construction_of_(r,delta)from_SEFF}. Before proceeding, we introduce the curves that will be used in our constructions.
\begin{lemma}\label{lem:good_superelliptic_curve_1}
 Let $q=\oq^s$ for a prime power $\oq$ and an integer $s\geq 2$. Let $b$ be a positive divisor of $\frac{\oq^{s}-1}{\oq-1}$, and let $c$ be a positive divisor of $s$. Denote $M=\frac{\oq^s-1}{b(\oq-1)}$ and $N=\oq^{s-c}$. Then the curve $\Cfrak$ defined over $\F_{q}$ by 
$$y^{M}=x^{N}+x^{\oq^{s-2c}}+\dots+x^{\oq^c}+x^{\oq^0}=\Tr_{\oq^s/\oq^c}(x)$$ is a superelliptic curve of genus $\frac{(M-1)(N-1)}{2}$. It has $\parentheses{\gcd(b,\frac{\oq^c-1}{\oq-1})\cdot\frac{q(q-1)}{b\oq^c}+\frac{q}{\oq^c}+1}$ rational points, including one rational point at infinity. 
\end{lemma}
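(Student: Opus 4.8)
The plan is to verify the three asserted facts in turn: that $\Cfrak$ is a superelliptic curve, that its genus is $\tfrac{(M-1)(N-1)}{2}$, and that it has the stated number of rational points. The first two are essentially bookkeeping once the hypotheses of Definition~\ref{def:superelliptic} are checked, so the genuine work is the point count. First I would set $f(x):=\Tr_{\oq^s/\oq^c}(x)=x^{\oq^{s-c}}+x^{\oq^{s-2c}}+\dots+x^{\oq^c}+x$, a polynomial of degree $N=\oq^{s-c}$ (note $c\mid s$, so the exponents are $\oq^{jc}$ for $0\le j\le s/c-1$). I would check $\gcd(f(x),f'(x))=1$: since $\Char(\F_q)\mid \oq^{jc}$ for every $j\ge 1$, the derivative is simply $f'(x)=1$, hence the gcd is trivially $1$ and $f$ is separable. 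Next, $\gcd(M,\Char(\F_q))=1$ because $M=\frac{\oq^s-1}{b(\oq-1)}$ divides $\oq^s-1$, which is coprime to $\oq$; and $\gcd(M,N)=1$ because $N=\oq^{s-c}$ is a power of $\oq$ while $M\mid \oq^s-1$. By Remark~\ref{rem:202506232112} we need not worry that $f$ is already monic (it is). Thus $\Cfrak$ is a superelliptic curve, and Lemma~\ref{lem:property_of_superelliptic} immediately gives that it is nonsingular as an affine curve, has a unique rational point at infinity, and has genus $\tfrac12(M-1)(N-1)$.

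For the rational-point count I would count affine $\F_q$-rational points $(\alpha,\beta)$ on $y^M=f(x)$ and then add $1$ for $P_\infty$. Fix $\alpha\in\F_q=\F_{\oq^s}$ and consider $t:=f(\alpha)=\Tr_{\oq^s/\oq^c}(\alpha)\in\F_{\oq^c}$. The number of $\beta\in\F_q$ with $\beta^M=t$ is: $1$ if $t=0$; and if $t\ne 0$, it equals $\gcd(M,q-1)$ when $t$ is an $M$-th power in $\F_q^*$ and $0$ otherwise. So the key subcomputations are (a) $\gcd(M,q-1)$, (b) for how many $\alpha$ is $\Tr_{\oq^s/\oq^c}(\alpha)=0$, and (c) among the nonzero trace values, how many $\alpha$ give a value that is an $M$-th power in $\F_q^*$. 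For (a): since $M\mid\frac{q-1}{\oq-1}\mid q-1$, we get $\gcd(M,q-1)=M$. For (b): the trace map $\Tr_{\oq^s/\oq^c}:\F_{\oq^s}\to\F_{\oq^c}$ is surjective $\F_{\oq^c}$-linear, so exactly $q/\oq^c=\oq^{s-c}$ elements $\alpha$ have trace $0$, contributing $\oq^{s-c}$ affine points (one $\beta=0$ for each). For (c): for each of the $\oq^c-1$ nonzero values $t\in\F_{\oq^c}^*$, there are $q/\oq^c=\oq^{s-c}$ values of $\alpha$ with $\Tr_{\oq^s/\oq^c}(\alpha)=t$; such a $t$ is an $M$-th power in $\F_q^*=\F_{\oq^s}^*$ iff $t^{(q-1)/M}=1$, i.e. iff $t$ lies in the unique subgroup of $\F_q^*$ of index $M$. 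Since $t\in\F_{\oq^c}^*$, this amounts to asking when the image of $\F_{\oq^c}^*$ in $\F_q^*/(\F_q^*)^M$ is trivial on $t$; the number of such $t$ is $\gcd\!\bigl(\tfrac{q-1}{M},\,\oq^c-1\bigr)$ — equivalently, using $\tfrac{q-1}{M}=b(\oq-1)$, it is $\gcd\!\bigl(b(\oq-1),\oq^c-1\bigr)$. I would then simplify: since $\oq-1\mid \oq^c-1$ and $\oq-1\mid b(\oq-1)$, one has $\gcd(b(\oq-1),\oq^c-1)=(\oq-1)\gcd\!\bigl(b,\tfrac{\oq^c-1}{\oq-1}\bigr)$ (this identity should be checked carefully — see below). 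Assembling: the affine count is $\oq^{s-c}+\bigl[(\oq-1)\gcd(b,\tfrac{\oq^c-1}{\oq-1})\bigr]\cdot\oq^{s-c}\cdot M = \tfrac{q}{\oq^c}+\gcd(b,\tfrac{\oq^c-1}{\oq-1})\cdot(\oq-1)\cdot\tfrac{q}{\oq^c}\cdot\tfrac{q-1}{b(\oq-1)}=\tfrac{q}{\oq^c}+\gcd(b,\tfrac{\oq^c-1}{\oq-1})\cdot\tfrac{q(q-1)}{b\oq^c}$, and adding $P_\infty$ gives exactly the claimed total.

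The step I expect to be the main obstacle is (c), the counting of nonzero trace values that are $M$-th powers, and in particular the clean extraction of the factor $\gcd\!\bigl(b,\tfrac{\oq^c-1}{\oq-1}\bigr)$. The subtlety is that $t$ ranges over $\F_{\oq^c}^*$ but the $M$-th-power condition is imposed in the larger field $\F_{\oq^s}^*$, so I must argue that $t\in\F_{\oq^c}^*$ is an $M$-th power in $\F_{\oq^s}^*$ iff it is an $M'$-th power in $\F_{\oq^c}^*$ for the appropriate $M'$, or directly count via $t^{(\oq^s-1)/M}=1$ with $t$ of order dividing $\oq^c-1$; this reduces to $t^{\gcd((\oq^s-1)/M,\;\oq^c-1)}=1$, and the count of such $t$ in $\F_{\oq^c}^*$ is precisely that gcd. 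Then I need the number-theoretic identity $\gcd\!\bigl(b(\oq-1),\oq^c-1\bigr)=(\oq-1)\gcd\!\bigl(b,\tfrac{\oq^c-1}{\oq-1}\bigr)$, which is \emph{not} automatic in general — it uses that $b\mid\tfrac{\oq^s-1}{\oq-1}$ and $c\mid s$, and one must check, e.g. prime by prime, that no prime dividing $\oq-1$ creates an extra common factor with $\tfrac{\oq^c-1}{\oq-1}$ beyond what is already counted; I would either prove this lemma separately or absorb it into the statement as the authors have done. Everything else — the genus formula, the count of trace-zero elements, the value of $\gcd(M,q-1)$ — is routine given the earlier results.
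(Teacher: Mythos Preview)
Your argument is correct and complete. It takes the dual route to the paper's: you fix $\alpha$ and count $\beta$, while the paper fixes $\beta$ and counts $\alpha$. Concretely, the paper observes that for each $\beta\in\F_{\oq^s}$ with $\beta^M\in\F_{\oq^c}$ there are exactly $\oq^{s-c}$ choices of $\alpha$ (and none otherwise), so it directly counts those $\beta$: writing $\beta=u^i$, the condition $(u^{iM})^{\oq^c-1}=1$ becomes $\tfrac{b}{\gcd(b,\,(\oq^c-1)/(\oq-1))}\mid i$, and the answer $1+\gcd(b,\tfrac{\oq^c-1}{\oq-1})\cdot\tfrac{\oq^s-1}{b}$ drops out with no separate gcd identity to verify. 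Your route is equally valid and arguably more transparent about where the $M$-th-power condition enters; the paper's is a touch slicker in that the $\gcd(b,\tfrac{\oq^c-1}{\oq-1})$ factor appears automatically.

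One remark: the identity you flagged as the ``main obstacle,'' namely $\gcd\!\bigl(b(\oq-1),\,\oq^c-1\bigr)=(\oq-1)\gcd\!\bigl(b,\tfrac{\oq^c-1}{\oq-1}\bigr)$, is in fact true for \emph{any} positive integers and needs no hypothesis on $b$ or $c$. With $d=\oq-1$ and $e=\tfrac{\oq^c-1}{\oq-1}$ it reads $\gcd(bd,de)=d\gcd(b,e)$, which holds prime-by-prime since $\min(v_p(b)+v_p(d),\,v_p(d)+v_p(e))=v_p(d)+\min(v_p(b),v_p(e))$. So your step (c) goes through unconditionally and there is no obstacle at all.
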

% \begin{remark}
%   Some researchers have studied some good linear codes based on so-called \Emph{extended Norm-Trace curve} defined over $\F_{\oq^{s}}$ by the equation $y^{M}=x^{\oq^{s-1}}+x^{\oq^{s-2}}+\dots+x^{\oq^{0}}$, where $M\mid \frac{\oq^s-1}{\oq-1}$. See, for example, \cite{brasamor2007extended,heeralal2017parameters,carvalho2024decreasing}.  
%   The setting in Lemma \ref{lem:good_superelliptic_curve_1} is more general, allowing for modifications on the right-hand side of the equation as well.
% \end{remark}
\begin{proof} 
 The genus of the superelliptic curve $\Cfrak$ is $\frac{(M-1)(N-1)}{2}$ by Lemma \ref{lem:property_of_superelliptic}. We now count the affine rational points of $\Cfrak/\F_{q}=\Cfrak/\F_{\oq^s}$. That is, the number of pairs $(\alpha,\beta)\in \F_{\oq^s}^2$ satisfying $\beta^{\frac{\oq^s-1}{b(\oq-1)}}=\alpha^{\oq^{s-c}}+\alpha^{\oq^{s-2c}}+\dots+\alpha^{\oq^{c}}+\alpha^{\oq^0}=\Tr_{\oq^s/\oq^c}(\alpha)$.

By the property of the trace map $\Tr_{\oq^s/\oq^c}$ from $\F_{\oq^s}$ to $\F_{\oq^c}$, for any $\beta\in \F_{\oq^s}$ satisfying $\beta^{\frac{\oq^s-1}{b(\oq-1)}}\in \F_{\oq^c}$, there are exactly $\oq^{s-c}$ distinct $\alpha\in \F_{\oq^s}$, such that $\beta^{\frac{\oq^s-1}{b(\oq-1)}}=\Tr_{\oq^s/\oq^c}(\alpha)$. Also, for any $\beta\in \F_{\oq^s}$ satisfying $\beta^{\frac{\oq^s-1}{b(\oq-1)}}\notin \F_{\oq^c}$, there exists no $\alpha\in \F_{\oq^s}$ such that $\beta^{\frac{\oq^s-1}{b(\oq-1)}}=\Tr_{\oq^s/\oq^c}(\alpha)$. Therefore, to determine the number of affine rational points of $\Cfrak/\F_{\oq^s}$, we only need to count the elements $\beta\in \F_{\oq^s}$ satisfying $\beta^{\frac{\oq^s-1}{b(\oq-1)}}\in \F_{\oq^c}$, and then multiply the result by $\oq^{s-c}$. We consider the following two cases.
\begin{itemize}
    \item $\beta=0$. Clearly, $\beta^{\frac{\oq^s-1}{b(\oq-1)}}=0\in \F_{\oq^c}$.
    \item $\beta=u^i$ for a primitive element $u$ of $\F_{\oq^s}$ and an integer $0\leq i\leq \oq^s-2$. Then 
    $\beta^{\frac{\oq^s-1}{b(\oq-1)}}\in \F_{\oq^c}$ if and only if $(u^{i\frac{\oq^s-1}{b(\oq-1)}})^{\oq^c-1}=1$, which is equivalent to $\frac{b}{\gcd(b,\frac{\oq^c-1}{\oq-1})}\mid i$. 
\end{itemize}
Thus, the number of elements $\beta\in \F_{\oq^s}$ satisfying $\beta^{\frac{\oq^s-1}{b(\oq-1)}}\in \F_{\oq^c}$ is $1+(\oq^s-1)/\frac{b}{\gcd(b,\frac{\oq^c-1}{\oq-1})}=\gcd(b,\frac{\oq^c-1}{\oq-1})\cdot\frac{(\oq^s-1)}{b}+1$. Multiplying it by $\oq^{s-c}$, together with the unique rational point at infinity (see Lemma~\ref{lem:property_of_superelliptic}), we complete the proof. 
\end{proof}

Then we have the following explicit construction of optimal $(r,\delta)$-LRCs.
\begin{theorem}\label{thm:cons_via_SEFF_1}
   Let $q=\oq^s$ for a prime power $\oq$ and an integer $s\geq 2$. Let $b$ be a positive proper divisor of $\frac{\oq^s-1}{\oq-1}$, and let $c$ be a positive proper divisor of $s$. Denote $M=\frac{\oq^s-1}{b(\oq-1)}$ and $N=\oq^{s-c}$. Then we have the following two classes of explicit optimal $(r,\delta)$-LRCs, depending on whether $M<N$ or $M>N$.

\begin{itemize} 
  \item[{\rm (i)}] If $M<N$, then for any $0\leq b'\leq \floor{\frac{N-1}{M}}-1$, there exists an optimal $(r=\floor{\frac{N-1}{M}}+1-b',\delta=N+1-r)$-LRC with parameters
$
   [mN,tr+1,(m-t)N]_{q} 
$
for any $1\leq t<m\leq \ell=\frac{\gcd(b,\frac{\oq^c-1}{\oq-1})\cdot \frac{(q-1)q}{b\oq^c}+\frac{q}{\oq^c}}{N}$.

  \item[{\rm (ii)}] If $M>N$, then for any $0\leq b'\leq \floor{\frac{M-1}{N}}-1$, there exists an optimal $(r=\floor{\frac{M-1}{N}}+1-b',\delta=M+1-r)$-LRC with parameters 
$
 [mM,tr+1,(m-t)M]_{q}
$
for any $1\leq t<m\leq \ell=\frac{\gcd(b,\frac{\oq^c-1}{\oq-1})\cdot \frac{(q-1)q}{b\oq^c}}{M}$.
\end{itemize}
\end{theorem}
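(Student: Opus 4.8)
The plan is to deduce Theorem~\ref{thm:cons_via_SEFF_1} directly from Proposition~\ref{prop:construction_of_(r,delta)from_SEFF} applied to the superelliptic curve $\Cfrak$ of Lemma~\ref{lem:good_superelliptic_curve_1}. First I would record the preliminary facts that are already available: by Lemma~\ref{lem:good_superelliptic_curve_1}, $\Cfrak/\F_q$ is a superelliptic curve given by $y^M = x^N + (\text{lower order terms of }x)$, so in particular $\gcd(M,N)=1$ and $\gcd(M,\Char(\F_q))=1$; moreover $M = \frac{\oq^s-1}{b(\oq-1)}$ divides $\oq^s-1 = q-1$, and since $b$ is a \emph{proper} divisor of $\frac{\oq^s-1}{\oq-1}$ we have $M\ge 2$, while $c$ being a proper divisor of $s$ gives $N=\oq^{s-c}\ge 2$. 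The core of the proof is then, in each of the two regimes $M<N$ and $M>N$, to exhibit $\ell$ pairwise disjoint sets of affine rational places meeting the coordinate hypotheses of Proposition~\ref{prop:construction_of_(r,delta)from_SEFF}, with $\ell$ as claimed; the optimal-code conclusion then comes for free from that proposition.

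For part {\rm(i)} (the case $M<N$, where $r+\delta-1=N$), I would group the affine rational places of $E$ by their $y$-coordinate. Fix $\beta\in\F_q$ with $\beta^M\in\F_{\oq^c}$. Because $\Tr_{\oq^s/\oq^c}\colon \F_q\to\F_{\oq^c}$ is surjective and $\F_{\oq^c}$-linear, the set $\{\alpha\in\F_q: \Tr_{\oq^s/\oq^c}(\alpha)=\beta^M\}$ has exactly $\oq^{s-c}=N$ elements, so $\{P_{(\alpha,\beta)}: \Tr_{\oq^s/\oq^c}(\alpha)=\beta^M\}$ is a set of $N=r+\delta-1$ affine rational places with common $y$-value $\beta$ and pairwise distinct $x$-values, and the sets attached to distinct $\beta$ are disjoint. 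The number of admissible $\beta$ (including $\beta=0$, which gives the $N$ places over $\Tr=0$) was already computed inside the proof of Lemma~\ref{lem:good_superelliptic_curve_1} to be $\gcd\big(b,\tfrac{\oq^c-1}{\oq-1}\big)\cdot\tfrac{\oq^s-1}{b}+1$, which one checks equals $\tfrac{1}{N}\big(\gcd(b,\tfrac{\oq^c-1}{\oq-1})\tfrac{(q-1)q}{b\oq^c}+\tfrac{q}{\oq^c}\big)=\ell$. Feeding these $\ell$ sets into Proposition~\ref{prop:construction_of_(r,delta)from_SEFF}~{\rm(i)} produces the codes claimed in {\rm(i)}.

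For part {\rm(ii)} (the case $M>N$, where $r+\delta-1=M$), I would instead group by the $x$-coordinate. Fix $\alpha\in\F_q$ with $\gamma:=\Tr_{\oq^s/\oq^c}(\alpha)\in\F_q^*$ a (nonzero) $M$-th power. Since $M\mid q-1$, the $M$-th powers form the subgroup of $\F_q^*$ of order $\tfrac{q-1}{M}=b(\oq-1)$, so $y^M=\gamma$ has exactly $M=r+\delta-1$ solutions $\beta\in\F_q^*$; hence $\{P_{(\alpha,\beta)}: \beta^M=\gamma\}$ has $M$ affine rational places with common $x$-value $\alpha$ and pairwise distinct $y$-values, and different admissible $\alpha$ give disjoint sets. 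The number of admissible $\alpha$ equals $N=\oq^{s-c}$ times the number of elements of $\F_{\oq^c}^*$ that are $M$-th powers in $\F_q^*$, since each value of $\Tr_{\oq^s/\oq^c}$ has exactly $N$ preimages; and $\gamma\in\F_{\oq^c}^*$ is an $M$-th power in $\F_q^*$ iff $\gamma^{b(\oq-1)}=1$, i.e.\ iff $\operatorname{ord}(\gamma)$ divides $\gcd\big(b(\oq-1),\oq^c-1\big)=(\oq-1)\gcd\big(b,\tfrac{\oq^c-1}{\oq-1}\big)$ (using $\oq-1\mid\oq^c-1$), of which there are exactly $(\oq-1)\gcd\big(b,\tfrac{\oq^c-1}{\oq-1}\big)$ in the cyclic group $\F_{\oq^c}^*$. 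Thus the number of admissible $\alpha$ is $\oq^{s-c}(\oq-1)\gcd\big(b,\tfrac{\oq^c-1}{\oq-1}\big)$, which simplifies to $\tfrac{1}{M}\gcd(b,\tfrac{\oq^c-1}{\oq-1})\tfrac{(q-1)q}{b\oq^c}=\ell$; applying Proposition~\ref{prop:construction_of_(r,delta)from_SEFF}~{\rm(ii)} gives {\rm(ii)}.

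The bookkeeping identities (that the displayed formulas for $\ell$ simplify as stated, and that the hypotheses force $r\ge 2$, $\delta\ge 2$) are routine, and the linearity/surjectivity of the trace is exactly what was already used in Lemma~\ref{lem:good_superelliptic_curve_1}. The one genuinely delicate point is the counting in part {\rm(ii)}: pinning down precisely which elements of the small field $\F_{\oq^c}^*$ are $M$-th powers in the ambient field $\F_q^*$. This rests on the two facts $M\mid q-1$ and $\gcd\big(b(\oq-1),\oq^c-1\big)=(\oq-1)\gcd\big(b,\tfrac{\oq^c-1}{\oq-1}\big)$, after which it is just a count of elements of bounded order in a cyclic group; everything else follows the template of Proposition~\ref{prop:construction_of_(r,delta)from_SEFF}.
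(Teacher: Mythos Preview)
Your proposal is correct and takes essentially the same approach as the paper: partition the affine rational places of the curve from Lemma~\ref{lem:good_superelliptic_curve_1} according to their $y$-coordinate in case~{\rm(i)} and their $x$-coordinate in case~{\rm(ii)}, verify the block sizes and count the number of blocks, and then invoke Proposition~\ref{prop:construction_of_(r,delta)from_SEFF}. The paper phrases the partitions via translates by $\ker(\Tr_{\oq^s/\oq^c})$ in~{\rm(i)} and by the $M$-th roots of unity in~{\rm(ii)}, while you phrase them as fibers of the coordinate maps; these describe the same sets, and your explicit $\gcd$ computation in~{\rm(ii)} simply unfolds the count that the paper draws from the proof of Lemma~\ref{lem:good_superelliptic_curve_1}.
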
 
 \begin{proof}
{\rm (i)} $M<N.$
We use the superelliptic curve $\Cfrak$ given in Lemma~\ref{lem:good_superelliptic_curve_1}, defined over $\F_{q}=\F_{\oq^s}$ by the equation $y^{M}=x^{N}+x^{\oq^{s-2c}}+\dots+x^{\oq^c}+x^{\oq^0}=\Tr_{\oq^s/\oq^c}(x)$. For each affine rational point $(\alpha,\beta)\in \Cfrak(\F_{q})$, we have a set $\{(\alpha+\alpha_i,\beta):1\leq i\leq N\}\subseteq \Cfrak(\F_{q})$, where $\alpha_1,\dots,\alpha_{N}\in \F_{q}$ are the $N=\oq^{s-c}$ distinct roots of the equation $\Tr_{\oq^s/\oq^c}(z)=0$. These sets are either disjoint or identical. There are totally $\ell=\frac{\gcd(b,\frac{\oq^c-1}{\oq-1})\cdot \frac{(q-1)q}{b\oq^c}+\frac{q}{\oq^c}}{N}$ such sets since $\Cfrak/\F_{q}$ has $\parentheses{\gcd(b,\frac{\oq^c-1}{\oq-1})\cdot \frac{(q-1)q}{b\oq^c}+\frac{q}{\oq^c}}$ affine rational points by Lemma~\ref{lem:good_superelliptic_curve_1}. Let $E/\F_{q}$ be the function field of $\Cfrak/\F_{q}$. We convert the affine rational points in each of these $\ell$ sets into the corresponding affine rational places, and denote the $\ell$ new sets of affine rational places by $\{P_{i,1},\dots,P_{i,r+\delta-1}\}$ ($1\leq i\leq \ell$). For each $1\leq i\leq \ell$, the values $x(P_{i,1}),\dots,x(P_{i,r+\delta-1})$ are pairwise distinct, while $y(P_{i,1}),\dots,y(P_{i,r+\delta-1})$ are all the same. By Proposition~\ref{prop:construction_of_(r,delta)from_SEFF} {\rm (i)}, the proof is complete.

{\rm (ii)} $M>N.$
We also use the superelliptic curve $\Cfrak$ in Lemma~\ref{lem:good_superelliptic_curve_1}, defined over $\F_{q}=\F_{\oq^s}$ by the equation $y^{M}=x^{N}+x^{\oq^{s-2c}}+\dots+x^{\oq^c}+x^{\oq^0}=\Tr_{\oq^s/\oq^c}(x)$. For each affine rational point $(\alpha,\beta)\in \Cfrak(\F_{q})$ with $\beta\neq 0$, we have a set $\{(\alpha,\beta_i\beta):1\leq i\leq M\}\subseteq \Cfrak(\F_{q})$, where $\beta_1,\dots,\beta_{M}\in \F_{q}$ are the $M$ distinct roots of the equation $z^{M}=1$ (note that $M=\frac{\oq^s-1}{b(\oq-1)}\mid (q-1)$). These sets are either disjoint or identical. 
There are totally $\ell=\frac{\gcd(b,\frac{\oq^c-1}{\oq-1})\cdot \frac{(q-1)q}{b\oq^c}}{M}$ such sets since there are totally $\parentheses{\gcd(b,\frac{\oq^c-1}{\oq-1})\cdot \frac{(q-1)q}{b\oq^c}}$ affine rational points $(\alpha,\beta)$ of $\Cfrak/\F_{q}$ with $\beta\neq 0$ by the proof of Lemma~\ref{lem:good_superelliptic_curve_1}. Let $E/\F_{q}$ be the function field of $\Cfrak/\F_{q}$. We convert the affine rational points in each of these $\ell$ sets into the corresponding affine rational places, and denote the $\ell$ new sets of affine rational places by $\{P_{i,1},\dots,P_{i,r+\delta-1}\}$ ($1\leq i\leq \ell$). For each $1\leq i\leq \ell$, the values $y(P_{i,1}),\dots,y(P_{i,r+\delta-1})$ are pairwise distinct, while $x(P_{i,1}),\dots,x(P_{i,r+\delta-1})$ are all the same. By Proposition~\ref{prop:construction_of_(r,delta)from_SEFF} {\rm (ii)}, the proof is complete.
\end{proof} 

In the following example, we present two representative constructions by Theorem~\ref{thm:cons_via_SEFF_1}. 
\begin{example}\label{exm:cons_via_SEFF_1} 
 {\rm (i)} Fixing $c=1$ in Theorem~\ref{thm:cons_via_SEFF_1}, for any prime power $q=\oq^s$ with $s\geq 2$, any divisor $b$ of $\frac{\oq^s-1}{\oq-1}$ satisfying $1< b <\frac{\oq^s-1}{\oq-1}$, and any $0\leq b'\leq \big\lfloor{b\frac{(\oq-1)(\oq^{s-1}-1)}{\oq^s-1}\big\rfloor}-1$, we have an explicit $q$-ary optimal $(r=\big\lfloor{b\frac{(\oq-1)(\oq^{s-1}-1)}{\oq^s-1}}\big\rfloor+1-b',\delta=\oq^{s-1}+1-r)$-LRC with length up to $\frac{1}{b\oq}q^2+\frac{b-1}{b\oq}q$.

 {\rm (ii)} Fixing $b=1$ in Theorem~\ref{thm:cons_via_SEFF_1}, for any prime power $q=\oq^s$ with $s\geq 2$, any positive proper divisor $c$ of $s$, and any $0\leq b'\leq \big\lfloor\frac{(\oq^s-\oq)}{\oq^{s-c}(\oq-1)}\big\rfloor-1$, we have an explicit $q$-ary optimal $(r=\big\lfloor\frac{(\oq^s-\oq)}{\oq^{s-c}(\oq-1)}\big\rfloor+1-b',\delta=\frac{\oq^s-1}{\oq-1}+1-r)$-LRC with length up to $\frac{q(q-1)}{\oq^c}$.
\end{example}
As illustrated in the above example,
Theorem~\ref{thm:cons_via_SEFF_1} can produce constructions of optimal $(r,\delta)$-LRCs with considerable code lengths. However, when we aim to fix $r,\delta$ and construct optimal $(r,\delta)$-LRCs as the field size $q$ tends to infinity, it is not ideal. We therefore consider a different class of constructions. In the following final part of the main result, employing a class of maximal superelliptic curves from Hermitian curves and their constant field extensions, we derive a new class of optimal $(r,\delta)$-LRCs, which generalizes and improves Theorem~\ref{thm:Cons_via_HEFF_(g+1-g',g+1+g')_p=2g+1_and_p_neq_2g+1} {\rm (i)} in Section~\ref{sec:4.3}. 

%%写清楚我们为什么要最后一节, 因为看起来这倒数两小节非常像，注意尤其要说清楚最后一小节的意义，因为看起来有点多余，它的码长看起来不如倒数第二小节

\subsection{Construction of Optimal $(r,\delta)$-LRCs via Maximal Superelliptic Curves from Hermitian Curves}
First, we introduce the maximal curves that will be used in our constructions, which are adapted from Hermitian curves.
\begin{lemma}\label{lem:good_superelliptic_curve_2}
Let $q=\oq^{2s}$ for a prime power $\oq$ and a positive integer $s$. Let $b$ be a positive divisor of $\oq+1$.
\begin{itemize}
    \item[{\rm (i)}] When $s$ is an odd positive integer, the curve $\Cfrak$ defined over $\F_{q}$ by the equation 
$
y^{\frac{\oq+1}{b}}=x^{\oq}+x
$
   is a maximal superelliptic curve of genus $\frac{(\frac{\oq+1}{b}-1)(\oq-1)}{2}$. It has $q+(\frac{\oq+1}{b}-1)(\oq-1)\sqrt{q}+1$ rational points, including one rational point at infinity. 

\item[{\rm (ii)}]
 When $\oq$ is odd, $b=\frac{\oq+1}{2}$, and $s$ is an even positive integer, the curve $\Cfrak'$ defined over $\F_{q}$ by the equation 
$
\gamma y^{\frac{\oq+1}{b}}=\gamma y^2=x^{\oq}+x
$ 
    is a maximal superelliptic curve of genus $\frac{\oq-1}{2}$,  
    where $\gamma$ is an arbitrary quadratic non-residue in $\F_{q}$.
\item[{\rm (iii)}]When $\oq=2$, $b=1$, and $s$ is an even positive integer, the curve $\Cfrak''$ defined over $\F_{q}$ by the equation 
$
y^{\frac{\oq+1}{b}}=y^3=x^2+x+\eta=x^{\oq}+x+\eta
$ is a maximal superelliptic curve of genus $1$, where $\eta$ is an arbitrary element in $\F_q\backslash \{\alpha^2+\alpha:\; \alpha\in \F_q\}$.
\end{itemize}
\end{lemma}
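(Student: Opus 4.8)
The plan is to reduce everything to the maximality of a Hermitian curve over $\F_{\oq^2}$, which I would establish by a direct point count, and then propagate this to $\F_q=\F_{\oq^{2s}}$ using Lemma~\ref{lem:maximal_curve_lift} together with twisting arguments. For part (i), I would first check that $\Cfrak$ satisfies Definition~\ref{def:superelliptic}: writing $f(x)=x^{\oq}+x$, $N=\deg f=\oq$ and $M=\tfrac{\oq+1}{b}$, one has $f'(x)=\oq x^{\oq-1}+1=1$ since $\oq$ is a power of $\Char(\F_q)$, so $\gcd(f,f')=1$; and $M\mid\oq+1$ forces both $\gcd(M,N)=\gcd(M,\oq)=1$ and $\gcd(M,\Char(\F_q))=1$. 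Lemma~\ref{lem:property_of_superelliptic} then yields the genus $\tfrac12(M-1)(N-1)=\tfrac{(\frac{\oq+1}{b}-1)(\oq-1)}{2}$ and a unique $\F_q$-rational place at infinity. Next I would count $\#\Cfrak(\F_{\oq^2})=\sum_{\alpha\in\F_{\oq^2}}\#\{\beta\in\F_{\oq^2}:\beta^{M}=\alpha^{\oq}+\alpha\}$; since $\alpha\mapsto\alpha^{\oq}+\alpha=\Tr_{\oq^2/\oq}(\alpha)$ is $\oq$-to-$1$ onto $\F_{\oq}$, this equals $\oq\bigl(1+\sum_{c\in\F_{\oq}^{*}}\#\{\beta:\beta^{M}=c\}\bigr)$, and the crucial observation is that each $c\in\F_{\oq}^{*}$ has exactly $M$ $M$-th roots in $\F_{\oq^2}^{*}$, because $M\mid\oq+1\mid\oq^2-1$ makes the $M$-th powers the unique subgroup of $\F_{\oq^2}^{*}$ of order $b(\oq-1)$, which contains $\F_{\oq}^{*}$. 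This gives $\#\Cfrak(\F_{\oq^2})=1+\oq+M\oq(\oq-1)$, which a short computation shows equals $\oq^2+1+2g\oq$, so $\Cfrak$ is $\F_{\oq^2}$-maximal; since $s$ is odd, Lemma~\ref{lem:maximal_curve_lift} upgrades this to $\F_q$-maximality, and then $\#\Cfrak(\F_q)=q+1+2g\sqrt{q}$ is exactly the claimed count. (Equivalently, $\Cfrak$ is the quotient of the Hermitian curve $y^{\oq+1}=x^{\oq}+x$ by the order-$b$ group $\{y\mapsto\zeta y:\zeta^{b}=1\}$, so $\F_{\oq^2}$-maximality also follows from the general fact that a quotient of a maximal curve by an $\F_{\oq^2}$-automorphism group is maximal; I would keep the direct count for self-containedness.)

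For part (ii), $M=\tfrac{\oq+1}{b}=2$ and $N=\oq$ is odd, so $\Cfrak'$ is the quadratic twist by the non-residue $\gamma$ of the hyperelliptic curve $\Hcal:y^2=x^{\oq}+x$, which by part (i) (with $b=\tfrac{\oq+1}{2}$) is $\F_{\oq^2}$-maximal of genus $\tfrac{\oq-1}{2}$; as $s$ is even, Lemma~\ref{lem:maximal_curve_lift} makes $\Hcal$ $\F_q$-minimal, so $\#\Hcal(\F_q)=q+1-(\oq-1)\sqrt{q}$. I would then invoke the standard twist count: for each $\alpha\in\F_q$ the fibres of $\Hcal$ and $\Cfrak'$ over $x=\alpha$ have sizes summing to $2$ (one each if $\alpha^{\oq}+\alpha=0$, and $2+0$ or $0+2$ otherwise), and since $\deg f=\oq$ is odd each curve has exactly one $\F_q$-rational point at infinity, whence $\#\Hcal(\F_q)+\#\Cfrak'(\F_q)=2q+2$; therefore $\#\Cfrak'(\F_q)=q+1+(\oq-1)\sqrt{q}$, i.e.\ $\Cfrak'$ is $\F_q$-maximal of genus $\tfrac{\oq-1}{2}$. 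Remark~\ref{rem:202506232112} confirms that absorbing $\gamma^{-1}$ into the right-hand side leaves $\Cfrak'$ a superelliptic curve, so the $\gcd$ conditions check out as in part (i).

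For part (iii), $\oq=2$, $M=3$, $N=2$, so $\Cfrak'':y^3=x^2+x+\eta$ over $\F_q=\F_{2^{2s}}$ has genus $1$, with the $\gcd$ conditions verified as before. Here I would compare with the Hermitian curve $\Hcal:y^3=x^2+x$ over $\F_4$, which is $\F_4$-maximal by part (i) (taking $\oq=2$, $b=1$) and hence $\F_q$-minimal since $s$ is even, so $\#\Hcal(\F_q)=q+1-2\sqrt{q}$. Since $\eta\notin\{\alpha^2+\alpha:\alpha\in\F_q\}$ means $\Tr_{\F_q/\F_2}(\eta)=1$, the Artin--Schreier map $\alpha\mapsto\alpha^2+\alpha$ is $2$-to-$1$ onto the trace-zero hyperplane $H_0$ while $\alpha\mapsto\alpha^2+\alpha+\eta$ is $2$-to-$1$ onto the complementary coset $\F_q\setminus H_0$; grouping the fibres of $y\mapsto y^3$ over these complementary sets gives affine counts $2\#\{\beta:\beta^3\in H_0\}$ and $2\#\{\beta:\beta^3\notin H_0\}$, whose sum is $2q$, and adding the single point at infinity on each curve yields $\#\Hcal(\F_q)+\#\Cfrak''(\F_q)=2q+2$. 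Hence $\#\Cfrak''(\F_q)=q+1+2\sqrt{q}$ and $\Cfrak''$ is $\F_q$-maximal of genus $1$.

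The step I expect to be the main obstacle is the point count in part (i): pinning down the fibre sizes of $\beta\mapsto\beta^{M}$ over $\F_{\oq}^{*}$ (which rests on the divisibility chain $\F_{\oq}^{*}\le\{M\text{-th powers of }\F_{\oq^2}^{*}\}$ coming from $M\mid\oq+1$ and $(\oq-1)\mid b(\oq-1)$) and then matching the resulting total $1+\oq+M\oq(\oq-1)$ against $\oq^2+1+2g\oq$. Once part (i) is in hand, parts (ii) and (iii) become routine quadratic-twist and Artin--Schreier computations, the only extra care being to confirm, via Remark~\ref{rem:202506232112}, that the twisted equations still define superelliptic curves in the sense of Definition~\ref{def:superelliptic}.
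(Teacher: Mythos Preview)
Your proposal is correct and follows essentially the same strategy as the paper: establish maximality of $\Cfrak$ over $\F_{\oq^2}$ via a direct point count, lift to $\F_q$ using Lemma~\ref{lem:maximal_curve_lift}, and handle parts (ii) and (iii) by the standard quadratic (resp.\ Artin--Schreier) twist argument that the affine point counts of the two companion curves sum to $2q$. The only cosmetic differences are that the paper invokes Lemma~\ref{lem:good_superelliptic_curve_1} (with $s=2$, $c=1$) for the $\F_{\oq^2}$-count in (i) rather than redoing it, and for (iii) it simply cites \cite[Lemma~15]{li2019optimal} instead of spelling out the Artin--Schreier computation you give.
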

\begin{proof}  
{\rm (i)} By Lemma~\ref{lem:good_superelliptic_curve_1} (with the parameters in that lemma set to $s=2$ and $c=1$), $\Cfrak/\F_{\oq^2}$ is a superelliptic curve of genus $\frac{(\frac{\oq+1}{b}-1)(\oq-1)}{2}$, and has $\frac{\oq^3-\oq}{b}+\oq+1$ rational points. Thus, $\Cfrak/\F_{\oq^2}$ is a maximal curve since $\frac{\oq^3-\oq}{b}+\oq+1=\oq^2+2\frac{(\frac{\oq+1}{b}-1)(\oq-1)}{2}\oq+1$. By Lemma~\ref{lem:maximal_curve_lift}, $\Cfrak/\F_{q}$ is also a maximal curve, where $q=\oq^{2s}$ with $2\nmid s$. The item {\rm (i)} is proved.

{\rm (ii)} Note that the number of distinct roots in $\F_q$ of $\gamma y^{2}=\alpha^{\oq}+\alpha$ and the number of distinct roots in $\F_q$ of $y^{2}=\alpha^{\oq}+\alpha$ sum to $2$ for any $\alpha\in \F_q$. The total number of affine rational points of $\Cfrak'/\F_q$ defined by $\gamma y^{2}=x^{\oq}+x$ and $\Cfrak/\F_q$ defined by $y^{2}=x^{\oq}+x$ must be $2q$. 
Since $\Cfrak/\F_q$ is minimal by Lemma~\ref{lem:maximal_curve_lift} and the proof of {\rm (i)}, it follows that $\Cfrak'/\F_q$ is maximal.

For the proof of {\rm (iii)}, we refer to the proof of \cite[Lemma 15]{li2019optimal}, with the roles of 
$x
$ and 
$y$ swapped.
\end{proof}
% \begin{remark}
% If we let $\oq=2$ and $b=1$ in Lemma~\ref{lem:good_superelliptic_curve_2}, then the curves in  {\rm(i)} and {\rm(iii)} are the same as part of \cite[Lemma 15]{li2019optimal}. The curves and constructions we consider in this section are partially motivated by \cite{li2019optimal}. 
% \end{remark} 
% \begin{remark}
%     (\NTD) talk about that there may be many other types of superelliptic curves from Norm-Trace Curves that deserve research.
% \end{remark}
Then we have the following explicit constructions.
\begin{theorem}\label{thm:cons_via_SEFF_2}
Let $q=\oq^{2s}$, where $\oq$ is a prime power and $s$ is an odd positive integer. Let $b$ be a positive proper divisor of $\oq+1$. We have the following two classes of $q$-ary optimal $(r,\delta)$-LRCs, depending on the value of $b$.
\begin{itemize} 
\item[{\rm (i)}] If $b>1$, then for any $0\leq b'\leq b-2$, there exists an optimal $(r=b-b',\delta=\oq+1-r)$-LRC with parameters 
$
[m\oq,tr+1,(m-t)\oq]_{q}
$
for any $1\leq t<m\leq \ell=\frac{q+(\frac{\oq+1}{b}-1)(\oq-1)\sqrt{q}}{\oq}$. In particular, when $\oq$ is odd and $b=\frac{\oq+1}{2}$, the integer $s$ no longer needs to be odd; it can be any positive integer.

\item[{\rm (ii)}] If $b=1$, then there exists an optimal $(r=2,\delta=\oq)$-LRC with parameters 
$
[m(\oq+1),tr+1,(m-t)(\oq+1)]_{q}
$
for any $1\leq t<m\leq \ell=\lfloor\frac{q+\oq(\oq-1)\sqrt{q}}{\oq+1}\rfloor$. In particular, when $\oq=2$, the integer $s$ no longer needs to be odd; it can be any positive integer.
\end{itemize}
\end{theorem}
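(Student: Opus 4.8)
The plan is to derive both statements directly from the general framework of Proposition~\ref{prop:construction_of_(r,delta)from_SEFF} applied to the maximal superelliptic curves provided by Lemma~\ref{lem:good_superelliptic_curve_2}. Thus the real work splits into three routine-but-delicate tasks: (a) read off the parameters $M$ and $N$ for the chosen curve and verify that the floor expressions $\lfloor(N-1)/M\rfloor$ or $\lfloor(M-1)/N\rfloor$ produce exactly the claimed $r$ and $\delta\geq 2$; (b) exhibit the required family of pairwise disjoint sets of affine rational places on which one coordinate is constant while the other takes $r+\delta-1$ distinct values; and (c) count those sets to recover the stated value of $\ell$, which amounts to a divisibility check against the explicit point counts in Lemma~\ref{lem:good_superelliptic_curve_2}.

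For part {\rm(i)} with $s$ odd I would take $\Cfrak/\F_q$ to be the curve $y^{(\oq+1)/b}=x^{\oq}+x$ of Lemma~\ref{lem:good_superelliptic_curve_2}~{\rm(i)}, so that $M=\tfrac{\oq+1}{b}$ and $N=\oq$. Since $b\geq 2$ is a proper divisor of $\oq+1$, one has $2\leq M\leq\tfrac{\oq+1}{2}<\oq=N$; moreover $\tfrac{b(\oq-1)}{\oq+1}=b-\tfrac{2b}{\oq+1}$ with $0<\tfrac{2b}{\oq+1}\leq 1$, whence $\lfloor(N-1)/M\rfloor=\lfloor b(\oq-1)/(\oq+1)\rfloor=b-1$, giving $r=\lfloor(N-1)/M\rfloor+1-b'=b-b'$ and $\delta=N+1-r=\oq+1-r\geq 2$ as claimed. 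Let $W\subseteq\F_q$ be the set of roots of $z^{\oq}+z$; since $z^{\oq}+z=\Tr_{\oq^{2}/\oq}(z)$ has a kernel of size $\oq$, all $\oq$ of these roots already lie in $\F_{\oq^{2}}\subseteq\F_q$, and $W$ is an $\F_{\oq}$-line in $\F_q$. Because $(x+\alpha)^{\oq}+(x+\alpha)=x^{\oq}+x$ for $\alpha\in W$, each translation $x\mapsto x+\alpha$ is an $\F_q$-automorphism of $\Cfrak$, so the $W$-orbits partition the affine rational places into blocks of size $\oq=r+\delta-1$ on which $x$ takes $\oq$ distinct values while $y$ is constant, which is exactly the hypothesis of Proposition~\ref{prop:construction_of_(r,delta)from_SEFF}~{\rm(i)}. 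By Lemma~\ref{lem:good_superelliptic_curve_2}~{\rm(i)} there are $q+(\tfrac{\oq+1}{b}-1)(\oq-1)\sqrt q=\oq^{s}\big(\oq^{s}+(\tfrac{\oq+1}{b}-1)(\oq-1)\big)$ affine rational places, a multiple of $\oq$, so the number of blocks is precisely $\ell=\tfrac{q+(\tfrac{\oq+1}{b}-1)(\oq-1)\sqrt q}{\oq}$, and Proposition~\ref{prop:construction_of_(r,delta)from_SEFF}~{\rm(i)} then yields the advertised codes for all $1\leq t<m\leq\ell$.

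For part {\rm(ii)}, where $b=1$ and $s$ is odd, I would instead take $\Cfrak/\F_q$ to be the Hermitian-type curve $y^{\oq+1}=x^{\oq}+x$, so that $M=\oq+1>\oq=N$ and $\lfloor(M-1)/N\rfloor=\lfloor\oq/\oq\rfloor=1$; hence $r=1+1-b'$ forces $b'=0$, $r=2$, $\delta=M+1-r=\oq$, and one applies Proposition~\ref{prop:construction_of_(r,delta)from_SEFF}~{\rm(ii)}. The relevant orbits are now those of the $(\oq+1)$-st roots of unity $\zeta$, which lie in $\F_q$ since $(\oq+1)\mid(\oq^{2}-1)\mid(q-1)$, acting by $y\mapsto\zeta y$ (this preserves $y^{\oq+1}=x^{\oq}+x$); on the affine rational places with $y\neq 0$ each orbit has size $\oq+1$, with $x$ constant and $y$ taking $\oq+1$ distinct values. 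Since the affine rational places with $y=0$ are the $\oq$ zeros of $x^{\oq}+x$, there remain $q+\oq(\oq-1)\sqrt q-\oq$ affine rational places with $y\neq 0$; a congruence check modulo $\oq+1$ (using $q\equiv 1$ and $\sqrt q=\oq^{s}\equiv(-1)^{s}\equiv-1$ for $s$ odd) shows this number is divisible by $\oq+1$ and equals $(\oq+1)\big\lfloor\tfrac{q+\oq(\oq-1)\sqrt q}{\oq+1}\big\rfloor$, so $\ell=\big\lfloor\tfrac{q+\oq(\oq-1)\sqrt q}{\oq+1}\big\rfloor$ as stated.

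Finally, the two ``in particular'' clauses only require swapping in a different maximal curve for $s$ even: for $b=\tfrac{\oq+1}{2}$ (so $M=2$, $N=\oq\geq 3$) I would use the quadratic twist $\gamma y^2=x^{\oq}+x$ of Lemma~\ref{lem:good_superelliptic_curve_2}~{\rm(ii)}, and for $\oq=2$ (so $b=1$, $M=3$, $N=2$) the curve $y^3=x^2+x+\eta$ of Lemma~\ref{lem:good_superelliptic_curve_2}~{\rm(iii)}; both are maximal over $\F_q$ when $s$ is even, the same translation (respectively root-of-unity) automorphisms survive verbatim since they do not involve the constant $\gamma$ or $\eta$, and in the $\oq=2$ case the hypothesis $\eta\notin\{\alpha^2+\alpha:\alpha\in\F_q\}$ kills every place with $y=0$, so each affine rational place already lies in a size-$3$ orbit. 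Verifying $M,N$, the floor identities, and the count $\ell$ then proceeds exactly as in the $s$-odd cases. The step I expect to be the main obstacle is (c): confirming, in each of these four sub-cases, that the number of usable affine rational places is an exact multiple of the block size (or matches the stated floor), which hinges on combining the explicit point counts of Lemma~\ref{lem:good_superelliptic_curve_2} with the congruences $q\equiv 1$ and $\sqrt q\equiv(-1)^{s}\pmod{\oq+1}$.
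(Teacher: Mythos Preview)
Your proposal is correct and follows essentially the same approach as the paper's own proof: the same curves from Lemma~\ref{lem:good_superelliptic_curve_2} are used, the same translation orbits (roots of $z^{\oq}+z$) for part {\rm(i)} and root-of-unity orbits for part {\rm(ii)} are taken, and the same twisted curves handle the even-$s$ special cases. You supply a bit more detail than the paper on the floor identity $\lfloor(N-1)/M\rfloor=b-1$ and on the divisibility checks for $\ell$, but the overall argument is identical.
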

\begin{proof}
To apply Proposition~\ref{prop:construction_of_(r,delta)from_SEFF}, we consistently take $M=\frac{\oq+1}{b}$ and $N=\oq$ below, although they do not explicitly appear in the proof. Note that $\floor{\frac{N-1}{M}}=b-1$ when $b>1$, and $\floor{\frac{M-1}{N}}=1$ when $b=1$.

{\rm(i)} $b>1$. Let $q=\oq^{2s}$ for a prime power $\oq$ and an odd positive integer $s$.  By Lemma~\ref{lem:good_superelliptic_curve_2} {\rm (i)}, the curve $\Cfrak/\F_{q}$ defined by $y^{\frac{\oq+1}{b}}=x^{\oq}+x$ is a maximal superelliptic curve of genus $\frac{(\frac{\oq+1}{b}-1)(\oq-1)}{2}$. For each affine rational point $(\alpha,\beta)\in \Cfrak(\F_{q})$, we have a set $\{(\alpha+\alpha_i,\beta):1\leq i\leq \oq\}\subseteq \Cfrak(\F_{q})$, where $\alpha_1,\dots,\alpha_{\oq}\in \F_{q}$ are the $\oq$ distinct roots of the equation $z^\oq+z=0$. These sets are either disjoint or identical. There are totally $\ell=\frac{q+(\frac{\oq+1}{b}-1)(\oq-1)\sqrt{q}}{\oq}$ such sets since $\Cfrak/\F_{q}$ has $q+(\frac{\oq+1}{b}-1)(\oq-1)\sqrt{q}$ affine rational points. Let $E/\F_{q}$ be the function field of $\Cfrak/\F_{q}$. We convert the affine rational points in each of these $\ell$ sets into the corresponding affine rational places, and denote the $\ell$ new sets of affine rational places by $\{P_{i,1},\dots,P_{i,r+\delta-1}\}$ ($1\leq i\leq \ell$). For each $1\leq i\leq \ell$, the values $x(P_{i,1}),\dots,x(P_{i,r+\delta-1})$ are pairwise distinct, while $y(P_{i,1}),\dots,y(P_{i,r+\delta-1})$ are all the same.  
By Proposition~\ref{prop:construction_of_(r,delta)from_SEFF} {\rm (i)}, the proof is complete.

As for the special case where $q=\oq^{2s}$ with $2\nmid \oq$, $b=\frac{\oq+1}{2}$, and $2\mid s$, 
we use the maximal curve $\Cfrak'/\F_{q}$ in Lemma~\ref{lem:good_superelliptic_curve_2} {\rm (ii)} defined by $\gamma y^{\frac{\oq+1}{b}}=\gamma y^2=x^{\oq}+x$, where $\gamma$ is an arbitrary quadratic non-residue in $\F_q$. The rest of the proof is the same as above. 

{\rm (ii)} $b=1$. Let $q=\oq^{2s}$ for a prime power $\oq$ and an odd positive integer $s$. By Lemma~\ref{lem:good_superelliptic_curve_2} {\rm (i)}, the curve $\Cfrak/\F_{q}$ defined by $y^{\frac{\oq+1}{b}}=y^{\oq+1}=x^{\oq}+x$ is a maximal superelliptic curve of genus $\frac{\oq(\oq-1)}{2}$. For each affine rational point $(\alpha,\beta)\in \Cfrak(\F_{q})$ with $\beta\neq 0$, we have a set $\{(\alpha,\beta_i\beta):1\leq i\leq \oq+1\}\subseteq \Cfrak(\F_{q})$, where $\beta_1,\dots,\beta_{\oq+1}\in \F_{q}$ are the $\oq+1$ distinct roots of the equation $z^{\oq+1}=1$ (note that $(\oq+1)\mid (q-1)$). These sets are either disjoint or identical. 
There are totally $\ell=\frac{q+\oq(\oq-1)\sqrt{q}-\oq}{\oq+1}$ such sets since there are totally $(q+\oq(\oq-1)\sqrt{q}-\oq)$ affine rational points $(\alpha,\beta)$ of $\Cfrak/\F_{q}$ with $\beta\neq 0$. Let $E/\F_{q}$ be the function field of $\Cfrak/\F_{q}$. We convert the affine rational points in each of these $\ell$ sets into the corresponding affine rational places, and denote the $\ell$ new sets of affine rational places by $\{P_{i,1},\dots,P_{i,r+\delta-1}\}$ $(1\leq i\leq \ell)$. For each $1\leq i\leq \ell$, the values $y(P_{i,1}),\dots,y(P_{i,r+\delta-1})$ are pairwise distinct, while $x(P_{i,1}),\dots,x(P_{i,r+\delta-1})$ are all the same. Applying Proposition~\ref{prop:construction_of_(r,delta)from_SEFF} {\rm (ii)}, the proof is complete.

As for the special case where $q=\oq^{2s}$ with $\oq=2$, $b=1$, and $2\mid s$, we use the maximal curve $\Cfrak''/\F_{q}$ in Lemma~\ref{lem:good_superelliptic_curve_2} {\rm (iii)} defined by $y^{\frac{\oq+1}{b}}=y^3=x^{2}+x+\eta=x^{\oq}+x+\eta$, where $\eta$ is an arbitrary element in $\F_q\backslash \{\alpha^2+\alpha:\alpha\in \F_q\}$. The rest of the proof is similar to the above. The only difference is that the number of the local repair groups becomes $\ell=\frac{q+\oq(\oq-1)\sqrt{q}}{\oq+1}$ since there are totally $q+\oq(\oq-1)\sqrt{q}$ affine rational points $(\alpha,\beta)$ of $\Cfrak''/\F_{q}$ with $\beta\neq 0$. In both cases, we have $\ell=\lfloor\frac{q+\oq(\oq-1)\sqrt{q}}{\oq+1}\rfloor$, the proof of {\rm (ii)} is complete. 
    %(考虑扩域很可能是没有意义的，因为这个时候，码长会很低，已经低于$\oq$了).---前面这句话错了，并不会低于\oq, 我们需要在\oq^2元域上先确定它是极大superelliptic curve, 然后再进行常域扩张，而不能先扩张再下降
    %%原来这个玩意好像就是上面那个第一种构造的推广（部分推广，因为没法用常域扩张进行推广）, 很有意思，特别地，当局部修复集的大小固定，而r变小的时候，就能体现出优势了.
    %%值得考虑的是，超椭圆曲线的第二种构造的推广是啥玩意呢?
\end{proof}
\begin{remark}\label{rem:cons_via_SEFF_2_1}
    By setting $\oq=3,b=\frac{\oq+1}{2}=2$ and $b'=0$ in Theorem~\ref{thm:cons_via_SEFF_2} {\rm (i)} and setting $\oq=2$ and $b=1$ in Theorem~\ref{thm:cons_via_SEFF_2} {\rm (ii)},  we recover part of \cite[Theorem 1]{li2019optimal}, which presents a wide class of optimal $(2,2)$-LRCs with lengths approaching $q+2\sqrt{q}$.
\end{remark}
\begin{remark}\label{rem:cons_via_SEFF_2_2}
In Theorem \ref{thm:Cons_via_HEFF_(g+1-g',g+1+g')_p=2g+1_and_p_neq_2g+1} {\rm (i)}, for any odd prime power $2g+1\geq 5$, integer $0\leq g'\leq g-1$, we have a $q$-ary  optimal $(r=g+1-g',\delta=g+1+g')$-LRC with length up to $q+2g\sqrt{q}$, where $q=(2g+1)^{2s}$ for any positive integer $s$. By setting $\oq=2g+1$, $b=g+1$, $b'=g'$ (note that here $b=\frac{\oq+1}{2}$) in Theorem \ref{thm:cons_via_SEFF_2} {\rm (i)}, we directly recover Theorem~\ref{thm:Cons_via_HEFF_(g+1-g',g+1+g')_p=2g+1_and_p_neq_2g+1} {\rm (i)}. Moreover, Theorem~\ref{thm:cons_via_SEFF_2} {\rm (i)} has the following two advantages.
\begin{itemize}
\item  When the repair group size \( r + \delta - 1 \) is fixed and \( r \) becomes smaller, Theorem~\ref{thm:cons_via_SEFF_2} {\rm (i)} may yield longer optimal \( (r,\delta) \)-LRCs than Theorem~\ref{thm:Cons_via_HEFF_(g+1-g',g+1+g')_p=2g+1_and_p_neq_2g+1}. We illustrate this with examples under two distinct parameter settings.

(1) Let \( g = 11 \), \( g' = 0 \) in Theorem~\ref{thm:Cons_via_HEFF_(g+1-g',g+1+g')_p=2g+1_and_p_neq_2g+1}~{\rm (i)}. One obtains optimal \( q \)-ary \( (12,12) \)-LRCs of length \( q + 22\sqrt{q} \). In this case, no improvement can be made by Theorem~\ref{thm:cons_via_SEFF_2} {\rm (i)} towards deriving longer optimal $(12,12)$-LRCs.

(2) Let \( g = 11 \), \( g' = 8 \) in Theorem~\ref{thm:Cons_via_HEFF_(g+1-g',g+1+g')_p=2g+1_and_p_neq_2g+1}~{\rm (i)}. One obtains optimal \( q \)-ary \( (4,20) \)-LRCs of length \( q + 22\sqrt{q} \) for any \( q = 23^{2s} \). In this case, Theorem~\ref{thm:cons_via_SEFF_2} {\rm (i)} enables an improved construction: by choosing \( \oq = 23 \), \( b = 4 \), and \( b' = 0 \), we obtain longer optimal \( q \)-ary \( (4,20) \)-LRCs of length up to \( q + 110\sqrt{q} \) for any \( q = 23^{2s} \) with odd $s$.
The idea is to minimize \( b \) to maximize the code length.

\item Theorem~\ref{thm:cons_via_SEFF_2} {\rm (i)} allows constructions over a broader range of finite fields compared to Theorem~\ref{thm:Cons_via_HEFF_(g+1-g',g+1+g')_p=2g+1_and_p_neq_2g+1}~\textup{(i)}, including those of even characteristic.
\end{itemize} 
\end{remark}
 
\section{Concluding Remarks}
	\label{sec:6} 
In this paper, we studied the construction of optimal $(r,\delta)$-LRCs with flexible minimum distances, particularly for the case $\delta \geq 3$. By leveraging the automorphism groups of elliptic and genus-$2$ hyperelliptic function fields, together with their group of divisor classes of degree zero, we constructed several families of explicit optimal $(r,3)$-LRCs and $(2,\delta)$-LRCs with lengths approaching $q + 2\sqrt{q}$ or $q + 4\sqrt{q}$. We also employed some hyperelliptic and superelliptic curves of higher genus to construct explicit optimal $(r,\delta)$-LRCs with even longer lengths and flexible parameters. Most of these optimal $(r,\delta)$-LRCs have lengths exceeding $q+1$, and many of them attain the currently best-known code lengths.
  
To the best of our knowledge, all known constructions of optimal $(r,\delta)$-LRCs with flexible minimum distances obtained via evaluation-based methods primarily focus on the case of $r$-LRCs (i.e., $(r,\delta=2)$-LRCs), and then some of them are naturally extended to the case of $(r,\delta\geq 3)$-LRCs. This paper demonstrates that such extensions are not always straightforward (see Section~\ref{sec:3.1}, especially Remark~\ref{rem:equivalentconditionforEFF_1} {\rm (i)} for details), and that the construction of optimal $(r,\delta)$-LRCs for $\delta \geq 3$ deserves independent attention, rather than being merely treated as a by-product of the $r$-LRC case.
Moreover, our constructions demonstrate that algebraic geometry codes are also highly effective for constructing optimal \((r,\delta)\)-LRCs with flexible minimum distances, even for $\delta\geq 3$. 
Three avenues for future research may be worth exploring. 
\begin{itemize}
\item  
Constructing more optimal $(r,\delta)$-LRCs based on the general framework in Section~\ref{sec:3.1}, for example, 
optimal $(r,\delta)$-LRCs with $\delta=4,5,6,\dots$. 
\item
Exploring whether the group of divisor classes of degree zero of higher-genus hyperelliptic or superelliptic curves can be utilized to obtain a general framework like those in Section~\ref{sec:3.1} and~\ref{sec:4.1}, thus obtaining optimal $(r,\delta)$-LRCs with a wider range of parameters and longer code lengths. 
\item
Conducting a more refined study of the general framework developed in Section~\ref{sec:5.1}, to further generalize this framework, or to identify additional algebraic curves that fit this framework and can be used to construct long optimal $(r,\delta)$-LRCs.
\end{itemize}
\Review{In pursuing these directions, it may also be useful to take into account the related constructions of LRCs via algebraic geometry codes in \cite{Ballentine2019,Bartoli2020locally,Munuera2020locally}. The methods and curves used in these works may provide inspiration for further research.}

\appendices
\section{A Length Upper Bound for Optimal $(r,\delta)$-LRCs with Flexible Minimum Distances \Review{and Constant $r,\delta$}
}\label{appsec:1}
In Section~\ref{sec:1.1}, we mentioned that Guruswami \textit{et al.} \cite[Theorem 13 and Corollary 14]{Guruswami2019how} established an upper bound on the code length $n$ of $q$-ary optimal $r$-LRCs with minimum distance $d=\Theta(n)$ and constant $r$, yielding $n\leq O(q)$. We also said that this bound can be generalized to the case of $(r,\delta)$-LRCs. As \Review{some of} our constructions of optimal $(r,\delta)$-LRCs with flexible minimum distances \Review{and constant $r,\delta$} happen to fall within the scope of this generalized bound, we formally state and prove it.
\begin{theorem}\label{thm:bound_for_length_of_optimal_(r,delta)}
    The minimum distance $d$ of an optimal $(r,\delta)$-LRC $\Ccal$ with parameters $[n,k,d]_q$ satisfying $\frac{d}{n}\leq \frac{2}{3}$ is upper bounded by 
    \begin{align}\label{eq:appendix_bound}
        d\leq \frac{(r+\delta-1)(r+1)+\delta(\delta-1)}{r}q.
    \end{align}
     
    Consequently, when $r,\delta$ are fixed constants, the code length $n$ of optimal $(r,\delta)$-LRCs with $d=\Theta(n)$ and $\frac{d}{n}\leq \frac{2}{3}$ is upper bounded by $O(q)$.
\end{theorem}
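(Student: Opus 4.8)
The plan is to adapt the argument of Guruswami \textit{et al.}~\cite[Theorem 13 and Corollary 14]{Guruswami2019how}, originally for $r$-LRCs, to the $(r,\delta)$-setting. The core idea is a "shortening" or "puncturing" argument on local repair groups: an optimal $(r,\delta)$-LRC has a highly structured support decomposition, and we exploit the fact that the Singleton-type bound~\eqref{eq:SLboundfor_r_delta} is tight to force the code to look locally like an MDS code on each repair group. First I would fix a partition (or near-partition) of $[n]$ into local repair groups $R_1,R_2,\dots$ each of size at most $r+\delta-1$; optimality of $\Ccal$ together with the matching-bound/structure results on optimal $(r,\delta)$-LRCs (e.g.\ the disjointness of local groups when $(r+\delta-1)\mid n$, and the near-disjoint analogue in general) guarantees such a decomposition with roughly $n/(r+\delta-1)$ groups, each carrying an $[\,\le r+\delta-1,\ \le r,\ \ge\delta\,]_q$ MDS-like local code.

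\textbf{Key steps.} The steps, in order, would be: (1) Record the structural consequence of optimality: after deleting at most a bounded number of coordinates, $\Ccal$ decomposes into $N \ge \lfloor n/(r+\delta-1)\rfloor - O(1)$ local groups, and on each group the restricted code is MDS with parameters $[r+\delta-1, r, \delta]_q$ (an RS-type code). (2) Because an $[r+\delta-1,r,\delta]_q$ MDS code exists, we get the easy field-size constraint $q \ge r+\delta-2$, but this alone is too weak; instead we iterate a residual/shortening argument. (3) Following the Guruswami \textit{et al.}\ inductive scheme, shorten $\Ccal$ at a carefully chosen set of coordinates drawn from distinct local groups so as to strictly decrease the dimension while controlling how the minimum distance and length change; the condition $d/n \le 2/3$ is exactly what ensures the "surviving" length stays comparable to $n$ throughout the iteration (it prevents the code from collapsing too fast). (4) Count: the number of distinct local groups that must support pairwise-distinct MDS local codes over $\F_q$ is bounded by a function of $q$, $r$, and $\delta$ — essentially because two optimal local groups sharing enough structure would force a repeated codeword pattern, contradicting minimum distance; carefully tallying the "budget" of MDS configurations over $\F_q$ yields $N \le \frac{(r+\delta-1)(r+1)+\delta(\delta-1)}{r}\cdot\frac{q}{r+\delta-1}$, hence $n \le N(r+\delta-1) + O(1)$ and then $d \le \frac{d}{n}\cdot n \le \frac{2}{3}\cdot\frac{(r+\delta-1)(r+1)+\delta(\delta-1)}{r}\,q$ — and sharpening the constant (removing the $2/3$ and the $O(1)$) by a direct distance count gives exactly~\eqref{eq:appendix_bound}. (5) The "Consequently" clause is then immediate: with $r,\delta$ constant, the bound reads $d \le C(r,\delta)\,q$, and if $d = \Theta(n)$ then $n = \Theta(d) \le O(q)$.

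\textbf{Main obstacle.} The hard part will be step (3)–(4): making the shortening argument work cleanly when the local groups are only \emph{near}-disjoint (i.e.\ when $(r+\delta-1)\nmid n$), since then the "clean" recursive structure of Guruswami \textit{et al.}\ is slightly broken and one must absorb $O(1)$ overlapping coordinates without losing control of the dimension/distance tradeoff. A secondary subtlety is tracking the exact numerator $(r+\delta-1)(r+1)+\delta(\delta-1)$: the two summands come from two different counting contributions — the $(r+\delta-1)(r+1)$ term from bounding the number of admissible "parity patterns" of an $[r+\delta-1,r,\delta]_q$ MDS local code (roughly $(r+1)$ free parameters per group, each from $\F_q$, across groups of size $r+\delta-1$), and the $\delta(\delta-1)$ correction from the interaction between the $\delta-1$ "redundant" local coordinates and the global minimum-distance requirement. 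I would need to be careful that the constant I extract is genuinely $\frac{(r+\delta-1)(r+1)+\delta(\delta-1)}{r}$ and not merely something of the same order; verifying this requires redoing the Guruswami \textit{et al.}\ counting with the parameter $\delta$ carried symbolically rather than set to $2$. The $d/n \le 2/3$ hypothesis should be used only to guarantee the iteration does not terminate prematurely, and I expect it can be relaxed, but I would keep it for a clean statement matching the regime of our constructions.
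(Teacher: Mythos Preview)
Your step~(4) does not work, and this is where the proposal breaks down. There is no reason distinct local repair groups must carry \emph{pairwise-distinct} MDS local codes; in the standard Tamo--Barg-type constructions every local group carries the \emph{same} $[r+\delta-1,r,\delta]_q$ Reed--Solomon code. So ``counting admissible MDS configurations over $\F_q$'' gives no bound whatsoever on the number of groups, and the heuristic about ``$(r+1)$ free parameters per group'' has no content. You have also misread the mechanism in \cite{Guruswami2019how}: their $n=O(q)$ bound for $d=\Theta(n)$ is not obtained by counting local structures or by an inductive shortening scheme, but by combining an alphabet-dependent Cadambe--Mazumdar-type upper bound on $k$ with the Plotkin bound on the residual code.

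The paper's proof follows exactly that route, with no structural decomposition of $\Ccal$. It uses (a) the bound $k\le tr+k_{\mathrm{opt}}^{(q)}(n-t(r+\delta-1),d)$ for all integers $t\ge 0$, due to Grezet \textit{et al.}; taking $t=\big\lceil \big(n-(1-q^{-2})\,qd/(q-1)\big)/(r+\delta-1)\big\rceil$ places the residual length in the Plotkin regime, whence $k_{\mathrm{opt}}^{(q)}(\cdot,d)\le \log_q(q^2)=2$ and so $k\le tr+2$. Separately, (b) the Singleton-type \emph{equality} (optimality of $\Ccal$) is rewritten, via a lemma of Chen \textit{et al.}, as the lower bound $k\ge n-d+1-\frac{\delta-1}{r+\delta-1}(n-d+\delta)$. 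Combining (a) and (b) and solving the resulting linear inequality for $d$ yields exactly~\eqref{eq:appendix_bound}; the hypothesis $d/n\le 2/3$ is used only once, to guarantee $t\ge 0$ (in fact $d/n\le q/(q+1)$ suffices).
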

\begin{proof}  
   By \cite[Corollary 2]{grezet2019alphabet}, we have 
   $
   k\leq \min_{t\in \Zbb_{\geq 0}}\{tr+k_{\rm opt}^{(q)}(n-t(r+\delta-1),d)\},
   $ 
   where $k_{\rm opt}^{(q)}(n-t(r+\delta-1),d)$ denotes the maximal possible dimension of a linear code with length $n-t(r+\delta-1)$ and minimum distance $d$. Letting $t=\big\lceil \frac{n-(1-\varepsilon)\frac{qd}{q-1}}{r+\delta-1}\big\rceil$, where $\varepsilon=\frac{1}{q^2}$, we have $t\geq 0$ since $\frac{d}{n}\leq \frac{2}{3}\leq \frac{q}{q+1}$ for any prime power $q$. By the plotkin bound, we have  
   $k_{\rm opt}^{(q)}(n-t(r+\delta-1),d)\leq k_{\rm opt}^{(q)}((1-\varepsilon)\frac{qd}{q-1},d)\leq \log_q(1/\varepsilon)=2$. Hence, we have
\begin{align}\label{eq:202506061714} 
    \ceil{\frac{n-(1-\frac{1}{q^2})\frac{qd}{q-1}}{r+\delta-1}}r+2=tr+2\geq k.
\end{align}
   
Moreover, we have a lower bound on $k$. Let $n_0=(r+\delta-1)\ceil{\frac{n}{r+\delta-1}}-n$,  
$n'=n+n_0$ and $d'=d+n_0$. Since $\Ccal$ is optimal, it holds $d=n-k+1-(\ceil{\frac{k}{r}}-1)(\delta-1)$. Then we have $(r+\delta-1)\mid n'$ and $d'=n'-k+1-(\ceil{\frac{k}{r}}-1)(\delta-1)$. By \cite[Lemma 1]{chen2022some}, we have 
\begin{align}\label{eq:202506061748}
    k=n'-d'+1-\frac{n'(\delta-1)}{r+\delta-1}+\parentheses{\floor{\frac{d'-\delta}{r+\delta-1}}+1}(\delta-1)&\geq n'-d'+1-\frac{\delta-1}{r+\delta-1}(n'-(d'-\delta))\\ \label{eq:202506061815}
    &=n-d+1-\frac{\delta-1}{r+\delta-1}(n-d+\delta). 
\end{align}
Combining \eqref{eq:202506061714},  \eqref{eq:202506061748} and \eqref{eq:202506061815}, we have 
$
\parentheses{\frac{n-(1-\frac{1}{q^2})\frac{qd}{q-1}}{r+\delta-1}+1}r+2\geq tr+2\geq k\geq n-d+1-\frac{\delta-1}{r+\delta-1}(n-d+\delta).
$
Solving this inequality with respect to $d$, we have $d\leq \frac{(r+\delta-1)(r+1)+\delta(\delta-1)}{r}q.$ The proof is complete.
\end{proof}

%\section{Acknowledgement}
%The authors are grateful to the editor and the anonymous reviewers for their valuable comments, which have greatly improved the manuscript.
 
\bibliographystyle{IEEEtran}
\bibliography{References.bib}

\end{document}